\documentclass[11pt]{article}
\usepackage{leeseok}
\title{Rapid mixing of quantum spin chains at any finite temperature}
\author{Leeseok Kim\thanks{\text{leeseok@unm.edu}}\\
University of New Mexico}
\date{}

\begin{document}

\maketitle

\begin{abstract}
We prove that a quasi-local quantum Gibbs sampler rapidly mixes to the Gibbs state of any 1D finite-range Hamiltonian at all finite temperatures in $\mathcal{O}\left(\log (n/\varepsilon)\right)$ time. Each update traces out a contiguous block of constant length and reconstructs it using the Petz recovery map, following early proposals for quantum heat-bath Gibbs samplers by Kastoryano and Brand\~ao~\cite{kastoryano2016quantum}. Our proof directly bounds the distance between the updated state and the target Gibbs state using quantum Wasserstein distance of order $1$. We further provide $\mathcal{O}(\text{polylog}(n/\varepsilon))$ depth of quantum circuits to prepare these Gibbs states. We hope these techniques will facilitate the analysis and design of other Gibbs samplers.
\end{abstract}

\tableofcontents

\section{Introduction}

Gibbs states describe quantum systems in thermal equilibrium and are central to the study of quantum many-body physics such as finite-temperature phase transitions, quantum magnetism, and strongly correlated matter~\cite{alhambra2023quantum}. They also provide computational resources for quantum machine learning~\cite{amin2018quantum} and for optimization algorithms such as quantum semidefinite-programming solvers~\cite{brandao2017quantum}. Preparing Gibbs states on quantum computers is therefore an important task.

A variety of quantum algorithms have been developed for Gibbs-state preparation~\cite{terhal2000problem,poulin2009sampling,bilgin2010preparing,temme2011quantum,riera2012thermalization}. One popular approach is to design a Lindbladian whose dynamics converges to the target Gibbs state from any initial state. Such dynamics, known as quantum Gibbs samplers, extend the classical paradigm of sampling from Gibbs distributions using Markov chains~\cite{levin2017markov,norris1997markov}. Various quantum Gibbs samplers have been proposed~\cite{kastoryano2016quantum,davies1974markovian,davies1976quantum,chen2023an,ding2025efficient,bakshi2026a,scandi2026thermalization}. A major question is to determine how quickly these samplers converge to the Gibbs states of Hamiltonians of interests, which is quantified by the \emph{mixing time} detnoed by $t_{\rm mix}$.  Given a fixed target accuracy, \emph{fast mixing} refers to a $t_{\rm mix} \sim \text{poly}(n)$, whereas \emph{rapid mixing} refers to the stronger requirement of $t_{\rm mix} \sim \text{polylog}(n)$. From an algorithmic perspective, one would like $t_{\rm mix}$ to be as small as possible, since it directly governs the runtime of Gibbs-state preparation based on the corresponding dynamics. On the other hand, \emph{slow mixing}, where $t_{\rm mix}$ grows superpolynomially with $n$, is also of independent interest for various reasons~\cite{gamarnik2024slow,macieszczak2016towards,rose2016metastability,bergamaschi2025logical,rakovszky2026bottlenecks}.

A standard route to proving fast mixing is to establish an inverse-polynomial lower bound on the spectral gap of the Lindbladian. For detailed-balanced Gibbs samplers of bounded local Hamiltonians, such a bound implies a polynomial mixing time at fixed target accuracy~\cite{temme2013lower}. Rapid mixing, however, does not generally follow from this spectral-gap estimate alone. It can instead be established through modified logarithmic Sobolev inequalities~\cite{kastoryano2013quantum} or contraction arguments that track discrepancies between local sites~\cite{bubley1997path,rouze2026optimal,bakshi2026a}. These approaches have established rapid mixing in several important but restricted regimes. For commuting 1D Hamiltonians, rapid mixing is known at every fixed finite temperature~\cite{bardet2023rapid,kochanowski2025rapid}. For noncommuting Hamiltonians, it has been restricted to a sufficiently high temperature~\cite{rouze2026optimal,bakshi2026a} or a perturbative regime~\cite{smid2025rapid}. (See Section~\ref{subsec:related-works} for other results.) For general noncommuting Hamiltonians, however, all-temperature rapid mixing beyond the perturbative regime had remained open in 1D (although fast mixing was recently shown in Ref.~\cite{bergamaschi2026fast}).

In this work, we show that quantum block heat-bath dynamics \emph{rapidly mixes} to the Gibbs state of any finite-range, bounded 1D Hamiltonians at every finite temperature. Following the earlier heat-bath Gibbs sampler~\cite{kastoryano2016quantum}, each update in our dynamics traces out a contiguous block and reconstructs it using the Petz recovery map~\cite{petz1986sufficient} associated with the target Gibbs state. (These heat-bath Gibbs samplers have been studied previously~\cite{kastoryano2016quantum,temme2015fast,capel2018quantum,bardet2021approximate,bardet2021on}.) Importantly, rapid mixing holds for a block length independent of the system size, and the resulting Lindbladian is quasi-local.

Our proof is basically based on the path-coupling argument in the sense that it directly bounds contraction in the quantum Wasserstein distance of order $1$~\cite{depalma2021the}. The key observation is that if two states become identical after tracing out a site $i$, then any block update containing $i$ maps them to the same state. Updates on blocks that do not contain $i$ may increase the Wasserstein distance between the two states, but we show that the additional cost decays exponentially with the distance from $i$ to the updated block. In 1D, these contributions can be summed to obtain a bound independent of the system size. For a sufficiently large constant block length, therefore, the updates that erase single-site discrepancies dominate the additional influence from other updates.

The dynamics may initially appear impractical because each update is defined using the target Gibbs state itself. Our second main result, however, provides an explicit algorithm for approximating these dynamics using only classical descriptions of local portions of the Hamiltonian. Combined with rapid mixing, this yields a quantum circuit of depth $\operatorname{polylog}(n/\varepsilon)$ that prepares the target Gibbs state to trace-distance accuracy $\varepsilon$, matching the circuit-depth bound of Ref.~\cite{bergamaschi2026fast}.

\subsection{Main results}

We consider a family $\mathbf{H}$ of one-dimensional and nearest-neighbor Hamiltonians on periodic chains $\Lambda_n = \mathbb{Z}/n\mathbb{Z}$ (the extension to open chains is straightforward as we discuss later), with fixed finite local Hilbert-space dimension $d$. Each $H \in \mathbf{H}$ acts on $(\mathbb{C}^d)^{\otimes n}$ and takes the form
\begin{align}
\label{eq:hamiltonian}
H = \sum_{i \in \Lambda_n} h^{(i)}_{i,i+1}, \qquad \|h^{(i)}_{i,i+1}\| \leq 1,
\end{align}
where $h \in \mathcal{B}(\mathbb{C}^d \otimes \mathbb{C}^d)$ is a two-site interaction and site indices are understood modulo $n$. This includes arbitrary interactions of uniformly bounded range, since these can be reduced to the nearest-neighbor form by blocking a constant number of sites (and correspondingly increasing the local dimension). Also, importantly, we assume neither commutativity nor translation invariance.

Our goal is to determine how quickly a given Lindbladian drives an arbitrary initial state toward the Gibbs state
\begin{align}
\sigma_\beta = \frac{e^{-\beta H}}{\tr[e^{-\beta H}]}, \qquad 0\le \beta<\infty.
\end{align}
Specifically, we consider a quantum block heat-bath dynamics in one dimension, whose generator is constructed following a similar idea as the classical heat-bath (Glauber) dynamics~\cite{glauber1963timedependent}, and is inspired by the constructions of Refs.~\cite{kastoryano2016quantum,bardet2021on}. For a contiguous block $B\subset\Lambda_n$ of length $\ell$, the update first discards the state on $B$ and then reconstructs it using the Petz recovery map~\cite{petz1986sufficient} associated with the Gibbs state $\sigma_\beta$. The corresponding Lindbladian is
\begin{align}
\mathcal{L}_{\ell}(\rho) = \frac{1}{\ell} \sum_{\substack{B\subset\Lambda_n\\ |B|=\ell}} \left[ \Phi_B(\rho) - \rho \right], \qquad \Phi_B(\rho) := \sigma_\beta^{1/2} \left( I_B\otimes \sigma_{\beta,B^c}^{-1/2} \tr_B[\rho]  \sigma_{\beta,B^c}^{-1/2} \right) \sigma_\beta^{1/2},
\label{eq:block-petz-generator-intro}
\end{align}
where $\sigma_{\beta,B^c}=\tr_B[\sigma_\beta]$, and the sum runs over contiguous blocks of length $\ell$. Equivalently, $\Phi_B=\mathcal R^{\sigma_\beta}_{B^c\to\Lambda_n}\circ\tr_B$, where $\mathcal R^{\sigma_\beta}_{B^c\to\Lambda_n}$ is the Petz recovery map associated with $\tr_B$ and $\sigma_\beta$. The Petz recovery construction ensures that each block update preserves the Gibbs state, $\Phi_B(\sigma_\beta)=\sigma_\beta$, and hence $\mathcal{L}_{\ell}(\sigma_\beta)=0$. Thus, $\sigma_\beta$ is a stationary state of the dynamics. The dynamics also satisfies KMS detailed balance. We will show that, for a suitable constant block length $\ell$, the resulting Lindbladian mixes rapidly and is quasi-local.

\begin{table}[t]
  \centering
  \small
  \setlength{\tabcolsep}{4pt}
  \renewcommand{\arraystretch}{1.3}
  \begin{tabularx}{\linewidth}{@{}
    >{\raggedright\arraybackslash}p{0.23\linewidth}
    >{\raggedright\arraybackslash}X
    >{\raggedright\arraybackslash}p{0.145\linewidth}
    >{\raggedright\arraybackslash}p{0.170\linewidth}
    >{\raggedright\arraybackslash}p{0.16\linewidth}@{}}
    \toprule
    Reference & Geometry/Interaction & Temperature & Dynamics & Mixing time \\
    \midrule
    Bardet et al.~\cite{bardet2021on}
      & 1D/Commuting
      & Conditional\textsuperscript{a}
      & Single-site ~~~~~ Petz heat-bath~\cite{kastoryano2016quantum} 
      & $\mathcal O(\log(n/\varepsilon))$ \\[2pt]
    Kochanowski et al.~\cite{kochanowski2025rapid}
      & 1D/Commuting
      & All
      & Davies~\cite{davies1976quantum}
      & $\mathcal O(\log(n/\varepsilon))$ \\[2pt]
    Rouz\'{e} et al.~\cite{rouze2026optimal}
      & $D$-dimensional lattice/General
      & High
      & CKG~\cite{chen2023an}
      & $\mathcal O(\log(n/\varepsilon))$ \\[2pt]
    Bakshi et al.~\cite{bakshi2026a}
      & Bounded-degree graphs/General
      & High 
      & CKG-inspired
      & $\mathcal O(\log(n/\varepsilon))$ \\[2pt]
    \v{S}m\'{\i}d et al.~\cite{smid2025rapid}
      & Lattice/Weakly interacting spins
      & All
      & DLL~\cite{ding2025efficient} 
      & $\mathcal O(\log(n/\varepsilon))$ \\[2pt]
    Bergamaschi~\cite{bergamaschi2026all}
      & $\cdot$/All-to-all $k$-local
      & High
      & CKG~\cite{chen2023an}
      & $\mathcal O(n+\log(1/\varepsilon))$ \\[2pt]
    Bergamaschi-Chen~\cite{bergamaschi2026fast}
      & 1D/General
      & All
      & CKG~\cite{chen2023an}
      & $\mathcal O(n+\log(1/\varepsilon))$ \\
    \midrule
    {This work}
      & {1D/General}
      & {All}
      & {Constant-block Petz heat-bath}
      & ${\mathcal O(\log(n/\varepsilon))}$ \\
    \bottomrule
  \end{tabularx}
  \caption{A partial list of fast and rapid mixing-time results for spin systems (see Section~\ref{subsec:related-works} for a more thorough list). All temperature means every fixed finite inverse temperature $\beta$. General denotes bounded finite-range spin systems without commutativity assumptions.}
  \label{tab:related-mixing}
  \vspace{2pt}
  \begin{minipage}{\linewidth}
    \footnotesize
    \textsuperscript{a}At temperatures satisfying uniform mixing and strong quasi-factorization assumptions.
  \end{minipage}
\end{table}

The normalization $1/\ell$ is chosen so that each site is updated at total rate one. For a classical Hamiltonian, i.e., diagonal in a fixed on-site product basis, this dynamics reduces to the usual classical block heat-bath Glauber dynamics~\cite{levin2017markov,martinelli1999lectures,chen2023optimalmixing}, where the state of the block is discarded and resampled from the Gibbs distribution conditioned on its complement. Equation~\eqref{eq:block-petz-generator-intro} therefore provides its quantum analogue, with the classical conditional resampling replaced by Petz recovery.

Our first main result shows that a constant block size $\ell_\star$ is sufficient for rapid mixing. 

\begin{theorem}[Rapid mixing of quantum spin chains at all temperatures]
\label{thm:main}
Fix $\beta, d < \infty$. There exists a constant block length $\ell_\star = \mathcal{O}(1)$, independent of the system size $n$, such that the heat-bath dynamics generated by $\mathcal{L}_{\ell_\star}$ in Eq.~\eqref{eq:block-petz-generator-intro} mixes rapidly to the Gibbs state $\sigma_\beta$ in time 
\begin{align}
t_{\mathrm{mix}}(\varepsilon) = \mathcal O\left(\log\frac{n}{\varepsilon} \right).
\end{align}
\end{theorem}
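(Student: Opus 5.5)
The plan is a path-coupling argument in the quantum Wasserstein distance of order $1$, $W_1$, of De Palma et al.~\cite{depalma2021the}. Recall that $W_1(\rho,\rho')$ is the minimal $\sum_i c_i$ over decompositions $\rho-\rho'=\sum_i c_i(\rho^{(i)}-\rho'^{(i)})$ with $\tr_i\rho^{(i)}=\tr_i\rho'^{(i)}$. It satisfies $\tfrac12\|\rho-\rho'\|_1\le W_1(\rho,\rho')\le n\cdot\tfrac12\|\rho-\rho'\|_1$.

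By convexity of $W_1$ and linearity of $e^{t\mathcal L_\ell}$, it suffices to control pairs $(\rho,\rho')$ that agree after tracing out a single site $i$. For such a pair we aim to show
\begin{align}
\frac{d}{dt}W_1\big(e^{t\mathcal L_\ell}\rho,e^{t\mathcal L_\ell}\rho'\big)\Big|_{t=0}\le -\kappa\, W_1(\rho,\rho'),\qquad \kappa>0 .
\end{align}
By path coupling this extends to all pairs. It then gives $W_1(e^{t\mathcal L}\rho,\sigma_\beta)\le n e^{-\kappa t}$, and hence $t_{\rm mix}(\varepsilon)=\mathcal O(\kappa^{-1}\log(n/\varepsilon))$.

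Concretely, take $\rho-\rho'=X$ with $\tr_i X=0$ and $\|X\|_1\le 2$. The generator contributes $\frac1\ell\sum_B(\Phi_B(X)-X)$. There are $\ell$ blocks $B\ni i$, and each satisfies $\Phi_B(X)=0$ because $\Phi_B$ factors through $\tr_B$. These blocks give exactly $-W_1$-type decrease at total rate $1$, i.e. the term $-X$ summed over them. For the roughly $n-\ell$ blocks $B\not\ni i$, we must bound the excess $W_1(\Phi_B(X))-W_1(X)$. The key lemma to prove is a decay estimate of the form
\begin{align}
\|\Phi_B(X)-X\|_{W_1}\le C\,e^{-\operatorname{dist}(i,B)/\xi}\,\|X\|_1,
\end{align}
with $C,\xi$ depending only on $\beta,d$ and \emph{not} on $\ell$. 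Summing over blocks at distance $r\ge 1$ from $i$ (there are $\mathcal O(1)$ per distance $r$ on each side) and including the factor $1/\ell$ gives an excess of at most $\frac{C'}{\ell}\sum_r e^{-r/\xi}=\mathcal O(1/\ell)$. Choosing $\ell_\star$ large then yields $\kappa\ge 1-\mathcal O(1/\ell_\star)>0$.

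The decay lemma is the main obstacle. I would prove it in two steps.

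\emph{Locality of the Petz map.} $\Phi_B$ acts nontrivially only near $B$. Because of $\sigma_\beta^{1/2}$ and $\sigma_{\beta,B^c}^{-1/2}$, this holds only approximately. Here I would invoke 1D facts valid at every finite $\beta$: Araki's bounds on imaginary-time evolution, giving quasi-locality of $\sigma_\beta^{1/2}(\cdot)\sigma_\beta^{-1/2}$-type expansionals, and the approximate factorization or Markov property of 1D Gibbs states. Together these approximate $\Phi_B$, up to error $e^{-\Omega(r)}$, by a map supported on a neighborhood of $B$ of radius $r$, built from the local Gibbs state $\sigma^{B_r}$.

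\emph{Decay of correlations.} The map acting near $B$ responds to the perturbation $X$ at $i$ only through the marginal of $X$ near $B$. By exponential decay of correlations and local indistinguishability of 1D Gibbs states at any temperature (Araki; Kuwahara--Saito), a traceless-on-$i$ perturbation affects this reconstruction by at most $e^{-\Omega(\operatorname{dist}(i,B))}$.

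One must also check that the outputs remain $W_1$-close in the per-site sense. Write $\Phi_B(X)-X=(\Phi_B-\mathrm{id})(X)$ and note it is supported near $B$ up to tails. The $W_1$ norm of a locally supported traceless operator is at most its size times its trace norm, and this also absorbs the exponential tails.
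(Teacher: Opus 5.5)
Your architecture matches the paper's: single-site path coupling in $W_1$, exact erasure of an $i$-anchored discrepancy by the $\ell$ blocks containing $i$, and a summable exterior term giving curvature $1-\mathcal O(1/\ell)$. The gap is your key lemma, which is false as stated.

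Take $H=0$, so $\Phi_B(\rho)=\frac{I_B}{d^{\ell}}\otimes\tr_B[\rho]$. Let $X=\rho-\rho'$, where $\rho=|0\rangle\langle0|^{\otimes n}$ and $\rho'$ is the same state with site $i$ flipped to $|1\rangle$. For every $B\not\ni i$, the operator $\Phi_B(X)-X=\bigl(\frac{I_B}{d^{\ell}}-|0^{\ell}\rangle\langle0^{\ell}|_B\bigr)\otimes\tr_B[X]$ is traceless on $i$. Its $W_1$ norm is therefore half its trace norm, namely $2(1-d^{-\ell})$, however far $B$ is from $i$. A block update resets $B$, so it never leaves a distant discrepancy approximately unchanged. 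No locality or decay-of-correlations argument can show otherwise.

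The correct target is the growth bound $\|\Phi_B(X)\|_{W_1}\le\bigl(1+\eta(\mathrm{dist}(i,B))\bigr)\|X\|_{W_1}$ (Proposition~\ref{prop:single-discrepancy-influence}). Its mechanism is nonexpansion, not near-identity. A channel acting trivially at $i$ keeps $X$ traceless on $i$ and contracts the trace norm, so it cannot increase $\|X\|_{W_1}=\frac12\|X\|_1$. Only the deviation of $\Phi_B$ from such a channel needs to be small. In the example above, $\|\Phi_B(X)\|_{W_1}=\|X\|_{W_1}$ exactly.

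Even after this reformulation, your route to the tail bound does not give constants independent of $\ell$. That independence is exactly what makes $\kappa>0$ for large $\ell$.
\begin{itemize}
\item Suppose you approximate $\Phi_B$ by channels on $N_r(B)$ and telescope. (An untelescoped remainder $(\Phi_B-\Phi_B^{(r)})(X)$ is not locally supported and costs a factor $n$.)
\item Each shell increment then has size $\lesssim e^{-\mu r}$ in diamond norm, but support of size $|N_r(B)|=\ell+2r$.
\item ``Size times trace norm'' therefore yields only $\eta(d)\lesssim(\ell+d)e^{-\mu d}$, and already $\mathcal O(\ell)$ for blocks adjacent to $i$.
\item After the rate $1/\ell$, the exterior excess is $\mathcal O(1)$, not $\mathcal O(1/\ell)$. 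Enlarging $\ell$ no longer helps, and contraction would need the quasi-locality constants themselves to be small, which there is no reason to expect at low temperature.
\end{itemize}

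The missing idea is to localize only the genuinely nonlocal part of the update, and to localize it near $\partial B$ rather than near all of $B$. The paper does this in three steps.
\begin{itemize}
\item It writes $\Phi_B(\rho)=Q_B(\tau_B\otimes\tr_B\rho)Q_B^\dagger$.
\item It approximates the Petz amplitude $Q_B$ by exactly normalized operators on $N_r(\partial B)$ (Assumption~\ref{asm:QL}, proved by complex interpolation between a Lieb--Robinson estimate and Araki-expansional bounds).
\item It uses exponential decay of boundary-to-interior correlations in the fixed state $\tau_B$ (Assumption~\ref{asm:GF}) in the layer-by-layer argument of Lemma~\ref{lem:anchored-increment}. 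This makes each shell cost $\mathcal O(r)$ rather than $\ell+2r$.
\end{itemize}
Your decay-of-correlations step targets the influence of site $i$ on the reconstruction. The correlations that actually need control are those between the boundary of $B$ and its interior in $\tau_B$.
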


Table~\ref{tab:related-mixing} compares our main result with selected mixing-time bounds (further related work is discussed below). To our knowledge, this is the first all-temperature rapid-mixing result for any finite-range, bounded-strength 1D Hamiltonian. As mentioned, the sampler is also quasi-local and satisfies KMS detailed balance, making it a natural object for further study.

Our second main result facilitates this further investigation. Specifically, we give a quantum algorithm that implements the Lindbladian dynamics for time $t$ using quantum circuits of depth $\mathcal{O}(\operatorname{poly}(t + \log(n))$ for fixed error. Combining this result with Theorem~\ref{thm:main} yields $\operatorname{polylog}(n)$-depth quantum circuits for preparing the Gibbs state.

\begin{theorem}[Polylogarthmic-depth quantum circuits for Gibbs state preparation, Corollary~\ref{cor:gibbs-preparation}]
\label{thm:main-algo}
For $\beta, d < \infty$. There exists a randomized quantum algorithm which prepares the Gibbs state $\sigma_\beta$ in trace-distance $\varepsilon$ using only $\mathcal{O}\left(\operatorname{polylog}(n/\varepsilon)\right)$ depth of one- and two- qubit gates (with ancillas) on a 1D nearest-neighbor geometry.
\end{theorem}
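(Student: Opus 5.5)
The statement to prove is Theorem~\ref{thm:main-algo}. I would combine three ingredients: the rapid-mixing bound of Theorem~\ref{thm:main}, a locality reduction of each block update $\Phi_B$, and a Trotter-type circuit simulation.

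\textbf{Step 1: choose the evolution time.} By Theorem~\ref{thm:main} with $\ell=\ell_\star$, running $e^{t\mathcal L_{\ell_\star}}$ for $t=\mathcal O(\log(n/\varepsilon))$ from any initial state (say $|0\rangle^{\otimes n}$) gives a state within $\varepsilon/3$ of $\sigma_\beta$. What remains is to simulate this channel to trace-distance error $2\varepsilon/3$ in polylogarithmic depth.

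\textbf{Step 2: localize each update.} This is the main obstacle. As written, $\Phi_B$ depends on the global state $\sigma_\beta$. I would show that it is approximated, in diamond norm, by a map $\Phi_B^{(r)}$ supported on the enlarged block $B_r=B\cup\partial_r B$, where $\partial_r B$ denotes a buffer of $r$ sites on each side:
\begin{align}
\Phi_B^{(r)}(\rho)= \sigma_{B_r}^{1/2}\left( I_B\otimes \sigma_{B_r\setminus B}^{-1/2}\,\tr_B[\rho]\,\sigma_{B_r\setminus B}^{-1/2}\right)\sigma_{B_r}^{1/2},
\end{align}
where $\sigma_{B_r}$ is the local Gibbs state $e^{-\beta H_{B_r}}/\tr[e^{-\beta H_{B_r}}]$ of the restricted Hamiltonian, or alternatively the marginal of $\sigma_\beta$. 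The target bound is
\begin{align}
\|\Phi_B-\Phi_B^{(r)}\|_\diamond\le C e^{-r/\xi}.
\end{align}
The tools I would use are known 1D facts at any finite temperature:
\begin{itemize}
\item exponential decay of conditional mutual information, or equivalently approximate quantum Markov structure, for 1D Gibbs states (Kato--Brand\~ao);
\item local indistinguishability;
\item quasi-locality of $\sigma_\beta^{1/2}(\cdot)\sigma_{B^c}^{-1/2}$, via the Araki expansionals $e^{-\beta H/2}e^{\beta(H_{B^c}+H_{\partial})/2}$, which are quasi-local in 1D.
\end{itemize}
I expect the hard part to be controlling the operator inverses $\sigma_{B^c}^{-1/2}$. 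I would first try to write $\Phi_B$ through the Araki operators $E = e^{-\beta H/2}e^{\beta(H_B+H_{B^c})/2}$, which have norm and locality bounds uniform in $n$, so that the localized map can be built from classical descriptions of $h$ on $B_r$ alone. Choosing $r=\mathcal O(\log(nt/\varepsilon))$ makes each error at most $\varepsilon/\operatorname{poly}$.

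\textbf{Step 3: build the circuit.} The generator $\mathcal L$ is a sum of $n$ local terms, each with norm $\mathcal O(1)$. I would Trotterize $e^{t\mathcal L}$ into $M=\operatorname{poly}(t,n,1/\varepsilon)$ steps of size $\delta=t/M$, with total Trotter error $\mathcal O(nt\delta)$.

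Within each step, I would partition the blocks into $\mathcal O(\ell_\star+r)$ layers of pairwise disjoint supports $B_r$. Each layer then acts in parallel via $e^{\delta(\Phi^{(r)}_B-\mathrm{id})}$. Each such factor can be implemented exactly as a probabilistic mixture: with probability $1-e^{-\delta}$ apply $\Phi_B^{(r)}$, otherwise apply the identity. This is the sense in which the algorithm is randomized.

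A cleaner alternative is to sample a random update sequence directly. In the continuous-time picture each site is updated $\mathcal O(t)$ times, so an update schedule of parallel depth $\mathcal O(t\log n)$ suffices with high probability. I would use the light-cone/locality argument to show that only causal ordering matters, with commuting disjoint updates applied in parallel.

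\textbf{Step 4: compile the local channels.} Each $\Phi_B^{(r)}$ is a channel on $\mathcal O(r)=\mathcal O(\log(n/\varepsilon))$ qudits. I would compute its Stinespring dilation classically and compile it with ancillas into a 1D nearest-neighbor circuit of depth $\operatorname{poly}(r)$. A fully generic compilation would cost depth $d^{\mathcal O(r)}=\operatorname{poly}(n/\varepsilon)$, which is too much. To avoid this, I would implement $\sigma^{1/2}$-type operators by block-encodings of $e^{-\beta H_{B_r}/2}$ combined with quantum singular value transformation (QSVT) and amplitude amplification. This gives depth $\operatorname{poly}(r,\beta)$, since the relevant normalizations are bounded by $e^{\mathcal O(\beta\ell_\star)}$ thanks to the Araki bounds.

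\textbf{Total cost.} The depth is $\mathcal O(t)\cdot\operatorname{polylog}\cdot\operatorname{poly}(r)=\operatorname{polylog}(n/\varepsilon)$. The errors from Step 1, the localization in Step 2, and the simulation in Steps 3--4 sum to at most $\varepsilon$.
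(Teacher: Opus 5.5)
Your overall architecture matches the paper's: take $t=t_{\rm mix}$ from Theorem~\ref{thm:main}, replace each $\Phi_B$ by a finite-window Petz update, sample the updates and schedule them by dependency, then compile each update locally. There is a genuine gap in your Step 4, which is the step that actually delivers polylogarithmic depth. The window update contains $\sigma_A^{-1/2}$ with $A=B_r\setminus B$ of size $\Theta(r)=\Theta(\log(n/\varepsilon))$, and $\sigma_A$ can have condition number $e^{\Theta(\beta r)}$. Suppose you block-encode $\sigma_W^{1/2}$ by QSVT on $e^{-\beta H_W/2}$ and $\sigma_A^{-1/2}$ as a separate factor. Then the inverse needs polynomial degree of order that condition number. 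Also, the product encoding has subnormalization $e^{-\Theta(\beta r)}$ even though the product itself has norm $\mathcal O(1)$, so amplification costs $e^{\Theta(\beta r)}=\operatorname{poly}(n/\varepsilon)$ rounds. The Araki bounds only say that the composite amplitude $\sigma_W^{1/2}(I_B\otimes\sigma_A^{-1/2})$ is bounded. They give no circuit that realizes the cancellation between an exponentially small factor and an exponentially large one. The same objection applies to implementing an Araki expansional as the literal product $e^{-\beta H_W/2}\cdot e^{\beta H_0/2}$, since each factor is exponentially ill-conditioned in $w$.

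The missing idea is a factorization into pieces that are individually bounded and well conditioned. The paper writes $\sigma_W^{1/2}(I_B\otimes\sigma_A^{-1/2})=E(\tau_B^{1/2}\otimes T)$, with $E=e^{-\beta H_W/2}e^{\beta H_0/2}$, $T=C(CSC)^{-1/2}$, $C=e^{-\beta H_A/2}$, and $S$ the $\tau_B$-compression of $E^\dagger E$, so that $C_E^{-2}I\le S\le C_E^{2}I$. It block-encodes $E$ with constant normalization through an interaction-picture Dyson series in $V_{\partial B}$, whose nested-commutator histories have total weight independent of $w$. It block-encodes $T$ through $T=\frac{2}{\pi}\int_{\mathbb R}D_uP_u^{-1}F_u\,du$, where $P_u=D_uSD_u+F_u^2$ is uniformly well conditioned. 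Because the resulting $V_W$ is an isometry, \emph{oblivious} isometry amplification gives the channel without postselection. Ordinary amplitude amplification would not work here, since it needs a reflection about the unknown input state.

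Two smaller points. First, your Trotter version of Step 3 does not by itself give polylog depth. With $M=\operatorname{poly}(n,t,1/\varepsilon)$ steps realized as random mixtures, the roughly $nt$ updates that fire mostly land in distinct time slots. Executing the slots in order therefore has depth linear in $n$. You must drop idle slots and reschedule by dependency height, i.e., your ``cleaner alternative.'' This is what the paper does, via the Poisson-clock representation and the longest-chain bound $\Pr(L\ge m)\le (n/Q)(eQt/(\ell m))^m$ with $Q=\mathcal O(w)$.

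Second, in Step 2, decay of conditional mutual information and local indistinguishability are trace-norm statements about $\sigma_\beta$. They do not by themselves control Petz maps, with their inverse square roots, in diamond norm on arbitrary inputs. The paper's Lemma~\ref{lem:finite-window-approximation} instead compares $H$ with $H_W+H_{W^c}$ through the analytic family $F(z)=\sigma^z(\tau_B^{-z}\otimes\sigma_{B^c}^{-z})$. On $\operatorname{Re}z=0$ it uses stability of $\Delta_B$ under a distant cut together with Lieb--Robinson bounds. On $\operatorname{Re}z=1$ it uses the Araki-expansional bound. The three-lines theorem then gives the estimate at $z=1/2$.
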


This matches the circuit-depth bound achieved in Ref.~\cite{bergamaschi2026fast}. Although the dynamics in Eq.~\eqref{eq:block-petz-generator-intro} are defined using the target Gibbs state $\sigma_\beta$, our algorithm requires only a classical description of the local Hamiltonian terms. The key idea is that each block update can be approximated to high accuracy using only the Hamiltonian on a window of $\mathcal O(\log(n/\varepsilon))$ sites. Each approximate update can be implemented with $\operatorname{polylog}(n/\varepsilon)$ gates. We then exploit the locality of these updates to parallelize the dynamics, obtaining the claimed depth at the mixing time.


\subsection{Related works}
\label{subsec:related-works}

\paragraph{Quantum Gibbs samplers.} One of the earliest examples of a quantum Gibbs sampler is the Davies generator, which naturally arises in the weak-coupling limit of a quantum system interacting with a thermal bath~\cite{davies1974markovian,davies1976quantum}. Another approach generalizes classical heat-bath Glauber dynamics, which updates a single site by resampling its spin from the Gibbs distribution conditioned on all other spins~\cite{glauber1963timedependent}. Following this idea, Ref.~\cite{kastoryano2016quantum} constructed a quantum heat-bath generator for commuting local Hamiltonians using minimal conditional expectations. Their generator coincides with the single-site ($\ell=1$) case of the dynamics considered here in Eq.~\eqref{eq:block-petz-generator-intro}. Subsequent developments include finite-time analyses and algorithmic implementations of Davies-type thermalization~\cite{chen2021fast,rall2023thermal,chen2023quantum}, as well as dissipative sampling schemes based on conditionally stopped processes~\cite{zhang2023dissipative}. For noncommuting Hamiltonians, efficiently implementable Lindbladians satisfying exact KMS detailed balance have been constructed and analyzed~\cite{chen2023an,ding2025efficient,bakshi2026a,chen2025quantum}.
Recent work has developed Gibbs-state preparation protocols based on repeated interactions with a single-qubit bath~\cite{ding2025simple}, with polynomial-time guarantees for several noncommuting systems~\cite{slezak2026polynomial}. Related discrete-time approaches include the original quantum Metropolis algorithm~\cite{temme2011quantum}, quantum generalizations of Glauber and Metropolis dynamics~\cite{gilyen2024quantum}, and quantum Metropolis sampling via weak measurements~\cite{jiang2024quantum}.

\paragraph{Mixing time results for commuting Hamiltonians.}
Ref.~\cite{kastoryano2016quantum} related a system-size-independent spectral gap of the Gibbs sampler to strong clustering of the Gibbs state, yielding fast mixing for 1D commuting Hamiltonians for both Davies and Petz heat-bath samplers. For the same single-site Petz heat-bath dynamics, Ref.~\cite{bardet2021on} proved a system-size-independent modified log-Sobolev constant, and hence rapid mixing, under certain assumptions on the Gibbs state. For Davies generators, Refs.~\cite{bardet2023rapid,bardet2024entropy} established all-temperature rapid mixing for finite-range, translation-invariant commuting spin chains, initially with a logarithmically decreasing MLSI constant. Ref.~\cite{kochanowski2025rapid} subsequently proved the optimal rapid mixing, also covering non-translation-invariant commuting chains. Thus, optimal all-temperature mixing in 1D was previously known for commuting interactions, but those arguments do not cover the noncommuting Hamiltonians. Beyond these 1D results, fast and rapid mixing have also been established for various other commuting systems ~\cite{capel2020modified,depalma2022quantum,lucia2023thermalization,stengele2026css,stengele2026abelian,hwang2026gibbs}.

\paragraph{Mixing time results beyond the commuting setting.}
Proving mixing time bounds for noncommuting systems is more challenging and is still an active area of research. For local Hamiltonians on lattices, Ref.~\cite{rouze2025efficient} proved a system-size-independent spectral gap for the CKG Lindbladian~\cite{chen2023an} at high temperature, yielding fast mixing, which was later strengthened to rapid mixing~\cite{rouze2026optimal}. Related rapid-mixing results for closely related high-temperature Gibbs samplers were obtained via a quantum Dobrushin condition~\cite{bakshi2026a} and in the presence of arbitrarily strong on-site fields~\cite{bakshi2026rapid}. At all finite temperatures, structured results include a fast mixing for weakly interacting fermions~\cite{tong2025fast} and rapid mixing for noninteracting and weakly interacting spins among others~\cite{smid2025rapid}. More recently, Ref.~\cite{bergamaschi2026fast} proved a fast mixing of the CKG Lindbladian of every finite-range 1D Hamiltonian at any finite temperature. Ref.~\cite{bergamaschi2026all} further proved a system-size-independent gap, and hence fast mixing, for the CKG Lindbladian of all-to-all $k$-local Hamiltonians at high temperature.

\begin{figure}[t]
\centering
\resizebox{\linewidth}{!}{%
\begin{tikzpicture}[
    box/.style={
        draw=black!65,
        fill=white,
        line width=0.6pt,
        rounded corners=3pt,
        align=center,
        text width=2.5cm,
        minimum height=1.25cm,
        inner sep=6pt,
        font=\small
    },
    arrow/.style={
        -{Stealth[length=2.2mm,width=1.5mm]},
        line width=0.7pt
    },
    branch/.style={
        line width=0.7pt
    },
    annotation/.style={
        midway,
        above=3pt,
        align=center,
        font=\scriptsize,
        inner sep=2pt
    },
    heading/.style={
        anchor=south,
        yshift=4pt,
        font=\small\bfseries
    }
]

\node[box, text width=2.1cm] (mix) at (0,0)
    {Rapid mixing\\
     (Theorem~\ref{thm:main})};

\node[box, text width=2.4cm] (influence) at (4.3,0)
    {Exponential decay of exterior influence $\eta$};

\node[
    circle,
    draw=black!75,
    fill=white,
    line width=0.7pt,
    minimum size=7mm,
    inner sep=0pt,
    font=\large
] (plus) at (7.1,0) {$+$};

\node[box] (ql) at (9.5,2.0)
    {Boundary quasi-locality\\
     of the Petz amplitude $Q_B$\\
     (Assumption~\ref{asm:QL})};

\node[box] (gf) at (9.5,-2.0)
    {Gibbs-state factorization\\
     (Assumption~\ref{asm:GF})};

\node[box, text width=2.7cm] (boundary) at (14.0,2.0)
    {1D Gibbs boundary estimates\\
     \cite{scalet2025classical,bluhm2022exponential}};

\node[box, text width=2.7cm] (clustering) at (14.0,-2.0)
    {Exponential clustering
     and Gibbs factorization\\
     \cite{bergamaschi2026fast,bluhm2022exponential}};

\draw[arrow] (influence.west) --
    node[annotation]
    {Thm.~\ref{thm:influence-to-curvature}\\
     (Sec.~\ref{sec:curvature})}
    (mix.east);

\draw[arrow] (plus.west) --
    node[annotation]
    {~~~Thm.~\ref{thm:static-to-influence}\\
     ~(Sec.~\ref{sec:petz-influence})}
    (influence.east);

\draw[branch]
    (ql.west)
    to[out=180,in=35]
    (plus.35);

\draw[branch]
    (gf.west)
    to[out=180,in=-35]
    (plus.-35);

\draw[arrow] (boundary.west) --
    node[annotation]
    {Thm.~\ref{prop:verify-QL}\\
     (Sec.~\ref{sec:verify-QL})}
    (ql.east);

\draw[arrow] (clustering.west) --
    node[annotation]
    {Thm.~\ref{thm:verify-GF}\\
     (Sec.~\ref{sec:verify-GF})}
    (gf.east);

\begin{scope}[on background layer]

    \node[
        fit=(boundary)(clustering),
        fill=blue!5,
        draw=blue!20,
        line width=0.6pt,
        rounded corners=6pt,
        inner xsep=10pt,
        inner ysep=12pt
    ] (staticpanel) {};

    \node[
        fit={
            (mix)
            (influence)
            (mix.center |- staticpanel.north)
            (mix.center |- staticpanel.south)
        },
        fill=orange!6,
        draw=orange!25,
        line width=0.6pt,
        rounded corners=6pt,
        inner xsep=10pt,
        inner ysep=0pt
    ] (dynamicpanel) {};

\end{scope}

\node[
    heading,
    text=orange!50!black
] at (dynamicpanel.north)
    {Dynamical properties};

\node[
    heading,
    text=blue!55!black
] at (staticpanel.north)
    {Static Gibbs-state properties};

\end{tikzpicture}%
}
\caption{Proof outline of Theorem~\ref{thm:main}, read from right to left. The blue panel collects existing results on static properties of 1D Gibbs states \cite{bergamaschi2026fast,bluhm2022exponential,scalet2025classical}, which prove Assumptions~\ref{asm:QL} and~\ref{asm:GF}. Together, these imply the dynamical consequences shown in the orange panel: exponentially decaying exterior influence, exponential contraction in quantum $W_1$ distance, and rapid mixing. The paper is intentionally organized in the reverse order.}
\label{fig:rapid-mixing-proof-outline}
\end{figure}

\subsection{Informal overview of the proof}
\label{subsec:proof-overview}

In this subsection, we give a brief, informal overview of the proof and its main ideas. Figure~\ref{fig:rapid-mixing-proof-outline} summarizes its overall structure. We keep technical terminology to a minimum, deferring precise definitions and statements to later sections. We recommend reading this overview before getting into the weeds. Although most of the proof is carried out in the Heisenberg picture, we present this overview in the Schr\"odinger picture for simplicity, as the underlying ideas are the same.

Our goal is to bound the worst-case trace distance to $\sigma_\beta$. Rather than analyzing trace distance directly, we analyze the \emph{quantum Wasserstein distance of order $1$}, introduced in Ref.~\cite{de2021the}, which we abbreviate as the $W_1$ distance (Definition~\ref{def:W1-primal}). Since the $W_1$ distance upper bounds trace distance (Proposition~\ref{prop:W1-trace}), exponential contraction in $W_1$ yields the desired trace-distance bound (Eq.~\eqref{eq:desired-bound}). 

Trace distance quantifies how well two states can be distinguished by an optimal global measurement, but does not indicate how their discrepancy is distributed across sites. In contrast, the $W_1$ distance accounts for this spatial structure by decomposing the discrepancy into contributions associated with individual sites and summing their costs. Specifically, for two states $\rho$ and $\sigma_\beta$, consider decompositions
\begin{align}
\rho-\sigma_\beta=\sum_i X_i, \qquad X_i=X_i^\dagger, \qquad \tr_i[X_i]=0.
\end{align}
The condition $\tr_i[X_i]=0$ means that the contribution $X_i$ vanishes upon tracing out site $i$. In this sense, $X_i$ represents the difference between the two states anchored at $i$th site. The $W_1$ distance is the minimum of $\frac12\sum_i\|X_i\|_1$ over all such decompositions.

Now fix a site $i$ and consider a single block update $\Phi_B$. There are two cases. If $i\in B$, then $\Phi_B(X_i)=0$, because the update first traces out $B$ and $\tr_i[X_i]=0$. Thus, the discrepancy anchored at site $i$ is erased and $\|\Phi_B(X_i)\|_{W_1} = 0$. If $i\notin B$, imagine an ideal situation in which the update leaves this discrepancy unchanged, so that $\Phi_B(X_i)=X_i$, thus $\|\Phi_B(X_i)\|_{W_1} = \|X_i\|_{W_1}$. There are $\ell$ blocks containing $i$, each updated at rate $1/\ell$, so the total rate at which the discrepancy is erased is $\ell\times(1/\ell)=1$. The remaining $n-\ell$ blocks are updated at total rate $(n-\ell)/\ell$, but leave $X_i$ unchanged in this ideal scenario. Consequently, each of the $\ell$ blocks containing $i$ contributes $-X_i/\ell$, while all other blocks contribute zero, giving $\mathcal L_\ell(X_i)=-X_i$, and hence $e^{t\mathcal L_\ell}(X_i)=e^{-t}X_i$. Summing over $i$, we obtain $\|e^{t\mathcal{L}_{\ell}}(\rho - \sigma)\|_{W_1} =e^{-t}\|\rho-\sigma\|_{W_1} \leq n e^{-t}$, and thus rapid mixing result.

In general, however, $\Phi_B$ affects and changes $X_i$ when $i\notin B$. Indeed, the update in Eq.~\eqref{eq:block-petz-generator-intro} involves the target global Gibbs state $\sigma_\beta$ and its marginal, so its action can extend beyond $B$. We quantify the possible increase in the $W_1$ distance through an \emph{exterior influence decay function} $\eta$ (Definition~\ref{def:exterior-influence} and Proposition~\ref{prop:single-discrepancy-influence}):
\begin{align}
\|\Phi_B(X_i)\|_{W_1}
\leq
\|X_i\|_{W_1}
+\eta\bigl(\operatorname{dist}(i,B)\bigr)\|X_i\|_{W_1}.
\end{align}
The key is to show that this additional cost decays with the distance between $i$ and $B$.

To understand why such decay is natural, decompose $H=H_B+H_{B^c}+V_{\partial B}$, where $H_B$ and $H_{B^c}$ contain the interactions supported entirely within $B$ and $B^c$, respectively, and $V_{\partial B}$ contains the interactions crossing their boundary. The block update admits the exact representation (Eq.~\eqref{eq:petz-amplitude-schrodinger})
\begin{align}
\Phi_B(\rho) = Q_B\bigl(\tau_B\otimes\tr_B[\rho]\bigr)Q_B^\dagger,
\qquad  Q_B =\sigma_\beta^{1/2}
\left(\tau_B^{-1/2}\otimes\sigma_{\beta,B^c}^{-1/2}\right),
\qquad \tau_B=\frac{e^{-\beta H_B}}{\tr[e^{-\beta H_B}]}.
\end{align}
This expression separates the update into replacement of the block by its isolated Gibbs state $\tau_B$ and a correction, encoded by the \emph{conditional Petz amplitude} $Q_B$, that accounts for the coupling across the boundary.

If $V_{\partial B} = 0$, the Gibbs state factorizes as $\sigma_\beta=\tau_B\otimes\sigma_{\beta,B^c}$, so $Q_B=I$ and the update simply resets $B$ to $\tau_B$. Although this reset need not leave $X_i$ unchanged, it cannot increase its $W_1$ norm when $i\notin B$. When $V_{\partial B}\neq0$, we expect the correction $Q_B$ to be concentrated near the boundary, with its influence weakening farther away. We make this intuition precise by showing that $Q_B$ can be approximated exponentially well by operators supported near the boundary of $B$. Together with a decay estimate for correlations within $\tau_B$, this boundary localization yields an exponentially decaying exterior influence function $\eta$.

To make this intuition rigorous, we replace $Q_B$ by an approximation $\widehat Q_{B,r}$ supported within distance $r$ of the boundary of $B$, and define
\begin{align}
\Psi_{B,r}^\dagger(\rho) :=\widehat Q_{B,r} \bigl(\tau_B\otimes\tr_B[\rho]\bigr)
\widehat Q_{B,r}^\dagger.
\end{align}
We initially assume exponential boundary quasi-locality of $Q_B$ (Assumption~\ref{asm:QL}), which essentially states $\|Q_B-\widehat Q_{B,r}\|\leq Ce^{-\mu r}$ while ensuring that $\Psi_{B,r}^\dagger$ is a quantum channel. We prove this assumption later in Section~\ref{sec:verify-QL} (Theorem~\ref{prop:verify-QL}).

For an exterior site $i$, choose $R$ so that the approximating amplitude does not reach $i$, and rewrite
\begin{align}
\Phi_B(X_i) = \Psi_{B,R}^\dagger(X_i) + \bigl(\Phi_B-\Psi_{B,R}^\dagger\bigr)(X_i).
\end{align}
The first term comes from a channel that acts trivially on site $i$. It therefore preserves the condition $\tr_i[X_i]=0$ and, by trace-norm contractivity, satisfies $\|\Psi_{B,R}^\dagger(X_i)\|_{W_1} \leq\|X_i\|_{W_1}$. Thus, any increase in the $W_1$ norm comes from the second term.

Controlling this remainder requires more than operator-norm accuracy of the amplitudes, as converting that accuracy directly into a $W_1$ bound can introduce a factor depending on the system size. We therefore use a second assumption, which states that the boundary and a distant interior region of $\tau_B$ approximately factorize, with an error that decays exponentially with their separation (Assumption~\ref{asm:GF}, proved in Theorem~\ref{thm:verify-GF}). Together with earlier construction, this lets us control the $W_1$ cost of each successive enlargement of the boundary neighborhood, $\|(\Psi_{B,r}^\dagger-\Psi_{B,r-1}^\dagger)(X_i)\|_{W_1} \lesssim re^{-\mu r}\|X_i\|_{W_1}$. Telescoping these increments beyond radius $R$ gives
\begin{align}
\|(\Phi_B-\Psi_{B,R}^\dagger)(X_i)\|_{W_1} \lesssim \sum_{r>R} r e^{-\mu r}\|X_i\|_{W_1}.
\end{align}
Combining this estimate with the nonexpansion of $\Psi_{B,R}^\dagger$ gives
\begin{align}
\|\Phi_B(X_i)\|_{W_1}
\leq \bigl(1+\eta(d)\bigr)\|X_i\|_{W_1},
\qquad
\eta(d)\lesssim\sum_{r>R(d)}r e^{-\mu r},
\qquad d=\operatorname{dist}(i,B),
\end{align}
where $R(d)$ is the largest radius for which the support of $\widehat Q_{B,R(d)}$ still avoids $i$, and $\eta(d)$ bounds the relative increase in the $W_1$ norm caused by an update at distance $d$. Since $R(d)$ differs from $d$ by at most a fixed constant, $\eta$ decays exponentially.

Finally, in 1D, at most two blocks lie at each distance $d$ from $i$. Since $a_\eta:=2\sum_{d\ge1}\eta(d)$ is finite and independent of $n$ and $\ell$, the total exterior contribution is at most $a_\eta/\ell$. Therefore, choosing $\ell>a_\eta$ makes the removal rate dominate the exterior influence discussed earlier, yielding exponential contraction and hence rapid mixing.

The rest of the paper is organized as follows.
\begin{itemize}
    \item In Section~\ref{sec:background}, we introduce the quantum $W_1$ distance and the block heat-bath dynamics, together with their basic properties.
    
    \item In Section~\ref{sec:curvature}, we show that if $\sum_{d\geq1}\eta(d)$ is bounded by a constant independent of the system size and block length, then a sufficiently large constant block length $\ell_{\star} = \mathcal{O}(1)$ yields rapid mixing (Theorem~\ref{thm:influence-to-curvature} and Corollary~\ref{cor:abstract-trace-mixing}).

    \item In Section~\ref{sec:petz-influence}, we prove that $\eta$ decays exponentially with the distance from the site to the updated block, assuming boundary quasi-locality of the Petz amplitude and boundary-to-interior Gibbs factorization (Assumptions~\ref{asm:QL} and~\ref{asm:GF}; Theorem~\ref{thm:static-to-influence}).

    \item In Section~\ref{sec:gibbs-inputs}, we prove that both assumptions hold for the one-dimensional Gibbs states considered here at every fixed finite temperature (Theorems~\ref{prop:verify-QL} and~\ref{thm:verify-GF}).

    \item In Section~\ref{sec:dynamics-implementation}, we present an explicit algorithm for simulating the dynamics and derive a polylogarithmic-depth circuit for Gibbs-state preparation, proving Theorem~\ref{thm:main-algo}.
    
\end{itemize}

\section{Background}
\label{sec:background}

\paragraph{Notation.} For a finite region $S\subseteq\Lambda_n$, we write $\mathcal A_S=\mathcal B((\mathbb C^d)^{\otimes |S|})$, and identify $\mathcal A_S$ with operators on the full chain that act trivially on $S^c$. We denote by $\tr_S$ the partial trace over $S$, and by $S^c=\Lambda_n\setminus S$ its complement. The operator, trace, and diamond norms are denoted by $\|\cdot\|$, $\|\cdot\|_1$, and $\|\cdot \|_\diamond$, respectively. For a subset $S\subseteq\Lambda_n$ and $r\geq0$, we denote its radius-$r$ neighborhood by $N_r(S) := \left\{ i\in\Lambda_n: \operatorname{dist}(i,S)\leq r \right\}$. The distance between subsets $S$ and $T$ of the periodic chain is the graph distance, $\mathrm{dist}(S,T):=\min_{i\in S,j\in T}\mathrm{dist}(i,j)$. We write $\mathcal B_{n,\ell}$ for the family of all cyclic intervals $B\subseteq\Lambda_n$ of length $\ell$.

The evolution generated by $\mathcal L$ in the Schr"odinger picture is denoted by $e^{t\mathcal L}$, while $e^{t\mathcal L^\dagger}$ denotes its Heisenberg-picture adjoint, defined by
\begin{align}
    \tr[e^{t\mathcal L}(\rho)A]=\tr[\rho e^{t\mathcal L^\dagger}(A)].
\end{align}
Similarly, for a quantum channel $\Phi$, we denote its Heisenberg-picture adjoint as $\Phi^\dagger$. 

If $\sigma$ is the unique stationary state of $e^{t\mathcal L}$, we define its mixing time by
\begin{align}
t_{\mathrm{mix}}(\varepsilon):=\inf\left\{t\geq0:\frac12\|e^{t\mathcal L}(\rho)-\sigma\|_1\leq\varepsilon, ~ \forall \rho\right\}.
\label{eq:mixing-time}
\end{align}

\subsection{Quantum Wasserstein distance of order $1$}

We introduce the quantum Wasserstein distance of order $1$ introduced in Ref.~\cite{de2021the}.

\begin{definition}[Quantum Wasserstein distance~{\cite[Definition~6]{de2021the}}]
\label{def:W1-primal}
For a traceless Hermitian operator $X\in\mathcal A_{\Lambda_n}$, define
\begin{align}
\|X\|_{W_1}:=\frac{1}{2}\inf\left\{\sum_{i\in\Lambda_n}\|X_i\|_1:X=\sum_{i\in\Lambda_n}X_i, X_i=X_i^\dagger, \tr_i[X_i]=0\right\}.
\label{eq:w1-norm}
\end{align}
For two states $\rho$ and $\sigma$, their quantum Wasserstein distance of order $1$ is $\|\rho-\sigma\|_{W_1}$.
\end{definition}
Informally, each $X_i$ represents a component of the discrepancy that becomes invisible once site $i$ is traced out, as expressed by $\tr_i[X_i]=0$, while $\frac12\|X_i\|_1$ quantifies its magnitude. Thus, $\|X\|_{W_1}$ is the minimum total cost of decomposing $X$ into such site-associated changes. In particular, for $X=\rho-\sigma$, it captures not only the overall distinguishability of the states, but also how broadly their discrepancy is distributed across the sites.

The quantum $W_1$ distance admits a dual formulation in terms of a local oscillation.

\begin{definition}[Local oscillations and Lipschitz constant]
\label{def:local-oscillation}
For an observable $A\in\mathcal A_{\Lambda_n}$, define its
oscillation at site $i$ by
\begin{align}
\delta_i(A) := 2\inf_{K\in\mathcal A_{i^c}}\|A-K\|,
\end{align}
and its associated Lipschitz constant by
\begin{align}
\|A\|_{L}:=\max_{i\in\Lambda_n}\delta_i(A).
\label{eq:lip-norm}
\end{align}
\end{definition}

\begin{proposition}[Dual form~{\cite[Proposition~9]{de2021the}}]
\label{prop:w1-duality}
For every traceless Hermitian operator $X$,
\begin{align}
\|X\|_{W_1} = \sup\left\{ \tr[AX]: A=A^\dagger,  \|A\|_{L}\leq 1 \right\}.
\label{eq:w1-duality}
\end{align}
\end{proposition}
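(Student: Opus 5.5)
This is a linear-programming duality between the primal decomposition norm and the Lipschitz seminorm. I would prove it in two steps.

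\textbf{Weak duality ($\ge$).} Take any admissible decomposition $X=\sum_i X_i$ with $X_i=X_i^\dagger$ and $\tr_i[X_i]=0$, and any Hermitian $A$ with $\|A\|_L\le 1$. For each $i$ and every $K\in\mathcal A_{i^c}$ we have $\tr[KX_i]=\tr[K\,\tr_i X_i]=0$. Hence
\begin{align}
\tr[AX_i]=\tr[(A-K)X_i]\le \|A-K\|\,\|X_i\|_1 .
\end{align}
Taking the infimum over $K$ gives $\tr[AX_i]\le \tfrac12\delta_i(A)\|X_i\|_1\le \tfrac12\|X_i\|_1$. The infimum defining $\delta_i$ may be restricted to Hermitian $K$ without loss, by replacing $K$ with its Hermitian part. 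Summing over $i$ and taking the infimum over decompositions yields $\sup\le\|X\|_{W_1}$.

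\textbf{Strong duality ($\le$).} Work in the finite-dimensional real space $V$ of traceless Hermitian operators. Define the convex set
\begin{align}
\mathcal C=\mathrm{conv}\bigcup_i\{Y_i=Y_i^\dagger:\ \tr_iY_i=0,\ \tfrac12\|Y_i\|_1\le1\}.
\end{align}
Then $\|\cdot\|_{W_1}$ is the Minkowski gauge of $\mathcal C$: a decomposition with costs $c_i$ gives $X/\sum_i c_i\in\mathcal C$, and conversely. The set $\mathcal C$ is compact, being a convex hull of finitely many compact sets, and it is symmetric. It also spans $V$: every traceless operator is a sum of local traceless pieces, for instance via the telescoping decomposition $X=\sum_i(\tau_{<i}\otimes\tr_{<i}X-\tau_{\le i}\otimes\tr_{\le i}X)$ with $\tau$ maximally mixed, and each term has vanishing $\tr_i$. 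So the gauge is a genuine norm on $V$.

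By bipolar (Hahn--Banach) duality, $\|X\|_{W_1}=\sup\{\tr[AX]: A\in\mathcal C^\circ\}$, where the polar is $\mathcal C^\circ=\{A:\sup_{Y\in\mathcal C}\tr[AY]\le1\}$ and $A$ is taken Hermitian modulo multiples of the identity. The remaining step is to identify this polar condition: show that $\sup\{\tr[AY]: \tr_iY=0,\ \tfrac12\|Y\|_1\le1\}=\tfrac12\delta_i(A)$. This is the site-wise duality between the quotient norm $2\inf_{K\in\mathcal A_{i^c}}\|A-K\|$ and the trace norm restricted to the annihilator subspace $\{Y:\tr_iY=0\}$. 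It follows because the dual of a quotient by a subspace is the restriction to that subspace's annihilator. The annihilator of $\mathcal A_{i^c}$ under the trace pairing is exactly $\{Y:\tr_iY=0\}$, and the $\|\cdot\|_1$ and $\|\cdot\|$ norms are dual on Hermitian operators. Hence $A\in\mathcal C^\circ$ if and only if $\max_i\delta_i(A)\le 2\cdot\tfrac12\cdot$(normalization)$\,\le 1$, that is, $\|A\|_L\le1$.

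\textbf{Main obstacle.} The delicate part is this last identification. It requires checking that the factors of $2$ match and that restricting to Hermitian $A$ and $K$ loses nothing. I would handle the Hermitian issue by symmetrizing with the adjoint, since $\tr[AY]$ is real for Hermitian $Y$ when $A$ is replaced by its Hermitian part, and the oscillation does not increase under this replacement.
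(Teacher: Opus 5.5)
The paper gives no proof of its own here; it imports the identity from De Palma et al. Your route is the standard convex-duality proof and is sound in structure. Weak duality comes from subtracting $K\in\mathcal A_{i^c}$ and applying H\"older, the same step the paper uses in the proof of Proposition~\ref{prop:single-discrepancy-influence}. Strong duality then realizes $\|\cdot\|_{W_1}$ as the gauge of $\mathcal C=\mathrm{conv}\bigcup_i S_i$ and computes the polar site by site via quotient/annihilator duality. The gauge identification, compactness, the telescoping decomposition showing $\mathcal C$ spans $V$, and the reduction to Hermitian $K$ are all fine.

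There is one concrete error, at the step that fixes the constant, which is the whole content of the identity. For $S_i=\{Y=Y^\dagger:\ \tr_iY=0,\ \tfrac12\|Y\|_1\le1\}$ the support function is $\delta_i(A)$, not $\tfrac12\delta_i(A)$. You carried over the weak-duality value, which belongs to the ball $\|Y\|_1\le1$. In the real space of Hermitian operators, the trace and operator norms are dual under $\tr[AY]$. The annihilator of $M_i:=\{I_i\otimes K:\ K=K^\dagger\}$ is $\{Y:\tr_iY=0\}$. Hence
\[
\sup\bigl\{\tr[AY]:\ Y=Y^\dagger,\ \tr_iY=0,\ \|Y\|_1\le1\bigr\}=\operatorname{dist}_{\|\cdot\|}(A,M_i)=\tfrac12\delta_i(A),
\]
and the constraint $\tfrac12\|Y\|_1\le1$ allows $\|Y\|_1\le2$, which doubles this.

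Taken literally, your stated value gives $\mathcal C^\circ=\{A:\|A\|_L\le2\}$. That yields $\|X\|_{W_1}=2\sup\{\tr[AX]:\|A\|_L\le1\}$, contradicting the weak-duality inequality you had just proved. The line ``$\max_i\delta_i(A)\le 2\cdot\tfrac12\cdot(\text{normalization})\le1$'' does not derive anything; it only asserts the desired conclusion.

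With the corrected value, $\sup_{Y\in\mathcal C}\tr[AY]=\max_i\delta_i(A)=\|A\|_L$. So $\mathcal C^\circ=\{A=A^\dagger:\|A\|_L\le1\}$ modulo $\mathbb R I$, on which $\|\cdot\|_L$ is invariant, and the bipolar theorem gives exactly the claimed equality. A one-qubit check confirms the normalization. The operator $X=|0\rangle\langle0|-|1\rangle\langle1|$ has $\|X\|_{W_1}=1$. The observable $A=|0\rangle\langle0|$ has $\delta_1(A)=1$ and $\tr[AX]=1$, so $Y=X\in S_1$ already attains $\delta_1(A)$.
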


Thus, the quantum $W_1$ norm is the largest expectation-value difference detectable by observables whose sensitivity to changing any single site is at most one. This locality constraint distinguishes it from the trace norm, whose dual constraint is instead the global bound $\|A\|\leq1$. For example, for $n$ qubits let $A=\sum_{j=1}^n Z_j$. Then $\|A\|_{L}=2$, independently of $n$, whereas $\|A\|=n$. Hence $A/2$ is an admissible witness in Eq.~\eqref{eq:w1-duality}. Using this witness for $\rho=\ket{0^n} \bra{0^n}$ and $\sigma=\ket{1^n} \bra{1^n}$ gives $\|\rho-\sigma\|_{W_1}=n$, while their trace distance is $1$. 
A more extreme example is provided by comparing $\lvert 0\rangle^{\otimes n}$ with $\lvert 1\rangle\lvert 0\rangle^{\otimes(n-1)}$ and $\lvert 1\rangle^{\otimes n}$. Both pairs have trace distance $1$, since the corresponding states are perfectly distinguishable. In contrast, their quantum $W_1$ distances are $1$ and $n$, respectively, reflecting whether the difference involves a single site or the entire system.

The quantum $W_1$ distance can be related to trace distance. 

\begin{proposition}[Comparison with the trace norm~{\cite[Proposition~2]{de2021the}}]
\label{prop:W1-trace}
For every traceless Hermitian operator $X$,
\begin{align}
\frac{1}{2}\|X\|_1\leq\|X\|_{W_1}\leq\frac{n}{2}\|X\|_1.
\label{eq:w1-trace-comparison}
\end{align}
\end{proposition}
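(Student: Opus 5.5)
Both inequalities follow from comparing the two definitions directly: the primal form in Definition~\ref{def:W1-primal} and, where it is easier, the dual form in Proposition~\ref{prop:w1-duality}.

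\emph{Lower bound.} Take any admissible decomposition $X=\sum_i X_i$ with $X_i=X_i^\dagger$ and $\tr_i[X_i]=0$. The triangle inequality gives $\|X\|_1\le\sum_i\|X_i\|_1$. Taking the infimum over decompositions gives $\frac12\|X\|_1\le\|X\|_{W_1}$.

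Alternatively, one can use duality. Let $A$ be a Hermitian observable with $\|A\|\le 1$ and $\tr[AX]=\|X\|_1$ (e.g.\ the sign of $X$). Choosing $K=0$ in Definition~\ref{def:local-oscillation} gives $\delta_i(A)\le 2\|A\|\le 2$ for every $i$, so $A/2$ is admissible. Hence $\|X\|_{W_1}\ge\frac12\|X\|_1$.

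\emph{Upper bound.} I would build an explicit decomposition by peeling off one site at a time. Fix a site ordering $1,\dots,n$. Write $\pi_{\le k}$ for the operation $Y\mapsto \frac{I_{1\cdots k}}{d^k}\otimes\tr_{1\cdots k}[Y]$, with $\pi_{\le 0}=\mathrm{id}$. Define
\begin{align}
X_k := \pi_{\le k-1}(X)-\pi_{\le k}(X),\qquad k=1,\dots,n.
\end{align}
These terms have the required properties:
\begin{itemize}
\item Each $X_k$ is Hermitian.
\item Tracing out site $k$ from $\pi_{\le k-1}(X)$ and from $\pi_{\le k}(X)$ gives the same operator, so $\tr_k[X_k]=0$.
\item The sum telescopes to $X-\pi_{\le n}(X)=X-\frac{I}{d^n}\tr[X]=X$, using that $X$ is traceless.
\end{itemize}

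Each replacement channel $\pi_{\le k}$ is trace-norm contractive. Therefore $\|X_k\|_1\le 2\|X\|_1$, which only gives $\|X\|_{W_1}\le n\|X\|_1$ — off by a factor of two. To get the sharp constant I would switch to the dual form and bound $\delta_i(A)$ differently.

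For $A$ with $\|A\|_L\le 1$, the telescoping above (in the Heisenberg picture) gives $A-\frac{\tr A}{d^n}I=\sum_k(\pi^\dagger_{\le k-1}-\pi^\dagger_{\le k})(A)$. Here $\pi^\dagger_{\le k}(A)=\frac{I}{d^k}\otimes\tr_{1\cdots k}[A]$, i.e.\ the normalized partial trace tensored with identity. Each term equals $\pi^\dagger_{\le k-1}\bigl(A'-\frac{I_k}{d}\otimes\tr_k A'\bigr)$ with $A'=A$ after the earlier traces.

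The key fact is
\begin{align}
\Bigl\|A-\frac{I_k}{d}\otimes\tr_k A\Bigr\|\le \delta_k(A),
\end{align}
valid for any $K\in\mathcal A_{k^c}$, since $A-\frac{I_k}{d}\otimes\tr_kA=(\mathrm{id}-\pi_k^\dagger)(A-K)$ and $\|\mathrm{id}-\pi_k^\dagger\|\le 2$. Partial traces do not increase $\delta_k$, so each term has norm at most $\delta_k(A)\le 1$.

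Then $|\tr[AX]|=\bigl|\tr[(A-c I)X]\bigr|\le\|A-cI\|\,\|X\|_1$. Here $\|A-cI\|\le\sum_k 1=n$, which gives $\|X\|_{W_1}\le n\|X\|_1$ — still a factor two too large. The fix is to use the sharper oscillation bound: the constant-shifted $A-cI$ can be chosen so that $\|A-c I\|\le \frac12\cdot\mathrm{span}$, and the spectral span of each telescoping term is at most $\delta_k(A)$. This uses $\delta_k$ as twice a distance, matching the halving in the definition.

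\emph{Main obstacle.} The only delicate step is this factor-of-two bookkeeping, i.e.\ obtaining $\frac n2$ rather than $n$. I would settle it by bounding the spectral diameter instead of the operator norm, since $\tr[AX]$ is invariant under shifting $A$ by multiples of $I$.
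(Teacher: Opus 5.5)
Your lower bound is correct, both via the triangle inequality and via the dual witness $A=\tfrac12\operatorname{sign}(X)$. The upper bound with constant $\tfrac n2$ is not proved. Both of your explicit estimates give only $\|X\|_{W_1}\le n\|X\|_1$, and your proposed repair rests on a false claim.

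The claim is that each telescoping term $Y_k=(\pi^\dagger_{\le k-1}-\pi^\dagger_{\le k})(A)$ has spectral span at most $\delta_k(A)$. It already fails for $n=2$, $k=1$. Take two qubits and $A=|\psi\rangle\langle\psi|$ with $|\psi\rangle=\sqrt{0.8}\,|00\rangle+\sqrt{0.2}\,|11\rangle$.
\begin{itemize}
\item Choosing $K=\tfrac12 I$ gives $\delta_1(A)\le 2\|A-\tfrac12 I\|=1$.
\item $Y_1=A-\tfrac{I_1}{2}\otimes\tr_1A$ is block diagonal. Its eigenvalues are $\tfrac14\pm\sqrt{0.1825}$ on $\mathrm{span}\{|00\rangle,|11\rangle\}$, $-0.1$ on $|01\rangle$, and $-0.4$ on $|10\rangle$.
\item Hence its span is about $1.077>\delta_1(A)$.
\end{itemize}
Your identity $Y_k=(\mathrm{id}-\mathcal D_k)(A-K)$, with $\mathcal D_k$ your $\pi_k^\dagger$, only yields span at most $2\delta_k(A)$. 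Passing from norm to span does not recover the factor two.

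Read dually, the same example defeats your primal decomposition. Take $X=\tfrac12(|v_+\rangle\langle v_+|-|v_-\rangle\langle v_-|)$, where $v_\pm$ are the extreme eigenvectors of $Y_1$, and test against $2A-I$. The first increment $X_1=X-\mathcal D_1(X)$ then has $\|X_1\|_1\approx1.077\,\|X\|_1$. So no term-by-term bound on that decomposition can give $\tfrac n2$.

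The loss is structural, not bookkeeping. Writing $X=c(\rho-\sigma)$, each of your increments is the difference of one step of a chain from $\rho$ to $I/d^n$ and one step of a chain from $\sigma$ to $I/d^n$. It therefore pays for two chains instead of one.

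The missing idea is to telescope directly from the positive part of $X$ to its negative part.
\begin{itemize}
\item Since $\tr X=0$, the Jordan decomposition $X=X_+-X_-$ gives $\tr X_+=\tr X_-=\tfrac12\|X\|_1$. So $X=\tfrac{\|X\|_1}{2}(\rho-\sigma)$ with states $\rho=X_+/\tr X_+$ and $\sigma=X_-/\tr X_-$.
\item Define hybrid states $\rho^{(k)}:=\sigma_{1\cdots k}\otimes\rho_{k+1\cdots n}$, built from marginals of $\sigma$ and $\rho$. Then $\rho^{(0)}=\rho$, $\rho^{(n)}=\sigma$, and $\tr_k\rho^{(k-1)}=\sigma_{1\cdots k-1}\otimes\rho_{k+1\cdots n}=\tr_k\rho^{(k)}$.
\item Thus $X_k:=\tfrac{\|X\|_1}{2}(\rho^{(k-1)}-\rho^{(k)})$ is an admissible decomposition.
\item Each $\|\rho^{(k-1)}-\rho^{(k)}\|_1\le 2$, being a difference of two states. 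Hence $\|X\|_{W_1}\le\tfrac12\sum_k\|X_k\|_1\le\tfrac n2\|X\|_1$.
\end{itemize}

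In dual form, take $\rho,\sigma$ to be the projectors onto top and bottom eigenvectors of $A$. Each term then gives $\tr[A(\rho^{(k-1)}-\rho^{(k)})]=\tr[(A-K)(\rho^{(k-1)}-\rho^{(k)})]\le 2\|A-K\|$ for every $K\in\mathcal A_{k^c}$. Summing yields $\lambda_{\max}(A)-\lambda_{\min}(A)\le\sum_k\delta_k(A)$. This is exactly the span inequality you wanted, but it must be proved for $A$ as a whole against hybrid states, not term by term.

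The paper itself gives no proof of this proposition and simply cites de Palma et al. The hybrid-state telescoping above is the standard argument behind that result.
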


We next quantify mixing through the contraction of the $W_1$ distance under the dynamics.

\begin{definition}[$W_1$ curvature]
\label{def:W1-curvature}
We say that $e^{t\mathcal L}$ has $W_1$ curvature at least $\gamma>0$ if
\begin{align}
\|e^{t\mathcal L}(\rho) - e^{t\mathcal L}(\sigma)\|_{W_1} \leq e^{-\gamma t}\|\rho - \sigma\|_{W_1}
\end{align}
for all states $\rho,\sigma$ and all $t\geq0$.
\end{definition}

If $\sigma$ is stationary, positive $W_1$ curvature immediately yields
\begin{align}
\label{eq:desired-bound}
\frac{1}{2}\|e^{t\mathcal L}(\rho)-\sigma\|_1 \leq \|e^{t\mathcal L}(\rho)-\sigma\|_{W_1} \leq e^{-\gamma t}\|\rho - \sigma\|_{W_1} \leq n e^{-\gamma t},
\end{align}
where the first and last inequalities follow from Eq.~\eqref{prop:W1-trace}. Consequently, $t_{\mathrm{mix}}(\varepsilon)\leq\frac{1}{\gamma}\log\left(\frac{n}{\varepsilon}\right)$. Thus, rapid mixing follows once a positive $W_1$ curvature bound is established uniformly in the system size.

\subsection{Quantum block heat-bath dynamics}
\label{subsec:heat-bath-dynamics}

The classical heat-bath Glauber dynamics updates a region by discarding its current configuration and resampling it from the Gibbs conditional distribution given its complement. Motivated by the quantum heat-bath constructions of Refs.~\cite{kastoryano2016quantum,bardet2021on}, we use the Petz recovery map as the noncommutative analogue of conditional resampling.

\begin{definition}[Petz recovery map]
Let $B\subseteq\Lambda_n$, and let $\sigma_{\beta,B^c}=\tr_B[\sigma_\beta]$. The Petz recovery map associated with the partial trace over $B$ and reference state $\sigma_\beta$ is
\begin{align}
\mathcal R^{\sigma_\beta}_{B^c\to\Lambda_n}(X) := \sigma_\beta^{1/2} \left( I_B\otimes \sigma_{\beta,B^c}^{-1/2} X \sigma_{\beta,B^c}^{-1/2} \right) \sigma_\beta^{1/2}.
\label{eq:petz-recovery}
\end{align}
\end{definition}

The corresponding heat-bath update first erases the degrees of freedom in $B$ and then reconstructs them using the Gibbs reference state.

\begin{definition}[Petz block heat-bath dynamics]
\label{def:petz-heat-bath-block-dynamics}
For a block $B$, define the heat-bath update
\begin{align}
\Phi_B := \mathcal R^{\sigma_\beta}_{B^c\to\Lambda_n}\circ\tr_B.
\label{eq:block-petz-map}
\end{align}
For a fixed block length $\ell$, the corresponding continuous-time
generator is
\begin{align}
\mathcal L_\ell = \frac1\ell \sum_{B\in\mathcal B_{n,\ell}} (\Phi_B-\mathcal I).
\label{eq:block-petz-generator}
\end{align}
\end{definition}


We first state that, for any fixed block length $\ell$ independent of the system size, this Lindbladian is quasi-local, which will be essential for the efficient implementation of the dynamics later. We defer the proof to Section~\ref{sec:dynamics-implementation} and Appendix~\ref{app:finite-window}.

\begin{lemma}[Quasi-locality of $\mathcal L_\ell$]
\label{lem:quasi-locality}
Fix $\beta, d, \ell$, and let $\mathcal L_\ell=\sum_B\mathcal L_B$, where $\mathcal L_B=\ell^{-1}(\Phi_B-I)$. For every $r\ge0$, each $\mathcal L_B$ admits a Lindbladian approximation $\mathcal L_B^{(r)}$ supported on $N_r(B)$ such that
\begin{align}
\|\mathcal L_B-\mathcal L_B^{(r)}\|_\diamond \le C e^{-\mu r},
\end{align}
where $C,\mu>0$ are indepedent of the system size. 
\end{lemma}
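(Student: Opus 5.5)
Our plan is to build the approximation explicitly by replacing the global Gibbs state in the definition of $\Phi_B$ with the Gibbs state of a finite window, and to control the error using known 1D locality properties of Gibbs states. Let $W=N_r(B)$, and let $\sigma^{W}=e^{-\beta H_W}/\tr[e^{-\beta H_W}]$, where $H_W$ contains only the interaction terms supported inside $W$. We define $\Phi_B^{(r)}:=\mathcal R^{\sigma^W}_{W\setminus B\to W}\circ\tr_B$, acting as the identity on $W^c$, and set $\mathcal L_B^{(r)}:=\ell^{-1}(\Phi_B^{(r)}-\mathcal I)$. Since $\Phi_B^{(r)}$ is a quantum channel (a Petz map composed with a partial trace), $\mathcal L_B^{(r)}$ is a Lindbladian supported on $N_r(B)$. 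It then remains to bound $\|\Phi_B-\Phi_B^{(r)}\|_\diamond\le Ce^{-\mu r}$.

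To do this, we use the amplitude form $\Phi_B(\rho)=Q_B(\tau_B\otimes\tr_B\rho)Q_B^\dagger$ with $Q_B=\sigma_\beta^{1/2}(\tau_B^{-1/2}\otimes\sigma_{\beta,B^c}^{-1/2})$, together with the analogous form for $\Phi_B^{(r)}$ with amplitude $Q_B^{W}$ built from $\sigma^W$. Both maps share the same Kraus-like structure, so
\begin{align}
\|\Phi_B-\Phi_B^{(r)}\|_\diamond\le \bigl(\|Q_B\|+\|Q_B^W\|\bigr)\,\|\tau_B\|_{\infty}^{0}\cdots
\end{align}
More precisely, writing $\Phi_B(\rho)-\Phi_B^{(r)}(\rho)=(Q_B-Q_B^W)Y Q_B^\dagger+Q_B^WY(Q_B-Q_B^W)^\dagger$ with $Y=\tau_B\otimes\tr_B\rho$ (tensored with an identity on the reference system), Hölder's inequality gives a diamond-norm bound of $(\|Q_B\|+\|Q_B^W\|)\|Q_B-Q_B^W\|$. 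Two inputs are therefore needed. First, $\|Q_B\|$ and $\|Q_B^W\|$ must be bounded uniformly in $n$ and $r$. This follows from the standard 1D boundary estimates: the operator $e^{-\beta H}e^{\beta(H_B+H_{B^c})}$ and its inverse are bounded by constants depending only on $\beta$. This is an Araki-type expansional bound for a boundary term with two interaction terms, and we would cite \cite{bluhm2022exponential,scalet2025classical}. Second, $\|Q_B-Q_B^W\|\le Ce^{-\mu r}$, meaning the amplitude is boundary quasi-local. This is precisely the content of Assumption~\ref{asm:QL} and Theorem~\ref{prop:verify-QL}.

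The main obstacle is this second estimate. Our approach would be to factor $Q_B$ as a product of "square-root perturbation" operators: $\sigma_\beta^{1/2}(\tau_B^{-1/2}\otimes \tilde\sigma_{B^c}^{-1/2})$ times a marginal correction $\tilde\sigma_{B^c}^{1/2}\sigma_{\beta,B^c}^{-1/2}$, where $\tilde\sigma_{B^c}\propto e^{-\beta H_{B^c}}$. We would show each factor is quasi-local near $\partial B$. For the first factor, Araki's locality of $e^{-\beta H/2}e^{\beta(H_B+H_{B^c})/2}$ in 1D gives an exponentially accurate approximation by operators supported on $N_r(\partial B)$, and these agree with the window versions because the Hamiltonians coincide there. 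For the marginal factor, $\sigma_{\beta,B^c}=\tr_B\sigma_\beta$ is controlled by the 1D results of \cite{bluhm2022exponential}: exponential decay of correlations and local indistinguishability of Gibbs marginals, and the quasi-locality of $\tr_B$ of boundary-dressed Gibbs operators. We would combine these with uniform invertibility to pass to inverse square roots via integral representations. Once the telescoping error is bounded by $Ce^{-\mu r}$ with constants depending only on $\beta,d,\ell$, the lemma follows.
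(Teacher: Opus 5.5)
Your construction is the paper's: the same window channel $\Phi_B^{H_W}\otimes\mathrm{id}_{W^c}$, the same reduction of the diamond distance to $(\|Q_B\|+\|Q_B^W\|)\,\|Q_B-Q_B^W\|$, and the same uniform amplitude bounds.

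\textbf{The gap.} The central estimate $\|Q_B-Q_B^W\|\le Ce^{-\mu r}$ is \emph{not} the content of Assumption~\ref{asm:QL} or Theorem~\ref{prop:verify-QL}.
\begin{itemize}
\item Those results give support locality, $Q_B\approx\mathcal D_{S_r^c}(Q_B)$. The approximant is supported near $\partial B$, but it is still computed from the full Hamiltonian $H$.
\item They say nothing about how $Q_B$ changes when the two bonds crossing $\partial W$ are deleted. Knowing $Q_B\approx\mathcal D_{S_r^c}(Q_B)$ and $Q_B^W\approx\mathcal D_{S_r^c}(Q_B^W)$ separately does not give $\mathcal D_{S_r^c}(Q_B)\approx\mathcal D_{S_r^c}(Q_B^W)$.
\item This stability statement is exactly what the paper proves separately in Appendix~\ref{app:finite-window}. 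It uses Lemma~\ref{lem:FW-distant-cut} for $\Delta_B$, then a Duhamel/Lieb--Robinson comparison of $F_1(it)$ and $F_0(it)$, then three-lines interpolation.
\end{itemize}

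Your fallback factorization does not close the gap either. The Araki factor $e^{-\beta H/2}e^{\beta(H_B+H_{B^c})/2}$ is fine. All the difficulty sits in $\tilde\sigma_{B^c}^{1/2}\sigma_{\beta,B^c}^{-1/2}$:
\begin{itemize}
\item What is known to be local is $\sigma_{\beta,B^c}\tilde\sigma_{B^c}^{-1}\propto T_{B^c}$.
\item The two operators do not commute, and the eigenvalue ratios of $\tilde\sigma_{B^c}$ reach $e^{\mathcal O(\beta n)}$.
\item So ``uniform invertibility plus an integral representation'' does not produce a local relative square root.
\end{itemize}
This noncommutative square-root problem is precisely why the paper uses the analytic family $F_B(z)$. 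Its imaginary-axis values are unitaries generated by the quasi-local $\Delta_B$, and the paper then interpolates to $z=1/2$.

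\textbf{How to close it.} You can get the missing estimate directly from Theorem~\ref{prop:verify-QL} by applying the control-qubit device of Lemma~\ref{lem:FW-distant-cut} to $Q_B$ instead of $\Delta_B$.
\begin{itemize}
\item Take a bond $e$ crossing $\partial W$ with endpoint $x\in W^c$. Let $J_1$ contain $v_e$, set $J_0=J_1-v_e$, and define $\widehat J=J_0\otimes I_F+v_e\otimes|1\rangle\langle1|_F$ with the control qubit $F$ attached to $x$.
\item For $r\ge1$, both $J_0$ and $J_1$ have the same $\tau_B$. The weights $p_j$ cancel between $\widehat\rho^{1/2}$ and $\widehat\rho_{B^cF}^{-1/2}$, so $\widehat Q_B=\sum_j Q_B^{(j)}\otimes|j\rangle\langle j|_F$, where $Q_B^{(j)}$ is the amplitude of $J_j$.
\item Apply Theorem~\ref{prop:verify-QL} on the enlarged chain (local dimension $2d$) with radius $s=\lfloor r/2\rfloor$ and $\widehat S_s:=N_{s+c_0}(\partial B)$. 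This gives $\|\widehat Q_B-\mathcal D_{\widehat S_s^c}(\widehat Q_B)\|\le C_Qe^{-\mu_Q s}$.
\item Once $x\notin\widehat S_s$, the operator $\mathcal D_{\widehat S_s^c}(\widehat Q_B)$ has the form $X\otimes I_F$. Compressing to each $F$-sector yields $\|Q_B^{(1)}-Q_B^{(0)}\|\le 2C_Qe^{-\mu_Q s}$.
\item Repeat for the second crossing bond. The amplitude of $H_W+H_{W^c}$ is $Q_B^W\otimes I_{W^c}$, which gives your estimate.
\end{itemize}
This route is shorter than the paper's re-run of the Duhamel and interpolation argument.
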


\begin{proof}
In Lemma~\ref{lem:finite-window-approximation}, we show that each $\Phi_B$ is quasi-local by constructing a channel $\Phi_B^{(r)}$ supported on $N_r(B)$ such that $\|\Phi_B-\Phi_B^{(r)}\|_\diamond\le Ce^{-\mu r}$. Hence, taking $\mathcal L_B^{(r)} = \ell^{-1}(\Phi_B^{H_{N_r(B)}}\otimes I_{N_r(B)^c}-I)$ and applying Lemma~\ref{lem:finite-window-approximation} give the proof. 
\end{proof}

The next elementary facts ensure that Eq.~\eqref{eq:block-petz-generator} is KMS detailed-balanced.

\begin{proposition}[KMS-detailed-balanced condition of $\mathcal{L}_\ell$]
\label{prop:petz-basic}
For every block $B$, the map $\Phi_B$ is CPTP and $\Phi_B(\sigma_\beta)=\sigma_\beta$. Consequently, $\mathcal L_\ell(\sigma_\beta)=0$, and $\mathcal L_\ell$ generates a CPTP semigroup $e^{t\mathcal L_\ell}$. Moreover, each $\Phi_B$ is KMS-reversible with respect to $\sigma_\beta$, and hence $\mathcal L_\ell$ satisfies KMS detailed balance with respect to $\sigma_\beta$.
\end{proposition}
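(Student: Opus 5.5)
We prove Proposition~\ref{prop:petz-basic} in three steps: complete positivity and trace preservation of each $\Phi_B$, invariance of $\sigma_\beta$, and KMS reversibility. The semigroup and detailed-balance statements for $\mathcal L_\ell$ then follow by linearity. Throughout, $\sigma_\beta$ is full rank because $\beta<\infty$, and so is $\sigma_{\beta,B^c}$. Hence all inverse square roots are well defined and we need no support projections.

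\textbf{CPTP and invariance.} $\Phi_B$ is a composition of three maps:
\begin{itemize}
\item the partial trace $\tr_B$;
\item the map $Y\mapsto I_B\otimes \sigma_{\beta,B^c}^{-1/2}Y\sigma_{\beta,B^c}^{-1/2}$, which is a congruence followed by tensoring with the identity;
\item the congruence $Z\mapsto\sigma_\beta^{1/2}Z\sigma_\beta^{1/2}$.
\end{itemize}
Each has a single Kraus operator (or is an ampliation), so each is completely positive, and therefore $\Phi_B$ is completely positive.

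For trace preservation, use cyclicity:
\[
\tr[\Phi_B(\rho)]=\tr\bigl[\sigma_\beta(I_B\otimes\sigma_{\beta,B^c}^{-1/2}\tr_B[\rho]\sigma_{\beta,B^c}^{-1/2})\bigr].
\]
Performing the trace over $B$ first turns $\sigma_\beta$ into $\sigma_{\beta,B^c}$, so this equals $\tr[\sigma_{\beta,B^c}^{1/2}\,\tr_B[\rho]\,\sigma_{\beta,B^c}^{-1/2}]=\tr[\rho]$.

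For invariance, $\tr_B[\sigma_\beta]=\sigma_{\beta,B^c}$, so the middle factor of $\Phi_B(\sigma_\beta)$ is $I_B\otimes I_{B^c}=I$, and hence $\Phi_B(\sigma_\beta)=\sigma_\beta$.

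Consequently $\mathcal L_\ell(\sigma_\beta)=0$. Moreover, $\mathcal L_\ell=\frac1\ell\sum_B\Phi_B-\frac{|\mathcal B_{n,\ell}|}{\ell}\mathcal I$ has the form $\Phi-c\,\mathcal I$ with $\Phi$ completely positive and trace preserving up to the factor $c$. Thus $e^{t\mathcal L_\ell}=e^{-ct}e^{t\Phi}$ is completely positive, since $e^{t\Phi}$ is a power series in CP maps with nonnegative coefficients. It is trace preserving because $\tr\circ\mathcal L_\ell=0$.

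\textbf{KMS reversibility.} Let $\Gamma(X)=\sigma_\beta^{1/2}X\sigma_\beta^{1/2}$ and $\langle A,C\rangle_{\rm KMS}=\tr[A^\dagger\Gamma(C)]$. KMS detailed balance of a map $\Phi$ means that $\Phi^\dagger$ is self-adjoint for this inner product, which is equivalent to $\Phi\circ\Gamma=\Gamma\circ\Phi^\dagger$.

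First compute the Heisenberg adjoint. From $\tr[A\,\Phi_B(\rho)]$ we get
\[
\Phi_B^\dagger(A)=I_B\otimes\sigma_{\beta,B^c}^{-1/2}\,\tr_B\bigl[\sigma_\beta^{1/2}A\sigma_\beta^{1/2}\bigr]\,\sigma_{\beta,B^c}^{-1/2}.
\]
Writing $\Gamma_{B^c}(Y)=\sigma_{\beta,B^c}^{1/2}Y\sigma_{\beta,B^c}^{1/2}$, this gives the factorization
\[
\Phi_B^\dagger=\bigl(I_B\otimes\cdot\bigr)\circ\Gamma_{B^c}^{-1}\circ\tr_B\circ\,\Gamma .
\]
Then
\[
\Gamma\circ\Phi_B^\dagger=\Gamma\circ(I_B\otimes\cdot)\circ\Gamma_{B^c}^{-1}\circ\tr_B\circ\Gamma ,
\]
while directly from the definition
\[
\Phi_B=\Gamma\circ(I_B\otimes\cdot)\circ\Gamma_{B^c}^{-1}\circ\tr_B ,
\]
so $\Phi_B\circ\Gamma=\Gamma\circ\Phi_B^\dagger$, which is exactly KMS reversibility.

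An equivalent way to see the symmetry is to evaluate
\[
\langle A,\Phi_B^\dagger C\rangle_{\rm KMS}=\tr\bigl[\Gamma(A)^\dagger\,(I_B\otimes\Gamma_{B^c}^{-1}\tr_B\Gamma(C))\bigr]=\tr\bigl[(\tr_B\Gamma(A))^\dagger\,\Gamma_{B^c}^{-1}(\tr_B\Gamma(C))\bigr].
\]
This expression is manifestly symmetric in $A$ and $C$ under the conjugate-linear exchange, because $\Gamma_{B^c}^{-1}$ is self-adjoint for the Hilbert–Schmidt inner product.

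Summing over $B$, and noting that the identity is trivially KMS-symmetric, shows that $\mathcal L_\ell^\dagger$ is KMS-self-adjoint.

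The only delicate point is the KMS identity. There, the key is to write the adjoint carefully and observe the symmetric sandwich structure; the CPTP and invariance steps are routine.
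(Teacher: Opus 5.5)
Your proof is correct and follows the paper's argument essentially step for step: complete positivity by composition, trace preservation by cyclicity, invariance from $\tr_B[\sigma_\beta]=\sigma_{\beta,B^c}$, complete positivity of the semigroup by exponentiating a CP map (your $e^{-ct}e^{t\Phi}$ plays the role of the paper's Poisson mixture of powers of the averaged channel), and KMS reversibility from the factorization $\Phi_B=\Gamma\circ\tr_B^\dagger\circ\Gamma_{B^c}^{-1}\circ\tr_B$ and its adjoint, which gives $\Phi_B\circ\Gamma=\Gamma\circ\Phi_B^\dagger$. Your extra check via the manifestly symmetric form $\tr[(\tr_B\Gamma(A))^\dagger\,\Gamma_{B^c}^{-1}(\tr_B\Gamma(C))]$ is an equivalent reformulation that the proof does not need.
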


\begin{proof}
Since $\sigma_\beta$ is strictly positive for every finite, its marginal $\sigma_{\beta,B^c}$ is also strictly positive. Complete positivity of $\mathcal R^{\sigma_\beta}_{B^c\to\Lambda_n}$ follows directly from Eq.~\eqref{eq:petz-recovery}. Moreover,
\begin{align}
\tr\left[ \mathcal R^{\sigma_\beta}_{B^c\to\Lambda_n}(X) \right] &= \tr\left[ \sigma_{\beta,B^c}^{-1/2} X\sigma_{\beta,B^c}^{-1/2} \sigma_{\beta,B^c} \right] =\tr[X],
\end{align}
so the Petz map is trace preserving. Since the partial trace is CPTP,
$\Phi_B$ is CPTP. Furthermore,
\begin{align}
\Phi_B(\sigma_\beta)=\sigma_\beta^{1/2}\left(I_B\otimes\sigma_{\beta,B^c}^{-1/2}\sigma_{\beta,B^c}\sigma_{\beta,B^c}^{-1/2}\right)\sigma_\beta^{1/2}=\sigma_\beta.
\end{align}
It follows immediately that $\mathcal L_\ell(\sigma_\beta)=0$.

Let $N_\ell=|\mathcal B_{n,\ell}|$ and define $\overline{\Phi}_\ell := \frac{1}{N_\ell} \sum_{B\in\mathcal B_{n,\ell}}\Phi_B$. Then $\overline{\Phi}_\ell$ is a quantum channel and $\mathcal L_\ell = \frac{N_\ell}{\ell} \left(\overline{\Phi}_\ell-\mathcal I\right)$. Hence, for every $t\geq0$,
\begin{align}
e^{t\mathcal L_\ell} = e^{-N_\ell t/\ell} \sum_{k=0}^{\infty} \frac{(N_\ell t/\ell)^k}{k!} \overline{\Phi}_\ell^{k}.
\end{align}
This is a convex combination of quantum channels and is therefore CPTP.

It remains to verify KMS condition. For a full-rank state $\omega$, define $\Gamma_\omega(X):=\omega^{1/2}X\omega^{1/2}$. Since the adjoint of the partial trace is $\tr_B^\dagger(Y)=I_B\otimes Y$, the block update can be written as
\begin{align}
\Phi_B=\Gamma_{\sigma_\beta}\circ\tr_B^\dagger\circ\Gamma_{\sigma_{\beta,B^c}}^{-1}\circ\tr_B.
\end{align}
Taking the adjoint gives $\Phi_B^\dagger=\tr_B^\dagger\circ\Gamma_{\sigma_{\beta,B^c}}^{-1}\circ\tr_B\circ\Gamma_{\sigma_\beta}$. Consequently, $\Gamma_{\sigma_\beta}^{-1}\circ\Phi_B\circ\Gamma_{\sigma_\beta}=\Phi_B^\dagger$, which is the KMS detailed-balance condition for $\Phi_B$. Therefore,
\begin{align}
\Gamma_{\sigma_\beta}^{-1} \circ\mathcal L_\ell \circ\Gamma_{\sigma_\beta} =
\frac1\ell\sum_{B\in\mathcal B_{n,\ell}} \left( \Gamma_{\sigma_\beta}^{-1} \circ\Phi_B \circ\Gamma_{\sigma_\beta} -\mathcal I \right) = \frac1\ell\sum_{B\in\mathcal B_{n,\ell}} (\Phi_B^\dagger-\mathcal I) = \mathcal L_\ell^\dagger.
\end{align}
Hence, $\mathcal L_\ell$ is KMS detailed balance with respect to $\sigma_\beta$.
\end{proof}


Recall that we denote by $\Phi_B^\dagger$ the (Hilbert-Schmidt) adjoint of $\Phi_B$, defined by 
\begin{align} 
\tr \left[\Phi_B(\rho)A\right] = \tr \left[\rho \Phi_B^\dagger(A)\right]. 
\end{align} 
Since our influence bounds are formulated in the Heisenberg picture, we will frequently use the adjoint map $\Phi_B^\dagger$. Explicitly, it admits the following form.

\begin{proposition}[Explicit form of $\Phi_B^\dagger$]
\label{prop:block-petz-adjoint}
For every observable $A$, the adjoint block update is 
\begin{align}
\Phi_B^\dagger(A)=I_B\otimes\bigl[\sigma_{\beta,B^c}^{-1/2}\tr_B(\sigma_\beta^{1/2}A\sigma_\beta^{1/2})\sigma_{\beta,B^c}^{-1/2}\bigr].
\end{align}
In particular, $\Phi_B^\dagger(A)$ acts trivially on $B$.
\end{proposition}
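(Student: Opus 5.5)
The plan is to compute $\Phi_B^\dagger$ directly from the factorized form of $\Phi_B$ written in the proof of Proposition~\ref{prop:petz-basic}, namely
\begin{align}
\Phi_B=\Gamma_{\sigma_\beta}\circ\tr_B^\dagger\circ\Gamma_{\sigma_{\beta,B^c}}^{-1}\circ\tr_B .
\end{align}
The adjoint of a composition is the composition of the adjoints in reverse order. So the proof reduces to identifying the Hilbert–Schmidt adjoint of each factor.

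The individual adjoints are as follows.
\begin{itemize}
\item For the partial trace, $\tr_B^\dagger(Y)=I_B\otimes Y$. Conversely, the adjoint of $Y\mapsto I_B\otimes Y$ is $\tr_B$. Both follow from $\tr[(I_B\otimes Y)Z]=\tr[Y\,\tr_B Z]$.
\item The maps $\Gamma_\omega(X)=\omega^{1/2}X\omega^{1/2}$ and $\Gamma_\omega^{-1}(X)=\omega^{-1/2}X\omega^{-1/2}$ are self-adjoint by cyclicity of the trace: $\tr[\omega^{1/2}X\omega^{1/2}A]=\tr[X\omega^{1/2}A\omega^{1/2}]$.
\item These maps are well defined because $\sigma_\beta$, and hence $\sigma_{\beta,B^c}$, is strictly positive. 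This was already noted in the proof of Proposition~\ref{prop:petz-basic}.
\end{itemize}

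Composing in reverse order gives
\begin{align}
\Phi_B^\dagger=\tr_B^\dagger\circ\Gamma_{\sigma_{\beta,B^c}}^{-1}\circ\tr_B\circ\Gamma_{\sigma_\beta},
\end{align}
which is the adjoint already recorded in the proof of Proposition~\ref{prop:petz-basic}. Applying this to an observable $A$ and evaluating step by step:
\begin{itemize}
\item $\Gamma_{\sigma_\beta}$ sends $A$ to $\sigma_\beta^{1/2}A\sigma_\beta^{1/2}$;
\item $\tr_B$ sends this to $\tr_B(\sigma_\beta^{1/2}A\sigma_\beta^{1/2})$;
\item $\Gamma_{\sigma_{\beta,B^c}}^{-1}$ conjugates the result by $\sigma_{\beta,B^c}^{-1/2}$;
\item $\tr_B^\dagger$ tensors with $I_B$.
\end{itemize}
This yields the stated formula.

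As a sanity check, I will verify the formula directly. Substituting the definition of $\Phi_B(\rho)$ into $\tr[\Phi_B(\rho)A]$, I move $\sigma_\beta^{1/2}$ around cyclically and then take the partial trace over $B$ of $\sigma_\beta^{1/2}A\sigma_\beta^{1/2}$ against $I_B\otimes(\cdot)$. Cycling $\sigma_{\beta,B^c}^{-1/2}$ onto that partial trace gives $\tr[\tr_B(\rho)\,M]$, where
\begin{align}
M=\sigma_{\beta,B^c}^{-1/2}\tr_B(\sigma_\beta^{1/2}A\sigma_\beta^{1/2})\sigma_{\beta,B^c}^{-1/2}.
\end{align}
This equals $\tr[\rho\,(I_B\otimes M)]$, confirming the formula.

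The final claim, that $\Phi_B^\dagger(A)$ acts trivially on $B$, is then immediate because the result has the form $I_B\otimes(\cdot)$.

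I do not expect any real obstacle. The only point needing care is bookkeeping: the inner operator lives on $B^c$ and is embedded into $\mathcal A_{\Lambda_n}$ via the identification in the Notation paragraph, and the inverse square roots are justified by full rank.
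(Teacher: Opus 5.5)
Your proposal is correct and essentially matches the paper: the paper's proof is exactly your ``sanity check,'' a direct cyclic-trace computation of $\tr[\Phi_B(X)A]$ followed by the duality $\tr[Y\,\tr_B Z]=\tr[(I_B\otimes Y)Z]$. Your main route, reversing the factorization $\Phi_B=\Gamma_{\sigma_\beta}\circ\tr_B^\dagger\circ\Gamma_{\sigma_{\beta,B^c}}^{-1}\circ\tr_B$, applies the same two facts factor by factor and reproduces the adjoint already written in the proof of Proposition~\ref{prop:petz-basic}, so it is equally valid (and not circular, since that proof does not rely on the present proposition).
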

\begin{proof}
For every operator $X$, cyclicity of the trace and the defining property of the partial trace give
\begin{align}
\tr[\Phi_B(X)A] &= \tr\left[\left(I_B\otimes\sigma_{\beta,B^c}^{-1/2}\tr_B[X]\sigma_{\beta,B^c}^{-1/2}\right)\sigma_\beta^{1/2}A\sigma_\beta^{1/2}\right] \nonumber \\
&= \tr_{B^c}\left[\tr_B[X] \sigma_{\beta,B^c}^{-1/2}\tr_B\left(\sigma_\beta^{1/2}A\sigma_\beta^{1/2}\right)\sigma_{\beta,B^c}^{-1/2}\right] \nonumber \\
&= \tr\left[X\left(I_B\otimes\sigma_{\beta,B^c}^{-1/2}\tr_B\left(\sigma_\beta^{1/2}A\sigma_\beta^{1/2}\right)\sigma_{\beta,B^c}^{-1/2}\right)\right].
\end{align}
Comparing the last expression with the definition of the Hilbert-Schmidt adjoint proves the claim.
\end{proof}

A basic feature of the dynamics, which will provide the contractive part of our argument, is that an update exactly removes any local discrepancy contained inside the refreshed block.

\begin{proposition}[Exact block killing]
\label{prop:exact-block-killing}
Let $i\in B$, and let $X=X^\dagger$ satisfy $\tr_i[X]=0$. Then $\Phi_B(X)=0$. Equivalently, for every observable $A$, $\delta_i \left(\Phi_B^\dagger(A)\right)=0$. 
\end{proposition}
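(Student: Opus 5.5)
The plan has two parts. First I prove the Schr\"odinger-picture statement directly. Then I deduce the Heisenberg-picture equivalence from duality and the explicit form of $\Phi_B^\dagger$.

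For the first claim, let $i\in B$ and write $B=\{i\}\cup(B\setminus\{i\})$. The partial trace factorizes as $\tr_B=\tr_{B\setminus\{i\}}\circ\tr_i$, and the two traces commute because they act on disjoint tensor factors. Hence
\begin{align}
\tr_B[X]=\tr_{B\setminus\{i\}}\bigl[\tr_i[X]\bigr]=0 .
\end{align}
By Definition~\ref{def:petz-heat-bath-block-dynamics}, $\Phi_B(X)=\mathcal R^{\sigma_\beta}_{B^c\to\Lambda_n}(\tr_B[X])$, and the Petz map in Eq.~\eqref{eq:petz-recovery} is linear. Therefore $\Phi_B(X)=\mathcal R^{\sigma_\beta}_{B^c\to\Lambda_n}(0)=0$.

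For the equivalent formulation, Proposition~\ref{prop:block-petz-adjoint} shows that $\Phi_B^\dagger(A)=I_B\otimes Y$ for some operator $Y$ on $B^c$. Since $i\in B$, this operator lies in $\mathcal A_{i^c}$. Taking $K=\Phi_B^\dagger(A)$ in Definition~\ref{def:local-oscillation} gives $\delta_i(\Phi_B^\dagger(A))\le 2\|\Phi_B^\dagger(A)-K\|=0$, so $\delta_i(\Phi_B^\dagger(A))=0$.

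It remains to check that the two formulations really are equivalent, not merely both true. The relevant fact is that the set $\{X : \tr_i[X]=0\}$ is exactly the annihilator, under the Hilbert--Schmidt pairing, of $\mathcal A_{i^c}$. Indeed, for $K\in\mathcal A_{i^c}$ we have $\tr[XK]=\tr[\tr_i[X]K]$, so every $X$ with $\tr_i[X]=0$ pairs to zero with $\mathcal A_{i^c}$. Conversely, if $X$ pairs to zero with all of $\mathcal A_{i^c}$, then $\tr_i[X]$ pairs to zero with every operator on $i^c$, so $\tr_i[X]=0$.

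Both directions of the equivalence then follow.
\begin{itemize}
\item If $\Phi_B$ kills all such $X$, then $\tr[X\Phi_B^\dagger(A)]=\tr[\Phi_B(X)A]=0$ for all of them. So $\Phi_B^\dagger(A)$ lies in the double annihilator of $\mathcal A_{i^c}$. In finite dimensions this is $\mathcal A_{i^c}$ itself, which gives $\delta_i=0$. (Hermiticity of $X$ is harmless, since the Hermitian elements of the annihilator span it over $\mathbb C$.)
\item Conversely, if $\Phi_B^\dagger(A)\in\mathcal A_{i^c}$ for all $A$, then $\tr[\Phi_B(X)A]=0$ for all $A$ and every $X$ with $\tr_i[X]=0$. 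Hence $\Phi_B(X)=0$.
\end{itemize}

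Nothing here is a serious obstacle. The only point needing care is the step $\delta_i(A)=0\iff A\in\mathcal A_{i^c}$. This holds because the infimum in Definition~\ref{def:local-oscillation} is attained: $\mathcal A_{i^c}$ is a closed, finite-dimensional subspace.
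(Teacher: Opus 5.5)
Your proof is correct and follows the paper's argument: you factor $\tr_B=\tr_{B\setminus\{i\}}\circ\tr_i$ to get $\Phi_B(X)=0$, and you use Proposition~\ref{prop:block-petz-adjoint} to see that $\Phi_B^\dagger(A)$ acts trivially on $B\ni i$, hence $\delta_i(\Phi_B^\dagger(A))=0$. Your additional annihilator argument showing that the two formulations are genuinely equivalent is also correct, though the paper simply proves both statements directly and does not need it.
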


\begin{proof}
Since $i\in B$, we have $\tr_B[X] = \tr_{B\setminus\{i\}}\left[\tr_i X\right] = 0$. Hence, $\Phi_B(X) = \mathcal R^{\sigma_\beta}_{B^c\to\Lambda_n} \bigl(\tr_B[X]\bigr) = 0$. The dual statement follows immediately from Proposition~\ref{prop:block-petz-adjoint}, since $\Phi_B^\dagger(A)$ acts trivially on $B$.
\end{proof}

\section{From block influence to rapid mixing}
\label{sec:curvature}

In this section, we isolate the dynamical part of the argument from the static part of the argument (that only depends on the properties of 1D Gibbs states) used later. The only input will be a summable bound on the influence of a block update at sites outside the updated block. Together with the exact block-killing property (Proposition~\ref{prop:exact-block-killing}), this bound implies positive $W_1$ curvature (Definition~\ref{def:W1-curvature}) by a simple one-dimensional counting argument.

To see why single-site influence is the relevant quantity, let $X=\rho-\sigma$ and consider an admissible $W_1$ decomposition $X=\sum_i X_i$, where $\tr_i[X_i]=0$. By linearity and the triangle inequality, controlling $\|e^{t \mathcal{L}_\ell}(\rho - \sigma)\|_{W_1}$ reduces to controlling $\|\Phi_B(X_i)\|_{W_1}$ for each pair $i,B$. By the duality of the $W_1$ distance (Proposition~\ref{prop:w1-duality}), this is in turn controlled by the local oscillation $\delta_i(\Phi_B^\dagger(A))$ (Definition~\ref{def:local-oscillation}). Hence, in this section we begin with which condition on $\delta_i(\Phi_B^\dagger(A))$ imposed imply constant-rate contraction of $\|e^{t \mathcal{L}_\ell}(\rho - \sigma)\|_{W_1}$, which results in the rapid mixing.


We assume throughout this section that $1\leq \ell<n$, so that $\mathcal B_{n,\ell}$ consists of $n$ cyclic intervals.

\subsection{Exterior influence decay function}
\label{subsec:exterior-influence}

The adjoint map $\Phi_B^\dagger$ acts trivially on the sites of $B$, but it need not act trivially on $B^c$.  We quantify this exterior action through the following function.

\begin{definition}[Exterior influence decay function]
\label{def:exterior-influence}
Let $\eta:\mathbb N\to[0,\infty)$.  We say that the block updates have exterior influence decay function $\eta$ if, for every block $B\in\mathcal B_{n,\ell}$, site $i\notin B$, and observable $A$,
\begin{align}
\delta_i\bigl(\Phi_B^\dagger(A)\bigr) & \leq \delta_i(A)+\eta\bigl(\mathrm{dist}(i,B)\bigr)\max_j \delta_j(A) \nonumber \\
& = \delta_i(A)+\eta\bigl(\mathrm{dist}(i,B)\bigr)\|A\|_L.
\label{eq:exterior-influence}
\end{align}
We call the decay function summable if
\begin{align}
a_\eta:=2\sum_{r=1}^{\infty}\eta(r)<\infty.
\label{eq:total-exterior-influence}
\end{align}
\end{definition}

The first term in the RHS of Eq.~\eqref{eq:exterior-influence} is the oscillation already present at site $i$, whereas the second term is the oscillation created by the update on $B$.  
The factor $2$ in Eq.~\eqref{eq:total-exterior-influence} anticipates that, in one dimension, there are at most two blocks at any fixed positive distance from a given site.

The above definition can be converted into a Schrödinger-picture estimate via the $W_1$ duality.

\begin{proposition}[Schr\"odinger picture of the influence decay function]
\label{prop:single-discrepancy-influence}
Suppose that the block updates have exterior influence decay function $\eta$. Let $X_i=X_i^\dagger$ satisfy $\tr_i[X_i]=0$. If $i\in B$, then $\Phi_B(X_i)=0$. If $i\notin B$, then
\begin{align}
\|\Phi_B(X_i)\|_{W_1}\leq \left(1+\eta\bigl(\mathrm{dist}(i,B)\bigr)\right)\|X_i\|_{W_1}.
\label{eq:single-discrepancy-influence}
\end{align}
\end{proposition}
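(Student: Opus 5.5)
The case $i\in B$ is exactly Proposition~\ref{prop:exact-block-killing}, so nothing more is needed there. For $i\notin B$, I would argue in the Heisenberg picture using the dual form of the $W_1$ norm (Proposition~\ref{prop:w1-duality}), and then use Definition~\ref{def:exterior-influence} to control the resulting local oscillation. The plan has three steps.

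First, I would check that $\Phi_B(X_i)$ is an admissible argument for Proposition~\ref{prop:w1-duality}. By Proposition~\ref{prop:petz-basic}, $\Phi_B$ is CPTP, so it is Hermiticity- and trace-preserving. Hence $\Phi_B(X_i)$ is Hermitian and traceless. Proposition~\ref{prop:w1-duality} then gives
\begin{align}
\|\Phi_B(X_i)\|_{W_1}=\sup\bigl\{\tr[\Phi_B^\dagger(A)X_i]:A=A^\dagger,\ \|A\|_L\le 1\bigr\}.
\end{align}

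Second, fix such a witness $A$. The key observation is that the condition $\tr_i[X_i]=0$ makes $X_i$ blind to any operator acting trivially on site $i$. Indeed, for every $K\in\mathcal A_{i^c}$ we have $\tr[KX_i]=\tr[K\,\tr_i X_i]=0$. Therefore
\begin{align}
\tr[\Phi_B^\dagger(A)X_i]=\tr[(\Phi_B^\dagger(A)-K)X_i]\le\|\Phi_B^\dagger(A)-K\|\,\|X_i\|_1 ,
\end{align}
where the inequality is Hölder's. Taking the infimum over $K$ bounds the right-hand side by $\tfrac12\delta_i(\Phi_B^\dagger(A))\,\|X_i\|_1$.

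Third, apply the exterior influence bound~\eqref{eq:exterior-influence}. Since $\|A\|_L\le1$, we have both $\delta_i(A)\le 1$ and $\max_j\delta_j(A)\le1$. This gives $\delta_i(\Phi_B^\dagger(A))\le 1+\eta(\mathrm{dist}(i,B))$, and hence
\begin{align}
\|\Phi_B(X_i)\|_{W_1}\le\bigl(1+\eta(\mathrm{dist}(i,B))\bigr)\,\tfrac12\|X_i\|_1 .
\end{align}
Finally, the lower bound in Proposition~\ref{prop:W1-trace} states that $\tfrac12\|X_i\|_1\le\|X_i\|_{W_1}$, and substituting it yields~\eqref{eq:single-discrepancy-influence}.

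I do not expect a serious obstacle. The one point to be careful about is the direction of the final comparison between trace norm and $W_1$ norm: we bound $\tfrac12\|X_i\|_1$ above by $\|X_i\|_{W_1}$. This is the direction that holds without any $n$-dependent factor, so it is legitimate. A second minor point is that the infimum defining $\delta_i$ may be taken over general, not necessarily Hermitian, $K$. This causes no problem, because Hölder's inequality applies to arbitrary operators.
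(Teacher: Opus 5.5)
Your proposal is correct and follows the paper's proof essentially step for step: test against a witness $A$ via $W_1$ duality, subtract $K\in\mathcal A_{i^c}$ (which $X_i$ cannot see), apply H\"older, take the infimum to obtain $\delta_i(\Phi_B^\dagger(A))$, and then invoke the exterior-influence bound. The only cosmetic difference is that the paper first records the equality $\|X_i\|_{W_1}=\frac12\|X_i\|_1$ from the single-component decomposition, whereas you note that this proposition only needs the direction $\frac12\|X_i\|_1\le\|X_i\|_{W_1}$; the equality is used later, in Theorem~\ref{thm:influence-to-curvature}.
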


\begin{proof}
The case $i\in B$ is Proposition~\ref{prop:exact-block-killing}. Suppose now that $i\notin B$. Since $\tr_i[X_i]=0$, the single-component decomposition is admissible in the definition of the $W_1$ norm. Together with Proposition~\ref{prop:W1-trace}, this gives $\|X_i\|_{W_1}=\frac12\|X_i\|_1$. Moreover, $\tr[X_iK]=0$ for every $K\in\mathcal A_{i^c}$. Hence, for every observable $A=A^\dagger$ with $\|A\|_L\leq1$,
\begin{align}
\bigl|\tr[X_i\Phi_B^\dagger(A)]\bigr|
&= \bigl|\tr\left[X_i\bigl(\Phi_B^\dagger(A)-K\bigr)\right]\bigr| \nonumber \\
&\leq 2\|X_i\|_{W_1}\|\Phi_B^\dagger(A)-K\| &&& \text{(H\"older's inequality)}.
\end{align}
Taking the infimum over $K\in\mathcal A_{i^c}$, we obtain
\begin{align}
\bigl|\tr[X_i\Phi_B^\dagger(A)]\bigr|
&\leq \|X_i\|_{W_1}\delta_i\left(\Phi_B^\dagger(A)\right) &&& \text{(Definition~\ref{def:local-oscillation})} \nonumber \\
&\leq \|X_i\|_{W_1}\left(\delta_i(A)+\eta(\operatorname{dist}(i,B))\|A\|_L\right) &&& \text{(Definition~\ref{def:exterior-influence})} \nonumber \\
&\leq \left(1+\eta(\operatorname{dist}(i,B))\right)\|X_i\|_{W_1},
\end{align}
where in the last line we used $\|A\|_L\leq1$, and hence $\delta_i(A)\leq1$. Using $W_1$ duality (Proposition~\ref{prop:w1-duality}) together with $\tr[A\Phi_B(X_i)]=\tr[X_i\Phi_B^\dagger(A)]$ gives Eq.~\eqref{eq:single-discrepancy-influence}.
\end{proof}

This separates the two effects of a block update: a block containing the anchor $i$ removes the discrepancy exactly, whereas a block missing $i$ can increase its $W_1$ norm only by an amount controlled by the exterior influence at that distance. To compare these two effects after averaging over all blocks, we next simply count how many blocks contain the anchor $i$ and how many can occur at each distance from it.


\begin{proposition}[Counting cyclic blocks]
\label{prop:block-counting}
Fix $i\in\Lambda_n$.  Exactly $\ell$ blocks in
$\mathcal B_{n,\ell}$ contain $i$.  Moreover, for every $r\geq1$, at most two
blocks $B\in\mathcal B_{n,\ell}$ satisfy $i\notin B$ and
$\mathrm{dist}(i,B)=r$.  Consequently,
\begin{align}
\sum_{\substack{B\in\mathcal B_{n,\ell}\\ i\notin B}}
\eta\bigl(\mathrm{dist}(i,B)\bigr)
\leq a_\eta.
\label{eq:total-block-influence}
\end{align}
\end{proposition}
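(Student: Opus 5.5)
The argument is a direct enumeration of the cyclic intervals. Under the standing assumption $1\le\ell<n$, we label the elements of $\mathcal B_{n,\ell}$ by their left endpoint $s\in\Lambda_n$, writing $B_s=\{s,s+1,\dots,s+\ell-1\}$ modulo $n$. These are $n$ distinct blocks, since $\ell<n$ guarantees that distinct starting points give distinct intervals.

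For the first claim, $i\in B_s$ holds exactly when $i-s\in\{0,1,\dots,\ell-1\}$ modulo $n$. That is, $s\in\{i-\ell+1,\dots,i\}$, which gives exactly $\ell$ distinct starting points and hence exactly $\ell$ blocks.

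For the second claim, fix $r\ge1$ and a block $B_s$ not containing $i$. The complement of $B_s$ in the cycle is a single cyclic interval of length $n-\ell$, and it contains $i$. The distance from $i$ to $B_s$ is then $\min(a,b)$, where
\begin{itemize}
\item $a$ is the graph distance from $i$ to the right endpoint $s+\ell-1$, that is, $a\equiv i-(s+\ell-1)\pmod n$ with $a\in\{1,\dots,n-\ell\}$;
\item $b$ is the graph distance from $i$ to the left endpoint $s$, that is, $b\equiv s-i\pmod n$ with $b\in\{1,\dots,n-\ell\}$.
\end{itemize}
Moving around the cycle from $i$ toward $B_s$ in either direction, the first point of $B_s$ reached is the nearer endpoint.

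If $\mathrm{dist}(i,B_s)=r$, then either $a=r$ or $b=r$. The condition $a=r$ forces $s=i-r-\ell+1$, and $b=r$ forces $s=i+r$. So at most two values of $s$ qualify, and they may coincide for small $n$, which is harmless. The only step needing care is the distance identification on the cycle. I would justify it by noting that a shortest path from $i$ into $B_s$ stays inside the complementary arc until it first hits $B_s$, and the only points of $B_s$ adjacent to that arc are the two endpoints.

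The sum bound follows by grouping the terms according to distance. Every $B$ with $i\notin B$ has $\mathrm{dist}(i,B)\ge1$, so
\begin{align}
\sum_{\substack{B\in\mathcal B_{n,\ell}\\ i\notin B}}\eta\bigl(\mathrm{dist}(i,B)\bigr)=\sum_{r\ge1}\eta(r)\,\#\{B:\ i\notin B,\ \mathrm{dist}(i,B)=r\}\le 2\sum_{r\ge1}\eta(r)=a_\eta.
\end{align}
This uses $\eta\ge0$ and Eq.~\eqref{eq:total-exterior-influence}. If $a_\eta=\infty$, the bound holds trivially.
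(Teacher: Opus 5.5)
Your proposal is correct and follows essentially the same route as the paper. Blocks containing $i$ are counted by their left endpoint in $\{i-\ell+1,\dots,i\}$, a block missing $i$ at distance $r$ must either end at $i-r$ or begin at $i+r$ (at most one per direction), and the sum is bounded by grouping terms by distance. The only difference is that you make the identification $\mathrm{dist}(i,B_s)=\min(a,b)$ more explicit than the paper does. Strictly, $a$ is the arc length through the complement and not the graph distance to the right endpoint when $a>n/2$, but your shortest-path justification still gives $\min(a,b)$ correctly.
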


\begin{proof}
A cyclic interval of length $\ell$ contains $i$ precisely when its left
endpoint is one of the $\ell$ sites from $i-\ell+1$ through $i$.  This proves
the first assertion.  If a block does not contain $i$, a nearest point of the
block can be reached from $i$ in one of the two directions along the cycle.
For a fixed positive distance $r$, there is at most one such block in each
direction.  Summing the corresponding influence over $r$ and using
Eq.~\eqref{eq:total-exterior-influence} proves
Eq.~\eqref{eq:total-block-influence}.
\end{proof}

Thus the number of blocks that erase a discrepancy is exactly $\ell$, while the total exterior influence contributed by all remaining blocks is bounded by $a_\eta$, as illustrated in Fig.~\ref{fig:euler-step-balance}. We now turn this comparison into a contraction estimate for a single Euler step of the dynamics.

\begin{figure}[t!]
    \centering
    \includegraphics[width=0.8\linewidth]{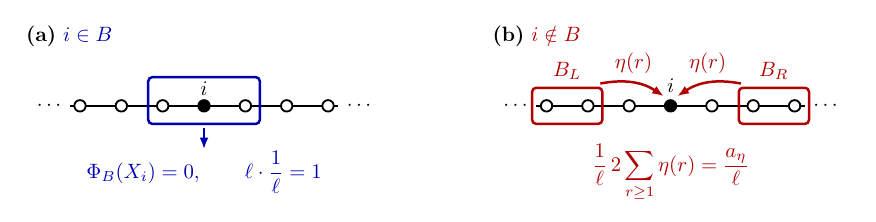}
    \caption{Erasure-versus-influence balance for an $i$-anchored discrepancy $X_i$. (a) The $\ell$ blocks containing $i$ erase $X_i$ exactly, with total rate $1$. (b) Blocks missing $i$ contribute an additional influence rate at most $a_\eta/\ell$.} 
    \label{fig:euler-step-balance}
\end{figure}

\subsection{Summable $\eta$ implies rapid mixing}
\label{subsec:one-step-contraction}

For $0\leq h\leq\ell/n$, define the Euler step $\mathcal T_h:=\mathcal I+h\mathcal L_\ell =\left(1-hn/\ell\right)\mathcal I +(h/\ell)\sum_{B\in\mathcal B_{n,\ell}}\Phi_B$. 

\begin{theorem}[Exterior influence implies positive $W_1$ curvature]
\label{thm:influence-to-curvature}
Suppose that the block updates have a summable exterior influence decay function $\eta$. If $\ell>a_\eta$, then the semigroup generated by $\mathcal L_\ell$ has $W_1$ curvature at least $\gamma_\ell:=1-a_\eta/\ell>0$. Equivalently, every traceless Hermitian operator $X$ and every $t\geq0$ satisfy
\begin{align}
\|e^{t\mathcal L_\ell}(X)\|_{W_1} \leq e^{-\gamma_\ell t}\|X\|_{W_1}.
\label{eq:semigroup-w1-contraction}
\end{align}
\end{theorem}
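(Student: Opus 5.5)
The plan is to prove a one-step contraction for the Euler step $\mathcal T_h$ and then pass to the semigroup through the product formula $e^{t\mathcal L_\ell}=\lim_{m\to\infty}\mathcal T_{t/m}^m$.

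\emph{One-step bound.} Fix $0\le h\le \ell/n$, a traceless Hermitian $X$, and $\epsilon>0$. Choose an admissible decomposition $X=\sum_i X_i$ with $\tr_i[X_i]=0$ and $\frac12\sum_i\|X_i\|_1\le \|X\|_{W_1}+\epsilon$. Each $X_i$ satisfies $\|X_i\|_{W_1}=\frac12\|X_i\|_1$, by the argument in the proof of Proposition~\ref{prop:single-discrepancy-influence}. By linearity and the triangle inequality for the $W_1$ norm,
\begin{align}
\|\mathcal T_h(X_i)\|_{W_1}\le \left(1-\frac{hn}{\ell}\right)\|X_i\|_{W_1}+\frac{h}{\ell}\sum_{B\ni i}\|\Phi_B(X_i)\|_{W_1}+\frac{h}{\ell}\sum_{B\not\ni i}\|\Phi_B(X_i)\|_{W_1}.
\end{align}
The coefficient $1-hn/\ell$ is nonnegative because $h\le\ell/n$. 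The middle sum vanishes by Proposition~\ref{prop:exact-block-killing}. Each term of the last sum is at most $(1+\eta(\mathrm{dist}(i,B)))\|X_i\|_{W_1}$ by Proposition~\ref{prop:single-discrepancy-influence}. By Proposition~\ref{prop:block-counting} there are exactly $n-\ell$ blocks missing $i$, and their $\eta$-contributions sum to at most $a_\eta$. Hence
\begin{align}
\|\mathcal T_h(X_i)\|_{W_1}\le\left(1-\frac{hn}{\ell}+\frac{h(n-\ell)}{\ell}+\frac{h a_\eta}{\ell}\right)\|X_i\|_{W_1}=(1-h\gamma_\ell)\|X_i\|_{W_1}.
\end{align}
Summing over $i$ and letting $\epsilon\to0$ gives $\|\mathcal T_h(X)\|_{W_1}\le(1-h\gamma_\ell)\|X\|_{W_1}$. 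Here $1-h\gamma_\ell\ge0$ for small $h$.

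\emph{Passage to the semigroup.} For $t\ge0$ and $m$ large enough that $t/m\le \ell/n$, iterating the one-step bound gives
\begin{align}
\|\mathcal T_{t/m}^m(X)\|_{W_1}\le (1-t\gamma_\ell/m)^m\|X\|_{W_1}\le e^{-\gamma_\ell t}\|X\|_{W_1}.
\end{align}
Since $\mathcal L_\ell$ is a bounded operator on a finite-dimensional space, $\mathcal T_{t/m}^m\to e^{t\mathcal L_\ell}$ in norm. Moreover, $\|\cdot\|_{W_1}$ is a norm on traceless Hermitian operators and is equivalent to the trace norm by Proposition~\ref{prop:W1-trace}, so it is continuous. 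Taking $m\to\infty$ therefore yields Eq.~\eqref{eq:semigroup-w1-contraction}. Applying this with $X=\rho-\sigma$ gives the curvature statement of Definition~\ref{def:W1-curvature}.

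\emph{Main obstacle.} The only delicate point is the bookkeeping in the one-step bound. The identity coefficient $1-hn/\ell$ must be nonnegative, so that the triangle inequality applies with the correct sign. The $n-\ell$ exterior blocks each carry a baseline factor of $1$, and these must exactly cancel the $-hn/\ell$ term down to the loss rate $h\ell/\ell=h$ coming from the $\ell$ erasing blocks. Only the summable excess $a_\eta$ then remains, which is why the condition $\ell>a_\eta$ yields a system-size-independent rate.
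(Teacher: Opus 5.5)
Your proposal is correct and follows essentially the same route as the paper. Both arguments prove an Euler-step contraction using exact block killing, Proposition~\ref{prop:single-discrepancy-influence} and the block counting, then iterate $\mathcal T_{t/m}$ and let $m\to\infty$. Your $\epsilon$-optimal decomposition and your explicit continuity argument (via Proposition~\ref{prop:W1-trace}) for passing to the limit are only slightly more careful versions of the paper's steps.
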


\begin{proof}
We first establish a contraction estimate for the Euler step. Let $X_i=X_i^\dagger$ satisfy $\tr_i[X_i]=0$. The $\ell$ blocks containing $i$ annihilate $X_i$ by Proposition~\ref{prop:exact-block-killing}. Applying Proposition~\ref{prop:single-discrepancy-influence} to the remaining blocks gives
\begin{align}
\|\mathcal T_h(X_i)\|_{W_1}
&\leq \left(1-\frac{hn}{\ell}\right)\|X_i\|_{W_1}+\frac h\ell\sum_{\substack{B\in\mathcal B_{n,\ell}\\i\notin B}}\|\Phi_B(X_i)\|_{W_1} \nonumber\\
&\leq \|X_i\|_{W_1}\left[1-\frac{hn}{\ell}+\frac h\ell\sum_{\substack{B\in\mathcal B\in\mathcal B_{n,\ell}\\i\notin B}}\left(1+\eta\bigl(\mathrm{dist}(i,B)\bigr)\right)\right] &&& \text{(Proposition~\ref{prop:single-discrepancy-influence})} \nonumber\\
&= \|X_i\|_{W_1}\left[1-h+\frac h\ell\sum_{\substack{B\in\mathcal B_{n,\ell}\\i\notin B}}\eta\bigl(\mathrm{dist}(i,B)\bigr)\right] \nonumber\\
&\leq (1-\gamma_\ell h)\|X_i\|_{W_1} &&& \text{(Proposition~\ref{prop:block-counting})}.
\label{eq:euler-single-discrepancy}
\end{align}

Now let $X=\sum_iX_i$ be any admissible $W_1$ decomposition. By linearity, the triangle inequality, and Eq.~\eqref{eq:euler-single-discrepancy},
\begin{align}
\|\mathcal T_h(X)\|_{W_1}\leq\sum_i\|\mathcal T_h(X_i)\|_{W_1}\leq(1-\gamma_\ell h)\sum_i\|X_i\|_{W_1}.
\end{align}
Since $\|X_i\|_{W_1}=\frac12\|X_i\|_1$ for every component with $\tr_i[X_i]=0$, the definition of the $W_1$ norm gives
\begin{align}
\inf_{\substack{X=\sum_iX_i\\X_i=X_i^\dagger,\ \tr_i[X_i]=0}}\sum_i\|X_i\|_{W_1}=\|X\|_{W_1}.
\end{align}
Taking the infimum over all admissible decompositions therefore yields
\begin{align}
\|\mathcal T_h(X)\|_{W_1}\leq(1-\gamma_\ell h)\|X\|_{W_1}.
\label{eq:euler-step-contraction}
\end{align}

Now fix $t\geq0$ and choose $m$ sufficiently large that $t/m\leq\ell/n$. Iterating Eq.~\eqref{eq:euler-step-contraction} gives
\begin{align}
\left\|\left(\mathcal I+\frac{t}{m}\mathcal L_\ell\right)^m(X)\right\|_{W_1}\leq\left(1-\frac{\gamma_\ell t}{m}\right)^m\|X\|_{W_1}.
\end{align}
As $m\to\infty$, the operator on the left converges to $e^{t\mathcal L_\ell}(X)$, while the scalar factor on the right converges to $e^{-\gamma_\ell t}$. This proves Eq.~\eqref{eq:semigroup-w1-contraction}. Taking $X=\rho-\sigma$ proves the claim.
\end{proof}

The counting above separates the rate-one erasure from the additional exterior influence. As illustrated in Figure~\ref{fig:euler-step-balance}, blocks containing $i$ erase the $i$-anchored discrepancy at total rate $1$, whereas blocks missing $i$ generate additional exterior influence at total rate at most $a_\eta/\ell$. Thus, $\ell>a_\eta$ makes erasure dominate influence, yielding the positive curvature rate $\gamma_\ell=1-a_\eta/\ell$.

Combining Theorem~\ref{thm:influence-to-curvature} with stationarity of the Gibbs state and the comparison between $W_1$ and trace distance yields the main result.

\begin{corollary}[Summable $\eta$ implies rapid mixing]
\label{cor:abstract-trace-mixing}
Under the assumptions of Theorem~\ref{thm:influence-to-curvature}, the Gibbs state $\sigma_\beta$ is the unique stationary state and
\begin{align}
\sup_\rho\frac12 \|e^{t\mathcal L_\ell}(\rho)-\sigma_\beta\|_1 \leq ne^{-\gamma_\ell t}.
\label{eq:abstract-trace-mixing}
\end{align}
Consequently,
\begin{align}
t_{\mathrm{mix}}(\varepsilon) \leq \frac1{\gamma_\ell}\log\frac n\varepsilon.
\label{eq:abstract-mixing-time}
\end{align}
\end{corollary}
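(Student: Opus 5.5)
The corollary follows by combining Theorem~\ref{thm:influence-to-curvature} with Proposition~\ref{prop:petz-basic} and Proposition~\ref{prop:W1-trace}. Beyond the trace-distance bound, the only extra claim is uniqueness of the stationary state.

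\textbf{Step 1: uniqueness.} Suppose $\omega$ is any state with $\mathcal L_\ell(\omega)=0$. Then $e^{t\mathcal L_\ell}(\omega)=\omega$ for all $t\ge0$. Proposition~\ref{prop:petz-basic} gives the same invariance for $\sigma_\beta$. Applying Eq.~\eqref{eq:semigroup-w1-contraction} to the traceless Hermitian operator $X=\omega-\sigma_\beta$ gives
\begin{align}
\|\omega-\sigma_\beta\|_{W_1}=\|e^{t\mathcal L_\ell}(\omega-\sigma_\beta)\|_{W_1}\le e^{-\gamma_\ell t}\|\omega-\sigma_\beta\|_{W_1}.
\end{align}
Since $\gamma_\ell>0$, letting $t\to\infty$ forces $\|\omega-\sigma_\beta\|_{W_1}=0$. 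By the lower bound in Proposition~\ref{prop:W1-trace}, this means $\omega=\sigma_\beta$.

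\textbf{Step 2: trace-distance bound.} Fix an arbitrary state $\rho$ and use $e^{t\mathcal L_\ell}(\sigma_\beta)=\sigma_\beta$. The chain in Eq.~\eqref{eq:desired-bound} then applies: the lower comparison of Proposition~\ref{prop:W1-trace} bounds trace distance by $W_1$; Theorem~\ref{thm:influence-to-curvature} contracts $W_1$ by $e^{-\gamma_\ell t}$; and the upper comparison gives $\|\rho-\sigma_\beta\|_{W_1}\le \frac n2\|\rho-\sigma_\beta\|_1\le n$, because the trace norm of a difference of two states is at most $2$. Taking the supremum over $\rho$ yields Eq.~\eqref{eq:abstract-trace-mixing}.

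\textbf{Step 3: mixing time.} Setting $ne^{-\gamma_\ell t}\le\varepsilon$ and solving gives $t\ge\gamma_\ell^{-1}\log(n/\varepsilon)$. By the definition in Eq.~\eqref{eq:mixing-time}, this proves Eq.~\eqref{eq:abstract-mixing-time}.

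None of these steps is a real obstacle. The one point to check is that Theorem~\ref{thm:influence-to-curvature} is stated for an arbitrary traceless Hermitian $X$, so it covers both $\omega-\sigma_\beta$ in Step 1 and $\rho-\sigma_\beta$ in Step 2.
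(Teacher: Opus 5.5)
Your proposal is correct and follows the paper's proof essentially step for step: invariance of $\sigma_\beta$, the trace/$W_1$ comparison around the curvature bound, solving $ne^{-\gamma_\ell t}\le\varepsilon$, and uniqueness by applying the contraction to the difference of two stationary states. You also make explicit a step the paper leaves implicit, namely that $\|\omega-\sigma_\beta\|_{W_1}=0$ forces $\omega=\sigma_\beta$ via the lower comparison bound.
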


\begin{proof}
Every block update fixes $\sigma_\beta$, so $e^{t\mathcal L_\ell}(\sigma_\beta)=\sigma_\beta$.  By
Proposition~\ref{prop:W1-trace} and
Theorem~\ref{thm:influence-to-curvature},
\begin{align}
\frac12\|e^{t\mathcal L_\ell}(\rho)-\sigma_\beta\|_1 \leq \|e^{t\mathcal L_\ell}(\rho)-\sigma_\beta\|_{W_1} \leq e^{-\gamma_\ell t}\|\rho-\sigma_\beta\|_{W_1} \leq ne^{-\gamma_\ell t}.
\end{align}
This proves Eq.~\eqref{eq:abstract-trace-mixing}, and solving $ne^{-\gamma_\ell t}\leq\varepsilon$ gives Eq.~\eqref{eq:abstract-mixing-time}. Finally, if $\sigma$ is another stationary state, then $\|\sigma-\sigma_\beta\|_{W_1}\leq e^{-\gamma_\ell t}\|\sigma-\sigma_\beta\|_{W_1}$ for every $t\geq0$. Since $\gamma_\ell>0$, this forces $\sigma=\sigma_\beta$.
\end{proof}

Corollary~\ref{cor:abstract-trace-mixing} completes the dynamical part of the proof.  The remainder of the paper is devoted to establishing $\eta$ function for the Petz block updates, ultimately showing that it is indeed summable.

\section{Exponential decay of exterior influence function}
\label{sec:petz-influence}

Section~\ref{sec:curvature} reduced rapid mixing to a summable exterior influence decay function for the block updates. We now derive such a decay function from two assumptions. The first states that the conditional Petz amplitude (defined below), which captures a nonlocal effect of $\Phi_B$, is exponentially concentrated near the two ends of the block. The second states that the boundary region of the block Gibbs state becomes approximately independent of a central region as the buffers separating them grow. 
Section~\ref{sec:gibbs-inputs} will verify both properties for the one-dimensional Gibbs states considered in this paper.

Fix a block $B \subsetneq \Lambda_n$. We denote by $\partial B$ the set of sites adjacent on the two edges separating $B$ from $B^c$, and by $N_r(\partial B)$ its radius-$r$ neighborhood in the chain (cf.~Fig.~\ref{fig:static-inputs}(a)).

\subsection{The conditional Petz amplitude}
\label{subsec:petz-amplitude}

Let $\tau_B$ be a full-rank state on $B$. Later, $\tau_B$ will be chosen as the Gibbs state generated by the interactions contained entirely inside $B$.

\begin{definition}[Product conditional expectation and conditional Petz amplitude]
Associated with $\tau_B$, define the product conditional expectation
\begin{align}
\mathcal{E}_B(A) := I_B\otimes \tr_B\left[ (\tau_B^{1/2}\otimes I_{B^c}) A (\tau_B^{1/2}\otimes I_{B^c}) \right],
\label{eq:product-conditional-expectation}
\end{align}
and the conditional Petz amplitude
\begin{align}
Q_B:=\sigma_\beta^{1/2}\left(\tau_B^{-1/2}\otimes\sigma_{\beta,B^c}^{-1/2}\right),\qquad\sigma_{\beta,B^c}:=\tr_B[\sigma_\beta].
\label{eq:conditional-petz-amplitude}
\end{align}
\end{definition}

We note that the map $\mathcal E_B$ is inspired by and is indeed the special case of the minimal conditional expectation $E_B^\rho$ in Ref.~{\cite[Eq.~(13)]{kastoryano2016quantum}} by choosing $\rho=\tau_B\otimes\tau_{B^c}$ with any full-rank state $\tau_{B^c}$.

The map $\mathcal{E}_B$ averages out the degrees of freedom in $B$ with respect to the reference state $\tau_B$, leaving an observable on $B^c$ (embedded as $I_B\otimes\mathcal A_{B^c}$). Indeed, for a product observable $X_B\otimes Y_{B^c}$, one has $\mathcal{E}_B(X_B\otimes Y_{B^c})=\tr(\tau_B X_B) I_B\otimes Y_C$, which makes its interpretation as a conditional expectation over $B$ transparent. In particular, it has the bimodule property $\mathcal{E}_B(K_1AK_2)=K_1\mathcal{E}_B(A)K_2$ for $K_1,K_2\in\mathcal A_{B^c}$. It is also unital and completely positive, which can be checked straightforwardly.

The role of $Q_B$ becomes transparent through the following exact factorization.

\begin{lemma}[Petz-amplitude factorization]
\label{lem:petz-amplitude-factorization}
For every observable $A$,
\begin{align}
\Phi_B^\dagger(A)=\mathcal{E}_B(Q_B^\dagger A Q_B),  \qquad \mathcal{E}_B(Q_B^\dagger Q_B)=I.
\label{eq:petz-amplitude-factorization}
\end{align}
\end{lemma}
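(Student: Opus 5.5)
The plan is to verify both identities by direct substitution into the explicit adjoint formula of Proposition~\ref{prop:block-petz-adjoint}. The key observation is that $\sigma_\beta^{1/2}=Q_B(\tau_B^{1/2}\otimes\sigma_{\beta,B^c}^{1/2})$, which follows by right-multiplying the definition of $Q_B$ in Eq.~\eqref{eq:conditional-petz-amplitude} by $\tau_B^{1/2}\otimes\sigma_{\beta,B^c}^{1/2}$. Both factors are full rank, so this step is legitimate.

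Taking adjoints gives $\sigma_\beta^{1/2}=(\tau_B^{1/2}\otimes\sigma_{\beta,B^c}^{1/2})Q_B^\dagger$, because $\sigma_\beta^{1/2}$ is Hermitian. For the first identity, substitute both expressions into the sandwich:
\begin{align}
\sigma_\beta^{1/2}A\sigma_\beta^{1/2}=(I_B\otimes\sigma_{\beta,B^c}^{1/2})(\tau_B^{1/2}\otimes I_{B^c})\,Q_B^\dagger A Q_B\,(\tau_B^{1/2}\otimes I_{B^c})(I_B\otimes\sigma_{\beta,B^c}^{1/2}).
\end{align}
Operators supported on $B^c$ pass through $\tr_B$. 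Hence
\begin{align}
\tr_B[\sigma_\beta^{1/2}A\sigma_\beta^{1/2}]=\sigma_{\beta,B^c}^{1/2}\,\tr_B\bigl[(\tau_B^{1/2}\otimes I)Q_B^\dagger AQ_B(\tau_B^{1/2}\otimes I)\bigr]\,\sigma_{\beta,B^c}^{1/2}.
\end{align}
This is exactly the bimodule property of $\mathcal E_B$, applied at the level of the partial trace. Conjugating by $\sigma_{\beta,B^c}^{-1/2}$ cancels the outer factors, and tensoring with $I_B$ yields $\Phi_B^\dagger(A)=\mathcal E_B(Q_B^\dagger AQ_B)$ by Eq.~\eqref{eq:product-conditional-expectation}.

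For the second identity, set $A=I$ in the first. Then $\mathcal E_B(Q_B^\dagger Q_B)=\Phi_B^\dagger(I)$, and this equals $I$ because $\Phi_B$ is trace preserving (Proposition~\ref{prop:petz-basic}). As a check via the explicit formula, $\Phi_B^\dagger(I)=I_B\otimes\sigma_{\beta,B^c}^{-1/2}\sigma_{\beta,B^c}\sigma_{\beta,B^c}^{-1/2}=I$.

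The only delicate point is the ordering of the noncommuting factors. We need $\tau_B^{1/2}\otimes\sigma_{\beta,B^c}^{1/2}$ to split as a product of commuting pieces $(\tau_B^{1/2}\otimes I)(I\otimes\sigma_{\beta,B^c}^{1/2})$, and we need the $B^c$-factor to sit on the outside so that it exits the partial trace. Since the two tensor factors commute, either ordering is valid, and no real obstacle is expected.
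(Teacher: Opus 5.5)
Your proposal is correct and is essentially the paper's proof. The paper substitutes $Q_B(\tau_B^{1/2}\otimes I_{B^c})=\sigma_\beta^{1/2}(I_B\otimes\sigma_{\beta,B^c}^{-1/2})$ into the definition of $\mathcal E_B$ and recognizes the formula of Proposition~\ref{prop:block-petz-adjoint}; you run the same computation in reverse, and, like the paper, you get the second identity from unitality of $\Phi_B^\dagger$.
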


\begin{proof}
By definition,
$Q_B(\tau_B^{1/2}\otimes I_{B^c}) = \sigma_\beta^{1/2}(I_B\otimes\sigma_{\beta,B^c}^{-1/2})$.
Substituting this into Eq.~\eqref{eq:product-conditional-expectation} gives
\begin{align}
\mathcal{E}_B(Q_B^\dagger A Q_B) = I_B\otimes\sigma_{\beta,B^c}^{-1/2} \tr_B\left[\sigma_\beta^{1/2}A\sigma_\beta^{1/2}\right] \sigma_{\beta,B^c}^{-1/2},
\end{align}
which is $\Phi_B^\dagger(A)$ from Proposition~\ref{prop:block-petz-adjoint}. The second one follows by setting $A=I$, since $\Phi_B^\dagger$ is unital.
\end{proof}
Equivalently, in the Schr\"odinger picture, 
\begin{align} \Phi_B(\rho) = Q_B\bigl(\tau_B\otimes\tr_B[\rho]\bigr)Q_B^\dagger. \label{eq:petz-amplitude-schrodinger} 
\end{align} 
This expression separates replacement of the entire block by $\tau_B$ from the correction $Q_B$ that accounts for its coupling to the complement. In the Heisenberg picture, the same factorization becomes conjugation by $Q_B$ followed by the product conditional expectation $\mathcal E_B$.

The operator $Q_B$ encodes the deviation of the true Gibbs state from the product reference $\tau_B\otimes\sigma_{\beta,B^c}$ and is therefore the only part of the factorization that can carry \emph{nonlocal} dependence across the boundary of $B$. While $\mathcal{E}_B$ simply averages the $B$ degrees of freedom against $\tau_B$, $Q_B$ involves the global Gibbs state $\sigma_\beta^{1/2}$ and the exact marginal $\sigma_{\beta,B^c}^{-1/2}$, and hence need not be localized near $\partial B$. The quasi-locality assumption below shows that this apparently nonlocal amplitude can nevertheless be approximated exponentially well by operators supported near $\partial B$.

\subsection{The two static assumptions}
\label{subsec:static-inputs}

We now state two static properties that will be assumed throughout this section and verified for 1D Gibbs states in Section~\ref{sec:gibbs-inputs}. Their geometries are illustrated in Fig.~\ref{fig:static-inputs}.

\begin{assumption}[Boundary quasi-locality of the Petz amplitude]
\label{asm:QL}
There exist constants $c_0,C_Q,M_Q,\mu_Q>0$ such that, for every integer $r\geq0$, there is an operator $\widetilde Q_{B,r}\in\mathcal A_{N_{r+c_0}(\partial B)}$ satisfying
\begin{align}
\label{eq:Q-boundary-approximation}
\|Q_B-\widetilde Q_{B,r}\| \leq C_Qe^{-\mu_Qr},
\qquad \|Q_B\|+\|\widetilde Q_{B,r}\| \leq M_Q.
\end{align}
\end{assumption}
Thus, although $Q_B$ is a priori nonlocal, its dependence away from the boundary is exponentially suppressed: it can be approximated by a uniformly bounded operator supported within distance $r+c_0$ of $\partial B$, with error decaying exponentially in $r$. 
This localization allows the distance from the updated block to enter the influence estimate. 

\begin{figure}[t!]
    \centering
    \includegraphics[width=1\linewidth]{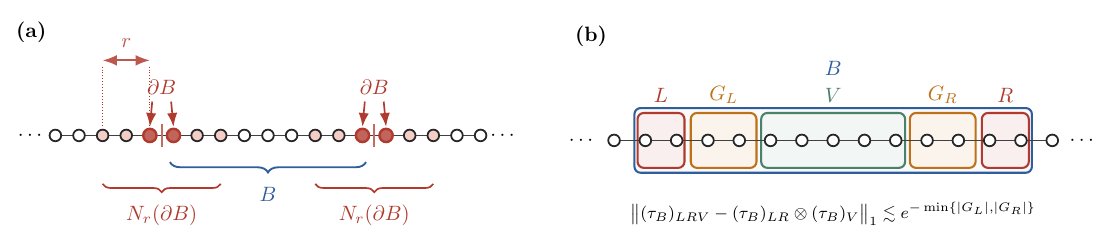}
    \caption{The two static inputs used to control the exterior influence of a block update. (a) Assumption~\ref{asm:QL}: the conditional Petz amplitude $Q_B$ admits exponentially accurate, uniformly bounded approximations supported in a growing neighborhood of $\partial B$ ($c_0 = 0$ in the schematic). (b) Assumption~\ref{asm:GF}: for a decomposition $B=L G_L V G_R R$, the boundary region $L\cup R$ and the central region $V$ approximately factorize, with an error that decays exponentially in $\min\{|G_L|,|G_R|\}$.}
    \label{fig:static-inputs}
\end{figure}

\begin{assumption}[Boundary-to-interior Gibbs factorization]
\label{asm:GF}
There exist constants $C_F,\mu_F>0$ with the following property. For any decomposition of $B$ into five consecutive intervals (cf.~Fig.~\ref{fig:static-inputs}(b)),
\begin{align}
B=L,G_L,V,G_R,R,
\label{eq:boundary-interior-decomposition}
\end{align}
set $m:=\min\{|G_L|,|G_R|\}$.  Then
\begin{align}
\|(\tau_B)_{LRV} - (\tau_B)_{LR}\otimes(\tau_B)_V\|_1 \leq C_Fe^{-\mu_Fm}.
\label{eq:boundary-interior-factorization}
\end{align}
\end{assumption}

Here, $LR:=L\cup R$, and similarly $LRV:=L\cup R\cup V$. This formulation is deliberately restricted to the geometry used below: we do not assume factorization for arbitrary unions of intervals. 

The two assumptions above address different sources of nonlocality. Assumption~\ref{asm:QL} localizes the Petz amplitude near the boundary, while Assumption~\ref{asm:GF} controls the boundary-to-interior correlations that remain after this localization. The main result of this section combines them into the summable decay function required by Theorem~\ref{thm:influence-to-curvature}.

\begin{theorem}[Assumptions~\ref{asm:QL} and \ref{asm:GF} imply exponentially decaying  $\eta$]
\label{thm:static-to-influence}
Suppose that Assumptions~\ref{asm:QL} and \ref{asm:GF} hold. Then there exist constants $C_\eta,\xi>0$ such that, for every block $B$, site $i\notin B$, and observable $A$,
\begin{align}
\delta_i\bigl(\Phi_B^\dagger(A)\bigr) \leq \delta_i(A) + C_\eta e^{-\mathrm{dist}(i,B)/\xi}\|A\|_L.
\label{eq:exponential-petz-influence}
\end{align}
For $i\in B$, one has $\delta_i(\Phi_B^\dagger(A))=0$.
\end{theorem}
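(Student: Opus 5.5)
The case $i\in B$ is Proposition~\ref{prop:exact-block-killing}, so fix $i\notin B$ and set $d:=\mathrm{dist}(i,B)$. I would work in the Heisenberg picture using the factorization $\Phi_B^\dagger(A)=\mathcal E_B(Q_B^\dagger AQ_B)$ of Lemma~\ref{lem:petz-amplitude-factorization}. For each $r\ge0$ define the localized map $\Psi_{B,r}(A):=\mathcal E_B(\widehat Q_{B,r}^\dagger A\widehat Q_{B,r})$, where $\widehat Q_{B,r}$ is obtained from $\widetilde Q_{B,r}$ of Assumption~\ref{asm:QL} by renormalizing so that $\mathcal E_B(\widehat Q_{B,r}^\dagger\widehat Q_{B,r})=I$. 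This makes $\Psi_{B,r}$ unital and completely positive. Concretely, I would set $\widehat Q_{B,r}=\widetilde Q_{B,r}\,\mathcal E_B(\widetilde Q_{B,r}^\dagger\widetilde Q_{B,r})^{-1/2}$. Since $\mathcal E_B(Q_B^\dagger Q_B)=I$ and $\|Q_B-\widetilde Q_{B,r}\|\le C_Qe^{-\mu_Qr}$, the correction $\mathcal E_B(\widetilde Q^\dagger\widetilde Q)-I$ is $O(e^{-\mu_Q r})$. For $r$ beyond a constant it is also supported in $N_{r+c_0}(\partial B)\cap B^c$, because $\mathcal E_B$ maps an operator on $N_{r+c_0}(\partial B)$ to one on $N_{r+c_0}(\partial B)\setminus B$. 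Hence $\widehat Q_{B,r}$ is still supported on $N_{r+c_0}(\partial B)$ and satisfies $\|Q_B-\widehat Q_{B,r}\|\lesssim e^{-\mu_Qr}$.

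Next, choose $R=R(d):=d-c_0-2$, so that $i\notin N_{R+c_0}(\partial B)$; if $R<0$, take $\eta(d)$ to be a constant. Then $\Psi_{B,R}$ commutes with multiplication by operators on site $i$: if $A=K+E$ with $K\in\mathcal A_{i^c}$, then $\Psi_{B,R}(K)\in\mathcal A_{i^c}$, since $\widehat Q_{B,R}$ and $\mathcal E_B$ act trivially at $i$. Unital complete positivity gives $\|\Psi_{B,R}(E)\|\le\|E\|$, and therefore $\delta_i(\Psi_{B,R}(A))\le\delta_i(A)$. By the triangle inequality for $\delta_i$, it then remains to prove
\begin{align}
\delta_i\bigl((\Phi_B^\dagger-\Psi_{B,R})(A)\bigr)\le\sum_{r> R}\delta_i\bigl((\Psi_{B,r}-\Psi_{B,r-1})(A)\bigr)\lesssim\sum_{r>R}re^{-\mu r}\|A\|_L ,
\end{align}
which telescopes (with $\Psi_{B,r}\to\Phi_B^\dagger$, since for $r$ larger than the chain length $\widehat Q_{B,r}=Q_B$) and yields $\eta(d)\lesssim e^{-d/\xi}$ after absorbing $\sum_{r>R}re^{-\mu r}$ into $C_\eta e^{-d/\xi}$ with a slightly smaller rate.

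\textbf{Main obstacle: the increment bound.} The difficulty is that $\|A\|_L$, not $\|A\|$, must appear on the right-hand side, while $\|A\|$ can scale with $n$. I would first replace $A$ by something with bounded norm. Fix the support $S_r:=N_{r+c_0}(\partial B)$. Writing $\Delta_r:=\Psi_{B,r}-\Psi_{B,r-1}$, the difference $\Delta_r(A)$ only depends on $A$ modulo operators that commute past both amplitudes, so I would subtract from $A$ a suitable conditional average over the region where the amplitudes act. Concretely, decompose $A-\mathbb E(A)=\sum_{j}(\text{site-}j\text{ increments})$ by successively averaging out sites in a telescoping (martingale) fashion, each increment having norm at most $\delta_j(A)$. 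Increments at sites $j\in S_r$ contribute $O(|S_r|\,e^{-\mu_Qr})\|A\|_L=O(re^{-\mu_Q r})\|A\|_L$, since $\|\Delta_r(E)\|\lesssim e^{-\mu_Qr}\|E\|$ by the amplitude bounds and $|S_r|=O(r)$. Increments at sites $j\in B$ far from the boundary (the central region $V$) are the problematic ones, because there are $\sim\ell$ of them. For these I would invoke Assumption~\ref{asm:GF} with $L\cup R$ the boundary part of $S_r\cap B$, gaps of size $\sim r$, and $V$ the interior. Factorization makes $\mathcal E_B(\widehat Q^\dagger E_V\widehat Q)\approx\mathcal E_B(\widehat Q^\dagger\widehat Q)\,\tr[(\tau_B)_VE_V]=\tr[(\tau_B)_VE_V]\,I$ for both $r$ and $r-1$, so the interior contribution to $\Delta_r$ is $O(e^{-\mu_Fr})\|A\|_L$, provided the interior part is first collected into a single observable on $V$ of oscillation-controlled norm. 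Increments at sites of $B^c$ outside $S_r$ commute through and cancel in $\Delta_r$ by unitality of both maps. Converting the operator-norm bound on $\Delta_r(A)$ into a $\delta_i$ bound costs only a factor $2$. Making the interior reduction rigorous — arranging that only one application of Assumption~\ref{asm:GF} per $r$ is needed, rather than one per interior site — is the step I expect to require the most care.
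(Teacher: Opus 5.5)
Your reduction is sound and matches the paper's. You use the same unitalization $\widehat Q_{B,r}=\widetilde Q_{B,r}\,\mathcal E_B(\widetilde Q_{B,r}^\dagger\widetilde Q_{B,r})^{-1/2}$, the same nonexpansion of $\Psi_{B,R}$ at $i$, and the same telescoping. Your near part is also exactly the paper's: depolarize $S_r$ first and pay $|S_r|\,\|X-Y\|\,\|A\|_L$, with $X=\widehat Q_{B,r}$ and $Y=\widehat Q_{B,r-1}$. The gap is the interior step, which is where uniformity in the block length is decided, and the route you propose for it fails. An observable collecting the whole interior contribution on $V$ can only be bounded by $\sum_{j\in V}\delta_j(A)\le|V|\,\|A\|_L$, and this bound is sharp (take $A=\sum_j Z_j$). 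A single application of Assumption~\ref{asm:GF} per $r$ therefore gives only $\eta(d)\lesssim\ell e^{-d/\xi}$. Then $a_\eta=O(\ell)$, and the condition $\ell>a_\eta$ in Theorem~\ref{thm:influence-to-curvature} can no longer be met, which defeats the purpose of the statement. There is also no built-in gap of size $\sim r$: $X^\dagger X-Y^\dagger Y$ may be supported on all of $S_r\cap B$, which directly abuts $B\setminus S_r$.

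The missing idea is to apply Assumption~\ref{asm:GF} once per interior layer, with buffers that grow with depth. This is exactly the per-site application you wanted to avoid. The decay in $r$ should come from $\|X-Y\|$, not from the factorization gap.

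After removing the near part, $\Delta_r(A_0)=\mathcal E_B(C_0A_0)$ with $C_0:=X^\dagger X-Y^\dagger Y$. Exact normalization gives $\mathcal E_B(C_0)=0$, and $\|C_0\|\le 2M\|X-Y\|$. Now depolarize $B\setminus S_r$ from the outside in, at most two sites per layer. The $k$-th increment $\Delta_kA$ has norm at most $2\|A\|_L$ and is trivial on $S_r$ and on the $k-1$ previously depolarized layers. Those layers therefore serve as the buffers $G_L,G_R$ in the five-interval geometry. Lemma~\ref{lem:complete-covariance} together with $\mathcal E_B(C_0)=0$ then gives $\|\mathcal E_B(C_0\Delta_kA)\|\le 2C_Fe^{-\mu_F(k-1)}\|C_0\|\,\|A\|_L$. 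The geometric sum over $k$ is independent of $|B|$. The fully depolarized remainder lies in $\mathcal A_{B^c}$ and is annihilated because $\mathcal E_B(C_0)=0$.

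Your alternative, comparing $\Psi_{B,r}$ and $\Psi_{B,r-1}$ separately with $\tr[(\tau_B)_VE_V]\,I$, discards the factor $\|X-Y\|$. That is why you were forced to look for gaps of size $r$. It could be repaired by treating an extra band of width $\sim r$ crudely, but the interior would still need the same growing-buffer layering.

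Two smaller points. For bounded $d$ you cannot simply take $\eta(d)$ constant, since $\Phi_B^\dagger$ is nonlocal. You still need to telescope from an exactly unital base amplitude with empty support, e.g.\ $\widehat Q_{B,r}=I$ for $r$ below a fixed threshold, as the paper does. Also, Assumption~\ref{asm:QL} does not give $\widehat Q_{B,r}=Q_B$ for large $r$. What justifies the telescoping is norm convergence $\Psi_{B,r}(A)\to\Phi_B^\dagger(A)$.
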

Assuming for the moment that the theorem and both assumptions hold, we may take $\eta(r)=C_\eta e^{-r/\xi}$, which is summable. Hence, for any constant block length $\ell>a_\eta$, Corollary~\ref{cor:abstract-trace-mixing} gives positive $W_1$ curvature and thus proves rapid mixing. The rest of this section is devoted to provide a proof of Theorem~\ref{thm:static-to-influence} based on the two assumptions.

\subsection{Proof overview}

To prove Theorem~\ref{thm:static-to-influence}, we approximate the (adjoint) Petz update by a map $\Psi$ and use
\begin{align}
\delta_i\bigl(\Phi_B^\dagger(A)\bigr) \leq \delta_i\bigl(\Psi(A)\bigr) +2\bigl\|(\Phi_B^\dagger-\Psi)(A)\bigr\|.
\label{eq:proof-ov-step1}
\end{align}
The goal is to make the first term no larger than $\delta_i(A)$ and bound the error by an exponentially decaying multiple of $\|A\|_L$. Lemma~\ref{lem:petz-amplitude-factorization} gives $\Phi_B^\dagger(A)=\mathcal{E}_B(Q_B^\dagger A Q_B)$, so the boundary-localized amplitudes $\widetilde Q_{B,r}$ from Assumption~\ref{asm:QL} provide a natural starting point. However, two difficulties must be addressed.

First, substituting $\widetilde Q_{B,r}$ for $Q_B$ need not preserve unitality. This matters because $\|cI\|_L=0$: an approximation error controlled by $\|A\|_L$ must vanish exactly on constant observables. Lemma~\ref{lem:unitalized-approximants} in Section~\ref{subsubsec:unitalization} corrects the normalization without enlarging the support or losing exponential accuracy, producing $\widehat Q_{B,r}$ such that
\begin{align}
\mathcal{E}_B(\widehat Q_{B,r}^\dagger\widehat Q_{B,r})=I, \qquad  \Psi_{B,r}(A):=  \mathcal{E}_B(\widehat Q_{B,r}^\dagger A\widehat Q_{B,r}).
\end{align}
The resulting maps are unital and completely positive, hence contractive. For an exterior site $i$ outside the support of $\widehat Q_{B,r}$, they also preserve observables trivial at $i$. Consequently, $\delta_i(\Psi_{B,r}(A))\leq\delta_i(A)$, giving the desired first term in \eqref{eq:proof-ov-step1}.

Second, a direct approximation estimate still yields $\|\Phi_B^\dagger(A)-\Psi_{B,r}(A)\| \lesssim e^{-\mu r}\|A\|$, whereas $\|A\|$ can grow with the system size even when $\|A\|_L$ stays bounded (e.g., $A = \sum_j Z_j$). We therefore compare consecutive approximations and localize the observable's variation. If the two amplitudes $X,Y$ are supported on a boundary neighborhood $S$, write $A=(A-\mathcal D_S(A))+\mathcal D_S(A)$, where $\mathcal D_S$ depolarizes the sites in $S$. The first part satisfies $\|A-\mathcal D_S(A)\|\leq |S|\|A\|_L$. The second part reduces to $\mathcal{E}_B(C_0\mathcal D_S(A))$, where $C_0=X^\dagger X-Y^\dagger Y$ and exact normalization gives $\mathcal{E}_B(C_0)=0$. This contribution need not vanish, because averaging against $\tau_B$ can still couple the boundary to the interior. As illustrated in Figure~\ref{fig:static-inputs}(b), $C_0$ is localized near the boundary $(L, R)$, while $D_S(A)$ may retain support in the interior ($V$), with the two regions separated by buffers. In Section~\ref{subsec:anchored-increment}, we control this term by successively depolarizing interior layers and applying the boundary-to-interior factorization assumption (Assumption~\ref{asm:GF}). Each layer costs only $\mathcal O(\|A\|_L)$, while its covariance decays exponentially with its depth. Summing these contributions gives the anchored increment estimate below, with a cost proportional to $|S|$ rather than the block length.

Since a 1D boundary neighborhood has $\mathcal O(r)$ sites and consecutive amplitudes differ exponentially little, this yields
\begin{align}
\bigl\|(\Psi_{B,r}-\Psi_{B,r-1})(A)\bigr\| \lesssim \underbrace{(r+1)\|A\|_L}_{\text{observable variation}}  \underbrace{e^{-\mu r}}_{\text{strength}}.
\end{align}
Intuitively, the observable can vary by at most $\mathcal O(r)\|A\|_L$ over the boundary neighborhood, while the effect of enlarging the localized amplitude by one shell is exponentially small in $r$.
Finally, choose a boundary radius $R$ just before the approximating amplitude reaches $i$, and telescope
\begin{align}
    \Phi_B^\dagger(A)=\Psi_{B,R}(A) + \sum_{r>R}(\Psi_{B,r}-\Psi_{B,r-1})(A).
\end{align}
The first term is nonexpansive at $i$, and the remaining sum is an exponential tail in $\operatorname{dist}(i,B)$, up to fixed support offsets. Substitution into \eqref{eq:proof-ov-step1} proves the claimed exterior-influence bound, uniformly in the block length.

\subsection{From boundary localization to Lipschitz control}
\label{subsec:two-ingredients}

\subsubsection{Exact unitalization of boundary approximants}
\label{subsubsec:unitalization}

\begin{lemma}[Unitalized boundary approximants]
\label{lem:unitalized-approximants}
There exist constants $C_U,M_U,\mu_U>0$, independent of $n$ and $B$, and operators $\widehat Q_{B,r}$ for every integer $r\geq0$ such that
\begin{align}
\mathcal{E}_B(\widehat Q_{B,r}^\dagger\widehat Q_{B,r})=I.
\label{eq:exact-unitalization}
\end{align}
Moreover, $\operatorname{supp}\widehat Q_{B,r}\subseteq N_{r+c_0}(\partial B)$, $\|\widehat Q_{B,r}\|\leq M_U$, and $\|\widehat Q_{B,r}-Q_B\|\leq C_Ue^{-\mu_Ur}$. 
\end{lemma}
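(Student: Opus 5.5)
The idea is to renormalize the approximants $\widetilde Q_{B,r}$ from Assumption~\ref{asm:QL} by a right multiplication that restores the identity $\mathcal{E}_B(Q^\dagger Q)=I$ exactly, while keeping the support inside $N_{r+c_0}(\partial B)$.

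\textbf{Normalization defect.} For $r\ge0$ set
\begin{align}
P_r:=\mathcal{E}_B(\widetilde Q_{B,r}^\dagger\widetilde Q_{B,r}).
\end{align}
Since $\widetilde Q_{B,r}\in\mathcal A_{N_{r+c_0}(\partial B)}$ and $\mathcal{E}_B$ only averages the $B$ factor against the product state $\tau_B$, we get $P_r\in\mathcal A_{N_{r+c_0}(\partial B)\cap B^c}$. A small subtlety: $\tau_B$ is not a product over $B$, so the average over the interior of $B$ does not factor away. However, the output of $\mathcal{E}_B$ is $I_B\otimes(\cdot)$, and on $B^c$ only sites in the support of $\widetilde Q_{B,r}$ survive. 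Hence $\operatorname{supp}P_r\subseteq N_{r+c_0}(\partial B)\cap B^c$.

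Lemma~\ref{lem:petz-amplitude-factorization} gives $\mathcal{E}_B(Q_B^\dagger Q_B)=I$. Since $\mathcal{E}_B$ is unital and completely positive, it is contractive in operator norm, so
\begin{align}
\|P_r-I\|\le\|\widetilde Q_{B,r}^\dagger\widetilde Q_{B,r}-Q_B^\dagger Q_B\|\le (\|\widetilde Q_{B,r}\|+\|Q_B\|)\|\widetilde Q_{B,r}-Q_B\|\le M_QC_Qe^{-\mu_Q r}.
\end{align}

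\textbf{Large $r$.} Fix $r_0$ with $M_QC_Qe^{-\mu_Q r_0}\le 1/2$. For $r\ge r_0$, $P_r$ is positive with spectrum in $[1/2,3/2]$, so we may define
\begin{align}
\widehat Q_{B,r}:=\widetilde Q_{B,r}P_r^{-1/2}.
\end{align}
Because $P_r^{-1/2}\in\mathcal A_{B^c}$, the bimodule property of $\mathcal{E}_B$ gives
\begin{align}
\mathcal{E}_B(\widehat Q_{B,r}^\dagger\widehat Q_{B,r})=P_r^{-1/2}P_rP_r^{-1/2}=I,
\end{align}
which is Eq.~\eqref{eq:exact-unitalization}. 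The support is unchanged, since $P_r$ lives on $N_{r+c_0}(\partial B)$. The norm satisfies $\|\widehat Q_{B,r}\|\le\sqrt2M_Q$. For the approximation error, note that $\|P_r^{-1/2}-I\|\lesssim\|P_r-I\|$ on $[1/2,3/2]$, so
\begin{align}
\|\widehat Q_{B,r}-Q_B\|\le\|\widetilde Q_{B,r}-Q_B\|+\|\widetilde Q_{B,r}\|\,\|P_r^{-1/2}-I\|\le C'e^{-\mu_Qr}.
\end{align}

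\textbf{Small $r$.} For $r<r_0$ I need an exactly normalized operator on the small support $N_{r+c_0}(\partial B)$, so taking $\widehat Q_{B,r}=Q_B$ is not allowed. Instead I would take
\begin{align}
\widehat Q_{B,r}:=I\qquad\text{for } r<r_0,
\end{align}
which satisfies $\mathcal{E}_B(I)=I$ because $\tau_B$ is normalized. The norm of $I$ is $1$, and $\|I-Q_B\|\le1+M_Q$. This is at most $C_Ue^{-\mu_Ur}$ once $C_U\ge(1+M_Q)e^{\mu_Q r_0}$. Support is trivially fine.

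\textbf{Constants.} Set $M_U=\max(1,\sqrt2M_Q)$, $\mu_U=\mu_Q$, and $C_U$ large enough to cover both regimes. All of these depend only on the constants in Assumption~\ref{asm:QL}, hence are independent of $n$ and $B$.

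The only delicate point is the support claim for $P_r$, namely that $\mathcal{E}_B$ applied to an operator supported on $N_{r+c_0}(\partial B)$ stays supported there. I would verify it directly by writing $\widetilde Q^\dagger\widetilde Q=\sum_k X_k\otimes Y_k$ with $X_k$ on $B$ and $Y_k$ on $N\cap B^c$; then $\mathcal{E}_B$ maps each term to $\tr(\tau_BX_k)\,Y_k$.
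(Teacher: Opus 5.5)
Your proposal is correct and follows the paper's proof essentially step for step: the normalization defect $P_r=\mathcal{E}_B(\widetilde Q_{B,r}^\dagger\widetilde Q_{B,r})$ (the paper's $I_B\otimes G_{B,r}$), right-multiplication by its inverse square root via the $\mathcal A_{B^c}$-bimodule property, the Lipschitz bound for $x\mapsto x^{-1/2}$ on $[1/2,3/2]$, and the choice $\widehat Q_{B,r}=I$ below a fixed threshold $r_0$. One small point: the paper takes $r_0\geq1$ so that $\widehat Q_{B,0}=I$, which is used later when $R=0$ in the proof of Theorem~\ref{thm:static-to-influence}; your construction yields this as well if you impose $r_0\geq1$.
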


\begin{proof}
For the boundary approximant $\widetilde Q_{B,r}$ from Assumption~\ref{asm:QL}, define $G_{B,r}\in\mathcal A_{B^c}$ by
\begin{align}
I_B\otimes G_{B,r} := \mathcal{E}_B(\widetilde Q_{B,r}^\dagger \widetilde Q_{B,r}).
\end{align}
Since $\mathcal{E}_B(Q_B^\dagger Q_B)=I$ (Lemma~\ref{lem:petz-amplitude-factorization}) and $\mathcal{E}_B$ is contractive,
\begin{align}
\|G_{B,r}-I_{B^c}\| \leq \|\widetilde Q_{B,r}^\dagger\widetilde Q_{B,r} -Q_B^\dagger Q_B\| \leq \bigl(\|\widetilde Q_{B,r}\|+\|Q_B\|\bigr) \|\widetilde Q_{B,r}-Q_B\| \leq M_QC_Qe^{-\mu_Qr}.
\label{eq:G-close-identity}
\end{align}
Hence there exists an integer $r_0 \geq 1$, independent of $n$ and $B$, such that $\|G_{B,r}-I_{B^c}\|\leq 1/2$ for all $r\geq r_0$. In particular, $G_{B,r}$ is positive and invertible, with spectrum contained in $[1/2,3/2]$.

For $r\geq r_0$, set
\begin{align}
\widehat Q_{B,r} := \widetilde Q_{B,r}(I_B\otimes G_{B,r}^{-1/2}).
\end{align}
Using the previously mentioned $\mathcal A_{B^c}$-bimodule property of $\mathcal{E}_B$,
\begin{align}
\mathcal{E}_B(\widehat Q_{B,r}^\dagger\widehat Q_{B,r}) = (I_B\otimes G_{B,r}^{-1/2}) \mathcal{E}_B(\widetilde Q_{B,r}^\dagger\widetilde Q_{B,r}) (I_B\otimes G_{B,r}^{-1/2}) =I.
\end{align}
Moreover, $G_{B,r}$ is supported on $N_{r+c_0}(\partial B)\cap C$, so the correction by $G_{B,r}^{-1/2}$ does not enlarge the support of $\widetilde Q_{B,r}$. Hence $\operatorname{supp}\widehat Q_{B,r} \subseteq N_{r+c_0}(\partial B)$. 

Since $x\mapsto x^{-1/2}$ is Lipschitz on $[1/2,3/2]$,
Eq.~\eqref{eq:G-close-identity} gives $\|G_{B,r}^{-1/2}-I_{B^c}\| \leq C\|G_{B,r}-I_{B^c}\| \leq C M_QC_Qe^{-\mu_Qr}$, and therefore
\begin{align}
\|\widehat Q_{B,r}-Q_B\| \leq \|\widetilde Q_{B,r}-Q_B\| + \|\widetilde Q_{B,r}\| \|G_{B,r}^{-1/2}-I_{B^c}\| \leq C_Ue^{-\mu_Ur},
\end{align}
after adjusting the constants. The same estimates, together with the uniform bound on $\widetilde Q_{B,r}$, give $\|\widehat Q_{B,r}\|\leq M_U$.

Finally, for the finitely many radii $0\leq r<r_0$, set $\widehat Q_{B,r}=I$. Then $\mathcal{E}_B(\widehat Q_{B,r}^\dagger\widehat Q_{B,r})=I$, and the support condition is automatic. Since $r_0$ is fixed and $\|Q_B\|$ is uniformly bounded, enlarging $C_U$ and $M_U$ if necessary makes the same bounds valid for all $r\geq0$.
\end{proof}

The normalization in Lemma~\ref{lem:unitalized-approximants} allows us to associate to each boundary approximant an exactly unital Heisenberg map. Define
\begin{align}
\Psi_{B,r}(A) := \mathcal{E}_B(\widehat Q_{B,r}^\dagger A\widehat Q_{B,r}), \qquad r\geq0.
\label{eq:unital-shell-channel}
\end{align}
Each $\Psi_{B,r}$ is unital and completely positive. Moreover, $\Psi_{B,r}(A)\to\Phi_B^\dagger(A)$ in operator norm for every fixed $A$, since
\begin{align}
\|\Psi_{B,r}(A)-\Phi_B^\dagger(A)\| \leq (\|\widehat Q_{B,r}\|+\|Q_B\|) \|\widehat Q_{B,r}-Q_B\| \|A\|.
\label{eq:shell-channel-convergence}
\end{align}
The differences between consecutive unitalized amplitudes inherit the same exponential decay.

\begin{corollary}
\label{cor:amplitude-shell-increments}
There exist constants $C_\varepsilon,\mu_\varepsilon>0$, such that for every $r\geq1$,
\begin{align}
\|\widehat Q_{B,r}-\widehat Q_{B,r-1}\| \leq C_\varepsilon e^{-\mu_\varepsilon r}.
\label{eq:amplitude-increment-decay}
\end{align}
In particular, $\sum_{r\geq1}(r+1)^k \|\widehat Q_{B,r}-\widehat Q_{B,r-1}\|<\infty$ for every fixed integer $k\geq0$.
\end{corollary}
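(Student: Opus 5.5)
The plan is to apply the triangle inequality through the common limit $Q_B$ and then read off the decay from Lemma~\ref{lem:unitalized-approximants}. For $r\geq1$ we write
\begin{align}
\|\widehat Q_{B,r}-\widehat Q_{B,r-1}\| \leq \|\widehat Q_{B,r}-Q_B\| + \|Q_B-\widehat Q_{B,r-1}\| \leq C_Ue^{-\mu_Ur}+C_Ue^{-\mu_U(r-1)}.
\end{align}
The second term equals $C_Ue^{\mu_U}e^{-\mu_U r}$, so setting $C_\varepsilon:=C_U(1+e^{\mu_U})$ and $\mu_\varepsilon:=\mu_U$ gives Eq.~\eqref{eq:amplitude-increment-decay}. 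These constants inherit independence from $n$ and $B$ from Lemma~\ref{lem:unitalized-approximants}.

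We should double-check the small radii. For $r<r_0$ the lemma sets $\widehat Q_{B,r}=I$, so consecutive differences there vanish or (at $r=r_0$) are bounded by $\|\widehat Q_{B,r_0}\|+1\leq M_U+1$. The lemma's error bound $\|\widehat Q_{B,r}-Q_B\|\leq C_Ue^{-\mu_Ur}$ is already stated for all $r\geq0$ after enlarging $C_U$, so the triangle-inequality argument above covers every $r\geq1$ uniformly without a separate case analysis.

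For the summability claim, fix $k\geq0$. Then
\begin{align}
\sum_{r\geq1}(r+1)^k\|\widehat Q_{B,r}-\widehat Q_{B,r-1}\|\leq C_\varepsilon\sum_{r\geq1}(r+1)^ke^{-\mu_\varepsilon r}<\infty,
\end{align}
since a polynomial times a geometric series with ratio $e^{-\mu_\varepsilon}<1$ converges, for instance by the ratio test. The bound is again uniform in $n$ and $B$.

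There is no substantial obstacle. The only point needing care is to make sure the constants are uniform over blocks and system sizes, and that is inherited directly from Lemma~\ref{lem:unitalized-approximants}.
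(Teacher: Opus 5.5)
Your proposal is correct and is the same argument as the paper's: the triangle inequality through $Q_B$, the all-$r$ bound of Lemma~\ref{lem:unitalized-approximants}, and summability from the exponential decay. Your explicit constants $C_\varepsilon=C_U(1+e^{\mu_U})$ and $\mu_\varepsilon=\mu_U$ are exactly what the paper's ``after adjusting constants'' leaves implicit.
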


\begin{proof}
By the triangle inequality and Lemma~\ref{lem:unitalized-approximants},
\begin{align}
\|\widehat Q_{B,r}-\widehat Q_{B,r-1}\| \leq \|\widehat Q_{B,r}-Q_B\| + \|Q_B-\widehat Q_{B,r-1}\| \leq C_\varepsilon e^{-\mu_\varepsilon r}.
\end{align}
The weighted summability follows immediately.
\end{proof}

Corollary~\ref{cor:amplitude-shell-increments}, together with a direct expansion, gives an exponentially decaying bound on $\Psi_{B,r}(A)-\Psi_{B,r-1}(A)$ in terms of $\|A\|$. To obtain the $\|A\|_L$ bound required for $W_1$, we separate the part of $A$ that is sensitive to the boundary neighborhood from the part that is locally constant there.

\subsubsection{From operator norm to Lipschitz constant}
\label{subsubsec:oscillator-localization}

For a site $j$ and for $S\subseteq\Lambda_n$, let
\begin{align}
\mathcal D_j(A):=\frac{I_j}{d}\otimes\tr_j[A], \qquad \mathcal D_S:=\prod_{j\in S}\mathcal D_j.
\label{eq:site-depolarization}
\end{align}
The following elementary estimates allow us to separate the dependence of an observable on a boundary neighborhood from the part that is locally constant there.

\begin{proposition}[Depolarization localizes oscillations]
\label{prop:depolarization-localization}
For every Hermitian $A$ and every $S\subseteq\Lambda_n$,
\begin{align}
\|A-\mathcal D_S(A)\| \leq \sum_{j\in S}\delta_j(A) \leq |S|\|A\|_L,
\label{eq:depolarization-localization}\\
\delta_i(\mathcal D_S(A)) \leq \delta_i(A) \qquad \text{for every }i\in\Lambda_n.
\label{eq:depolarization-oscillation-contraction}
\end{align}
\end{proposition}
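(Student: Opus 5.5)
The plan is to prove both estimates by exploiting two elementary properties of a single-site depolarization $\mathcal D_j$. First, $\mathcal D_j$ is a unital CPTP map, hence contractive in operator norm. Second, it fixes every operator acting trivially on $j$, i.e. $\mathcal D_j(K)=K$ for $K\in\mathcal A_{j^c}$. Combining these gives the basic single-site estimate
\begin{align}
\|A-\mathcal D_j(A)\| = \|(A-K)-\mathcal D_j(A-K)\| \leq 2\|A-K\|
\end{align}
for every $K\in\mathcal A_{j^c}$. Taking the infimum over $K$ yields $\|A-\mathcal D_j(A)\|\leq\delta_j(A)$ by Definition~\ref{def:local-oscillation}; the factor $2$ in the definition of $\delta_j$ is exactly absorbed here.

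For Eq.~\eqref{eq:depolarization-localization}, enumerate $S=\{j_1,\dots,j_m\}$ and set $A_0=A$ and $A_k=\mathcal D_{j_k}(A_{k-1})$, so that $A_m=\mathcal D_S(A)$; the $\mathcal D_j$ commute, so the order is irrelevant. Telescoping gives
\begin{align}
\|A-\mathcal D_S(A)\|\leq\sum_{k=1}^m\|A_{k-1}-\mathcal D_{j_k}(A_{k-1})\|\leq\sum_{k=1}^m\delta_{j_k}(A_{k-1}).
\end{align}
It then suffices to show that depolarization does not increase any oscillation, namely $\delta_i(\mathcal D_j(A))\leq\delta_i(A)$ for all sites $i,j$. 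Iterating this gives $\delta_{j_k}(A_{k-1})\leq\delta_{j_k}(A)$, and the second inequality $\sum_{j\in S}\delta_j(A)\leq|S|\|A\|_L$ is immediate from Eq.~\eqref{eq:lip-norm}.

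The same monotonicity fact, iterated over $S$, gives Eq.~\eqref{eq:depolarization-oscillation-contraction} directly. To prove it, take any $K\in\mathcal A_{i^c}$. If $j=i$, then $\mathcal D_i(A)$ acts trivially on $i$, so $\delta_i(\mathcal D_i(A))=0$. If $j\neq i$, then $\mathcal D_j(K)$ still acts trivially on $i$, because $\mathcal D_j$ only traces out and replaces site $j$. Hence $\mathcal D_j(K)$ is an admissible competitor, and contractivity gives $\|\mathcal D_j(A)-\mathcal D_j(K)\|\leq\|A-K\|$. Taking the infimum over $K$ yields $\delta_i(\mathcal D_j(A))\leq\delta_i(A)$.

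There is no real obstacle in this argument. The only point requiring care is that $\mathcal D_j$ maps $\mathcal A_{i^c}$ into itself for $j\neq i$, and this holds because the partial trace over $j$ commutes with acting trivially on $i$.
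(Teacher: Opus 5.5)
Your proof is correct and follows essentially the same route as the paper: the single-site bound $\|A-\mathcal D_j(A)\|\leq\delta_j(A)$ from $\mathcal D_j(K)=K$ plus contractivity, telescoped over $S$ using that depolarization never increases oscillations because it maps $\mathcal A_{i^c}$ into itself. The only cosmetic difference is that the paper proves this monotonicity for $\mathcal D_S$ in one step, rather than iterating the single-site version $\delta_i(\mathcal D_j(A))\leq\delta_i(A)$.
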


\begin{proof}
We first prove Eq.~\eqref{eq:depolarization-oscillation-contraction}.  Take $K\in\mathcal A_{i^c}$. Then $\mathcal D_S(K)\in\mathcal A_{i^c}$, and hence
\begin{align}
\delta_i(\mathcal D_S(A)) \leq 2\|\mathcal D_S(A-K)\| \leq 2\|A-K\|.
\end{align}
Taking the infimum over $K$ proves the claim.

Next let $K\in\mathcal A_{j^c}$. Then $\mathcal D_j(K)=K$, and contractivity of $\mathcal D_j$ gives $\|A-\mathcal D_j(A)\|\leq2\|A-K\|$. Taking the infimum over $K$ yields $\|A-\mathcal D_j(A)\|\leq\delta_j(A)$. We may now telescope over the sites of $S$; Eq.~\eqref{eq:depolarization-oscillation-contraction} bounds the oscillation at each intermediate step by its initial value. This proves Eq.~\eqref{eq:depolarization-localization}.
\end{proof}

Proposition~\ref{prop:depolarization-localization} will control the portion of an observable that depends directly on the boundary. The remaining portion may still correlate a boundary operator with degrees of freedom deep inside the block. To control that contribution without a dimension-dependent ancilla factor, we use the following complete covariance estimate.

\begin{lemma}[Covariance bound with ancillary systems]
\label{lem:complete-covariance}
Let $U,V,E,F$ be pairwise disjoint regions, and let $\tau_{UV}$ be a state on $U\cup V$, with marginals $\tau_U$ and $\tau_V$. For every $P\in\mathcal A_{U\cup E}$ and $Z\in\mathcal A_{V\cup F}$,
\begin{align}
\left\| \tr_{UV}[\tau_{UV}PZ] -\tr_U[\tau_U P]\tr_V[\tau_V Z] \right\| \leq \|\tau_{UV}-\tau_U \otimes \tau_V\|_1\|P\|\|Z\|.
\label{eq:complete-covariance}
\end{align}
\end{lemma}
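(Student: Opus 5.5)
The plan is to reduce the estimate to a trace-norm duality statement for the operator-valued quantity on the left of \eqref{eq:complete-covariance}. First observe that $\tr_U[\tau_U P]\tr_V[\tau_V Z] = \tr_{UV}[(\tau_U\otimes\tau_V)PZ]$. This holds because $P$ acts on $U\cup E$ and $Z$ acts on $V\cup F$, so the product factorizes across the four regions. Hence the difference equals $\tr_{UV}[\Delta\, PZ]$ with $\Delta:=\tau_{UV}-\tau_U\otimes\tau_V$, an operator on $E\cup F$ (tensored with identity elsewhere). The goal is therefore to show
\[
\|\tr_{UV}[\Delta\, PZ]\|\le\|\Delta\|_1\|P\|\|Z\|,
\]
with no dimension factor from $E,F$.

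To prove this I would use operator-norm duality. For unit vectors $\phi,\psi$ on $E\cup F$, the bracket $\langle\phi|\tr_{UV}[\Delta PZ]|\psi\rangle$ equals $\tr[\Delta\, W]$, where
\[
W:=\langle\phi|PZ|\psi\rangle_{EF}
\]
is the operator on $U\cup V$ obtained by partially contracting $PZ$ with $\langle\phi|$ and $|\psi\rangle$ on $EF$. Hölder's inequality then gives $|\tr[\Delta W]|\le\|\Delta\|_1\|W\|$.

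It remains to show $\|W\|\le\|PZ\|\le\|P\|\|Z\|$. Take unit vectors $\alpha,\gamma$ on $UV$. Then
\[
\langle\alpha|W|\gamma\rangle=(\langle\alpha|\otimes\langle\phi|)\,PZ\,(|\gamma\rangle\otimes|\psi\rangle),
\]
which is bounded in modulus by $\|PZ\|$ because $\alpha\otimes\phi$ and $\gamma\otimes\psi$ are unit vectors. Taking suprema over $\phi,\psi$ then yields the claim.

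The step needing care is the partial-contraction bookkeeping. Specifically, one must check that the operator on $E\cup F$ produced by $\tr_{UV}$ paired with $\phi,\psi$ matches $\tr[\Delta W]$ even though $\Delta$ and $PZ$ do not commute. This is just cyclicity of the trace over $UV$ and linearity, with $\Delta$ acting trivially on $EF$, so the argument goes through. No complete-boundedness factor appears, because the vector contraction on $EF$ is norm-nonincreasing.
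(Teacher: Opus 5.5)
Your proof is correct and follows the same route as the paper: both use the support conditions to rewrite the difference as $\tr_{UV}[\Delta PZ]$ with $\Delta=\tau_{UV}-\tau_U\otimes\tau_V$, and then bound the slice map $X\mapsto\tr_{UV}[\Delta X]$ by $\|\Delta\|_1\|X\|$, with no dependence on the dimension of $E\cup F$. The only difference is cosmetic: the paper writes $\Delta=\sum_a\lambda_a|a\rangle\langle a|$ and uses $\tr_{UV}[\Delta PZ]=\sum_a\lambda_aV_a^\dagger PZV_a$ with the isometries $V_a|\psi\rangle=|a\rangle\otimes|\psi\rangle$, whereas you pair with unit vectors on $E\cup F$ and apply H\"older on $U\cup V$; these are the same estimate, since $\langle\phi|\sum_a\lambda_aV_a^\dagger PZV_a|\psi\rangle=\tr[\Delta W]$.
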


\begin{proof}
Set $\Delta:=\tau_{UV}-\tau_U\otimes\tau_V$. The support conditions on $P$ and $Z$ imply that $\tr_{UV}[(\tau_U\otimes\tau_V)PZ] =\tr_U[\tau_U P]\tr_V[\tau_V Z]$. Hence the difference to be bounded is $\tr_{UV}[\Delta PZ]$.

Write the spectral decomposition $\Delta=\sum_a\lambda_a|a\rangle\langle a|$, and define the isometry $V_a:\mathcal H_{EF}\to\mathcal H_{UV}\otimes\mathcal H_{EF}$ by $V_a|\psi\rangle=|a\rangle\otimes|\psi\rangle$. Then $\tr_{UV}[\Delta PZ]=\sum_a\lambda_a V_a^\dagger PZV_a$. Since each $V_a$ is an isometry,
\begin{align}
\left\|\tr_{UV}[\Delta PZ]\right\| \leq \sum_a|\lambda_a| \|V_a^\dagger PZV_a\| \leq \sum_a|\lambda_a| \|PZ\| \leq \|\Delta\|_1\|P\|\|Z\|.
\end{align}
This proves the claim.
\end{proof}


\subsection{Lipschitz-constant control of boundary increments}
\label{subsec:anchored-increment}

Let $S=N_s(\partial B)$ be a boundary neighborhood. The next proposition compares two exactly normalized amplitudes supported on $S$. Exact normalization cancels the constant part of the observable, while Assumption~\ref{asm:GF} controls the remaining correlations between $S\cap B$ and the interior of $B$. The resulting estimate therefore depends on $\|A\|_L$, rather than $\|A\|$.

\begin{lemma}[Anchored increment estimate]
\label{lem:anchored-increment}
Suppose Assumption~\ref{asm:GF} holds. Let $X,Y\in\mathcal A_S$ satisfy
\begin{align}
\mathcal{E}_B(X^\dagger X)=\mathcal{E}_B(Y^\dagger Y)=I, \qquad \|X\|,\|Y\|\leq M.
\label{eq:anchored-normalization}
\end{align}
Then there is a constant $C$ such that every Hermitian $A$ satisfies
\begin{align}
\left\| \mathcal{E}_B(X^\dagger AX-Y^\dagger AY) \right\| \leq C (|S|+1) \|X-Y\| \|A\|_L.
\label{eq:anchored-increment}
\end{align}
\end{lemma}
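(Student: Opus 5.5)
We split $A=(A-\mathcal D_S(A))+\mathcal D_S(A)$ as in the proof overview. For the first piece, the map $Z\mapsto \mathcal E_B(X^\dagger ZX-Y^\dagger ZY)$ satisfies
\[
\|\mathcal E_B(X^\dagger ZX-Y^\dagger ZY)\|\le 2M\|X-Y\|\,\|Z\|,
\]
since $\mathcal E_B$ is contractive and $X^\dagger ZX-Y^\dagger ZY=(X-Y)^\dagger ZX+Y^\dagger Z(X-Y)$. Taking $Z=A-\mathcal D_S(A)$, Proposition~\ref{prop:depolarization-localization} gives the bound $2M|S|\,\|X-Y\|\,\|A\|_L$.

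For the second piece, write $A':=\mathcal D_S(A)$. This operator acts trivially on $S$, so it commutes with $X$ and $Y$. Hence
\[
X^\dagger A'X-Y^\dagger A'Y=C_0A',\qquad C_0:=X^\dagger X-Y^\dagger Y\in\mathcal A_S.
\]
Exact normalization gives $\mathcal E_B(C_0)=0$ and $\|C_0\|\le 2M\|X-Y\|$. It remains to bound $\|\mathcal E_B(C_0A')\|$ by $C\|C_0\|\,\|A\|_L$ with $C$ independent of $|B|$. This is the main obstacle: $A'$ may still be supported deep in the interior of $B$, and $\|A'\|$ is not controlled by $\|A\|_L$.

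We handle the interior by a layer-by-layer depolarization. Order the sites of $B\setminus S$ into shells $V_1,V_2,\dots$ by distance from $S\cap B$; in 1D each shell has at most two sites, one on each side of the block. Let $W_k:=V_1\cup\dots\cup V_k$, set $A_0:=A'$ and $A_k:=\mathcal D_{W_k}(A')$, and telescope
\[
A'=\sum_{k\ge1}(A_{k-1}-A_k)+A_\infty .
\]
Here $A_\infty$ acts trivially on all of $B$, so $A_\infty\in\mathcal A_{B^c}$. The bimodule property then gives $\mathcal E_B(C_0A_\infty)=\mathcal E_B(C_0)A_\infty=0$.

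Each increment $D_k:=A_{k-1}-A_k$ satisfies $\|D_k\|\le 2\|A\|_L$ by Proposition~\ref{prop:depolarization-localization}, and it acts trivially on $S\cup W_{k-1}$. Apply Lemma~\ref{lem:complete-covariance} in the geometry of Assumption~\ref{asm:GF}:
\begin{itemize}
\item $L\cup R=S\cap B$ (the boundary pieces of the block, with $U=LR$),
\item $G_L\cup G_R=W_{k-1}$ (the buffers, so $m\approx k-1$),
\item $V=B\setminus(S\cup W_{k-1})$,
\item $P=C_0$ with $E=S\cap B^c$, and $Z=D_k$ with $F=B^c$.
\end{itemize}
The buffers are traced with trivial operators, so the $\tau_B$-average reduces to the marginal on $LRV$. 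Since $\tr_{LR}[(\tau_B)_{LR}C_0]$ is precisely $\mathcal E_B(C_0)=0$, we get
\[
\|\mathcal E_B(C_0D_k)\|\le C_Fe^{-\mu_F(k-1)}\|C_0\|\cdot 2\|A\|_L .
\]
One care point: the $B^c$ parts of $C_0$ and $D_k$ need not commute. Lemma~\ref{lem:complete-covariance} only uses the operator-valued partial trace, so ordering is harmless, but the assignment of $E,F$ must be checked. If necessary, replace $D_k$ by its restriction via a Schmidt-type argument, or note that $E$ and $F$ may overlap within $B^c$ because the lemma's proof never uses their disjointness.

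Summing the geometric series gives $\|\mathcal E_B(C_0A')\|\le C'\|X-Y\|\,\|A\|_L$, independent of $|B|$. Combining this with the first piece yields
\[
\|\mathcal E_B(X^\dagger AX-Y^\dagger AY)\|\le C(|S|+1)\|X-Y\|\,\|A\|_L .
\]
Degenerate cases, where $S\supseteq B$ or a side shell is exhausted earlier than the other, are absorbed by letting shells become one-sided and using $m=\min$. If $S$ covers all of $B$, the telescoping is empty and only the first piece remains.
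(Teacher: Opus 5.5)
Your proposal is correct and follows essentially the same route as the paper. Both arguments use the near/far split via $\mathcal D_S$, the exact cancellation $\mathcal E_B(C_0)=0$, and layer-by-layer depolarization of $B\setminus S$ with at most two sites per layer, then apply Lemma~\ref{lem:complete-covariance} with Assumption~\ref{asm:GF} in the five-interval geometry and sum a geometric series. Your ``care point'' is moot: since $C_0\in\mathcal A_S$ and $D_k$ is trivial on $S$, they act on disjoint factors and commute, so you can take disjoint $E=S\cap B^c$ and $F=B^c\setminus S$, exactly as the paper does.
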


\begin{proof}
Decompose
\begin{align}
A=A_{\mathrm{near}}+A_0, \qquad A_{\mathrm{near}}:=A-\mathcal D_S(A), \qquad A_0:=\mathcal D_S(A).
\label{eq:near-far-decomposition}
\end{align}
By Proposition~\ref{prop:depolarization-localization}, $\|A_{\mathrm{near}}\|\leq|S|\|A\|_L$.  Expanding the difference and using contractivity of $\mathcal{E}_B$ give
\begin{align}
\left\| \mathcal{E}_B(X^\dagger A_{\mathrm{near}}X -Y^\dagger A_{\mathrm{near}}Y) \right\| &\leq 
\left\| \mathcal{E}_B\left((X-Y)^\dagger A_{\mathrm{near}}X + Y^\dagger A_{\mathrm{near}}(X-Y)\right)\right\| \nonumber \\
& \leq 2M \|X-Y\| |S|\|A\|_L.
\label{eq:anchored-near-bound}
\end{align}

\begin{figure}[t!]
    \centering
    \includegraphics[width=0.95\linewidth]{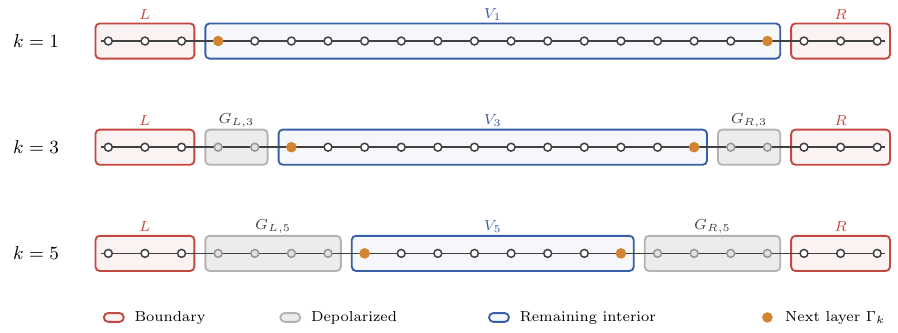}
    \caption{Layer decomposition for Lemma~\ref{lem:anchored-increment}. Depolarized buffers $G_{L,k}$ and $G_{R,k}$ (gray) separate the fixed boundaries $L,R$ (red) from the remaining interval $V_k$ (blue). Orange sites form the next layer $\Gamma_k$. Although $|\Gamma_k|\leq2$, the $B$ support of $\Delta_k A$ may extend throughout $V_k$. Only $B$ is shown.}
    \label{fig:anchored-layer-decomposition}
\end{figure}

It remains to control $A_0$.  Since $A_0$ is trivial on $S$, it commutes with $X$ and $Y$.  Setting $C_0:=X^\dagger X-Y^\dagger Y$, we therefore have
\begin{align}
\mathcal{E}_B(X^\dagger A_0X-Y^\dagger A_0Y)
=\mathcal{E}_B(C_0A_0).
\label{eq:anchored-far-reduction}
\end{align}
Exact normalization in Eq.~\eqref{eq:anchored-normalization} yields $\mathcal{E}_B(C_0)=0$, while $\|C_0\|\leq(\|X\|+\|Y\|)\|X-Y\|\leq2M \|X-Y\|$. This exact cancellation is the point at which unitalization removes the constant observable mode.

Let $S_B:=S\cap B$. If $S_B=B$, then $A_0\in\mathcal A_{B^c}$ and $\mathcal{E}_B(C_0A_0)=\mathcal{E}_B(C_0)A_0=0$, so there is nothing further to prove for the far part. We may therefore assume that $B\setminus S_B$ is nonempty. Write the two components of $S_B$ as $L$ and $R$, ordered from left to right in the interval $B$.  Starting at $L\cup R$ and moving toward the center, partition $B\setminus S_B$ into layers $\Gamma_1,\ldots,\Gamma_m$ of equal distance from $S_B$. Each layer contains at most two sites.  Define $A^{(0)}:=A_0$ and, recursively,
\begin{align}
A^{(k)}:=D_{\Gamma_k}(A^{(k-1)}), \qquad \Delta_k A:=A^{(k-1)}-A^{(k)}.
\label{eq:interior-layer-decomposition}
\end{align}
Then $A_0=A^{(m)}+\sum_{k=1}^m \Delta_k A$.  The final operator $A^{(m)}$ is trivial on all of $B$ and hence belongs to $\mathcal A_{B^c}$. By $\mathcal A_{B^c}$-bimodularity and $\mathcal{E}_B(C_0)=0$, we have $\mathcal{E}_B(C_0A^{(m)})=0$.

Depolarization contracts local oscillations, so Proposition~\ref{prop:depolarization-localization} and $|\Gamma_k|\leq2$ imply
\begin{align}
\|\Delta_k A\|\leq\sum_{j\in\Gamma_k}\delta_j(A^{(k-1)})\leq2\|A\|_L.
\label{eq:layer-increment-bound}
\end{align}
To apply Assumptions~\ref{asm:GF}, we identify the geometry of each layer increment, as illustrated in Fig.~\ref{fig:anchored-layer-decomposition}. Define the remaining central interval
\begin{align}
V_k:=B\setminus\left(S_B\cup\Gamma_1\cup\cdots\cup\Gamma_{k-1}\right),
\label{eq:remaining-interior}
\end{align}
Both $A^{(k-1)}$ and $A^{(k)}$ are trivial on $S_B$ and on all previously depolarized layers. Consequently, the $B$ support of $\Delta_k A$ is contained in $V_k$. Although $\Delta_k A$ is produced by depolarizing the at most two sites in $\Gamma_k$, it need not be supported only on $\Gamma_k$: its support may extend throughout the remaining interval $V_k$.

The previously removed layers form a left buffer $G_{L,k}$ and a right buffer $G_{R,k}$, giving the decomposition
\begin{align}
B=L G_{L,k} V_k G_{R,k} R, \qquad \min\{|G_{L,k}|,|G_{R,k}|\}\geq k-1.
\label{eq:layer-five-interval-decomposition}
\end{align}
This is precisely the five-interval geometry of Fig.~\ref{fig:static-inputs}(b), with the previously depolarized layers forming the buffers $G_{L,k}$ and $G_{R,k}$.

On the complement $C$, the supports of $C_0$ and $\Delta_k A$ are contained in $S\cap C$ and $C\setminus S$, respectively. Both operators are trivial on the buffers $G_{L,k}$ and $G_{R,k}$, so these buffers can be traced out first. We may therefore apply Lemma~\ref{lem:complete-covariance} with $U=L\cup R, V=V_k, E=S\cap C, F=C\setminus S,  P=C_0, Z=\Delta_k A, \tau_{UV}=(\tau_B)_{LRV_k}$.

Assumption~\ref{asm:GF} applied to Eq.~\eqref{eq:layer-five-interval-decomposition} yields 
\begin{align}
\|(\tau_B)_{LRV_k}-(\tau_B)_{LR}\otimes(\tau_B)_{V_k}\|_1\leq C_Fe^{-\mu_F(k-1)}.
\label{eq:layer-factorization}
\end{align}
Combining Eq.~\eqref{eq:anchored-far-reduction} with the preceding estimates, we obtain
\begin{align}
\left\|\mathcal E_B(X^\dagger A_0X-Y^\dagger A_0Y)\right\|
&\leq \sum_k \left\|\mathcal E_B(C_0\Delta_k A)\right\| \nonumber\\
&= \sum_k \left\|
\mathcal E_B(C_0\Delta_k A)
-\mathcal E_B(C_0)\mathcal E_B(\Delta_k A)
\right\|
&&& \bigl(\mathcal E_B(C_0)=0\bigr) \nonumber\\
&\leq \|C_0\|\sum_k
\left\|(\tau_B)_{LRV_k}
-(\tau_B)_{LR}\otimes(\tau_B)_{V_k}\right\|_1
\|\Delta_k A\|
&&& \text{(Lemma~\ref{lem:complete-covariance})} \nonumber\\
&\leq C_F\|C_0\|\sum_k
e^{-\mu_F(k-1)}\|\Delta_k A\|
&&& \text{(Assumption~\ref{asm:GF})} \nonumber\\
&\leq 4MC_F\|X-Y\|\|A\|_L
\sum_k e^{-\mu_F(k-1)} \nonumber\\
&\leq \frac{4MC_F}{1-e^{-\mu_F}}
\|X-Y\|\|A\|_L.
\label{eq:anchored-far-bound}
\end{align}
Combining Eqs.~\eqref{eq:anchored-near-bound} and \eqref{eq:anchored-far-bound} proves Eq.~\eqref{eq:anchored-increment}.
\end{proof}

Lemma~\ref{lem:anchored-increment} converts exact normalization and trace-norm factorization into an oscillator-norm estimate with only a linear cost in the shell size.  Consecutive unitalized Petz amplitudes have exponentially small differences, so this linear cost remains summable.  We make this conclusion explicit before passing to the influence at a fixed exterior site.

\begin{corollary}[Summable shell increments]
\label{cor:summable-channel-increments}
\label{cor}
There exist constants $C,\mu>0$ such that for every observable $A$, 
\begin{align}
\left\| \bigl(\Psi_{B,r}-\Psi_{B,r-1}\bigr)(A) \right\| \leq C(r+1)e^{-\mu r}\|A\|_L
\label{eq:summable-channel-increments}
\end{align}
\end{corollary}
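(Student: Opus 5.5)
The plan is to apply Lemma~\ref{lem:anchored-increment} directly with $X=\widehat Q_{B,r}$ and $Y=\widehat Q_{B,r-1}$. This requires three inputs: a common boundary neighborhood $S$ supporting both amplitudes, their exact normalization, and a uniform norm bound. The common support comes from Lemma~\ref{lem:unitalized-approximants}, which gives $\operatorname{supp}\widehat Q_{B,r}\subseteq N_{r+c_0}(\partial B)$ and $\operatorname{supp}\widehat Q_{B,r-1}\subseteq N_{r-1+c_0}(\partial B)\subseteq N_{r+c_0}(\partial B)$. So we take $S=N_{r+c_0}(\partial B)$, and both $X,Y\in\mathcal A_S$. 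The same lemma gives exact normalization, Eq.~\eqref{eq:exact-unitalization}, and the uniform bound $\|X\|,\|Y\|\le M_U$. Hence the hypotheses of Lemma~\ref{lem:anchored-increment} hold with $M=M_U$, and the constant there is independent of $r$, $n$ and $B$.

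By the definition of $\Psi_{B,r}$ in Eq.~\eqref{eq:unital-shell-channel}, we have $(\Psi_{B,r}-\Psi_{B,r-1})(A)=\mathcal E_B(X^\dagger AX-Y^\dagger AY)$. Lemma~\ref{lem:anchored-increment} then gives, for Hermitian $A$,
\begin{align}
\|(\Psi_{B,r}-\Psi_{B,r-1})(A)\|\le C(|S|+1)\|\widehat Q_{B,r}-\widehat Q_{B,r-1}\|\|A\|_L .
\end{align}
Two estimates remain. First, $\partial B$ consists of a bounded number of sites: at most four endpoint sites at the two edges of the interval. In 1D, a radius-$\rho$ neighborhood of a bounded set therefore has $O(\rho+1)$ sites, and even if $S$ wraps around the cycle we still have $|S|\le n$. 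Concretely $|S|\le 4(2(r+c_0)+1)$, so $|S|+1\le C'(r+1)$ with $C'$ depending only on $c_0$. Second, Corollary~\ref{cor:amplitude-shell-increments} gives $\|\widehat Q_{B,r}-\widehat Q_{B,r-1}\|\le C_\varepsilon e^{-\mu_\varepsilon r}$. Multiplying the two estimates yields Eq.~\eqref{eq:summable-channel-increments} with $\mu=\mu_\varepsilon$ and $C=CC'C_\varepsilon$.

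Two details need checking. The first is the restriction to Hermitian $A$: the $W_1$ duality only uses Hermitian observables, and $\|\cdot\|_L$ is defined for them, so this suffices. If general $A$ are wanted, split $A$ into Hermitian and anti-Hermitian parts and absorb the factor of $2$. The second is the degenerate regime $S\cap B=B$, i.e.\ $r+c_0$ comparable to the block length, and the small radii $r<r_0$ where $\widehat Q_{B,r}=I$. Lemma~\ref{lem:anchored-increment} already covers the first case, since its far part vanishes there. The second needs no special treatment because the constants of Lemma~\ref{lem:unitalized-approximants} were chosen valid for all $r\ge0$. So the only real work is the support-size count, which is routine, and I expect no genuine obstacle.
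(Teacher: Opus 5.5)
Your proposal is correct and follows essentially the same route as the paper's proof. The paper also applies Lemma~\ref{lem:anchored-increment} with $X=\widehat Q_{B,r}$ and $Y=\widehat Q_{B,r-1}$ on a common boundary neighborhood of size $\mathcal O(r+1)$, and then inserts the exponential increment bound of Corollary~\ref{cor:amplitude-shell-increments}. Your extra checks (Hermiticity of $A$, the degenerate case $S\cap B=B$, and small radii) are consistent with the paper.
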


\begin{proof}
Apply Lemma~\ref{lem:anchored-increment} with $X=\widehat Q_{B,r}$ and $Y=\widehat Q_{B,r-1}$.  Both amplitudes are exactly normalized and uniformly bounded by Lemma~\ref{lem:unitalized-approximants}.  Their union of supports is contained in a boundary neighborhood $S_r$ with $|S_r|\leq c(r+1)$, while Corollary~\ref{cor:amplitude-shell-increments} gives $\|X-Y\|\leq C_\varepsilon e^{-\mu_\varepsilon r}$.  Substitution into Eq.~\eqref{eq:anchored-increment} proves the claim after adjusting the constants.
\end{proof}

Corollary~\ref{cor:summable-channel-increments} controls the error made by enlarging the boundary shell.  Before a shell reaches a given exterior site, the corresponding approximate channel cannot increase the oscillation at that site.  This elementary locality observation provides the anchor for the final telescoping sum.

\begin{lemma}[Nonexpansion outside the support]
\label{lem:shell-nonexpansion}
Let $i\notin B$. If $i\notin\operatorname{supp}\widehat Q_{B,r}$, then
\begin{align}
\delta_i(\Psi_{B,r}(A)) \leq \delta_i(A)
\label{eq:shell-nonexpansion}
\end{align}
for every Hermitian $A$.
\end{lemma}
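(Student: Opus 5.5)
The plan is to mimic the argument in Proposition~\ref{prop:depolarization-localization}: for a fixed $K\in\mathcal A_{i^c}$, show that $\Psi_{B,r}(K)\in\mathcal A_{i^c}$, then use contractivity of $\Psi_{B,r}$ in operator norm, and finally take the infimum over $K$. Concretely, I will show
\begin{align}
\delta_i(\Psi_{B,r}(A)) \leq 2\|\Psi_{B,r}(A)-\Psi_{B,r}(K)\| = 2\|\Psi_{B,r}(A-K)\| \leq 2\|A-K\|,
\end{align}
and taking the infimum over $K\in\mathcal A_{i^c}$ then gives Eq.~\eqref{eq:shell-nonexpansion}.

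The first step is contractivity. By Lemma~\ref{lem:unitalized-approximants}, $\Psi_{B,r}$ is unital, since $\Psi_{B,r}(I)=\mathcal E_B(\widehat Q_{B,r}^\dagger\widehat Q_{B,r})=I$. It is completely positive as the composition of conjugation by $\widehat Q_{B,r}$ with the completely positive map $\mathcal E_B$. By the Russo--Dye theorem, a unital positive map has operator norm one, so $\|\Psi_{B,r}(Z)\|\leq\|Z\|$ for every $Z$. For Hermitian $Z$ this also follows directly from $-\|Z\|I\leq Z\leq\|Z\|I$ together with positivity and unitality.

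The second step, which is the only real content, is locality: for $K\in\mathcal A_{i^c}$ we need $\Psi_{B,r}(K)\in\mathcal A_{i^c}$.

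\begin{itemize}
\item Write $W:=\operatorname{supp}\widehat Q_{B,r}$. Since $i\notin W$ and $i\notin B$, both $\widehat Q_{B,r}$ and $\widehat Q_{B,r}^\dagger$ lie in $\mathcal A_{i^c}$. Hence the product $\widehat Q_{B,r}^\dagger K\widehat Q_{B,r}$ also lies in $\mathcal A_{i^c}$.
\item It remains to check that $\mathcal E_B$ maps $\mathcal A_{i^c}$ into $\mathcal A_{i^c}$ when $i\notin B$. Expand an operator in $\mathcal A_{i^c}$ as $Z=\sum_k X_k\otimes Y_k$ with $X_k\in\mathcal A_B$ and $Y_k\in\mathcal A_{B^c\setminus\{i\}}$. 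From Eq.~\eqref{eq:product-conditional-expectation} we get $\mathcal E_B(Z)=\sum_k\tr(\tau_BX_k)\,I_B\otimes Y_k$, and each term acts trivially on $i$.
\end{itemize}

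So $\Psi_{B,r}(K)\in\mathcal A_{i^c}$. Combining this with the contractivity step and taking the infimum over $K$ completes the proof.

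I expect no serious obstacle. The only point requiring care is to use that $\mathcal E_B$ touches only $B$ and that the tensor-product expansion respects the split $B\,|\,B^c$. Both facts follow directly from the definition of $\mathcal E_B$ with the product reference $\tau_B\otimes I_{B^c}$.
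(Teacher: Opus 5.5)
Your proposal is correct and is the paper's proof: show $\Psi_{B,r}(K)\in\mathcal A_{i^c}$ for $K\in\mathcal A_{i^c}$, use that a unital completely positive map is an operator-norm contraction (Russo--Dye), and take the infimum over $K$. Your explicit tensor expansion showing that $\mathcal E_B$ preserves $\mathcal A_{i^c}$ when $i\notin B$ just spells out the step the paper states in one line.
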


\begin{proof}
If $K\in\mathcal A_{i^c}$, then $\widehat Q_{B,r}^\dagger K\widehat Q_{B,r}$ is still trivial at $i$. Since $i\in C$, applying $\mathcal{E}_B$ does not introduce support at $i$, and hence $\Psi_{B,r}(K)\in\mathcal A_{i^c}$.  The map $\Psi_{B,r}$ is unital and completely positive, so it is an operator-norm contraction~{\cite[Theorem 3.39 (Russo-Dye)]{watrous2018theory}}.  Therefore,
\begin{align}
\delta_i(\Psi_{B,r}(A))
&\leq
2\|\Psi_{B,r}(A-K)\|
\leq
2\|A-K\|.
\end{align}
Taking the infimum over $K\in\mathcal A_{i^c}$ proves
Eq.~\eqref{eq:shell-nonexpansion}.
\end{proof}

\begin{figure}[t!]
    \centering
    \includegraphics[width=0.82\linewidth]{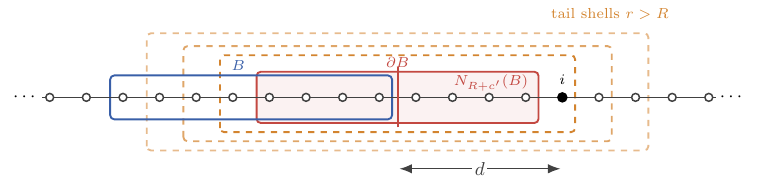}
    \caption{Shell decomposition used in the proof of Theorem~\ref{thm:static-to-influence}. For a site $i\notin B$ at distance $d=\operatorname{dist}(i,B)$, the radius $R$ is chosen so that $\operatorname{supp}\widehat Q_{B,R}$ does not reach $i$. The successive shell enlargements with $r>R$ form the tail in the telescoping expansion of $\Phi_B^\dagger(A)$ in Eq.~\eqref{eq:channel-shell-telescoping}. Only the boundary of $B$ closest to $i$ is shown.}
    \label{fig:shell-tail}
\end{figure}

We now have both ingredients for Theorem~\ref{thm:static-to-influence}: an approximate channel that is nonexpansive until its support reaches $i$, and an exponentially summable bound on every subsequent shell.  The theorem follows by choosing the largest shell that still misses $i$.

\begin{proof}[Proof of Theorem~\ref{thm:static-to-influence}]
The assertion for $i\in B$ is Proposition~\ref{prop:exact-block-killing}.  Fix $i\notin B$ and set $d:=\mathrm{dist}(i,B)$.  By Lemma~\ref{lem:unitalized-approximants} and the inclusion $\partial B\subseteq N_1(B)$, there is a fixed support offset $c'$ such that $\operatorname{supp}\widehat Q_{B,r} \subseteq N_{r+c'}(B)$ for all $r$.  Choose $R:=\max\{0,d-c'-1\}$.  If $R>0$, the support of $\widehat Q_{B,R}$ does not reach $i$.  If $R=0$, we have $\widehat Q_{B,0}=I$, so the same conclusion holds. This choice of $R$ and the subsequent shell enlargements are illustrated in Fig.~\ref{fig:shell-tail}.

By Eq.~\eqref{eq:shell-channel-convergence}, $\Psi_{B,r}(A)\to\Phi_B^\dagger(A)$ in operator norm, so we may telescope from radius $R$:
\begin{align}
\Phi_B^\dagger(A) = \Psi_{B,R}(A) + \sum_{r>R}(\Psi_{B,r}-\Psi_{B,r-1})(A).
\label{eq:channel-shell-telescoping}
\end{align}
Using $\delta_i(Z)\leq2\|Z\|$, Lemma~\ref{lem:shell-nonexpansion}, and Corollary~\ref{cor:summable-channel-increments}, we obtain
\begin{align}
\delta_i(\Phi_B^\dagger(A)) \leq \delta_i(A) + 2C_\Delta \sum_{r>R}(r+1)e^{-\mu_\Delta r}\|A\|_L.
\label{eq:influence-tail}
\end{align}
Define
\begin{align}
\eta(d) := 2C_\Delta \sum_{r>\max\{0,d-c'-1\}} (r+1)e^{-\mu_\Delta r}.
\label{eq:constructed-influence-decay function}
\end{align}
The tail of a polynomially weighted exponential is exponentially decaying. After increasing the prefactor and decreasing the exponent if necessary, therefore, $\eta(d)\leq C_\eta e^{-d/\xi}$ for all $d\geq1$. Equation~\eqref{eq:influence-tail} is then precisely Eq.~\eqref{eq:exponential-petz-influence}.
\end{proof}

Theorem~\ref{thm:static-to-influence} supplies an exponentially decaying, and hence summable, exterior influence decay function. Combined with Theorem~\ref{thm:influence-to-curvature}, it completes the dynamical part of the argument. What remains is therefore to prove Assumptions~\ref{asm:QL} and \ref{asm:GF}, which we do next.

\section{Verification of the two assumptions for 1D Gibbs states}
\label{sec:gibbs-inputs}

Section~\ref{sec:petz-influence} reduced the remaining task to verifying Assumptions~\ref{asm:QL} and \ref{asm:GF}. We now prove both statements for the 1D Hamiltonians defined in Eq.~\eqref{eq:hamiltonian}.

\paragraph{Setup.} Fix a Hamiltonian $H$ as in Eq.~\eqref{eq:hamiltonian}. Throughout this section, $\beta<\infty$ and the local dimension $d$ are fixed. All constants below may depend on $\beta$ and $d$, but are independent of $n$.

Let $B=[a,b]\subsetneq\Lambda_n$ be a block. Define the Hamiltonians obtained by retaining only the interactions contained entirely within $B$ and $B^c$,
\begin{align}
H_B := \sum_{\{i,i+1\}\subset B} h^{(i)}_{i,i+1}, \qquad H_{B^c} := \sum_{\{i,i+1\}\subset B^c} h^{(i)}_{i,i+1}.
\end{align}
Since the interaction is nearest-neighbor, the only terms not contained in $H_B+H_{B^c}$ are the two interactions crossing the boundary of $B$. Hence
\begin{align}
H = H_B+H_{B^c}+V_{\partial B}, \qquad V_{\partial B} := h^{(a-1)}_{a-1,a} + h^{(b)}_{b,b+1},
\end{align}
and $\|V_{\partial B}\|\le 2$. We also denote 
\begin{align}
\tau_B := \frac{e^{-\beta H_B}}{\tr_B[e^{-\beta H_B}]}, \qquad \sigma_{\beta,B^c}:=\tr_B[\sigma_\beta]
\end{align}
for the Gibbs state defined for $H_B$ and the marginal of the full Gibbs state on $B^c$, respectively.


\subsection{Proof of Assumption~\ref{asm:QL}}
\label{sec:verify-QL}

We now prove Assumption~\ref{asm:QL}. Before giving the details, we briefly outline the argument.

\paragraph{Proof outline.}
We prove boundary locality of $Q_B$ by complex interpolation. Consider the analytic family
\begin{align}
F_B(z) := \sigma_\beta^z \bigl(\tau_B^{-z}\otimes\sigma_{\beta,B^c}^{-z}\bigr), \qquad 0\leq\operatorname{Re}z\leq1.
\label{eq:QL-analytic-family}
\end{align}
Since $F_B(1/2)=Q_B$, our goal is to establish spatial locality at the midpoint of the strip. For $S_r:=N_{r+c_0}(\partial B)$, recall from Eq.~\eqref{eq:site-depolarization} that $\mathcal D_{S_r^c}(X)=\operatorname{tr}_{S_r^c}(X)\otimes I_{S_r^c}/d^{|S_r^c|}$. So $\mathcal D_{S_r^c}(Q_B)$ is supported on $S_r$, and it suffices to show that the nonlocal tail $(I-\mathcal D_{S_r^c})Q_B$ decays exponentially in $r$.

The two boundaries of the interpolation strip are the imaginary axis $\operatorname{Re}z=0$ and the vertical line $\operatorname{Re}z=1$. On the imaginary axis, $F_B(it)$ is a unitary evolution satisfying $\frac{d}{dt}F_B(it)=i(e^{-it\beta H}\Delta_Be^{it\beta H})F_B(it)$, where $\Delta_B:=\log\sigma_\beta-\log(\tau_B\otimes\sigma_{\beta,B^c})$ is the difference between the effective Hamiltonians of the two states. Using results from Ref.~\cite{scalet2025classical}, we show that $\Delta_B$ is uniformly bounded and exponentially localized near $\partial B$. Since $H$ is a 1D nearest-neighbor Hamiltonian, a Lieb-Robinson bound controls the propagation of this boundary correction under the evolution generated by $\beta H$. This yields $\|(I-\mathcal D_{S_r^c})F_B(it)\|\lesssim e^{-\nu r+v|t|}$, establishing boundary locality on $\operatorname{Re}z=0$ (Lemma~\ref{lem:left-boundary}).

On the other boundary, $\operatorname{Re}z=1$, results on Araki expansionals in Ref.~\cite{bluhm2022exponential} give a uniform bound on $\|F_B(1+it)\|$ (Lemma~\ref{lem:right-boundary}). The left-boundary estimate still grows exponentially in $|t|$, so we apply Hadamard's three-lines theorem (see Appendix~\ref{app:QL-cited-results}) to the Gaussian-damped nonlocal part
\begin{align}
G_{B,r}(z):=e^{z^2}(I-\mathcal D_{S_r^c})F_B(z).
\end{align}
The Gaussian factor makes the estimates on both boundaries uniform in $t$. Interpolating at $z=1/2$ therefore gives exponential decay of $\|(I-\mathcal D_{S_r^c})Q_B\|$, while applying the same interpolation directly to $F_B$ gives a uniform bound on $\|Q_B\|$. By contractivity of $\mathcal D_{S_r^c}$, the approximants $\mathcal D_{S_r^c}(Q_B)$ are uniformly bounded as well. Hence, taking $\widetilde Q_{B,r}:=\mathcal D_{S_r^c}(Q_B)$ proves Assumption~\ref{asm:QL}.

We first establish the boundary locality of $F_B$ on the left boundary $\operatorname{Re}z=0$.

\begin{lemma}[Boundary locality at $\operatorname{Re}z=0$]
\label{lem:left-boundary}
There exist constants $c_0,C,\nu,v>0$ such that, for every integer $r\geq0$ and every $t\in\mathbb R$,
\begin{align}
\|F_B(it)-\mathcal D_{S_r^c}\left(F_B(it)\right)\| \leq Ce^{-\nu r+v|t|}.
\label{eq:left-boundary-locality}
\end{align}
\end{lemma}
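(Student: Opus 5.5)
The plan follows the outline already sketched. Write $F_B(it)=\sigma_\beta^{it}(\tau_B\otimes\sigma_{\beta,B^c})^{-it}$, which is valid because $\tau_B$ and $\sigma_{\beta,B^c}$ act on complementary factors. It is therefore a product of two unitaries, $F_B(it)=e^{it\log\sigma_\beta}e^{-it\log(\tau_B\otimes\sigma_{\beta,B^c})}$. Differentiating in $t$ gives $\frac{d}{dt}F_B(it)=i\,\sigma_\beta^{it}\Delta_B\sigma_\beta^{-it}F_B(it)$ with $\Delta_B=\log\sigma_\beta-\log(\tau_B\otimes\sigma_{\beta,B^c})$. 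Since $\log\sigma_\beta=-\beta H-\log Z$, the conjugation is $e^{-it\beta H}\Delta_Be^{it\beta H}$, a Heisenberg evolution under $\beta H$ for time $t$. Consequently $F_B(it)$ is the time-ordered exponential (Dyson series) of the generator $K(s):=e^{-is\beta H}\Delta_Be^{is\beta H}$, with $F_B(0)=I$.

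\textbf{Step 1: locality of the effective Hamiltonian difference.} Write $\log(\tau_B\otimes\sigma_{\beta,B^c})=-\beta H_B-\log Z_B+\log\sigma_{\beta,B^c}$ and $\log\sigma_{\beta,B^c}=-\beta H_{B^c}+(\text{effective boundary term})$. The results of \cite{scalet2025classical} on 1D marginals say that $\log\sigma_{\beta,B^c}+\beta H_{B^c}$ is, up to a constant, a uniformly bounded operator that is exponentially localized near $\partial B$ (the two ends of $B^c$). Combining this with $V_{\partial B}$, which has norm at most $2$ and is supported on $\partial B$, I would obtain a decomposition $\Delta_B=c\,I+\sum_{k\ge0}\Delta_B^{(k)}$. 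Here each $\Delta_B^{(k)}$ is supported on $N_k(\partial B)$ with $\|\Delta_B^{(k)}\|\le C e^{-\nu_0 k}$. The constant $c\,I$ only contributes a global phase, and this phase is annihilated by $I-\mathcal D_{S_r^c}$, since constants are fixed by $\mathcal D_{S_r^c}$.

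\textbf{Step 2: Lieb--Robinson propagation.} Apply the 1D nearest-neighbor Lieb--Robinson bound to each $\Delta_B^{(k)}$ evolved for time $|s|\beta$. For each $k$ this yields an approximation supported on $N_{k+m}(\partial B)$ with error $\lesssim\|\Delta_B^{(k)}\|e^{-(m-v_{\rm LR}\beta|s|)}$. Summing over $k$ gives an approximant $K_r(s)$ supported on $S_r=N_{r+c_0}(\partial B)$ satisfying $\|K(s)-K_r(s)\|\le Ce^{-\nu r+v|s|}$. Here $\nu<\min(\nu_0,1)$ and $v\propto\beta v_{\rm LR}$. Using Hermitian local truncations, for example the partial trace $\mathcal D_{S_r^c}$ applied to the evolved terms, keeps $K_r$ Hermitian.

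\textbf{Step 3: from generators to the unitary.} Let $U_r(t)$ be the time-ordered exponential of $K_r$. It is unitary and supported on $S_r$. By the standard Duhamel estimate for unitary evolutions, $\|F_B(it)-U_r(t)\|\le\int_0^{|t|}\|K(s)-K_r(s)\|ds\le (C/v)e^{-\nu r+v|t|}$. Because $U_r(t)$ is fixed by $\mathcal D_{S_r^c}$ and $\mathcal D_{S_r^c}$ is contractive, we get $\|F_B(it)-\mathcal D_{S_r^c}F_B(it)\|\le 2\|F_B(it)-U_r(t)\|$, which is the claimed bound.

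The main obstacle is Step 1: extracting from \cite{scalet2025classical} a statement in exactly the form needed, namely exponential boundary localization of $\log\sigma_{\beta,B^c}+\beta H_{B^c}$ with bounds uniform in $n$ and in $|B|$. This must hold on a periodic chain where $B^c$ is an interval whose two ends both touch $B$, and care is needed to ensure the constant offsets $c_0$ absorb both boundary components.
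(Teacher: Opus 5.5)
Your proposal is correct and follows essentially the same route as the paper. The paper derives the same equation $\frac{d}{dt}F_B(it)=i\,(e^{-it\beta H}\Delta_Be^{it\beta H})F_B(it)$ and approximates the generator by $\mathcal D_{S_r^c}$ of itself, using boundary localization of $\Delta_B$ plus Lieb--Robinson; it does this with a single split at $q=\lfloor r/2\rfloor$ rather than your shell sum, and then runs the same Duhamel and factor-of-two argument. On your flagged obstacle: the paper takes only the localization from \cite{scalet2025classical}, handling the periodic chain by folding the ring into an open chain, and obtains the uniform bound $\|\Delta_B\|\le\log M$ separately from the Araki-expansional estimates of \cite{bluhm2022exponential}, via $M^{-1}\tau_B\otimes\sigma_{\beta,B^c}\le\sigma_\beta\le M\,\tau_B\otimes\sigma_{\beta,B^c}$.
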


\begin{proof}[Proof sketch]
The effective-Hamiltonian locality result of Ref.~\cite[Proposition~1 and Appendix~A]{scalet2025classical}, together with the relative-operator bounds of Ref.~\cite[Corollaries~3.4 and~4.4]{bluhm2022exponential}, gives $\|\Delta_B-\mathcal D_{S_r^c}\left(\Delta_B\right)\|\leq Ae^{-ar}$ and a uniform bound on $\|\Delta_B\|$. Applying a Lieb-Robinson bound~\cite{nachtergaele2010lieb} to the generator $e^{-it\beta H}\Delta_Be^{it\beta H}$ controls its spatial tail. Replacing this generator by its conditional expectation under $\mathcal D_{S_r^c}$ produces a unitary supported near $\partial B$, which then bounds the difference from $F_B(it)$ and yields Eq.~\eqref{eq:left-boundary-locality}. The complete argument is given in Appendix~\ref{app:left-boundary}.
\end{proof}

We next show the uniform bound on the right boundary $\operatorname{Re}z=1$.

\begin{lemma}[Uniform bound at $\operatorname{Re}z=1$]
\label{lem:right-boundary}
There exists a constant $M\geq1$ such that, for every $t\in\mathbb R$,
\begin{align}
\|F_B(1+it)\|\leq M.
\label{eq:right-boundary-bound}
\end{align}
\end{lemma}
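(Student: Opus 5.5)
The plan is to write $F_B(1+it)$ as a product of an imaginary-time (Araki) expansional and two unitaries, and then bound that expansional uniformly using the results of Ref.~\cite{bluhm2022exponential}. Concretely,
\begin{align}
F_B(1+it)=\sigma_\beta^{it}\,\sigma_\beta\bigl(\tau_B^{-1}\otimes\sigma_{\beta,B^c}^{-1}\bigr)\bigl(\tau_B^{-it}\otimes\sigma_{\beta,B^c}^{-it}\bigr).
\end{align}
Here $\sigma_\beta^{it}$ and $\tau_B^{-it}\otimes\sigma_{\beta,B^c}^{-it}$ are unitaries, since all three states are full rank. Hence $\|F_B(1+it)\|=\|\sigma_\beta(\tau_B^{-1}\otimes\sigma_{\beta,B^c}^{-1})\|$ for every $t$. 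This removes the $t$-dependence entirely, so it suffices to bound the single operator $K_B:=\sigma_\beta(\tau_B\otimes\sigma_{\beta,B^c})^{-1}$ uniformly in $n$ and $B$.

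Next I would split $K_B$ through the intermediate product Gibbs state $\tau_B\otimes\tau_{B^c}$, where $\tau_{B^c}:=e^{-\beta H_{B^c}}/\tr[e^{-\beta H_{B^c}}]$:
\begin{align}
K_B=\underbrace{e^{-\beta H}e^{\beta(H_B+H_{B^c})}}_{E_1}\cdot\frac{Z_BZ_{B^c}}{Z}\cdot\bigl(I_B\otimes \tau_{B^c}\sigma_{\beta,B^c}^{-1}\bigr).
\end{align}
I would then bound the three factors separately.
\begin{itemize}
\item The first factor $E_1$ is the Araki expansional for removing the boundary term $V_{\partial B}$, with $\|V_{\partial B}\|\le 2$. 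In 1D at any finite $\beta$, Ref.~\cite{bluhm2022exponential} gives $\|E_1\|\le C(\beta)$ uniformly in the system size. This is the key input.
\item For the scalar ratio, the Golden--Thompson/Peierls--Bogoliubov inequality gives $e^{-2\beta}\le Z/(Z_BZ_{B^c})\le e^{2\beta}$, so the ratio is at most $e^{2\beta}$.
\item For the third factor, I would use $\sigma_{\beta,B^c}=\tr_B[\sigma_\beta]$ and write $\tau_{B^c}\sigma_{\beta,B^c}^{-1}=\bigl(\sigma_{\beta,B^c}\tau_{B^c}^{-1}\bigr)^{-1}$. Here $\sigma_{\beta,B^c}\tau_{B^c}^{-1}=\tfrac{Z_{B^c}Z_B}{Z}\tr_B[(\tau_B\otimes I)E_1^{\dagger}]$ up to ordering. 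I would then invoke the corresponding marginal statement of Ref.~\cite[Corollaries 3.4 and 4.4]{bluhm2022exponential}. That corollary bounds both the reduced expansional and its inverse by constants, because the marginal of a 1D Gibbs state is a bounded multiplicative perturbation of the local Gibbs state on $B^c$. These are the same corollaries already invoked in Lemma~\ref{lem:left-boundary}.
\end{itemize}

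Multiplying the three bounds gives $\|F_B(1+it)\|\le M$, with $M$ depending only on $\beta$ and $d$; replacing $M$ by $\max\{M,1\}$ gives $M\ge1$. One caveat is that the ordering of factors must match the form stated in the reference. If the cited bound is for $e^{-\beta H}e^{\beta(H_B+H_{B^c})}$ rather than the reverse product, I would take adjoints, which does not change the norm.

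I expect the main obstacle to be the third factor, the uniform bound on the inverse marginal ratio $\tau_{B^c}\sigma_{\beta,B^c}^{-1}$. Positivity alone does not control inverses. I would try first to obtain it from the two-sided bounds of Ref.~\cite{bluhm2022exponential} on the ``local perturbation'' operator for marginals. Failing that, I would derive it from a uniform lower bound $\sigma_{\beta,B^c}\ge c\,\tau_{B^c}$ in the form of a sandwiched operator inequality, taken from the same paper.
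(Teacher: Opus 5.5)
Your proposal is correct and is essentially the paper's proof. After stripping the unitaries, the paper writes $\sigma_\beta(\tau_B^{-1}\otimes\sigma_{\beta,B^c}^{-1})=\widehat E_B(I_B\otimes T_{B^c}^{-1})$, where $\widehat E_B=e^{-\beta H}e^{\beta(H_B+H_{B^c})}$ and $T_{B^c}=\tr_B[(\tau_B\otimes I_{B^c})\widehat E_B]=\frac{Z}{Z_BZ_{B^c}}\sigma_{\beta,B^c}\tau_{B^c}^{-1}$. It then bounds $\|\widehat E_B\|$ and $\|T_{B^c}^{-1}\|$ via Corollaries~3.4 and~4.4 of \cite{bluhm2022exponential}, applied after folding the ring into an open chain. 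The identity uses $\widehat E_B$ itself, with no adjoint, by partial-trace cyclicity. The only difference is that the paper absorbs the partition-function ratio into $T_{B^c}$, where it cancels exactly, so your separate Peierls--Bogoliubov bound is unnecessary and only inflates the constant by at most $e^{4\beta}$.
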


\begin{proof}[Proof sketch]
By unitary invariance of the operator norm, $\|F_B(1+it)\| = \bigl\|\sigma_\beta (\tau_B^{-1}\otimes\sigma_{\beta,B^c}^{-1})\bigr\|$, which can be expressed as the product of an Araki's expansional~\cite{araki1969gibbs} and the inverse of its Gibbs-weighted partial trace. Both factors are uniformly bounded by Ref.~\cite[Corollaries~3.4 and~4.4, especially Eq.~(14)]{bluhm2022exponential}. The details are given in Appendix~\ref{app:right-boundary}.
\end{proof}

We can now combine the two boundary estimates to verify
Assumption~\ref{asm:QL}.

\begin{theorem}[Proof of Assumption~\ref{asm:QL}]
\label{prop:verify-QL}
Assumption~\ref{asm:QL} holds for the Hamiltonians in Eq.~\eqref{eq:hamiltonian}, with constants depending only on $\beta$ and $d$. In particular, one may take $\widetilde Q_{B,r}=\mathcal D_{S_r^c}(Q_B)$, obtained by depolarizing outside $S_r=N_{r+c_0}(\partial B)$.
\end{theorem}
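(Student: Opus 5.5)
The proof combines the two boundary estimates, Lemma~\ref{lem:left-boundary} and Lemma~\ref{lem:right-boundary}, through Hadamard's three-lines theorem applied to a Gaussian-damped analytic family on the strip $0\le\operatorname{Re}z\le1$. First I will check that $F_B(z)=\sigma_\beta^z(\tau_B^{-z}\otimes\sigma_{\beta,B^c}^{-z})$ is entire and bounded on the strip. All three operators are strictly positive and finite-dimensional, so $F_B(z)$ is a finite sum of exponentials $e^{z(\log\lambda-\log\mu)}$ times fixed matrices. The bound on the strip holds for each fixed $n$; the constant may depend on $n$, which is harmless since three-lines only needs boundedness.

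\textbf{Uniform bound on $\|Q_B\|$.} Consider $g(z)=e^{z^2}F_B(z)$. On $\operatorname{Re}z=0$, $F_B(it)$ is a product of unitaries, so $\|g(it)\|\le e^{-t^2}\le1$. On $\operatorname{Re}z=1$, Lemma~\ref{lem:right-boundary} gives $\|g(1+it)\|\le e^{1-t^2}M\le eM$. Three-lines, applied to the scalar functions $\langle\phi,g(z)\psi\rangle$ with unit vectors, yields $\|g(1/2)\|\le (eM)^{1/2}$. Since $\|g(1/2)\|=e^{1/4}\|Q_B\|$, this gives $\|Q_B\|\le\sqrt{eM}$, uniformly in $n$ and $B$. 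Because $\mathcal D_{S_r^c}$ is unital completely positive, hence contractive, we also get $\|\widetilde Q_{B,r}\|\le\|Q_B\|$, so $\|Q_B\|+\|\widetilde Q_{B,r}\|\le M_Q:=2\sqrt{eM}$. The support condition $\widetilde Q_{B,r}\in\mathcal A_{N_{r+c_0}(\partial B)}$ holds by construction.

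\textbf{Exponential decay of the tail.} Set $G_{B,r}(z)=e^{z^2}(\mathcal I-\mathcal D_{S_r^c})F_B(z)$. This is analytic, since $\mathcal D_{S_r^c}$ is linear and $z$-independent.
\begin{itemize}
\item On $\operatorname{Re}z=0$, Lemma~\ref{lem:left-boundary} gives $\|G_{B,r}(it)\|\le Ce^{-t^2+v|t|-\nu r}\le C'e^{-\nu r}$, where $C'=Ce^{v^2/4}$.
\item On $\operatorname{Re}z=1$, the map $\mathcal I-\mathcal D_{S_r^c}$ has norm at most $2$, so $\|G_{B,r}(1+it)\|\le 2eM$.
\end{itemize}
Three-lines at $\operatorname{Re}z=1/2$ gives $\|G_{B,r}(1/2)\|\le (C'e^{-\nu r})^{1/2}(2eM)^{1/2}$. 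Multiplying by $e^{-1/4}$ yields
\begin{align}
\|Q_B-\mathcal D_{S_r^c}(Q_B)\|\le C_Q e^{-\mu_Q r},\qquad \mu_Q=\nu/2 .
\end{align}

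\textbf{Main obstacle.} Once the two lemmas are in hand, this step is routine. The real work lies in Lemma~\ref{lem:left-boundary}: proving $n$-independent locality of $\Delta_B$ and converting the Lieb--Robinson estimate into locality of the time-ordered unitary $F_B(it)$ with only $e^{v|t|}$ growth. The Gaussian damping is exactly what absorbs that growth, so the interpolation goes through with the constants $c_0,C_Q,\mu_Q,M_Q$ depending only on $\beta$ and $d$.
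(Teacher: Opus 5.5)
Your proposal is correct and follows the paper's proof essentially line by line. It uses the same Gaussian-damped family $G_{B,r}(z)=e^{z^2}(\mathcal I-\mathcal D_{S_r^c})F_B(z)$, the same boundary bounds $Ce^{v^2/4}e^{-\nu r}$ and $2eM$, and three-lines on matrix elements, giving $\mu_Q=\nu/2$. The only cosmetic difference is the norm bound. You damp $F_B$ there too, which yields $\|Q_B\|\le e^{1/4}M^{1/2}\le\sqrt{eM}$. The paper instead applies three-lines to $F_B$ directly, since it is already bounded on the strip with boundary suprema $1$ and $M$, and gets the slightly sharper $\|Q_B\|\le M^{1/2}$.
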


\begin{proof}[Proof sketch]
For $\beta=0$, we have $Q_B=I$. Otherwise, consider
\begin{align}
G_{B,r}(z):=e^{z^2}(I-\mathcal D_{S_r^c})F_B(z).
\end{align}
Lemmas~\ref{lem:left-boundary} and~\ref{lem:right-boundary} imply
\begin{align}
\sup_{t\in\mathbb R}\|G_{B,r}(it)\| \leq Ce^{v^2/4}e^{-\nu r},
\qquad \sup_{t\in\mathbb R}\|G_{B,r}(1+it)\| \leq 2eM.
\end{align}
Hadamard's three-lines theorem (see Appendix~\ref{app:QL-cited-results}), applied to matrix elements,
therefore gives
\begin{align}
\|Q_B-\mathcal D_{S_r^c}\left(Q_B\right)\|\leq C_Qe^{-\nu r/2}.
\label{eq:QL-main-interpolated-locality}
\end{align}
Applying the same theorem directly to $F_B$, using $\|F_B(it)\|=1$ and Lemma~\ref{lem:right-boundary}, gives $\|Q_B\|\leq M^{1/2}$. Thus $\widetilde Q_{B,r}:=\mathcal D_{S_r^c}\left(Q_B\right)$ has the required support and satisfies $\|Q_B\|+\|\widetilde Q_{B,r}\|\leq2M^{1/2}$ by contractivity of $\mathcal D_{S_r^c}$. This proves Assumption~\ref{asm:QL} with $\mu_Q=\nu/2$ and $M_Q=2M^{1/2}$. The full interpolation argument is given in Appendix~\ref{app:QL-interpolation}.
\end{proof}

\subsection{Proof of Assumption~\ref{asm:GF}}
\label{sec:verify-GF}

The proof of Assumption~\ref{asm:GF} is fairly straightforward, as it follows from the two known results for 1D Gibbs states stated below. The first states an exponential decay of correlations on a general finite-range quantum spin chain, which generalizes the popular work by Araki~\cite{araki1969gibbs}.

\begin{proposition}[Exponential clustering of correlations~{\cite[Corollary~I.2]{bergamaschi2026fast}}]
\label{prop:BC-clustering}
Consider the Hamiltonians in Eq.~\eqref{eq:hamiltonian}, and let $\tau_B$ be the Gibbs state on a block $B$. There exist constants $C_{\mathrm{cl}},\mu_{\mathrm{cl}}>0$, depending only on $\beta$ and $d$, such that for every block $B$ and any two intervals $X,Y\subseteq B$,
\begin{align}
\left| \tr[\tau_B A^\dagger D] - \tr[\tau_B A^\dagger]\tr[\tau_B D] \right| \le C_{\mathrm{cl}}\|A\|\|D\| e^{-\mu_{\mathrm{cl}}\mathrm{dist}(X,Y)},
\label{eq:BC-clustering}
\end{align}
for all $A\in\mathcal{A}_X$ and $D\in\mathcal{A}_Y$. 
\end{proposition}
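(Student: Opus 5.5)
The plan is to obtain Proposition~\ref{prop:BC-clustering} as a direct specialization of the all-temperature clustering theorem for open finite-range quantum spin chains in Ref.~\cite{bergamaschi2026fast}. It is not re-derived from scratch. First we check that $\tau_B$ is admissible for that theorem. Write $B=[a,b]$. Then $H_B=\sum_{i=a}^{b-1}h^{(i)}_{i,i+1}$ is a nearest-neighbor Hamiltonian on an \emph{open} chain of length $|B|$, with local dimension $d$ and $\|h^{(i)}_{i,i+1}\|\le1$. It is neither translation invariant nor commuting, which is exactly the generality covered by the cited corollary. The essential requirement is that the constants $C_{\mathrm{cl}},\mu_{\mathrm{cl}}$ of the cited result depend only on $\beta$, $d$ and the interaction bound. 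They must not depend on the chain length or on the particular terms $h^{(i)}$. Granting this, they are uniform over all blocks $B$ and all $n$.

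Second, we match the form of the correlation. The cited statement is usually phrased as $|\langle AD\rangle-\langle A\rangle\langle D\rangle|$ for $A\in\mathcal A_X$, $D\in\mathcal A_Y$. Replacing $A$ by $A^\dagger$ costs nothing, since $\|A^\dagger\|=\|A\|$ and $A^\dagger\in\mathcal A_X$. If $X$ and $Y$ overlap or are adjacent, so that $\mathrm{dist}(X,Y)\le1$, the bound holds trivially with $C_{\mathrm{cl}}\ge 2e^{\mu_{\mathrm{cl}}}$, because each term is at most $\|A\|\|D\|$. So only well-separated intervals matter. When $X$ and $Y$ are at distance at least $2$, the gap between them in $B$ is a nonempty interval, and the cited bound applies verbatim.

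If a self-contained argument is wanted, we would follow the Araki-type route. Pick a cut in the middle of the gap between $X$ and $Y$, and split $H_B=H_{\mathrm{left}}+H_{\mathrm{right}}+h_{\mathrm{cut}}$. Write $e^{-\beta H_B}=E\,(e^{-\beta H_{\mathrm{left}}}\otimes e^{-\beta H_{\mathrm{right}}})\,E^\dagger$ using Araki's expansional $E$. By Refs.~\cite{araki1969gibbs,bluhm2022exponential}, $E$ is uniformly bounded and invertible and quasi-local around the cut. Truncating $E$ to a window of radius $\mathrm{dist}(X,Y)/3$ introduces an error exponentially small in the distance. The remaining correlation is then carried by a positive transfer operator across the window. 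Its contraction in the Hilbert projective metric, by a Birkhoff-type argument, gives the exponential decay.

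The main obstacle is the uniformity of the decay rate at arbitrary finite $\beta$ without translation invariance. Each transfer step must contract by a factor bounded away from $1$ uniformly in the local terms. This is precisely the content of the cited corollary, so in the paper we would rely on Ref.~\cite{bergamaschi2026fast} for this step rather than reprove it.
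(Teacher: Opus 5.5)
Your proposal matches the paper, which also takes this statement directly from Corollary~I.2 of Ref.~\cite{bergamaschi2026fast} without giving an independent proof. Your checks are exactly the bookkeeping that specialization needs: $\tau_B$ is the Gibbs state of a bounded nearest-neighbor open-chain Hamiltonian (neither commuting nor translation invariant), the constants must be uniform in $|B|$ and in the individual couplings, and the case $\mathrm{dist}(X,Y)\le 1$ is trivial. The Araki-type sketch is not needed, and, as you acknowledge, it would not stand on its own without the uniform contraction estimate that the citation supplies.
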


The second result converts exponential uniform clustering into a factorization of Gibbs states. 

\begin{proposition}[Gibbs state factorization~%
{\cite[Proposition~8.1]{bluhm2022exponential}}]
\label{prop:BCPH-factorization}
Fix $\beta\ge0$, and let $\Phi$ be a finite-range interaction on $\mathbb{Z}$ satisfying exponential uniform clustering at inverse temperature $\beta$. Then there exist constants $\widehat C,\widehat\mu>0$, depending only on $\beta$, the range and strength of the interaction, the local dimension, and the uniform-clustering constants, such that the following holds.

For every finite interval $I=XGY\subset\mathbb{Z}$ decomposed into three consecutive intervals with $|G|\ge3s\ge0$, the open-chain Gibbs state $\tau_I$ at inverse temperature $\beta$ satisfies
\begin{align}
\left\|\bigl((\tau_I)_X^{-1}\otimes(\tau_I)_Y^{-1}\bigr)
(\tau_I)_{XY}-I_{XY} \right\| \le \widehat C e^{-\widehat\mu s}.
\label{eq:BCPH-factorization}
\end{align}
Here $(\tau_I)_S$ denotes the marginal of $\tau_I$ on $S$.
\end{proposition}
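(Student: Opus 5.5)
The plan is to reproduce the expansional-plus-clustering strategy of \cite{bluhm2022exponential}. First, split $G$ into three consecutive pieces $G=G_1G_2G_3$, each of length at least $s$. Then write the open-chain Hamiltonian as $H_I=H_{XG_1G_2^{\rm left}}+H_{G_2^{\rm right}G_3Y}+h_{\rm cut}$, where $h_{\rm cut}$ is the single nearest-neighbour term at the midpoint of $G_2$.

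The first step removes the middle bond by Araki's expansional,
\begin{align}
E:=e^{-\beta H_I}\,e^{\beta(H_{\rm left}+H_{\rm right})}.
\end{align}
In 1D, for finite-range interactions, $E$ and $E^{-1}$ are uniformly bounded. They are also quasi-local around the cut: they can be approximated, with error $e^{-\Omega(s)}$ in operator norm, by operators $\widetilde E$ supported in $G_2$. This is the same input already used in Lemma~\ref{lem:right-boundary}.

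Substituting gives
\begin{align}
(\tau_I)_{XY}\propto \tr_G\bigl[\widetilde E\,(\rho_{\rm left}\otimes\rho_{\rm right})\bigr],
\end{align}
where $\rho_{\rm left}$ and $\rho_{\rm right}$ are unnormalized Gibbs weights on $XG_1G_2^{\rm left}$ and $G_2^{\rm right}G_3Y$. The replacement $E\to\widetilde E$ costs only a relative error $e^{-\Omega(s)}$, because $E$ is bounded and has bounded inverse.

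The second step, which I expect to be the main obstacle, is to show that the partial trace of $\widetilde E$ against $\rho_{\rm left}\otimes\rho_{\rm right}$ over $G$ equals $\langle\widetilde E\rangle\cdot I_{XY}$ times $\rho_{X}\otimes\rho_{Y}$, up to a multiplicative operator error $I+O(e^{-\Omega(s)})$. Concretely, the expectation of the operator $\widetilde E$, localized in $G_2$, should be asymptotically independent of whatever is "conditioned" on $X$ and $Y$, which lie at distance at least $s$ from it through $G_1$ and $G_3$.

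Exponential uniform clustering (Proposition~\ref{prop:BC-clustering}) only gives covariance, i.e.\ trace-norm, control. To upgrade it I will proceed in three moves:
\begin{itemize}
\item Use the relative-operator bounds of \cite{bluhm2022exponential}, namely the local indistinguishability of marginals under expansionals, to write $(\rho_{\rm left})_{XG_2^{\rm left}}=(\rho_{\rm left})_X^{1/2}\,\Theta\,(\rho_{\rm left})_X^{1/2}$ with $\Theta$ and $\Theta^{-1}$ bounded.
\item Test against arbitrary operators on $X\otimes Y$ with ancilla, via the covariance bound of Lemma~\ref{lem:complete-covariance}.
\item Convert the resulting completely bounded covariance estimate into the operator inequality $(1-\epsilon)\,\rho_X\otimes\rho_Y\le(\tau_I)_{XY}\le(1+\epsilon)\,\rho_X\otimes\rho_Y$.
\end{itemize}
This is exactly where the uniform lower bounds on marginals are needed: the positivity of $\tau_I$ together with the bounded expansionals keeps $(\tau_I)_X^{-1}$ controlled relative to the weights.

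Finally, normalize. The same argument applied with $Y$ traced out, and then with $X$ traced out, identifies $\rho_X$ and $\rho_Y$ with $(\tau_I)_X$ and $(\tau_I)_Y$ up to relative errors $e^{-\Omega(s)}$. Multiplying the relative bounds gives Eq.~\eqref{eq:BCPH-factorization}, with $\widehat\mu$ a fixed fraction of the smaller of the clustering rate and the expansional locality rate. All constants depend only on $\beta$, the range, the interaction strength and the local dimension, as claimed.
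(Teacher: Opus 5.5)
\paragraph{Verdict.} The paper does not prove this statement; it imports it from \cite{bluhm2022exponential}. Your reconstruction has a genuine gap at the first step, and the later steps do not close it.

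\paragraph{Where it fails.} With a single cut in the middle of $G_2$, the weights $\rho_{\rm left}$ and $\rho_{\rm right}$ still correlate $X$ with $G_1$ and $Y$ with $G_3$. So nothing depending on $X\cup Y$ can be pulled out of $\tr_G$, and the substitution $E\to\widetilde E$ only produces the additive error $\tr_G[(E-\widetilde E)(\rho_{\rm left}\otimes\rho_{\rm right})]$. Boundedness of $E^{\pm1}$ does not turn this into a relative error. For positive $\sigma$ and $\|D\|\le\epsilon$, the operator $\tr_G[D\sigma](\tr_G\sigma)^{-1}$ need not be $O(\epsilon)$: for two qubits with $\sigma=|\psi\rangle\langle\psi|$, $|\psi\rangle=\sqrt p|00\rangle+\sqrt{1-p}|11\rangle$, and $D=\epsilon I_X\otimes|0\rangle\langle1|_G$, its norm is $\epsilon\sqrt{(1-p)/p}$. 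Meanwhile $(\tau_I)_{XY}$ has eigenvalues exponentially small in $|X|+|Y|$.

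Step 2 does not repair this, for three reasons.
\begin{itemize}
\item The representation $(\rho_{\rm left})_{XG_2^{\rm left}}=(\rho_{\rm left})_X^{1/2}\Theta(\rho_{\rm left})_X^{1/2}$ with $\Theta^{\pm1}$ uniformly bounded is false. For $H=\sum_jZ_j$, $\Theta$ is $I_X$ tensored with the normalized Gibbs state of $\beta\sum_{j\in G_2^{\rm left}}Z_j$, whose condition number is $e^{2\beta|G_2^{\rm left}|}$.
\item Lemma~\ref{lem:complete-covariance} needs a trace-norm factorization as input. Product-observable clustering supplies one only up to a factor $d^{O(|U|+|V|)}$, and this paper derives it \emph{from} the present statement in Corollary~\ref{cor:two-interval-factorization}, so using it here is circular.
\item Even the two-sided inequality $(1-\epsilon)\omega\le(\tau_I)_{XY}\le(1+\epsilon)\omega$, with $\omega=(\tau_I)_X\otimes(\tau_I)_Y$, is strictly weaker than \eqref{eq:BCPH-factorization}. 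Indeed $\omega^{-1}(\tau_I)_{XY}-I=\omega^{-1/2}\bigl(\omega^{-1/2}(\tau_I)_{XY}\,\omega^{-1/2}-I\bigr)\omega^{1/2}$ can exceed the sandwiched error by $\|\omega^{-1/2}\|\|\omega^{1/2}\|$, which is exponential in $|X|+|Y|$. So multiplying relative bounds cannot produce the one-sided operator-norm estimate. The two-sided form would suffice for the trace-norm use in Corollary~\ref{cor:two-interval-factorization}, but it is not the proposition.
\end{itemize}

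\paragraph{The missing idea.} Cut at the two bonds $\partial X$ and $\partial Y$, not inside $G$. Let $\widehat E:=e^{-\beta H_I}e^{\beta(H_X+H_G+H_Y)}$ and let $\tau_S$ be the Gibbs state of $H_S$. The factor $e^{-\beta H_X}\otimes e^{-\beta H_Y}$ stands to the right and passes through $\tr_G$, giving the exact identity
\begin{align}
(\tau_I)_{XY}\,(\tau_X\otimes\tau_Y)^{-1}=\frac{Z_XZ_GZ_Y}{Z_I}\,\tr_G\bigl[\widehat E\,\tau_G\bigr].
\end{align}
This is the two-cut analogue of \eqref{eq:QL-partial-trace-identity}. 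Similarly, $(\tau_I)_X\tau_X^{-1}$ and $(\tau_I)_Y\tau_Y^{-1}$ are scalar multiples of partial traces of single-cut expansionals against states, hence bounded with bounded inverses as in Proposition~\ref{prop:QL-cited-expansional}. The operator $(\tau_I)_{XY}\bigl((\tau_I)_X\otimes(\tau_I)_Y\bigr)^{-1}$ is the adjoint of the one in \eqref{eq:BCPH-factorization}, and it equals the first operator times the inverses of the other two. So everything reduces to additive operator-norm estimates, where localization errors are harmless:
\begin{itemize}
\item Write $\widehat E=E_{XG|Y}E_{X|G}$, with $E_{XG|Y}:=e^{-\beta H_I}e^{\beta(H_{XG}+H_Y)}$ and $E_{X|G}:=e^{-\beta H_{XG}}e^{\beta(H_X+H_G)}$.
\item Localize each factor within $\kappa s$ of its cut.
\item Factor $\tr_G[\,\cdot\,\tau_G]$ across the two ends of $G$ with Lemma~\ref{lem:complete-covariance}. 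The needed trace-norm factorization of $\tau_G$ on two end pieces of length $\kappa s$ follows from product clustering at cost $d^{O(\kappa s)}$, which is absorbed for small $\kappa$.
\item Match the resulting single-cut factors with $(\tau_I)_X\tau_X^{-1}$ and $(\tau_I)_Y\tau_Y^{-1}$ by the same argument.
\end{itemize}
This gives $(\tau_I)_{XY}\bigl((\tau_I)_X\otimes(\tau_I)_Y\bigr)^{-1}=\lambda I+O(e^{-\Omega(s)})$, and $\tr[(\tau_I)_{XY}]=1$ forces $\lambda=1+O(e^{-\Omega(s)})$. With this route, your middle cut, the operator $\Theta$, and the operator-inequality step become unnecessary.
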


The preceding result provides exponential uniform clustering for the open-chain Gibbs states considered here. The next result converts this property into an operator-norm factorization estimate.

\begin{corollary}[Two-interval Gibbs factorization]
\label{cor:two-interval-factorization}
There exist constants $C_{\mathrm{fac}},\mu_{\mathrm{fac}}>0$, depending only on $\beta$ and $d$, such that for every interval $I=XGY$ decomposed into three consecutive intervals,
\begin{align}
\|(\tau_I)_{XY}-(\tau_I)_X\otimes(\tau_I)_Y\|_1 \le C_{\mathrm{fac}}e^{-\mu_{\mathrm{fac}}|G|}.
\label{eq:two-interval-factorization}
\end{align}
\end{corollary}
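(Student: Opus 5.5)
The plan is to derive the trace-norm bound from the operator-norm estimate of Proposition~\ref{prop:BCPH-factorization}. Exponential uniform clustering, its hypothesis, is supplied by Proposition~\ref{prop:BC-clustering}. That proposition gives, for every open-chain block and every pair of intervals inside it, a covariance bound with constants depending only on $\beta$ and $d$. This is the uniform clustering property required, possibly after restating it for the finite-range interaction on $\mathbb Z$ obtained by restricting $H$ to $I$. Since $I$ is an open interval, $\tau_I$ is exactly the open-chain Gibbs state to which Proposition~\ref{prop:BCPH-factorization} applies.

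Given $I=XGY$, I will take $s:=\lfloor |G|/3\rfloor$, so that $|G|\ge 3s$. Proposition~\ref{prop:BCPH-factorization} then gives
\begin{align}
\bigl\|\bigl((\tau_I)_X^{-1}\otimes(\tau_I)_Y^{-1}\bigr)(\tau_I)_{XY}-I_{XY}\bigr\|\le \widehat C e^{-\widehat\mu s}.
\end{align}
Write $\omega:=(\tau_I)_X\otimes(\tau_I)_Y$ and $\Delta:=\omega^{-1}(\tau_I)_{XY}-I$. Then $(\tau_I)_{XY}-\omega=\omega\Delta$, and Hölder's inequality gives
\begin{align}
\|(\tau_I)_{XY}-\omega\|_1\le\|\omega\|_1\|\Delta\|=\|\Delta\|\le \widehat C e^{-\widehat\mu s}.
\end{align}
The conversion from operator norm to trace norm therefore costs only the factor $\|\omega\|_1=1$, which is where the multiplicative form of the estimate is useful. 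Using $s\ge |G|/3-1$, the bound becomes $\widehat C e^{\widehat\mu}e^{-\widehat\mu|G|/3}$. I will set $C_{\mathrm{fac}}:=\max\{\widehat Ce^{\widehat\mu},2\}$ and $\mu_{\mathrm{fac}}:=\widehat\mu/3$.

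The case of small $|G|$, including $s=0$ and $|G|=0$, needs no separate treatment: the left-hand side is always at most $2$, and the choice $C_{\mathrm{fac}}\ge 2$ covers it. All constants depend only on $\beta$, $d$, the range (which is nearest-neighbour) and the interaction strength (which is at most $1$), so they are independent of $n$ and of the interval.

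The main obstacle is matching hypotheses. The clustering notion required in Ref.~\cite{bluhm2022exponential} is formulated for all finite intervals of a possibly non-translation-invariant interaction on $\mathbb Z$. Our block $I$ sits inside the periodic chain $\Lambda_n$ with $I\subsetneq\Lambda_n$. I will embed $I$ in $\mathbb Z$ together with its interaction terms, so that $\tau_I$ depends only on $H_I$, and check that the bound of Proposition~\ref{prop:BC-clustering}, which holds uniformly over all blocks and all sub-intervals, has exactly the form of that uniform clustering definition. If that definition asks for clustering across general pairs of regions rather than intervals, I would instead use the interval version, which is the only case Proposition~8.1 actually uses.
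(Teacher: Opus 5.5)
Your proposal is correct and follows the paper's proof essentially step for step. You supply uniform clustering via Proposition~\ref{prop:BC-clustering}, apply Proposition~\ref{prop:BCPH-factorization} with $s=\lfloor |G|/3\rfloor$, and convert the multiplicative operator-norm bound to trace norm by H\"older's inequality with $\|(\tau_I)_X\otimes(\tau_I)_Y\|_1=1$, which yields $C_{\mathrm{fac}}=\widehat C e^{\widehat\mu}$ and $\mu_{\mathrm{fac}}=\widehat\mu/3$. Your extra safeguard $C_{\mathrm{fac}}\ge 2$ is harmless but unnecessary, since the cited proposition already allows $s=0$.
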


\begin{proof}
By Proposition~\ref{prop:BC-clustering}, every finite restriction of the interaction satisfies Eq.~\eqref{eq:BC-clustering} with the same constants $C_{\mathrm{cl}}$ and $\mu_{\mathrm{cl}}$. In particular, the interaction satisfies exponential uniform clustering in the sense required by Proposition~\ref{prop:BCPH-factorization}. Hence, for every $s\ge 0$ such that $|G|\ge 3s$,
\begin{align}
\left\| \bigl((\tau_I)_X^{-1}\otimes(\tau_I)_Y^{-1}\bigr) (\tau_I)_{XY} -I_{XY} \right\| \le \widehat C e^{-\widehat\mu s}.
\label{eq:factorization-operator-input}
\end{align}
Setting $E_{XY}:=\bigl((\tau_I)_X^{-1}\otimes(\tau_I)_Y^{-1}\bigr)(\tau_I)_{XY}-I_{XY}$ and using H\"older's inequality, 
\begin{align}
\|(\tau_I)_{XY}-(\tau_I)_X\otimes(\tau_I)_Y\|_1 &= \|\bigl((\tau_I)_X\otimes(\tau_I)_Y\bigr)E_{XY}\|_1 \nonumber \\
& \le \|(\tau_I)_X\otimes(\tau_I)_Y\|_1 \|E_{XY}\| \nonumber \\
&= \|E_{XY}\| \nonumber \\
&\leq \widehat C e^{\widehat\mu} e^{-\widehat\mu |G|/3},
\label{eq:operator-to-trace-factorization}
\end{align}
where we set $s=\lfloor |G|/3\rfloor$ in the last line from Eq.~\eqref{eq:factorization-operator-input}. Thus Eq.~\eqref{eq:two-interval-factorization} holds by adjusting constants, e.g., $C_{\mathrm{fac}}:=\widehat C e^{\widehat\mu}$, $\mu_{\mathrm{fac}}:={\widehat\mu}/{3}$.
\end{proof}

We are now ready to verify Assumption~\ref{asm:GF}.

\begin{theorem}[Proof of Assumption~\ref{asm:GF}]
\label{thm:verify-GF}
Assumption~\ref{asm:GF} holds for the Hamiltonians in Eq.~\eqref{eq:hamiltonian}, with constants depending only on $\beta$ and $d$.
\end{theorem}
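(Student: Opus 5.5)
We must show $\|(\tau_B)_{LRV}-(\tau_B)_{LR}\otimes(\tau_B)_V\|_1\le C_Fe^{-\mu_F m}$ for $B=L\,G_L\,V\,G_R\,R$ with $m=\min\{|G_L|,|G_R|\}$. Corollary~\ref{cor:two-interval-factorization} only handles two intervals separated by one gap. The region $LR$ is not an interval on the open chain $B$, since $V$ sits between $L$ and $R$. The plan is therefore to factor in two steps, peeling off $L$ and then $R$, and to control each step with the two-interval estimate. Each step is applied to a suitable contiguous sub-interval. Any leftover correlation between $L$ and $R$ is harmless, because we only need $(\tau_B)_{LR}$, not $(\tau_B)_L\otimes(\tau_B)_R$.

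\textbf{Step 1: reduce to full product factorization.} By the triangle inequality,
\begin{align}
\|(\tau_B)_{LRV}-(\tau_B)_{LR}\otimes(\tau_B)_V\|_1
&\le \|(\tau_B)_{LRV}-(\tau_B)_L\otimes(\tau_B)_V\otimes(\tau_B)_R\|_1 \nonumber\\
&\quad+\|(\tau_B)_L\otimes(\tau_B)_R-(\tau_B)_{LR}\|_1 .
\end{align}
Here we used $\|\omega\otimes(\tau_B)_V\|_1$-multiplicativity, so the second term loses nothing when tensored with $(\tau_B)_V$. The second term is controlled by Corollary~\ref{cor:two-interval-factorization} with $X=L$, $Y=R$, and gap $G_LVG_R$ of length at least $2m$. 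This gives a bound $C_{\mathrm{fac}}e^{-2\mu_{\mathrm{fac}}m}$.

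\textbf{Step 2: factor the three-interval marginal.} Write
\begin{align}
\|(\tau_B)_{LVR}-(\tau_B)_L\otimes(\tau_B)_V\otimes(\tau_B)_R\|_1
&\le \|(\tau_B)_{LVR}-(\tau_B)_L\otimes(\tau_B)_{VR}\|_1 \nonumber\\
&\quad+\|(\tau_B)_{VR}-(\tau_B)_V\otimes(\tau_B)_R\|_1 .
\end{align}
The first term is an issue, because $VR$ is not an interval: it contains the gap $G_R$. We handle it by taking partial traces. Apply Corollary~\ref{cor:two-interval-factorization} with $X=L$, $G=G_L$, and $Y=VG_RR$, which is a contiguous interval. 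Then trace out $G_R$, using monotonicity of the trace norm under partial trace. This gives the bound $C_{\mathrm{fac}}e^{-\mu_{\mathrm{fac}}|G_L|}$. For the second term, use $X=V$, $G=G_R$, and $Y=R$ inside $B$, after first tracing out $LG_L$. This gives $C_{\mathrm{fac}}e^{-\mu_{\mathrm{fac}}|G_R|}$.

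One point needs checking. Corollary~\ref{cor:two-interval-factorization} is stated for $I=XGY$ being the whole Gibbs interval. We need it when $XGY$ is a sub-interval of $B$ and the marginals are those of $\tau_B$. I expect this to be the main obstacle. The first thing to try is applying Proposition~\ref{prop:BCPH-factorization} in the form that holds for marginals of $\tau_I$ on sub-intervals $X,Y$ of a larger interval $I$; the Bluhm--Capel--P\'erez-Hern\'andez statement is of this type. Failing that, we can derive a trace-norm bound directly from Proposition~\ref{prop:BC-clustering}, which is already stated for arbitrary intervals $X,Y\subseteq B$. Clustering gives control of $\tr[\Delta(A\otimes D)]$ for products. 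Upgrading to the trace norm costs a dimension factor $d^{O(|X|+|Y|)}$, which is unbounded when $|V|$ or $|L|,|R|$ are large. So the operator-norm factorization of Proposition~\ref{prop:BCPH-factorization}, applied to sub-intervals, is the route I would commit to.

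\textbf{Step 3: collect the bounds.} Summing the three terms gives $3C_{\mathrm{fac}}e^{-\mu_{\mathrm{fac}}m}$. This verifies Assumption~\ref{asm:GF} with $C_F=3C_{\mathrm{fac}}$ and $\mu_F=\mu_{\mathrm{fac}}$. Both constants depend only on $\beta$ and $d$, and are uniform in $n$, $B$, and the lengths of $L$, $V$, and $R$.
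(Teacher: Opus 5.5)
This is essentially the paper's proof. You use the same three terms ($L$ vs.\ $VR$, $V$ vs.\ $R$, $L$ vs.\ $R$), each bounded by Corollary~\ref{cor:two-interval-factorization} plus monotonicity of the trace norm under partial trace, giving $C_F=3C_{\mathrm{fac}}$.

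The obstacle you flag is not real, and you should not rely on an unstated sub-interval version of Proposition~\ref{prop:BCPH-factorization}. For the $V$--$R$ term, apply the corollary to the whole block $B=(LG_LV)\,G_R\,R$ and trace out $LG_L$ afterwards. That is, factorize first and trace second, exactly as you already did with $G_R$ in the $L$--$VR$ term. The other two applications already have $XGY=B$.
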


\begin{proof}
Consider a decomposition of the block into five consecutive intervals,
\begin{align}
B=L G_L V G_R R, \qquad m:=\min\{|G_L|,|G_R|\}.
\end{align}
Applying Corollary~\ref{cor:two-interval-factorization} to $B=L G_L (VG_RR)$ and tracing out $G_R$ gives
\begin{align}
\|(\tau_B)_{LVR}-(\tau_B)_L\otimes(\tau_B)_{VR}\|_1 \le C_{\mathrm{fac}}e^{-\mu_{\mathrm{fac}}|G_L|}.
\label{eq:GF-left}
\end{align}
Here and below we use contractivity of the trace norm under partial traces.

Similarly, applying Corollary~\ref{cor:two-interval-factorization} to $B=(LG_LV) G_R R$ and tracing out $G_L$ gives
\begin{align}
\|(\tau_B)_{LVR}-(\tau_B)_{LV}\otimes(\tau_B)_R\|_1 \le C_{\mathrm{fac}}e^{-\mu_{\mathrm{fac}}|G_R|}.
\label{eq:GF-right}
\end{align}
In particular, tracing out $L$ from above gives
\begin{align}
\|(\tau_B)_{VR} - (\tau_B)_V\otimes(\tau_B)_R\|_1 \le C_{\mathrm{fac}}e^{-\mu_{\mathrm{fac}}|G_R|}.
\label{eq:GF-VR}
\end{align}
Finally, applying Corollary~\ref{cor:two-interval-factorization} to $B=L (G_LVG_R) R$ gives
\begin{align}
\|(\tau_B)_{LR} - (\tau_B)_L\otimes(\tau_B)_R\|_1 \le C_{\mathrm{fac}} e^{-\mu_{\mathrm{fac}}(|G_L|+|V|+|G_R|)} \le C_{\mathrm{fac}}e^{-\mu_{\mathrm{fac}}m}.
\label{eq:GF-LR}
\end{align}
Using Eqs.~\eqref{eq:GF-left}, \eqref{eq:GF-VR}, and \eqref{eq:GF-LR}, together with the triangle inequality and multiplicativity of the trace norm under tensor products with states, we obtain
\begin{align}
\|(\tau_B)_{LRV} - (\tau_B)_{LR}\otimes(\tau_B)_V\|_1 & \le
\|(\tau_B)_{LVR} - (\tau_B)_L\otimes(\tau_B)_{VR}\|_1 \nonumber\\
&\qquad+ \|(\tau_B)_L\otimes(\tau_B)_{VR} - (\tau_B)_L\otimes(\tau_B)_V\otimes(\tau_B)_R\|_1 \nonumber\\
&\qquad+ \|(\tau_B)_L\otimes(\tau_B)_V\otimes(\tau_B)_R - (\tau_B)_{LR}\otimes(\tau_B)_V\|_1
\nonumber\\
&\qquad\le C_{\mathrm{fac}}e^{-\mu_{\mathrm{fac}}|G_L|} + C_{\mathrm{fac}}e^{-\mu_{\mathrm{fac}}|G_R|} + C_{\mathrm{fac}}e^{-\mu_{\mathrm{fac}}m}
\nonumber\\
&\qquad\le 3C_{\mathrm{fac}}e^{-\mu_{\mathrm{fac}}m}.
\end{align}
Thus Assumption~\ref{asm:GF} holds with $C_F:=3C_{\mathrm{fac}},\mu_F:=\mu_{\mathrm{fac}}$.
\end{proof}

\section{Low-depth quantum circuits for Gibbs-state preparation}
\label{sec:dynamics-implementation}

In this section, we provide an algorithm that implements the block heat-bath dynamics directly from the local Hamiltonian terms. Combined with Theorem~\ref{thm:main}, its gate count is nearly linear and its circuit depth is polylogarithmic in $n$.

\subsection{Main result}

We assume classical descriptions of the local interactions of $H$, whose entries can be computed to $p$ bits in time $\operatorname{poly}(b,p)$, where $b$ denotes the classical description length of $H$. 

\begin{theorem}[Implementation of the block heat-bath dynamics]
\label{thm:dynamics-implementation}
Fix $\beta$, $d$, and $\ell$. For every Hamiltonian $H$ in the above setting with $n>\ell$, every $t\geq0$, and every $\varepsilon\in(0,1)$, there exists a randomized quantum algorithm implementing a channel $\widetilde{\mathcal T}_t$ such that
\begin{align}
\left\|\widetilde{\mathcal T}_t-e^{t\mathcal L_\ell}\right\|_\diamond
\leq\varepsilon.
\label{eq:implementation-error}
\end{align}
Using one- and two-qubit gates on a 1D nearest-neighbor geometry, the algorithm has gate count $G$ and circuit depth $D$ satisfying
\begin{align}
G &= \mathcal O\left( \bigl(nt+\log(2/\varepsilon)\bigr) \operatorname{polylog}\left( \frac{2n(1+t)}{\varepsilon} \right) \right),
\nonumber\\
D &= \mathcal O\left( \bigl(t+\log(2n/\varepsilon)\bigr) \operatorname{polylog}\left( \frac{2n(1+t)}{\varepsilon} \right) \right).
\end{align}
It uses $\mathcal O\left( n\operatorname{polylog}\left( \frac{2n(1+t)}{\varepsilon} \right) \right)$ ancilla qubits, which may be reset and reused, and requires classical processing time $\operatorname{poly}(n,t,\log(1/\varepsilon),b)$. 
\end{theorem}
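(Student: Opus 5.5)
The plan has three stages: make the Lindbladian local, make each block update locally implementable, and then parallelize the dynamics.

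\emph{Localize the Lindbladian.} I will use Lemma~\ref{lem:quasi-locality}, via its underlying finite-window Lemma~\ref{lem:finite-window-approximation}. For a radius $r=\Theta(\log(n(1+t)/\varepsilon))$, it replaces each $\Phi_B$ by a channel $\Phi_B^{(r)}$ built only from the Hamiltonian restricted to the window $W=N_r(B)$. Concretely, this is the Petz-recovery map of the open-chain Gibbs state $\sigma_W$ of $H_W$, acting on $W$ and tensored with the identity on $W^c$. The diamond-norm error per block is $Ce^{-\mu r}$.

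Write $\mathcal L^{(r)}=\ell^{-1}\sum_B(\Phi_B^{(r)}-\mathcal I)$. A Duhamel/telescoping argument gives $\|e^{t\mathcal L_\ell}-e^{t\mathcal L^{(r)}}\|_\diamond\le t\|\mathcal L_\ell-\mathcal L^{(r)}\|_\diamond\le tnCe^{-\mu r}/\ell$. This is at most $\varepsilon/3$ for the stated choice of $r$.

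\emph{Implement each local update.} Since $\mathcal L^{(r)}$ is a rate-$n/\ell$ mixture of channels, $e^{t\mathcal L^{(r)}}$ is exactly the Poisson-clock process used in the proof of Proposition~\ref{prop:petz-basic}. Each block carries an independent rate-$1/\ell$ Poisson clock, and when a clock rings we apply $\Phi_B^{(r)}$. So the randomized algorithm can sample these clocks classically.

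Each $\Phi_B^{(r)}=\mathcal R^{\sigma_W}\circ\tr_B$ acts on $|W|=O(\log(n(1+t)/\varepsilon))$ qudits. It can be realized by a Stinespring dilation: trace out $B$ into a fresh ancilla, then apply the Petz map. The Petz map is a channel on $O(\log)$ qudits whose Kraus operators involve $\sigma_W^{\pm1/2}$ and $\sigma_{W\setminus B}^{-1/2}$. These can be computed classically to precision $\varepsilon'$ in time $\mathrm{poly}(d^{|W|})=\mathrm{poly}(n(1+t)/\varepsilon)$.

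Compiling a generic unitary on $k$ qudits would cost $\exp(k)$ gates, which here is only polynomial, not polylogarithmic. To get polylog gate count, I would instead implement the Petz map through block-encodings of $\sigma_W^{1/2}$ and $\sigma_{W\setminus B}^{-1/2}$ via QSVT on block-encodings of $H_W$. The gate cost is then $\mathrm{poly}(\beta,|W|,\log1/\varepsilon')$. The normalization of the Petz amplitude is uniformly bounded by the $\|Q_B\|$-type bounds of Section~\ref{sec:verify-QL} (Lemma~\ref{lem:right-boundary}). This lets me use oblivious amplitude amplification with $O(1)$ rounds, since $\Phi_B$ is a channel rather than a unitary and the success probability must be boosted deterministically.

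The per-update error $\varepsilon'$ is set so that the total number of updates, about $nt/\ell+\log(2/\varepsilon)$ with high probability, times $\varepsilon'$ is at most $\varepsilon/3$. The $\log(2/\varepsilon)$ term comes from a Poisson tail bound, which is what produces the $G$ bound.

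\emph{Parallelize.} Updates on windows at distance more than $2r+\ell$ commute as channels on disjoint supports. I will partition time into slices of length $O(1)$. Within a slice, the sampled update events are ordered into a dependency DAG, in which events conflict only when their windows overlap. Each window overlaps only $O(r+\ell)$ others, so the expected longest chain per unit time is $O(1)$ by standard Poisson percolation arguments. A union bound over $n$ sites shows that the longest chain over time $t$ is $O(t+\log(n/\varepsilon))$ with probability at least $1-\varepsilon/3$.

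Each layer of the DAG is executed in parallel. Its cost is the per-update depth times $O(r)$, coming from SWAP routing on a 1D nearest-neighbor geometry, and this is polylog. This yields $D$. The classical precomputation (sampling clocks, computing Hamiltonian descriptions) is polynomial, and ancillas per window are polylog, giving $O(n\,\mathrm{polylog})$ ancillas in total.

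I expect the main obstacle to be the per-update implementation with polylog gate count and controlled success probability. The Petz map requires inverse square roots of the marginal $\sigma_{W\setminus B}$, which is not directly block-encodable from $H$. If QSVT on marginals proves awkward, my fallback is to accept $\mathrm{poly}(d^{|W|})$ compilation. With $r=O(\log)$, this gives a polynomial blowup per update rather than polylog, so I would need to shrink the constants, for instance by choosing $r$ with small prefactor, to recover polylog. The Duhamel and lightcone-parallelization steps are routine by comparison.
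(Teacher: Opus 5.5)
Your outer structure is sound and close to the paper's: finite-window replacement with a Duhamel bound, Poisson-clock sampling with a tail cap, dependency-DAG layering, and SWAP routing. The gap is in the step that carries the polylogarithmic claim, namely implementing a single window update $\Phi_B^{H_W}$ at $\operatorname{polylog}$ cost. As proposed, that step fails.

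Multiplying separate block encodings of $\sigma_W^{1/2}$ and $\sigma_A^{-1/2}$ (with $A=W\setminus B$) gives a normalization of at least $\|\sigma_W^{1/2}\|\,\|\sigma_A^{-1/2}\|$. The bounds of Section~\ref{sec:verify-QL} control only $\|Q_B^{(W)}\|$, the norm of the product, not this normalization. A concrete failure already occurs for $V_{\partial B}=0$ and $H_A=\sum_{j\in A}Z_j$:
\begin{itemize}
\item There $Q_B^{(W)}=\tau_B^{1/2}\otimes I_A$, while $\|\sigma_W^{1/2}\|\,\|\sigma_A^{-1/2}\|=\|\tau_B^{1/2}\|\,e^{\beta|A|}$.
\item So the encoded block is smaller than the target Stinespring isometry by a factor $e^{-\beta|A|}$.
\item Amplification therefore needs $e^{\Theta(|W|)}$ rounds, not $O(1)$.
\end{itemize}

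There is a second, independent obstruction. $\sigma_A$ is a marginal, not a function of $H_W$, so QSVT on $H_W$ cannot produce $\sigma_A^{-1/2}$. You could encode $\sigma_A$ as a partial trace of an encoding of $\sigma_W$ and then invert its square root. That needs degree polynomial in the condition number of $\sigma_A$, which again can be $e^{\Theta(|W|)}$.

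Your fallback does not rescue this. Lemma~\ref{lem:finite-window-approximation} forces $r\geq\mu^{-1}\log(C/\delta)$ with $\mu$ fixed. Hence $e^{\Theta(|W|)}$ is a fixed positive power of $n(1+t)/\varepsilon$. You cannot choose $r$ with a small prefactor, and no choice of constants makes $d^{|W|}$ polylogarithmic. Both the $G$ and $D$ bounds are lost.

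The missing idea is never to encode an exponentially ill-conditioned factor on its own. The paper factors $Q_B^{(W)}=E(\tau_B^{1/2}\otimes T)$ and handles the two pieces as follows:
\begin{itemize}
\item \emph{The boundary expansional} $E=e^{-\beta H_W/2}e^{\beta H_0/2}$, with $H_0=H_B+H_A$, satisfies $\|E\|,\|E^{-1}\|\le C_E$ independently of $w$. It is block-encoded with constant normalization through a Dyson series in $V_0(s)=e^{-sH_0}V_{\partial B}e^{sH_0}$, whose commutator histories grow from the constant-size boundary.
\item \emph{The relative factor} $T=C(CSC)^{-1/2}$, with $C=e^{-\beta H_A/2}$ and $C_E^{-2}I\le S\le C_E^{2}I$, is encoded via the preconditioned integral $T=\frac{2}{\pi}\int_{\mathbb R}D_uP_u^{-1}F_u\,du$. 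Each $P_u$ is uniformly well conditioned, so the cancellation between $C$ and $(CSC)^{-1/2}$ is preserved.
\end{itemize}
The resulting isometry $V_W$ has encoding normalization $O(w+\log(1/\delta))$. Oblivious isometry amplification then needs $O(w+\log(1/\delta))$ rounds, which is polylogarithmic though not $O(1)$.

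A smaller point concerns your scheduling. Each window overlaps $O(w)$ others, each firing at rate $1/\ell$. So the longest dependency chain grows at rate $O(w/\ell)$ per unit time, not $O(1)$, giving $O(wt+\log(n/\varepsilon))$ layers. This is harmless only because $w=O(\log(2n(1+t)/\varepsilon))$ is absorbed into the polylog factor.
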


Combining this implementation with Theorem~\ref{thm:main} gives the following state-preparation result.

\begin{corollary}[Gibbs-state preparation with a polylog-depth circuit]
\label{cor:gibbs-preparation}
Fix $\beta$ and $d$, and choose the constant block length $\ell=\ell_\star$ from Theorem~\ref{thm:main}. For every $\varepsilon\in(0,1)$ and any initial state $\rho$, there exists a quantum circuit whose output state $\widetilde\sigma_\beta$ satisfies $\frac12\|\widetilde\sigma_\beta - \sigma_\beta\|_1\leq\varepsilon$. The circuit uses
\begin{align}
    G = \mathcal{O}\bigl(n\operatorname{polylog}(n/\varepsilon)\bigr),
    \qquad
    D = \mathcal{O}\bigl(\operatorname{polylog}(n/\varepsilon)\bigr)
\end{align}
one- and two-qubit gates, and $\mathcal{O}\bigl(n\operatorname{polylog}(n/\varepsilon)\bigr)$ ancilla qubits.
\end{corollary}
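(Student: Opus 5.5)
The plan is to combine Theorem~\ref{thm:main} with Theorem~\ref{thm:dynamics-implementation} through a triangle inequality. Fix $\ell=\ell_\star$ as given by Theorem~\ref{thm:main}; more precisely, by Theorem~\ref{thm:static-to-influence} together with Theorems~\ref{prop:verify-QL} and~\ref{thm:verify-GF}, the exterior influence function is $\eta(r)=C_\eta e^{-r/\xi}$. Take $\ell_\star>a_\eta$, for instance $\ell_\star=\lceil 2a_\eta\rceil+1$, so that $\gamma_{\ell_\star}\ge 1/2$. Corollary~\ref{cor:abstract-trace-mixing} then gives, for every initial state $\rho$,
\begin{align}
\tfrac12\|e^{t\mathcal L_{\ell_\star}}(\rho)-\sigma_\beta\|_1\le n e^{-\gamma_{\ell_\star} t}.
\end{align}

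Choose the evolution time
\begin{align}
t=\gamma_{\ell_\star}^{-1}\log(2n/\varepsilon)=\mathcal O(\log(n/\varepsilon)).
\end{align}
With this choice the exact dynamics is within trace distance $\varepsilon/2$ of $\sigma_\beta$. Next, apply Theorem~\ref{thm:dynamics-implementation} with accuracy $\varepsilon$ and this $t$. This yields a channel $\widetilde{\mathcal T}_t$ with $\|\widetilde{\mathcal T}_t-e^{t\mathcal L_{\ell_\star}}\|_\diamond\le\varepsilon$, so $\frac12\|\widetilde{\mathcal T}_t(\rho)-e^{t\mathcal L_{\ell_\star}}(\rho)\|_1\le\varepsilon/2$. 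Setting $\widetilde\sigma_\beta:=\widetilde{\mathcal T}_t(\rho)$ and using the triangle inequality gives total error at most $\varepsilon$.

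It remains to substitute $t$ into the resource bounds. Since $t=\mathcal O(\log(n/\varepsilon))$, we have $\log\bigl(2n(1+t)/\varepsilon\bigr)=\mathcal O(\log(n/\varepsilon))$, because $\log(1+t)=\mathcal O(\log\log(n/\varepsilon))$. Hence every $\operatorname{polylog}(2n(1+t)/\varepsilon)$ factor is $\operatorname{polylog}(n/\varepsilon)$. The gate count is $(nt+\log(2/\varepsilon))\operatorname{polylog}=\mathcal O(n\log(n/\varepsilon)\operatorname{polylog}(n/\varepsilon))=\mathcal O(n\operatorname{polylog}(n/\varepsilon))$. The depth is $(t+\log(2n/\varepsilon))\operatorname{polylog}=\mathcal O(\operatorname{polylog}(n/\varepsilon))$. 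The ancilla count $\mathcal O(n\operatorname{polylog}(n/\varepsilon))$ follows directly.

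The only real subtlety is that the constants are independent of $n$. Here $\gamma_{\ell_\star}$ and $\ell_\star$ depend only on $\beta$ and $d$, so the hidden constants in Theorem~\ref{thm:dynamics-implementation} for fixed $\ell$ are uniform in $n$. The requirement $n>\ell_\star$ holds for all but finitely many $n$. For those finitely many small $n$, one can prepare $\sigma_\beta$ directly with a constant-size circuit, or absorb them into the constants. The randomized algorithm is a fixed circuit once its classical random choices are averaged over, so the output state is the mixture $\widetilde{\mathcal T}_t(\rho)$, which already satisfies the bound.
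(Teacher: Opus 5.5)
Your proposal is correct and follows the paper's proof essentially verbatim. Like the paper, you take $t=t_{\mathrm{mix}}(\varepsilon/2)=\mathcal O(\log(n/\varepsilon))$, implement $e^{t\mathcal L_{\ell_\star}}$ to diamond-norm error $\varepsilon$ via Theorem~\ref{thm:dynamics-implementation}, combine the two bounds by the triangle inequality, and substitute $t$ into the resource bounds; your extra handling of $n\le\ell_\star$ matches the paper's separate treatment of that case, though for a fixed small $n$ direct preparation needs $\operatorname{polylog}(1/\varepsilon)$ gates rather than a strictly constant-size circuit, which still lies within the claimed bounds.
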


\begin{proof}
Let $\rho$ be any initial state and take $t=t_{\mathrm{mix}}(\varepsilon/2)=\mathcal O(\log(2n/\varepsilon))$. Then Theorem~\ref{thm:main} gives $\frac12\bigl\|e^{t\mathcal L_{\ell_\star}}(\rho)-\sigma_\beta\bigr\|_1 \leq \frac{\varepsilon}{2}$.  Implement $e^{t\mathcal L_{\ell_\star}}$ to diamond-norm error at most $\varepsilon$ using Theorem~\ref{thm:dynamics-implementation}. If $\widetilde\sigma_\beta :=\widetilde{\mathcal T}_t(\rho)$ denotes the output state, then $\frac12\bigl\|\widetilde\sigma_\beta -e^{t\mathcal L_{\ell_\star}}(\rho)\bigr\|_1 \leq \frac{\varepsilon}{2}$. The triangle inequality therefore gives $\frac12\|\widetilde\sigma_\beta-\sigma_\beta\|_1\leq\varepsilon$. Substituting $t=\mathcal O(\log(n/\varepsilon))$ into Theorem~\ref{thm:dynamics-implementation} gives the stated resource bounds.
\end{proof}

\subsection{Algorithm overview}
\label{subsec:algorithm-overview}

In this section, we briefly outline the algorithm and its intuition. All of the details are provided in Appendices~\ref{app:finite-window} and \ref{app:implementation}. We first note again that $e^{t\mathcal L_\ell}$ can be implemented by applying a Poisson-distributed number of randomly chosen block updates.

\begin{proposition}[Poisson-clock representation of $e^{t \mathcal L_{\ell}}$]
\label{prop:poisson-representation}
Let $s=nt/\ell$ and $K\sim\operatorname{Poisson}(s)$. Conditioned on $K$, let $B_1,\ldots,B_K$ be chosen i.i.d. from the $n$ blocks of length $\ell$. Then
\begin{align}
e^{t\mathcal L_\ell} = \mathbb E \left[\Phi_{B_K}\circ\cdots\circ\Phi_{B_1} \right].
\label{eq:poisson-representation}
\end{align}
\end{proposition}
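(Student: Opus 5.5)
The plan is the standard uniformization argument. Since the index set $\mathcal B_{n,\ell}$ has exactly $n$ elements (as $1\le\ell<n$), write
\begin{align}
\mathcal L_\ell=\frac{n}{\ell}\bigl(\overline{\Phi}_\ell-\mathcal I\bigr),\qquad \overline{\Phi}_\ell:=\frac1n\sum_{B\in\mathcal B_{n,\ell}}\Phi_B,
\end{align}
exactly as in the proof of Proposition~\ref{prop:petz-basic}. With $s=nt/\ell$, the identity operator $\mathcal I$ commutes with $\overline\Phi_\ell$, so the exponential factorizes:
\begin{align}
e^{t\mathcal L_\ell}=e^{-s}e^{s\overline\Phi_\ell}=\sum_{k\ge0}e^{-s}\frac{s^k}{k!}\overline\Phi_\ell^{\,k}.
\end{align}
The series converges absolutely in diamond norm because $\|\overline\Phi_\ell\|_\diamond=1$.

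Next, expand $\overline\Phi_\ell^{\,k}$ multilinearly:
\begin{align}
\overline\Phi_\ell^{\,k}=n^{-k}\sum_{B_1,\dots,B_k}\Phi_{B_k}\circ\cdots\circ\Phi_{B_1}=\mathbb E\bigl[\Phi_{B_k}\circ\cdots\circ\Phi_{B_1}\bigr],
\end{align}
where the expectation is over i.i.d.\ uniform choices of $B_1,\dots,B_k$. The weights $e^{-s}s^k/k!$ are exactly $\Pr[K=k]$ for $K\sim\operatorname{Poisson}(s)$. Combining the two displays via the tower property of conditional expectation gives Eq.~\eqref{eq:poisson-representation}. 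The empty composition at $k=0$ is interpreted as $\mathcal I$.

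There is no real obstacle here. The only points to note are that the number of blocks is exactly $n$ (so the rate $s=nt/\ell$ is correct), and that interchanging the infinite sum with the finite expectation is justified by absolute convergence. Since every term is a channel, this representation also shows that $e^{t\mathcal L_\ell}$ is a convex mixture of channels, which is what the sampling algorithm uses.
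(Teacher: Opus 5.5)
Your proposal is correct and follows essentially the same route as the paper: rewrite $\mathcal L_\ell=\frac{n}{\ell}(\overline\Phi_\ell-\mathcal I)$, expand $e^{-s}e^{s\overline\Phi_\ell}$ as a power series, expand $\overline\Phi_\ell^{\,k}$ multilinearly over i.i.d.\ uniform blocks, and recognize the coefficients as Poisson weights. Your added remarks on absolute convergence and on the $k=0$ term go slightly beyond what the paper states explicitly.
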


\begin{proof}
A direct expansion gives
\begin{align}
e^{t\mathcal L_\ell} & = e^{-s}\sum_{k=0}^{\infty}\frac{s^k}{k!}\left(\frac{1}{n}\sum_B\Phi_B\right)^k = \sum_{k=0}^{\infty}\frac{e^{-s}s^k}{k!} \frac{1}{n^k}\sum_{B_1,\ldots,B_k} \Phi_{B_k}\circ\cdots\circ\Phi_{B_1} \nonumber \\
&= \sum_{k=0}^{\infty}\mathbb P(K=k)  \mathbb E \left[ \Phi_{B_K}\circ\cdots\circ\Phi_{B_1} \middle|  K=k \right] = \mathbb E \left[\Phi_{B_K}\circ\cdots\circ\Phi_{B_1}\right].
\end{align}
\end{proof}

It therefore suffices to implement each $\Phi_B$ approximately, with accuracy chosen according to the expected number of updates,
\begin{align}
    \mathbb E[K]=s=\frac{nt}{\ell}.
\end{align}

The central piece of our algorithm is to show that we can replace $\Phi_B$ with a Petz update constructed from a finite Hamiltonian window. The following lemma is proved in Appendix~\ref{app:finite-window}.

\begin{lemma}[Finite-window approximation of $\Phi_B$]
\label{lem:finite-window-approximation}
Fix $\beta$, $d$, and $\ell$. There exist constants $C,\mu>0$, independent of the system size, such that the following holds. For a block $B$ and an integer $r\geq0$, let $W=N_r(B)$ and let $H_W$ be the restriction of $H$ to $W$. Then
\begin{align}
\left\| \Phi_B - \left(\Phi_B^{H_W}\otimes I _{W^c}\right) \right\|_\diamond \leq C e^{-\mu r},
\label{eq:finite-window-approximation}
\end{align}
where $\Phi_B^{H_W}$ is the Petz block update associated with the Gibbs state of $H_W$.
\end{lemma}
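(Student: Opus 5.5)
The plan is to write both updates in terms of Gibbs-state ingredients and compare them piece by piece. By Eq.~\eqref{eq:petz-amplitude-schrodinger}, $\Phi_B(\rho)=Q_B(\tau_B\otimes\tr_B[\rho])Q_B^\dagger$. Similarly, the windowed update is $\Phi_B^{H_W}\otimes I_{W^c}(\rho)=Q_B^{W}(\tau_B\otimes\tr_B[\rho])Q_B^{W\dagger}$, with amplitude
\begin{align}
Q^W_B:=\sigma_W^{1/2}\bigl(\tau_B^{-1/2}\otimes\sigma_{W,W\setminus B}^{-1/2}\bigr)\otimes I_{W^c},\qquad \sigma_W:=\frac{e^{-\beta H_W}}{\tr[e^{-\beta H_W}]}.
\end{align}
Both maps therefore share the reset $\rho\mapsto\tau_B\otimes\tr_B[\rho]$, which is CPTP. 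We assume $r\ge c_0+1$, so that $B\subset W$ and $H_B$ is the same in both cases; small $r$ is absorbed into $C$. The diamond-norm difference is then bounded by $(\|Q_B\|+\|Q_B^W\|)\,\|Q_B-Q_B^W\|$ by the usual two-sided expansion $QXQ^\dagger-Q'XQ'^\dagger=(Q-Q')XQ^\dagger+Q'X(Q-Q')^\dagger$. This tensorizes with an ancilla, since the amplitudes act only on the system. It therefore suffices to show $\|Q_B-Q^W_B\|\le Ce^{-\mu r}$, together with uniform bounds on both amplitudes.

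\textbf{Step 1: uniform bounds.} Theorem~\ref{prop:verify-QL} gives $\|Q_B\|\le M^{1/2}$. The same proof applies verbatim to $Q^W_B$, since $H_W$ is itself an open-chain nearest-neighbor Hamiltonian with $\|h\|\le1$ and all cited inputs (Araki expansionals, Ref.~\cite{bluhm2022exponential}) are uniform in the chain. So $\|Q^W_B\|\le M^{1/2}$ as well.

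\textbf{Step 2: localize both amplitudes.} Theorem~\ref{prop:verify-QL} also gives $\|Q_B-\mathcal D_{S_{r'}^c}(Q_B)\|\le C_Qe^{-\nu r'/2}$ with $S_{r'}=N_{r'+c_0}(\partial B)$, and the identical estimate for $Q^W_B$. Take $r'=\lfloor r/2\rfloor-c_0$, so that $S_{r'}$ lies well inside $W$ and is separated from $\partial W$ by about $r/2$ sites. It then remains to compare the two localized operators $\mathcal D_{S_{r'}^c}(Q_B)$ and $\mathcal D_{S_{r'}^c}(Q_B^W)$, which are both supported on $S_{r'}$.

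\textbf{Step 3: compare the localized amplitudes (main obstacle).} Rerun the three-lines interpolation of Theorem~\ref{prop:verify-QL} on the difference family
\begin{align}
e^{z^2}\,\mathcal D_{S_{r'}^c}\bigl(F_B(z)-F^W_B(z)\bigr).
\end{align}
On $\operatorname{Re}z=1$ it is uniformly bounded by Lemma~\ref{lem:right-boundary} applied to both families. On $\operatorname{Re}z=0$ I would compare the unitary flows generated by $e^{-it\beta H}\Delta_B e^{it\beta H}$ and its $W$-analogue. Two estimates are needed:
\begin{itemize}
\item[(i)] Local-indistinguishability of the effective Hamiltonians. The corrections $\Delta_B$ and $\Delta_B^W$ are both exponentially localized near $\partial B$ (Ref.~\cite{scalet2025classical}). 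Their localized parts should agree up to $e^{-ar}$, because $\log\sigma_\beta$ and $\log\sigma_W$ differ by $\beta(H-H_W)$ plus a correction that is quasi-local near $\partial W$, again by Ref.~\cite{scalet2025classical}.
\item[(ii)] Lieb--Robinson comparison. The Lieb--Robinson bound shows that the dynamics generated by $\beta H$ and by $\beta H_W$ agree on operators near $\partial B$ up to $e^{-ar+v|t|}$.
\end{itemize}
Duhamel then gives $\|F_B(it)-F_B^W(it)\|\lesssim e^{-ar+v'|t|}$ on the imaginary axis. The Gaussian damping turns this into a $t$-uniform bound, and interpolation at $z=1/2$ yields $\|Q_B-Q^W_B\|\le Ce^{-\mu r}$.

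The delicate part is (i): carefully transferring the effective-Hamiltonian locality to the marginal terms $\log\sigma_{\beta,B^c}$ versus $\log\sigma_{W,W\setminus B}$. If this proves awkward, I would instead combine exponential clustering with the factorization of Proposition~\ref{prop:BCPH-factorization} to show directly that the marginals on $N_{r/2}(B)$ of $\sigma_\beta$ and $\sigma_W$ agree in the multiplicative operator sense. That would suffice, since $Q_B$ is determined up to localization error by these marginals.
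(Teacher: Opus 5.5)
Your reduction and overall architecture match the paper's proof. You bound the diamond distance by $(\|Q_B\|+\|Q_B^W\|)\|Q_B-Q_B^W\|$. You then run the Gaussian-damped three-lines argument on the difference of the analytic families, using Lemma~\ref{lem:right-boundary} on $\operatorname{Re}z=1$ and, on $\operatorname{Re}z=0$, a Duhamel comparison in which a Lieb--Robinson bound handles the change of dynamics (your (ii)). Your Step~2 is harmless but unnecessary, since interpolating $e^{z^2}(F_B-F_B^W)$ directly bounds $\|Q_B-Q_B^W\|$.

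The genuine gap is (i), which is the actual content of the lemma. You need $\|\Delta_B(H)-\Delta_B(H_W+H_{W^c})\|\le Ce^{-ar}$, and your argument does not give it. Proposition~\ref{prop:QL-cited-HMF} and Lemma~\ref{lem:QL-boundary-estimates} are \emph{localization} statements: each of $\Delta_B$ and $\Delta_B^W$ is separately close to \emph{some} operator supported near $\partial B$. Two such operators need not be close to each other; what you need is \emph{stability}. Your justification also misplaces the difficulty. The full-state logarithms differ exactly, $\log\sigma_\beta-\log(\sigma_W\otimes\sigma_{W^c})=-\beta V_{\partial W}+\mathrm{const}$. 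So everything hinges on the marginal terms: you must show that the mean-force corrections $-\log\sigma_{B^c}+\log\tau_{B^c}$ computed from $H$ and from $H_W+H_{W^c}$ agree up to $e^{-ar}$ in operator norm. For this you only appeal to the same localization citation. The fallback fails as well. $Q_B\approx\mathcal D_{S^c}(Q_B)$ does not imply that $\mathcal D_{S^c}(Q_B)$ is determined by the marginal of $\sigma_\beta$ on $N_{r/2}(B)$, because $Q_B$ is built from the global $\sigma_\beta^{1/2}$ and $\sigma_{\beta,B^c}^{-1/2}$. That claim is an operator-norm approximate-Markov property essentially as strong as the lemma itself.

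The missing idea is the paper's control-qubit trick (Lemma~\ref{lem:FW-distant-cut}), which turns localization into stability. Remove the at most two bonds crossing $\partial W$ one at a time. For a bond $e$ with term $v_e$ and endpoint $x\in W^c$, let $J_1$ be the Hamiltonian before deletion and $J_0=J_1-v_e$. Attach a qubit $F$ at site $x$, padding the other sites with inert qubits, and set $\widehat J=J_0\otimes I_F+v_e\otimes|1\rangle\langle1|_F$. This is still nearest-neighbor with the same interaction bound. Its Gibbs state is block diagonal in $F$ with weights $p_j$, and since $F\notin B$ the terms $\log p_j$ cancel. Hence $\Delta_B(\widehat J)=\Delta_B(J_0)\otimes|0\rangle\langle0|_F+\Delta_B(J_1)\otimes|1\rangle\langle1|_F$. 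Now apply Lemma~\ref{lem:QL-boundary-estimates} to the enlarged chain at radius $s=\lfloor r/2\rfloor$, whose neighborhood of $\partial B$ excludes $x$. This gives $\|\Delta_B(\widehat J)-K_s\otimes I_F\|\le Ae^{-as}$ for some Hermitian $K_s$. Compressing to each control sector yields $\|\Delta_B(J_1)-\Delta_B(J_0)\|\le 2Ae^{-as}$. With this estimate, your Duhamel, Lieb--Robinson, and three-lines steps go through as you describe. Before invoking Lieb--Robinson, split $\Delta_B^W$ into its localized part plus an $e^{-ar}$ tail.
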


Hence, instead of implementing $\Phi_B$ directly, we implement its finite-window approximation, as illustrated in Fig.~\ref{fig:main-petz-circuit}(a). We allocate error $\delta=\Theta(\varepsilon/(1+s))$ to each of the window approximation and local compilation, where $s=nt/\ell$ is the expected number of updates. Choosing $r=\mathcal{O}(\log(1/\delta))$ makes the window error at most $\delta$. At the mixing time $t=\mathcal{O}(\log(n/\varepsilon))$, the window size therefore satisfies $w=|W|\leq\ell+2r=\mathcal{O}(\log(n/\varepsilon))$. For fixed target accuracy, this gives the $\mathcal{O}(\log n)$ window shown in Fig.~\ref{fig:main-petz-circuit}(a).

By Proposition~\ref{prop:poisson-representation}, the expected number of updates is $s=\mathcal{O}(n\log n)$ at fixed target accuracy. Since each update acts only on a window of $\mathcal{O}(\log n)$ sites, about $n/\log n$ disjoint updates can run in parallel across the chain. We show that the sampled updates can be organized into only $\mathcal{O}(\operatorname{polylog}(n))$ layers with high probability, while preserving the sampled order of overlapping updates (Appendix~\ref{app:implementation}). It therefore remains to implement each finite-window update $\Phi_B^{H_W}$ in polylog depth.

\begin{figure}
    \centering
    \includegraphics[width=1\linewidth]{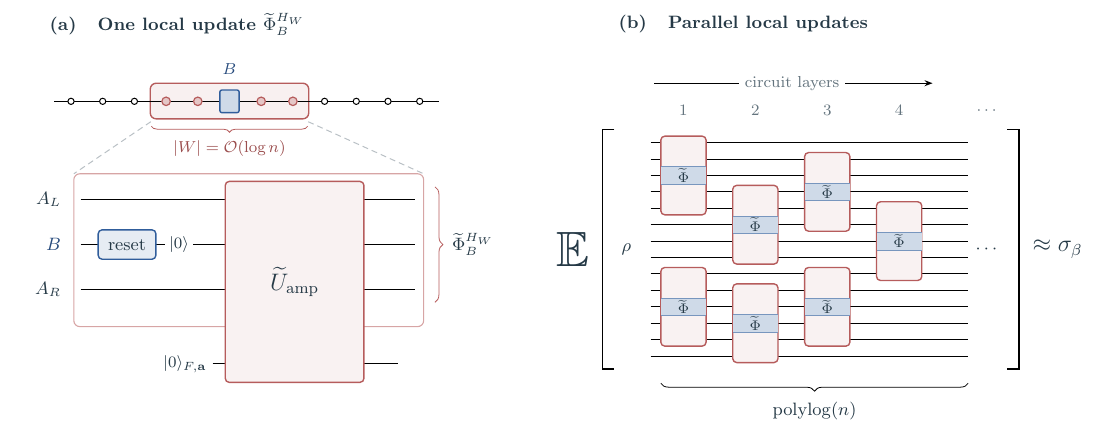}
   \caption{Local Petz updates and Gibbs-state preparation. (a) Reset the block $B$ (blue), apply the circuit implementing $\widetilde V_W$ using data registers in $W$ (red) and auxiliary registers $F,\mathbf a$, then discard the auxiliary registers. Here $A_L\cup A_R=W\setminus B$. (b) Disjoint windows are updated in parallel, while overlapping updates retain their sampled order. The expectation averages the output states over the classical update sequence. At mixing time and fixed target accuracy, $|W|=\mathcal O(\log n)$ and the circuit depth, including nearest-neighbor compilation, is $\operatorname{polylog}(n)$. Auxiliary registers are omitted in (b).}
    \label{fig:main-petz-circuit}
\end{figure}


To do so, let $A=W\setminus B$ and write $H_W=H_0+V_{\partial B}$, where $H_0=H_B+H_A$ and $V_{\partial B}$ contains the interactions connecting $B$ to $A$. The finite-window Petz update takes the form
\begin{align}
\Phi_B^{H_W}(\rho_W) = Q_B^{(W)} \bigl(I_B\otimes\tr_B[\rho_W]\bigr) {Q_B^{(W)}}^\dagger, \qquad
Q_B^{(W)} := \sigma_W^{1/2} \bigl(I_B\otimes\sigma_A^{-1/2}\bigr),
\label{eq:Phi_B-HW}
\end{align}
where $\sigma_W=e^{-\beta H_W}/\tr(e^{-\beta H_W})$ and $\sigma_A=\tr_B[\sigma_W]$.

Let us begin from observing what happens for the decoupled case $V_{\partial B}=0$. In this case, we simply have $\Phi_B^{H_W}(\rho_W) = \tau_B \otimes \tr_B(\rho_W)$, i.e., the update simply discards $B$ and replaces it with $\tau_B=e^{-\beta H_B}/\tr(e^{-\beta H_B})$. Since $|B|=\mathcal{O}(1)$, this requires only a constant-depth circuit. When $V_{\partial B} \neq 0$, the recovery must also reconstruct correlations between $B$ and $A$. In 1D, these couplings consist of at most two bounded local terms, suggesting that their effect can be incorporated efficiently. However, the technical challenge is that naively implementing the factors in Eq.~\eqref{eq:Phi_B-HW} separately does not exploit this boundary-local structure, as the smallest eigenvalue of $\sigma_A$ may be exponentially small in $|W|$, blowing up a cost to $e^{\mathcal{O}(|W|)} \sim \operatorname{poly}(n)$.

To use our intuition and isolate the boundary correction, define $E=e^{-\beta H_W/2}e^{\beta H_0/2}$ and $C=e^{-\beta H_A/2}$. The Petz amplitude admits the exact factorization
\begin{align}
    Q_B^{(W)} =E(\tau_B^{1/2}\otimes T),
    \qquad
    T=C(CSC)^{-1/2},
    \qquad
    S=\tr_B \left[(\tau_B^{1/2}\otimes I_A) E^\dagger E (\tau_B^{1/2}\otimes I_A) \right].
    \label{eq:petz-amplitude-factorization-algo}
\end{align}
The operator $E$ incorporates the boundary interaction, and the factor $T$ fills the remaining correction on $A$ required by the Gibbs marginal. Its normalization condition $T^\dagger S T=I_A$ makes the resulting recovery operator an isometry, and hence ensures that the induced recovery channel is trace preserving. When $V_{\partial B}=0$, we have $E=S=T=I$, recovering the simple replacement of $B$ by $\tau_B$.

Combining $E$ and $T$ with a purification of $\tau_B$, denoted by $|\sqrt{\tau_B}\rangle_{BF}$, defines the recovery isometry
\begin{align}
\label{eq:recovery-isometry}
V_W|\psi\rangle_A := (E_{BA}\otimes I_F)(I_B\otimes T_A\otimes I_F) \bigl(|\sqrt{\tau_B}\rangle_{BF}\otimes|\psi\rangle_A\bigr), \qquad V_W^\dagger V_W=I_A.
\end{align}
Tracing out $F$ gives the local Petz update,
\begin{align}
\Phi_B^{H_W}(\rho_W) = \tr_F\left[V_W\bigl(\tr_B\rho_W\bigr)V_W^\dagger
\right].
\end{align}
We block-encode $E$ and $T$, combine them into a projected encoding of $V_W$, and amplify it to obtain a unitary $\widetilde U_{\rm amp}$ whose action on fresh ancillas approximates $V_W$. Resetting $B$, applying $\widetilde U_{\rm amp}$, and discarding the auxiliary registers therefore implements the local update without postselection, as illustrated in Fig.~\ref{fig:main-petz-circuit}(a).

\paragraph{Block-encoding of $\boldsymbol{E}$.}

Although $E$ need not be close to the identity, its nontrivial action originates from the boundary perturbation $V_{\partial B}$. Lemma~\ref{lem:efficient-boundary-expansional} makes this observation constructive: for fixed $\beta,d,\ell$, it provides a block encoding of $E$ with constant normalization and gate cost polynomial in $w=|W|$ and $\log(1/\eta)$, where $\eta$ is the block-encoding error.

The construction has two steps. First, we block-encode the imaginary-time-evolved boundary interaction
\begin{align}
    V_0(s) := e^{-sH_0}V_{\partial B}e^{sH_0} = \sum_{k=0}^{\infty}  \frac{(-s)^k}{k!}  \operatorname{ad}_{H_0}^{k}(V_{\partial B}),  \qquad 0\le s\le\beta/2,
\end{align}
where $\operatorname{ad}_{H_0}(X)=[H_0,X]$. Only sequences of terms in $H_0$ that successively overlap the accumulated support can contribute. In 1D, these sequences can be organized as histories growing from the boundary. We bound their total absolute coefficient weight uniformly in $w$ and coherently prepare an equal superposition over the valid histories without enumerating them explicitly. Each selected history reduces to a signed product of local Pauli operators and can therefore be applied efficiently. Truncating at order $\mathcal{O}(\log(1/\eta))$ yields an efficient block encoding of $V_0(s)$, controlled by a register specifying $s$. This process is specified in Proposition~\ref{prop:boundary-coherent-encoding}.

Second, we combine these encodings through a truncated Dyson expansion. Writing $E(s)=e^{-sH_W}e^{sH_0}$, we have $\frac{d}{ds}E(s)=-E(s)V_0(s)$ and $E(0)=I$, so that
\begin{align}
   E = I+\sum_{m=1}^{\infty}(-1)^m \int_{0\le s_1\le\cdots\le s_m\le\beta/2} V_0(s_1)\cdots V_0(s_m) ds_1\cdots ds_m.
\end{align}
Heavily inspired by the truncated Dyson series constructions of Refs.~\cite{kieferov2019simulating} (see also Ref.~\cite{low2019hamiltonian}), we prepare the expansion order and discretized time labels, reversibly sort the times, and apply the corresponding block encodings in the required order. Unpreparing the label registers produces a block encoding of the truncated sum. 

The key point is that $V_{\partial B}$ has constant-size support, so at truncation order $K$ the contributing commutator histories can be generated and implemented using $\operatorname{poly}(K,w)$ gates. Their total coefficient weight is bounded independently of $w$, which implies that $K=\mathcal O(\log(1/\eta))$ suffices for accuracy $\eta$. The Dyson expansion of $E$ likewise requires only $\mathcal O(\log(1/\eta))$ orders, with each order using only polynomially many calls to these $V_0(s)$ encodings. Hence the overall cost is $\operatorname{poly}(w,\log(1/\eta))$. The detailed resource analysis and the algorithm are specified in the proof of Lemma~\ref{lem:efficient-boundary-expansional}. The details are provided in Appendix~\ref{app:implementation}.

\paragraph{Block-encoding of $\boldsymbol{T}$.}

The remaining task is to block-encode $T=C(CSC)^{-1/2}$. The bounds on $E$ and $E^{-1}$ from above construction imply that $c_-I_A\leq S\leq c_+I_A$ for positive constants $c_\pm$ independent of $w$. Also, a block encoding of $S$ follows by preparing and unpreparing $\ket{\sqrt{\tau_B}}_{BF}$ around an encoding of $E^\dagger E$. Although $S$ is well conditioned, $CSC$ may have exponentially small eigenvalues. We therefore implement the product $C(CSC)^{-1/2}$ directly, preserving the cancellation between its factors.

Define $D_u=(I_A+e^{\beta H_A+2uI_A})^{-1/2}$, $F_u=(I_A+e^{-\beta H_A-2uI_A})^{-1/2}$, and $P_u=D_uSD_u+F_u^2$. Since $D_u^2+F_u^2=I_A$, the bounds on $S$ give $\min{c_-,1}I_A\leq P_u\leq\max{c_+,1}I_A$. Thus, $P_u$ is uniformly well conditioned for every $u\in\mathbb R$. The inverse-square-root integral yields the exact representation
\begin{align}
T =\frac{2}{\pi}\int_{\mathbb R} e^u C(CSC+e^{2u}I_A)^{-1} du = \frac{2}{\pi}\int_{\mathbb R} D_uP_u^{-1}F_u du.
\label{eq:T-preconditioned-integral}
\end{align}
The operators $D_u$ and $F_u$ are bounded functions of $H_A$ that admit efficient polynomial approximations. We block-encode them directly by polynomial eigenvalue transformation of $H_A$ via Ref.~\cite{gilyen2019quantum}, rather than implementing their inverse-square-root factors separately. Combining these encodings with that of $S$ gives an encoding of $P_u$, whose inverse can be implemented efficiently using its uniform spectral bounds.

Finally, we truncate the integral in Eq.~\eqref{eq:T-preconditioned-integral} and approximate it using Gauss-Legendre quadrature. The integrand has exponentially decaying tails beyond $|u|=\mathcal O(\beta\|H_A\|)$. Since $\|H_A\|=\mathcal O(w)$, an integration range of size $\mathcal O(w+\log(1/\eta))$ and polynomially many quadrature nodes suffice for error $\eta$. The quadrature weights are positive and sum to the length of the integration interval, so combining the node encodings by linear combination of unitaries (LCU) gives a block encoding of $T$ with normalization $\mathcal O(w+\log(1/\eta))$. Since each node encoding and the number of nodes have polynomial cost in $w$ and $\log(1/\eta)$, the overall gate cost is also polynomial in these parameters (see Lemma~\ref{lem:relative-inverse}).

\paragraph{From block encodings to the local update.}

Finally, combining the block encodings of $E$ and $T$ with the preparation of $|\sqrt{\tau_B}\rangle_{BF}$ gives a projected encoding of the recovery isometry $V_W$ in Eq.~\eqref{eq:recovery-isometry}. Oblivious isometry amplification (Lemma~\ref{lem:be-isometry-amplification}) then gives a unitary whose action on the initialized input subspace approximates $V_W$, without postselection. Lemma~\ref{lem:local-compilation} shows that the resulting finite-window update $\Phi_B^{H_W}$ can be implemented using $\operatorname{poly}(w,\log(1/\delta))$ gates and ancillas. Routing within each window on a 1D nearest-neighbor architecture incurs only polynomial overhead in $w$. Since $w=\mathcal O(\log(n/\varepsilon))$ at the mixing time and disjoint updates are executed in parallel, Theorem~\ref{thm:dynamics-implementation} yields the claimed $\operatorname{polylog}(n/\varepsilon)$ circuit depth.

\section{Discussion and outlook}

In this work, we introduce a quantum block heat-bath dynamics based on Petz recovery and prove rapid mixing for the 1D spin chains considered here at every fixed finite temperature, without any commutativity assumption. Our proof establishes contraction directly in quantum $W_1$ distance using two static locality properties of 1D Gibbs states. We further provide low-depth quantum circuits for preparing the corresponding Gibbs states.

An immediate direction is to simplify our algorithm and determine the explicit dependence of its resource costs and mixing bounds on $\beta$ and $d$, which we have treated as fixed parameters throughout. In particular, it seems our current implementation incurs a gate overhead that is at least triply exponential in $\beta$. Sharper bounds on $\ell_\star$, the mixing rate, and the quantum and classical implementation costs would clarify the efficiency of our approach at low temperatures.

Another interesting direction is to extend our direct $W_1$ contraction approach to other Gibbs samplers from Refs.~\cite{chen2023an,ding2025efficient}. Their quasi-local structures suggest that similar influence estimates may be useful, but it for now seems challenging how to replace or connect the exact erasure of single-site discrepancies used in this work to there. It would also be useful to clarify how our work is precisely related to the quantum Dobrushin condition of Ref.~\cite{bakshi2026a}.

It would also be interesting to extend our results to other classes of Hamiltonians. For 1D Hamiltonians with sufficiently rapidly decaying power-law interactions, quasipolynomial-time algorithms for MPO construction and quantum Gibbs-state preparation are known~\cite{achutha2025provably}, and polynomial-bond-dimension approximations of purified Gibbs states were recently shown to exist, although efficient constructions remain open~\cite{kim2026spectral}; this raises the question of whether our dynamics and algorithm can be adapted to yield polynomial-depth state-preparation circuits. Another important direction is higher-dimensional systems, where finite-temperature phase transitions make a general all-temperature rapid-mixing result unlikely~\cite{martinelli1994on,ding2009the,gheissari2018Potts}. The natural question is then which additional conditions on the Hamiltonian or its Gibbs state suffice for fast or rapid mixing and efficient state preparation.

Finally, our contraction bound also gives convergence estimates that depend on the initial state $\rho_0$. Since $\frac12\|e^{t\mathcal L_\ell}(\rho_0)-\sigma_\beta\|_1 \leq e^{-\gamma_\ell t}\|\rho_0-\sigma_\beta\|_{W_1}$, a smaller initial $W_1$ distance can therefore reduce the required evolution time. For example, a distance of $\operatorname{polylog}(n)$ gives an $\mathcal O(\log\log n)$ mixing time, while a bounded distance gives a bound independent of $n$. Identifying physically relevant states or ensembles with these properties (in line with recent interest in initial-state-dependent mixing time ~\cite{bergamaschi2025logical,gheissari2023low,cheng2026typical,chen2026single}) is a valuable question for future work.

\paragraph{Note added.} While preparing this manuscript, Ref.~\cite{pilatowsky2026one} appeared, establishing constant-depth quantum circuits for preparing 1D Gibbs states at any fixed finite temperature and thereby improving upon the polylogarithmic-depth constructions of this work.

\section*{Acknowledgements}
We acknowledge support by the U.S. Department of Energy, Office of Science, Office of Advanced Scientific Computing Research under Contract No. DE-AC05-00OR22725 through the Accelerated Research in Quantum Computing Program MACH-Q project.

\section*{AI Disclosure}
We acknowledge the assistance of ChatGPT (5.5, 5.6 Sol and 6 Astra) in developing several proof strategies. In particular, it helped us  resolve the issue via the unitalization argument in Section~\ref{subsubsec:unitalization}, prove Assumption~\ref{asm:QL} in Section~\ref{sec:verify-QL}, and improve the algorithm design and analysis in Appendix~\ref{app:implementation}. It also flagged several issues in the original arguments and helped us in revising them. The author wrote the entire manuscript manually, and takes full responsibility for the content.

\bibliographystyle{unsrturl}
\bibliography{ref}

\clearpage

\appendix

\section{Proofs of the boundary estimates and Assumption~\ref{asm:QL}}
\label{app:QL-details}

In this section, we give the complete proofs of Lemmas~\ref{lem:left-boundary} and~\ref{lem:right-boundary}, and Theorem~\ref{prop:verify-QL}.

\subsection{Backgrounds}
\label{app:QL-cited-results}

We first list several existing results that we will use. The first two propositions are about an open chain split into two adjacent intervals $B$ and $B^c$, with uniformly bounded on-site and nearest-neighbor interactions. (We explain a straightforward extension to the periodic chain in Appendix~\ref{app:QL-boundary-estimates}.) Throughout, we fix $\beta,d$.

\begin{proposition}[Boundary locality of the Hamiltonian of mean force
{\cite[Proposition~1 and Appendix~A]{scalet2025classical}}]
\label{prop:QL-cited-HMF}
Let $H$ be an open-chain Hamiltonian satisfying the standing interaction assumptions, and let $B$ and $B^c$ be two nonempty adjacent intervals partitioning the chain. Denote by $\sigma_{\beta,B^c}$ the marginal of the full Gibbs state on $B^c$, and by $\tau_{B^c}$ the Gibbs state of the restricted Hamiltonian $H_{B^c}$.

There exist constants $c_{\mathrm{mfo}},A_{\mathrm{mfo}},a_{\mathrm{mfo}}>0$, depending only on $\beta$ and $d$, such that, for every integer $r\geq0$, there is a Hermitian operator $W_{B^c,r}\in \mathcal A_{B^c\cap N_{r+c_{\mathrm{mfo}}}(\partial B)}$ such that
\begin{equation}
\bigl\| -\log\sigma_{\beta,B^c} +\log\tau_{B^c} -W_{B^c,r}
\bigr\| \leq A_{\mathrm{mfo}}e^{-a_{\mathrm{mfo}}r}.
\label{eq:QL-cited-HMF}
\end{equation}
\end{proposition}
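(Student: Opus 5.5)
The plan is to prove the boundary locality of the Hamiltonian of mean force $-\log\sigma_{\beta,B^c}$ relative to $-\log\tau_{B^c}=\beta H_{B^c}+\mathrm{const}$ with a perturbative, cluster-type argument in the boundary interaction. Write $H=H_B+H_{B^c}+V$, where on the open chain $V$ is the single bond across the cut. Up to an additive constant,
\begin{align}
-\log\sigma_{\beta,B^c}+\log\tau_{B^c} = -\log\tr_B\bigl[e^{-\beta H}\bigr]+\beta H_{B^c} + c.
\end{align}
Multiplying out by $e^{-\beta H_{B^c}}$, this equals $-\log$ of the operator
\begin{align}
\Omega := e^{\beta H_{B^c}/2}\,\tr_B\bigl[e^{-\beta H}\bigr]\,e^{\beta H_{B^c}/2}\big/\tr\bigl[e^{-\beta H_B}\bigr]
\end{align}
up to the non-commutativity of $H_{B^c}$ with $\Omega$. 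The first step is therefore to handle this ordering issue. I would interpolate by introducing $H(\lambda)=H_B+H_{B^c}+\lambda V$ for $\lambda\in[0,1]$ and differentiate $\log\tr_B[e^{-\beta H(\lambda)}]$ in $\lambda$. At $\lambda=0$ the difference vanishes, since the Gibbs state factorizes. The derivative is expressed through the Fréchet derivative of $\log$ applied to $\tr_B[\int_0^\beta e^{-sH(\lambda)}Ve^{-(\beta-s)H(\lambda)}ds]$.

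The second step is to show that this derivative is quasi-local near $\partial B$, uniformly in $\lambda$. I would rewrite it as a conditional-expectation-type object: the derivative of $\log\sigma_{B^c}(\lambda)$ equals $\mathcal{D}\log$ evaluated at $\sigma_{B^c}$ applied to $\tr_B[\,\cdot\,]$. Using Araki's expansionals $E=e^{-\beta H(\lambda)/2}e^{\beta(H_B+H_{B^c})/2}$, which are uniformly bounded and quasi-local around $V$ in 1D by the Araki estimates cited from \cite{bluhm2022exponential}, one can write $\tr_B[e^{-\beta H}]$ as $e^{-\beta H_{B^c}/2}\,\tr_B[\tau_B^{1/2}E^\dagger E\tau_B^{1/2}]\,e^{-\beta H_{B^c}/2}$ times a constant. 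The middle factor $S$ is then well conditioned and quasi-local near the cut (an expansion in the Araki series), with $c_-\le S\le c_+$. The task thus reduces to showing that
\begin{align}
\log\bigl(e^{-\beta H_{B^c}/2}\,S\,e^{-\beta H_{B^c}/2}\bigr)+\beta H_{B^c}
\end{align}
is quasi-local whenever $S$ is well conditioned and quasi-local. Here I would use the integral representation $\log X=\int_0^\infty[(1+u)^{-1}-(X+u)^{-1}]du$, combined with the same preconditioning trick used for $T$ in Section 5. That trick expresses each resolvent as $D_u(D_uSD_u+F_u^2)^{-1}D_u$-type terms with uniformly bounded inverses, so the non-local functions of $H_{B^c}$ cancel against the subtracted $\beta H_{B^c}$.

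The main obstacle is exactly this last step. Functions of $H_{B^c}$ such as $D_u$ are not local, and cancelling them requires showing that $(I-\mathcal D_{N_r(\partial B)^c})$ applied to the combination decays exponentially uniformly in $u$. I would attack it via imaginary-time Lieb–Robinson and Araki locality in 1D, which give exponential locality of $e^{\beta H_{B^c}/2}Xe^{-\beta H_{B^c}/2}$ for $X$ local near the boundary, with a constant doubly exponential in $\beta$. Combining this with the uniform clustering of Proposition~\ref{prop:BC-clustering}, the contributions from far sites should cancel to exponential order. I would then set $W_{B^c,r}:=\mathcal D_{(B^c\cap N_{r+c_{\mathrm{mfo}}}(\partial B))^c}$ applied to the difference. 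The resulting error $A_{\mathrm{mfo}}e^{-a_{\mathrm{mfo}}r}$ follows by integrating the uniform-in-$\lambda$ derivative bound over $\lambda\in[0,1]$.
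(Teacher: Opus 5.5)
The paper gives no proof of this statement. It imports it from \cite{scalet2025classical} (Proposition~1 and Appendix~A) and adds only the ring-folding reduction of Appendix~\ref{app:QL-boundary-estimates}. Your proposal therefore has to stand on its own, and it does not: the reductions are sound, but the step that carries all the content is left unproven.

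\textbf{What is correct.} Your factorization $\tr_B e^{-\beta H}=Z_B\,e^{-K}Se^{-K}$ is right, with $K=\beta H_{B^c}/2$ and $S=\tr_B[(\tau_B\otimes I)E^\dagger E]$. This $S$ is well conditioned and quasi-local near the cut; it is the $S$ of Eq.~\eqref{eq:relative-marginal}. Your first display has a sign slip: since $\log\tau_{B^c}=-\beta H_{B^c}-\log Z_{B^c}$, it should read $-\beta H_{B^c}$.

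\textbf{The gap.} Your $\Omega$ is exactly $S$. So ``handling the ordering issue'' is not a preliminary step; it is the entire proposition. You must show that $\log(e^{-K}Se^{-K})+2K$ is quasi-local. The easily localized $\log S$ is not an approximation to it.

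The mechanism you sketch does not deliver this. Use the resolvent formula with the preconditioning of Lemma~\ref{lem:relative-inverse}, writing $D_u=(I+e^{2(K+u)})^{-1/2}$ and $F_u=(I+e^{-2(K+u)})^{-1/2}$. Then the $u$-integrand of $\log(e^{-K}Se^{-K})-\log e^{-2K}$ is $2F_uP_u^{-1}D_u(S-I)D_uF_u$.
\begin{itemize}
\item $D_u$ and $F_u$ are smoothed spectral cut-offs of the extensive operator $K$, and they sandwich a boundary operator. Products $f(K)Xg(K)$ with $g\neq f^{-1}$ are genuinely nonlocal, so no integrand is local at fixed $u$.
\item Imaginary-time Araki bounds control only $e^{sK}Xe^{-sK}$, and Lieb--Robinson bounds only real-time conjugations. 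Neither applies here, so any cancellation would have to occur after the $u$-integration, and you give no argument for it.
\item The preconditioning buys uniform conditioning, which is what block encoding needs, not spatial locality.
\item Proposition~\ref{prop:BC-clustering} is a scalar covariance bound and does not control operator-valued functional calculus of $H_{B^c}$.
\item The $\lambda$-interpolation has the same defect: the ``uniform-in-$\lambda$ derivative bound'' you integrate at the end is never derived.
\end{itemize}

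\textbf{What is actually needed.} The Lieb--Robinson-compatible form of that derivative shows the missing ingredient. With $\rho=\sigma_{B^c}(\lambda)$ one has exactly
\[
\frac{d}{d\lambda}\log\rho=\psi(\operatorname{ad}_{\log\rho})\bigl(\rho^{-1/2}\dot\rho\,\rho^{-1/2}\bigr),\qquad \psi(\omega)=\frac{\omega/2}{\sinh(\omega/2)},
\]
which is a real-time evolution under $\log\rho$ with an exponentially decaying filter. Moreover, $\rho^{-1/2}\dot\rho\,\rho^{-1/2}=\beta\tr[\sigma V]-\beta\,\Phi_B^\dagger(\bar V)$ on $B^c$, where $\bar V=\int_0^1\sigma^{s-1/2}V\sigma^{1/2-s}ds$ is quasi-local. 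Bounding this therefore requires two things: boundary locality of the Petz map, and Lieb--Robinson bounds for $\log\rho$ itself.

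In this paper both are consequences of the present proposition, via Lemma~\ref{lem:QL-boundary-estimates} and Theorem~\ref{prop:verify-QL}, so they cannot be assumed. The missing idea is an independent argument for them, for instance a bootstrap that closes in $\lambda$, or the argument of the cited appendix.
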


\begin{proposition}[Bounds for Araki expansionals and their partial traces~{\cite[Corollaries~3.4, 4.4]{bluhm2022exponential}}]
\label{prop:QL-cited-expansional}
Define
\begin{align}
\widehat E_B :=e^{-\beta H}e^{\beta(H_B+H_{B^c})}, \qquad  T_{B^c} :=\tr_B\bigl[(\tau_B\otimes I_{B^c})\widehat E_B\bigr].
\label{eq:QL-full-expansional}
\end{align}
There exists a constant $G\geq1$ such that $T_{B^c}$ is invertible and
\begin{align}
\max\bigl\{\|\widehat E_B\|,\|\widehat E_B^{-1}\|, \|T_{B^c}\|,\|T_{B^c}^{-1}\|\bigr\}\leq G.
\label{eq:QL-cited-expansional}
\end{align}
\end{proposition}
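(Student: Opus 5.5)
The plan is to get all four bounds from a single expansional estimate: the norm of $e^{-\beta(K+P)}e^{\beta K}$ is uniformly bounded whenever $P$ is a bounded perturbation localized, or exponentially quasi-localized, near $\partial B$. Write $H_0:=H_B+H_{B^c}$ and $V:=V_{\partial B}$, so that $H=H_0+V$ and $\|V\|\le 2$; on the periodic chain $V$ is two bounded terms at the two ends of $B$. First I would establish the 1D imaginary-time locality estimate
\begin{align}
\sup_{|s|\le\beta}\bigl\|e^{sH_0}Ve^{-sH_0}\bigr\|\le K_\beta,
\end{align}
uniformly in $n$ and $B$. To do this, expand $e^{sH_0}Ve^{-sH_0}=\sum_k \frac{s^k}{k!}\operatorname{ad}_{H_0}^k(V)$ and keep only nonvanishing nested commutators. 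In 1D, each step that enlarges the support must attach a term adjacent to the current interval. Organizing the sequences as histories growing outward from $\partial B$, and using $\|h\|\le1$, gives Araki's bound, which is finite for every fixed $\beta$ (super-exponential in $\beta$, but independent of $n$).

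With $E(s):=e^{-sH}e^{sH_0}$ we have $\frac{d}{ds}E(s)=-E(s)V_0(s)$ with $V_0(s):=e^{-sH_0}Ve^{sH_0}$ and $E(0)=I$. The Dyson series then gives
\begin{align}
\|\widehat E_B\|\le\exp\Bigl(\int_0^\beta\|V_0(s)\|\,ds\Bigr)\le e^{\beta K_\beta}.
\end{align}
The inverse is $\widehat E_B^{-1}=e^{-\beta H_0}e^{\beta H}$, which is the same construction with $\beta\to-\beta$. Its generator involves $e^{sH}Ve^{-sH}$, and the same history-counting argument bounds this, now applied to the full $H$. The bound on $T_{B^c}$ is immediate: $\|T_{B^c}\|\le\|\tau_B\|_1\|\widehat E_B\|\le G$, using that partial trace against a state is contractive on $B^c$-valued operators (as in Lemma~\ref{lem:complete-covariance}).

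For $T_{B^c}^{-1}$ I first rewrite $T_{B^c}$ using $e^{-\beta H}=\widehat E_B e^{-\beta H_0}$ and cyclicity of $\tr_B$ in the $B$ factor:
\begin{align}
Z_B\,T_{B^c}\,e^{-\beta H_{B^c}}=\tr_B\bigl[e^{-\beta H}\bigr],\qquad\text{so}\qquad T_{B^c}=\frac{Z}{Z_B}\,\sigma_{\beta,B^c}\,e^{\beta H_{B^c}}.
\end{align}
By Proposition~\ref{prop:QL-cited-HMF}, $\log\sigma_{\beta,B^c}=\log\tau_{B^c}-\Omega$, where the mean-force correction $\Omega$ is a bounded operator that is exponentially quasi-local near $\partial B$. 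Telescoping the approximants $W_{B^c,r}$ writes $\Omega$ as a sum of pieces supported on $N_{r+c}(\partial B)$ with norms $\lesssim e^{-ar}$. Hence
\begin{align}
T_{B^c}=c_B\,e^{-\beta H_{B^c}-\Omega}e^{\beta H_{B^c}},\qquad T_{B^c}^{-1}=c_B^{-1}e^{-\beta H_{B^c}}e^{\beta H_{B^c}+\Omega},
\end{align}
with a scalar $c_B$. Both factors are expansionals of a quasi-local boundary perturbation. Rerunning the history-counting bound with exponentially decaying, rather than strictly local, perturbation pieces bounds both norms uniformly; the exponential decay beats the growth factor of histories that start at radius $r$ for fixed $\beta$, possibly after using only the $r\gtrsim\beta$ tail. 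Finally, $|\log c_B|$ is bounded: from $\|T_{B^c}\|\le G$ we get $c_B\lesssim$ const, and the inverse-type estimate gives the reverse bound, since $\operatorname{tr}$ of a positive operator is comparable to its Dyson-perturbed version.

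I expect the invertibility of $T_{B^c}$ to be the main obstacle. $T_{B^c}$ is neither Hermitian nor close to the identity, so no perturbative Neumann series applies. The argument therefore hinges on representing it as an expansional of the mean-force Hamiltonian. The delicate point is that the Araki-type locality bound must be proved for a quasi-local perturbation $\Omega$, not a strictly local one. If the decay rate $a_{\mathrm{mfo}}$ is too weak relative to the Lieb--Robinson-type growth in imaginary time, I would first split $\Omega=W_{B^c,r_0}+(\Omega-W_{B^c,r_0})$. The local part is handled as above. The small remainder is treated by a Duhamel/Neumann perturbation with norm $\lesssim e^{\beta\|\,\cdot\,\|}A_{\mathrm{mfo}}e^{-a_{\mathrm{mfo}}r_0}$, choosing $r_0$ as a constant depending on $\beta$.
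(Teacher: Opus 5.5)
Your bounds on $\widehat E_B$, $\widehat E_B^{-1}$ and $T_{B^c}$ are fine: Gr\"onwall with $V_0(s)$ plus the 1D history count gives the first. For the inverse, $\frac{d}{ds}\bigl(e^{-sH_0}e^{sH}\bigr)=V_0(s)\,e^{-sH_0}e^{sH}$ gives the same bound directly, and contractivity of $X\mapsto\operatorname{tr}_B[(\tau_B\otimes I)X]$ gives the third. Your identity $T_{B^c}=\frac{Z}{Z_B}\sigma_{\beta,B^c}e^{\beta H_{B^c}}$ is correct, and $c_B=Z/(Z_BZ_{B^c})\in[e^{-2\beta},e^{2\beta}]$ simply because $\log\operatorname{tr}e^{-\beta H}$ is $\beta$-Lipschitz in $H$.

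The gap is $\|T_{B^c}^{-1}\|$, the only nontrivial part of the statement, and it has two problems.
\begin{itemize}
\item \textbf{No bound on $\|\Omega\|$.} Proposition~\ref{prop:QL-cited-HMF} only asserts that $\Omega$ can be approximated by operators near $\partial B$. It bounds neither $\|\Omega\|$ nor $\|W_{B^c,r}\|$, so your first telescoping piece $W_{B^c,0}$ is uncontrolled. In this paper, boundedness of the mean-force correction ($\|\Delta_B\|\le\log M$ in Lemma~\ref{lem:QL-boundary-estimates}) is deduced from the very bounds on $T_{B^c}^{\pm1}$ you are trying to prove, so it cannot be borrowed.
\item \textbf{Decay versus growth.} Even granting boundedness, controlling $e^{-(K+\Omega)}e^{K}$ and $e^{-K}e^{K+\Omega}$ with $K=\beta H_{B^c}$ requires $\sup_{0\le s\le\beta}\|e^{\mp sH_{B^c}}\Omega\, e^{\pm sH_{B^c}}\|<\infty$ uniformly in $n$. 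A piece of norm $e^{-a_{\mathrm{mfo}}r}$ supported on $\sim r$ sites can grow by $e^{c\beta r}$ under imaginary-time conjugation. This is sharp given only norm and support: for $H=\sum_i Z_i$ and $O=\prod_{i\le r}\sigma_i^{\pm}$ one gets exactly $e^{\mp 2sr}O$. Your series therefore converges only if $a_{\mathrm{mfo}}>c\beta$. Nothing in Proposition~\ref{prop:QL-cited-HMF} relates $a_{\mathrm{mfo}}$ to $\beta$, and the rate is expected to degrade at low temperature. Restricting to the tail $r\gtrsim\beta$ does not change a comparison of exponential rates.
\end{itemize}

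Your fallback does not repair this. The remainder $R=\Omega-W_{B^c,r_0}$ is small in norm but not local, and an expansional is not controlled by the norm of a nonlocal perturbation. Already for $K'=\mathrm{diag}(0,\lambda)$ and $R=\epsilon\sigma_x$ one finds $\|e^{-(K'+R)}e^{K'}\|\gtrsim(\epsilon/\lambda)(e^{\lambda}-1)$. With spectral spread $\lambda\sim\beta n$ this destroys uniformity in $n$. Any Duhamel or Neumann treatment of $R$ again needs its quasi-locality with decay beating $\beta$, which is the same obstruction. So your route does not give a uniformly bounded $T_{B^c}^{-1}$ at all temperatures.

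The paper does not prove this step from scratch either. It imports both the expansional bounds and the bound on the inverse of the Gibbs-weighted partial trace from \cite[Corollaries~3.4 and~4.4, Eq.~(14)]{bluhm2022exponential}, stated for an open chain cut into two adjacent intervals. It then reduces the ring to that setting by folding $B$ and $B^c$ into supersites of dimension at most $d^2$ (Appendix~\ref{app:QL-boundary-estimates}). To complete your argument you need that input, or a direct argument for $T_{B^c}^{-1}$ that avoids imaginary-time evolution of the mean-force correction.
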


The next estimate applies directly to the physical Hamiltonian $H$, on either an open chain or a periodic chain.

\begin{proposition}[Lieb-Robinson bound {\cite[Theorem~2.3]{nachtergaele2010lieb}}]
\label{prop:QL-cited-LR}
There exist constants $C_{\mathrm{LR}},b,v_0>0$ such that, for disjoint nonempty sets $X,Y$ and operators $\mathcal O_X\in\mathcal A_X$, $\mathcal O_Y\in\mathcal A_Y$,
\begin{align}
\bigl\|[e^{-it\beta H}O_Xe^{it\beta H},O_Y]\bigr\|
\leq C_{\mathrm{LR}}|X| \|O_X\| \|O_Y\|
e^{-b\operatorname{dist}(X,Y)+v_0|t|}
\label{eq:QL-cited-LR}
\end{align}
for all $t\in\mathbb R$. These constants are uniform in the chain length, in $X,Y$, and in the interaction profile.
\end{proposition}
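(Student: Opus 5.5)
The plan is the standard Nachtergaele--Sims iteration, applied to the rescaled Hamiltonian $\beta H$. Since $\beta$ is fixed and every term satisfies $\|\beta h^{(i)}_{i,i+1}\|\le\beta$, the interaction $\beta H$ is nearest-neighbor with uniformly bounded strength. All constants will depend only on $\beta$ through this bound, and on neither $n$ nor the particular $h^{(i)}$; this gives the claimed uniformity in the chain length and the interaction profile. Set $\tau_t(O):=e^{-it\beta H}Oe^{it\beta H}$. For $O_X\in\mathcal A_X$ and a fixed $O_Y$, let $f(t):=[\tau_t(O_X),O_Y]$.

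\textbf{Step 1 (differential inequality).} Write $\beta H=H_X+H_{\partial X}+H_{\mathrm{out}}$. Here $H_X$ collects the terms inside $X$, $H_{\partial X}$ the at most $2|X|$ terms $h_Z$ with $Z\cap X\neq\emptyset\neq Z\cap X^c$, and $H_{\mathrm{out}}$ the rest. Pass to the interaction picture with respect to $H_X+H_{\mathrm{out}}$, which leaves $O_X$ invariant up to a unitary rotation. Differentiating then gives
\begin{align}
\|f(t)\|\le\|f(0)\|+2\|O_X\|\sum_{Z\in\partial X}\int_0^{|t|}\bigl\|[\tau_s(\beta h_Z),O_Y]\bigr\|\,ds .
\end{align}
This uses the unitary invariance of the norm to absorb the interaction-picture unitaries. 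Define $C_Y(Z,t):=\sup_{O_Z\in\mathcal A_Z,\|O_Z\|\le1}\|[\tau_t(O_Z),O_Y]\|$. The display then becomes a linear integral inequality, $C_Y(X,t)\le C_Y(X,0)+2\beta\sum_{Z\in\partial X}\int_0^{|t|}C_Y(Z,s)\,ds$, with $C_Y(Z,0)\le 2\|O_Y\|\,\mathbf 1[Z\cap Y\ne\emptyset]$.

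\textbf{Step 2 (iteration).} Iterate this inequality. The $k$-th term is a sum over chains $X\to Z_1\to\cdots\to Z_k$ of successively overlapping two-site sets, with $Z_k$ meeting $Y$, and each chain carries the factor $(2\beta|t|)^k/k!$. In one dimension each $Z_j$ has at most three overlapping neighbors. Moreover, a chain of length $k$ can reach $Y$ only if $k\ge \operatorname{dist}(X,Y)-1$, and it must start at one of the at most $2|X|$ boundary terms of $X$. So the number of admissible chains of length $k$ is at most $2|X|\,3^k$. This yields
\begin{align}
\|f(t)\|\le 2\|O_X\|\|O_Y\|\cdot 2|X|\sum_{k\ge \operatorname{dist}(X,Y)-1}\frac{(6\beta|t|)^k}{k!}.
\end{align}

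\textbf{Step 3 (exponential form).} Bound the tail using $\sum_{k\ge D}x^k/k!\le e^{-bD}\sum_k (e^{b}x)^k/k!=e^{-bD}e^{e^{b}x}$ for any $b>0$. Taking, say, $b=1$ gives the stated bound with $C_{\mathrm{LR}}=4e$ and $v_0=6e\beta$, both independent of $n$, $X$, $Y$. On the periodic chain the same counting applies, using graph distance, since only the local degree enters.

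The main obstacle is Step 1: organizing the interaction-picture derivative so that only boundary terms of $X$ appear and the prefactor comes out as $|X|$, rather than as a volume factor in $n$. Everything after that is routine combinatorics of one-dimensional paths.
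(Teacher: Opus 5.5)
Your argument is correct and follows the same route as the paper's source: the paper gives no proof and imports Theorem~2.3 of Ref.~\cite{nachtergaele2010lieb}, which is exactly this iteration, with an $F$-function convolution on general graphs where you use a direct one-dimensional path count (giving the explicit constants $C_{\mathrm{LR}}=4e$, $b=1$, $v_0=6e\beta$). The one step you leave implicit is in Step~1: with $A(s):=e^{isH_X}O_Xe^{-isH_X}\in\mathcal A_X$ and $g(s):=[\tau_s(A(s)),O_Y]$, ``differentiating'' requires the Jacobi identity to write each boundary contribution as $[\tau_s(\beta h_Z),g(s)]-[\tau_s(A(s)),[\tau_s(\beta h_Z),O_Y]]$, where the first piece generates a norm-preserving flow and only the second is the source term in your integral inequality.
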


\begin{proposition}[Hadamard's three-lines theorem
{\cite{rudin1987real}}]
\label{prop:QL-cited-three-lines}
Let $f$ be bounded and continuous on the closed strip $0\leq\operatorname{Re}z\leq1$, and analytic in its interior.
If $m_0,m_1>0$ satisfy
\begin{align}
\sup_{t\in\mathbb R}|f(it)|\leq m_0, \qquad \sup_{t\in\mathbb R}|f(1+it)|\leq m_1,
\end{align}
then, for $0<\theta<1$,
\begin{align}
\sup_{t\in\mathbb R}|f(\theta+it)| \leq m_0^{1-\theta}m_1^\theta.
\label{eq:QL-cited-three-lines}
\end{align}
\end{proposition}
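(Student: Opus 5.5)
The plan is the classical Phragmén--Lindelöf-type argument: normalize $f$ so that the two boundary bounds both become $1$, damp it so that the maximum principle can be applied on a bounded rectangle, and then remove the damping.

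\textbf{Normalization.} Since $m_0,m_1>0$, I define the entire function $\phi(z):=m_0^{z-1}m_1^{-z}=e^{(z-1)\log m_0-z\log m_1}$. For $z=x+iy$,
\begin{align}
|\phi(x+iy)|=m_0^{x-1}m_1^{-x},
\end{align}
which depends only on $x$. It is therefore bounded above and below on the closed strip $0\le x\le1$ and never vanishes. Set $g(z):=f(z)\phi(z)$. Then $g$ is bounded and continuous on the closed strip and analytic in its interior. On the boundary lines it satisfies $|g(it)|\le m_0\cdot m_0^{-1}=1$ and $|g(1+it)|\le m_1\cdot m_1^{-1}=1$. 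It then suffices to show $|g|\le1$ on the whole strip. Indeed, this gives
\begin{align}
|f(\theta+it)|\le |\phi(\theta+it)|^{-1}=m_0^{1-\theta}m_1^{\theta},
\end{align}
which is Eq.~\eqref{eq:QL-cited-three-lines}.

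\textbf{Damping and maximum principle.} The main obstacle is that the strip is unbounded, so the maximum modulus principle does not apply directly; a priori $|g|$ could approach its supremum only as $|y|\to\infty$. To handle this I introduce, for $\varepsilon>0$,
\begin{align}
g_\varepsilon(z):=g(z)e^{\varepsilon z^2}, \qquad |e^{\varepsilon z^2}|=e^{\varepsilon(x^2-y^2)}\le e^{\varepsilon}e^{-\varepsilon y^2}.
\end{align}
This is where boundedness of $f$ (hence of $g$, say $|g|\le K$) is used. It gives $|g_\varepsilon(x+iy)|\le Ke^{\varepsilon}e^{-\varepsilon y^2}$, so $|g_\varepsilon|\le e^{\varepsilon}$ once $|y|\ge Y_\varepsilon$ for some $Y_\varepsilon$ large enough. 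On the vertical lines $x=0,1$ we also have $|g_\varepsilon|\le e^{\varepsilon}$, because $|g|\le1$ there.

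\textbf{Conclusion.} Apply the maximum modulus principle to the continuous function $g_\varepsilon$, analytic in the interior, on the rectangle $[0,1]\times[-Y,Y]$ with $Y\ge Y_\varepsilon$. All four sides satisfy $|g_\varepsilon|\le e^{\varepsilon}$, so $|g_\varepsilon|\le e^{\varepsilon}$ inside the rectangle. Since $Y$ is arbitrary, the bound holds on the whole strip. Hence, for every fixed $z=x+iy$,
\begin{align}
|g(z)|\le e^{\varepsilon}e^{-\varepsilon(x^2-y^2)}.
\end{align}
Letting $\varepsilon\to0$ gives $|g(z)|\le1$, which by the normalization step completes the proof.
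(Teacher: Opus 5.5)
Your proof is correct and follows the standard route: the paper gives no proof of its own and simply cites Rudin, where the result is proved by this same normalize--damp--maximum-modulus scheme. The only difference is the damping factor: Rudin uses $1/(1+\varepsilon z)$, while your Gaussian $e^{\varepsilon z^2}$ (the device the paper itself uses in Appendix~\ref{app:QL-interpolation} to absorb the $e^{v|t|}$ growth) would still work if boundedness of $f$ were weakened to growth like $e^{A|\operatorname{Im} z|}$.
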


\subsection{Auxiliary boundary estimates}
\label{app:QL-boundary-estimates}

We now return to the periodic chain of the main text. 
For $\beta=0$, we have $F_B(z)=I$ and $\Delta_B=0$, so all the assertions are immediate. We assume $\beta>0$ below.

\paragraph{Application of the open-chain results.} For a fixed block $B$, fold the ring as illustrated in Fig.~\ref{fig:ring-folding}: pair sites from the endpoints of $B$ and $B^c$ toward their respective centers, and order the resulting supersites from the center of $B$ to that of $B^c$. This gives an open chain with local dimension at most $d^2$ and uniformly bounded on-site and nearest-neighbor interactions. Regrouping preserves operator norms, Gibbs states, and the partial trace over $B$, while folded boundary neighborhoods unfold to neighborhoods of $\partial B$ with a fixed support offset. Thus Propositions~\ref{prop:QL-cited-HMF} and~\ref{prop:QL-cited-expansional} apply after adjusting the constants. Unpaired center sites may be padded with inert spins, whose removal by normalized partial trace preserves the operator-norm bounds.

\begin{figure}
    \centering
    \includegraphics[width=0.9\linewidth]{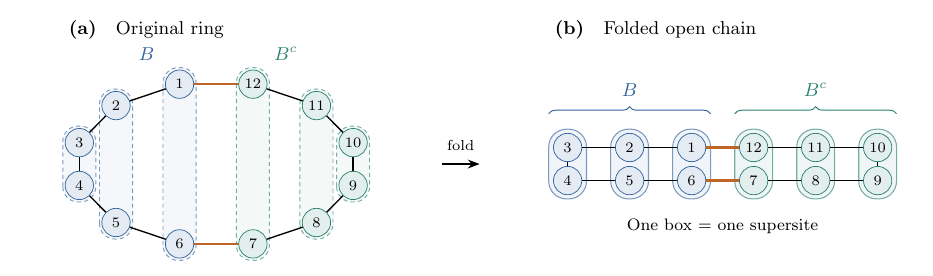}
    \caption{Folding a ring into an open chain. (a) Sites in $B$ and $B^c$ are paired from their respective endpoints toward their centers. (b) Each pair forms a supersite. All original interactions become on-site or nearest-neighbor interactions. The highlighted bonds cross the boundary. } 
    \label{fig:ring-folding}
\end{figure}
Choose a fixed integer $c_0$ large enough to contain the support of $V_{\partial B}$ and the offset in Proposition~\ref{prop:QL-cited-HMF} after this regrouping. Recall that, for $S_r:=N_{r+c_0}(\partial B)$,
\begin{align}
\mathcal D_{S_r^c}(X):=\tr_{S_r^c}(X)\otimes
\frac{I_{S_r^c}}{d^{|S_r^c|}},
\label{eq:QL-appendix-conditional-expectation}
\end{align}
with the tensor factors placed in their original site order.
Equivalently,
\begin{align}
\mathcal D_{S_r^c}(X)=\int_{\mathcal U(S_r^c)}UXU^\dagger dU,
\label{eq:QL-unitary-average}
\end{align}
where $dU$ is normalized Haar measure and each $U$ acts as the identity on $S_r$. This formula shows that $\mathcal D_{S_r^c}$ is contractive and preserves adjoints. It fixes every operator supported on $S_r$. In particular, $\mathcal D_{S_r^c}(I)=I$ and $\|I-\mathcal D_{S_r^c}\|_{\infty \to \infty}\leq2$.

Given this, we first prove the following result, which shows that $\Delta_B$, which is the difference between the effective Hamiltonians of $\sigma$ and $\tau_B \otimes \sigma_{\beta, B^c}$ is exponentially concentrated near $\partial B$.  

\begin{lemma}[Boundary estimates for the 1D Gibbs state]
\label{lem:QL-boundary-estimates}
There exist constants $A,a>0$ and $M\geq1$ such that
\begin{align}
\|\Delta_B-\mathcal D_{S_r^c}\left(\Delta_B\right)\|&\leq Ae^{-ar}, \qquad \|\Delta_B\|\leq\log M,
\label{eq:QL-logarithmic-locality}\\
\bigl\|\sigma_\beta (\tau_B^{-1}\otimes\sigma_{\beta,B^c}^{-1})\bigr\| & \leq M, \qquad \bigl\|(\tau_B\otimes\sigma_{\beta,B^c}) \sigma_\beta^{-1}\bigr\|\leq M.
\label{eq:QL-relative-norm-bounds}
\end{align}
\end{lemma}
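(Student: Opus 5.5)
The plan is to decompose $\Delta_B$ into a sum of effective-Hamiltonian differences, each controlled by one of the cited open-chain results, and to treat the relative-norm bounds separately via Araki expansionals. Writing $\log\sigma_\beta=-\beta H-\log Z$, $\log\tau_B=-\beta H_B-\log Z_B$, and using that $\tau_B$ and $\sigma_{\beta,B^c}$ act on disjoint factors, I get
\begin{align}
\Delta_B=-\beta V_{\partial B}+\beta H_{B^c}+\log\sigma_{\beta,B^c}+c_B I,
\end{align}
where $c_B=\log Z_B-\log Z$ is a scalar. Since $-\beta H_{B^c}=\log\tau_{B^c}+\log Z_{B^c}$, this becomes
\begin{align}
\Delta_B=-\beta V_{\partial B}-\bigl(-\log\sigma_{\beta,B^c}+\log\tau_{B^c}\bigr)+c'_B I.
\end{align}
The operator $V_{\partial B}$ is supported on $N_1(\partial B)\subseteq S_r$, so it is fixed by $\mathcal D_{S_r^c}$, and the scalar term is fixed as well. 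The middle term is the Hamiltonian of mean force. By Proposition~\ref{prop:QL-cited-HMF}, applied after the folding described above, it equals $W_{B^c,r}+O(A_{\mathrm{mfo}}e^{-a_{\mathrm{mfo}}r})$ with $W_{B^c,r}$ supported in $S_r$ once $c_0$ is chosen large enough. Since $\|I-\mathcal D_{S_r^c}\|\le2$ and $\mathcal D_{S_r^c}(W_{B^c,r})=W_{B^c,r}$, this gives $\|\Delta_B-\mathcal D_{S_r^c}(\Delta_B)\|\le 2A_{\mathrm{mfo}}e^{-a_{\mathrm{mfo}}r}$, which is the first part of Eq.~\eqref{eq:QL-logarithmic-locality}.

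For the relative-norm bounds in Eq.~\eqref{eq:QL-relative-norm-bounds}, I will write
\begin{align}
\sigma_\beta=\frac{e^{-\beta H}}{Z}=\widehat E_B\,\frac{Z_BZ_{B^c}}{Z}\,(\tau_B\otimes\tau_{B^c}).
\end{align}
Taking the partial trace over $B$ gives $\sigma_{\beta,B^c}=\frac{Z_BZ_{B^c}}{Z}\,T_{B^c}\tau_{B^c}$, since $\tr_B[\widehat E_B(\tau_B\otimes\tau_{B^c})]=\tr_B[(\tau_B\otimes I)\widehat E_B]\,\tau_{B^c}$ by cyclicity of the partial trace in $B$ for $\tau_B$. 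Because $\tau_B$ commutes with itself, this yields
\begin{align}
\sigma_\beta(\tau_B^{-1}\otimes\sigma_{\beta,B^c}^{-1})=\widehat E_B\,(I_B\otimes T_{B^c}^{-1}).
\end{align}
By Proposition~\ref{prop:QL-cited-expansional}, its norm is at most $G^2$. The reverse product $(\tau_B\otimes\sigma_{\beta,B^c})\sigma_\beta^{-1}=(I_B\otimes T_{B^c})\widehat E_B^{-1}$ is likewise bounded by $G^2$, so I can take $M\ge G^2$.

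It remains to bound $\|\Delta_B\|$, and this is the delicate step, because the logarithm is not operator monotone in the needed direction for noncommuting operators. One available handle is that the same reasoning gives $\sigma_\beta=\widehat E_B(I_B\otimes T_{B^c}^{-1})(\tau_B\otimes\sigma_{\beta,B^c})$, with both factors bounded together with their inverses. That alone does not bound the difference of logarithms, so I will instead bound $\Delta_B$ directly from the decomposition above. We have $\|\beta V_{\partial B}\|\le2\beta$. The mean-force term satisfies $\|-\log\sigma_{\beta,B^c}+\log\tau_{B^c}\|\le\|W_{B^c,0}\|+A_{\mathrm{mfo}}$, and $W_{B^c,0}$ is uniformly bounded; if needed, this also follows from the uniform bound in the Scalet et al.\ result. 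For the scalar $c'_B$, I use that both $\sigma_\beta$ and $\tau_B\otimes\sigma_{\beta,B^c}$ have unit trace. Writing $\tau_B\otimes\sigma_{\beta,B^c}=\exp(\log\sigma_\beta-\Delta_B)$ with $\Delta_B=K+c'_B I$ and $\|K\|\le k$, the Golden--Thompson-type monotonicity $\tr e^{X+Y}\le\tr e^{X}e^{\|Y\|}$ gives $|c'_B|\le k$. Hence $\|\Delta_B\|\le2k$, and I enlarge $M$ so that $\log M\ge 2k$. The main obstacle is whether Proposition~\ref{prop:QL-cited-HMF} supplies a uniform bound on $\|W_{B^c,r}\|$. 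If it does not, I would derive one by telescoping $W_{B^c,r}-W_{B^c,r-1}$, each of size $O(e^{-ar})$, which gives $\|W_{B^c,r}\|\le\|W_{B^c,0}\|+O(1)$, where $W_{B^c,0}$ has constant support and comes from a construction with bounded norm.
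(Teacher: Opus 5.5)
The gap is in your proof of $\|\Delta_B\|\le\log M$. Your locality estimate and your two relative-norm bounds follow the paper's proof exactly. You use the same splitting of $\Delta_B$ into $-\beta V_{\partial B}$, the mean-force term and a scalar, and the same factorization $\sigma_\beta(\tau_B^{-1}\otimes\sigma_{\beta,B^c}^{-1})=\widehat E_B(I_B\otimes T_{B^c}^{-1})$ and its inverse, giving $M=G^2$. There is a harmless sign slip in your decomposition: you wrote $+\beta H_{B^c}+\log\sigma_{\beta,B^c}$ where it should be $-\beta H_{B^c}-\log\sigma_{\beta,B^c}$. So $\Delta_B=-\beta V_{\partial B}+I_B\otimes(-\log\sigma_{\beta,B^c}+\log\tau_{B^c})+c'_BI$, as in the paper. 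Only norms enter, so nothing changes.

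Your route to $\|\Delta_B\|\le\log M$ needs a uniform bound on $\|-\log\sigma_{\beta,B^c}+\log\tau_{B^c}\|$, hence on some $\|W_{B^c,r}\|$. Proposition~\ref{prop:QL-cited-HMF} as stated controls only the support of $W_{B^c,r}$ and its distance to the mean-force operator, not its norm. Your fallback does not supply this. Telescoping $W_{B^c,r}-W_{B^c,r-1}$ only reduces the question to $\|W_{B^c,0}\|$, and bounded support says nothing about an operator's norm. The claim that $W_{B^c,0}$ ``comes from a construction with bounded norm'' is an unverified assertion about the cited paper.

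Your reason for avoiding the direct route is also mistaken. The logarithm \emph{is} operator monotone on positive definite operators (L\"owner--Heinz), with no commutativity needed. The implication $A\le B\Rightarrow\log A\le\log B$ is exactly the direction required. The factorization you already derived then suffices, which is what the paper does. Set $\rho_0:=\tau_B\otimes\sigma_{\beta,B^c}$. The positive operator $\rho_0^{-1/2}\sigma_\beta\rho_0^{-1/2}=\rho_0^{1/2}(\sigma_\beta\rho_0^{-1})\rho_0^{-1/2}$ is similar to $\sigma_\beta\rho_0^{-1}$. The bounds $\|\sigma_\beta\rho_0^{-1}\|\le M$ and $\|\rho_0\sigma_\beta^{-1}\|\le M$ place its spectrum in $[M^{-1},M]$. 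Hence $M^{-1}\rho_0\le\sigma_\beta\le M\rho_0$, and operator monotonicity of $\log$ gives $-(\log M)I\le\Delta_B\le(\log M)I$ directly. This uses the same $M$ and needs no separate treatment of the scalar $c'_B$, so your trace-monotonicity step becomes unnecessary.
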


\begin{proof}
We first prove Eq.~\eqref{eq:QL-relative-norm-bounds}. Let $G$ be the constant in Proposition~\ref{prop:QL-cited-expansional}. By partial-trace cyclicity for factors supported on $B$,
\begin{align}
T_{B^c} =\frac{1}{Z_B}
  \tr_B \left[
  e^{-\beta H_B}e^{-\beta H}e^{\beta H_B}
  \right]e^{\beta H_{B^c}}
=\frac{1}{Z_B}
  \bigl(\tr_B e^{-\beta H}\bigr)e^{\beta H_{B^c}}
=\frac{Z}{Z_BZ_{B^c}}
  \sigma_{\beta,B^c}\tau_{B^c}^{-1}.
\label{eq:QL-partial-trace-identity}
\end{align}
Together with
\begin{align}
\widehat E_B =e^{-\beta H}e^{\beta(H_B+H_{B^c})}=\frac{Z}{Z_BZ_{B^c}} \sigma_\beta(\tau_B^{-1}\otimes\tau_{B^c}^{-1}),
\end{align}
this gives
\begin{align}
\sigma_\beta (\tau_B^{-1}\otimes\sigma_{\beta,B^c}^{-1})
&=\widehat E_B(I_B\otimes T_{B^c}^{-1}),
\nonumber\\
(\tau_B\otimes\sigma_{\beta,B^c})\sigma_\beta^{-1}
&=(I_B\otimes T_{B^c})\widehat E_B^{-1}.
\label{eq:QL-relative-factorization}
\end{align}
Hence Proposition~\ref{prop:QL-cited-expansional} implies
\begin{align}
\bigl\|\sigma_\beta
(\tau_B^{-1}\otimes\sigma_{\beta,B^c}^{-1})\bigr\| &\leq \|\widehat E_B\| \|T_{B^c}^{-1}\| \leq G^2, \nonumber \\
\bigl\|(\tau_B\otimes\sigma_{\beta,B^c})
\sigma_\beta^{-1}\bigr\| &\leq \|T_{B^c}\| \|\widehat E_B^{-1}\| \leq G^2.
\end{align}
This proves Eq.~\eqref{eq:QL-relative-norm-bounds} with $M:=G^2$.

We next bound $\|\Delta_B\|$. The positive operator
\begin{align}
(\tau_B^{-1/2}\otimes\sigma_{\beta,B^c}^{-1/2})
\sigma_\beta
(\tau_B^{-1/2}\otimes\sigma_{\beta,B^c}^{-1/2})
\end{align}
is similar to
$\sigma_\beta(\tau_B^{-1}\otimes\sigma_{\beta,B^c}^{-1})$. The bounds on this relative operator and its inverse place the spectrum of the positive operator in $[M^{-1},M]$. Consequently,
\begin{align}
M^{-1}(\tau_B\otimes\sigma_{\beta,B^c}) \leq \sigma_\beta
\leq M(\tau_B\otimes\sigma_{\beta,B^c}).
\label{eq:QL-state-comparison}
\end{align}
By operator monotonicity of the logarithm,
\begin{align}
\log(\tau_B\otimes\sigma_{\beta,B^c})-(\log M)I & \leq \log\sigma_\beta \leq \log(\tau_B\otimes\sigma_{\beta,B^c})+(\log M)I, \nonumber \\
-(\log M)I &\leq \Delta_B \leq (\log M)I,
\end{align}
and therefore $\|\Delta_B\|\leq\log M$.

It remains to show that $\Delta_B$ is exponentially localized near $\partial B$. Using $H=H_B+H_{B^c}+V_{\partial B}$, 
\begin{align}
\Delta_B &= -\beta H+\beta H_B -I_B\otimes\log\sigma_{\beta,B^c} +\log \left(\frac{Z_B}{Z}\right)I \nonumber\\
&=I_B\otimes  \bigl(-\log\sigma_{\beta,B^c}+\log\tau_{B^c}\bigr) -\beta V_{\partial B} +\log \left(\frac{Z_BZ_{B^c}}{Z}\right)I.
\label{eq:QL-logarithmic-decomposition}
\end{align}
Let $W_{B^c,r}$ be the approximant from Proposition~\ref{prop:QL-cited-HMF}. By the choice of $c_0$, the map $\mathcal D_{S_r^c}$ fixes $I$, $V_{\partial B}$, and $I_B\otimes W_{B^c,r}$. Thus
\begin{align}
\|\Delta_B-\mathcal D_{S_r^c}\left(\Delta_B\right)\| &=\bigl\|(I-\mathcal D_{S_r^c})
  \bigl[I_B\otimes
  (-\log\sigma_{\beta,B^c}+\log\tau_{B^c})\bigr]\bigr\| \nonumber \\
&=\bigl\|(I-\mathcal D_{S_r^c}) \bigl[I_B\otimes (-\log\sigma_{\beta,B^c}+\log\tau_{B^c}  -W_{B^c,r})\bigr]\bigr\| \nonumber\\
&\leq 2\bigl\|-\log\sigma_{\beta,B^c}+\log\tau_{B^c} -W_{B^c,r}\bigr\| \nonumber \\
&\leq 2A_{\mathrm{mfo}}e^{-a_{\mathrm{mfo}}r} =Ae^{-ar},
\end{align}
where $A:=2A_{\mathrm{mfo}}$ and $a:=a_{\mathrm{mfo}}$.
\end{proof}

\subsection{Proof of Lemma~\ref{lem:left-boundary}}
\label{app:left-boundary}

\begin{proof}
Differentiating Eq.~\eqref{eq:QL-analytic-family} and using $\sigma_\beta^{it}=Z^{-it}e^{-it\beta H}$, we obtain
\begin{align}
\frac{d}{dt}F_B(it) &=i\sigma_\beta^{it}\Delta_B (\tau_B^{-it}\otimes\sigma_{\beta,B^c}^{-it})
\nonumber\\
&=i\bigl(e^{-it\beta H}\Delta_Be^{it\beta H}\bigr)F_B(it)
\nonumber\\
&=:iY_B(t)F_B(it), \qquad F_B(0)=I.
\label{eq:QL-relative-evolution}
\end{align}

Fix $r\geq0$ and put $q:=\lfloor r/2\rfloor$. By Lemma~\ref{lem:QL-boundary-estimates} and contractivity of $\mathcal D_{S_q^c}$,
\begin{align}
\|\Delta_B-\mathcal D_{S_q^c}(\Delta_B)\| \leq Ae^{-aq},\qquad 
\|\mathcal D_{S_q^c}(\Delta_B)\| \leq\|\Delta_B\|\leq\log M.
\label{eq:QL-initial-generator-tail}
\end{align}
If $S_r$ is the whole chain, the claim is immediate. Otherwise,
\begin{align}
\operatorname{dist}(S_q,S_r^c)\geq r-q,
\qquad
|S_q|\leq c_{\partial}(q+1)
\end{align}
for a fixed constant $c_{\partial}$. Using the unitary-average representation in Eq.~\eqref{eq:QL-unitary-average} and Proposition~\ref{prop:QL-cited-LR}, we obtain
\begin{align}
\|Y_B(t)-\mathcal D_{S_r^c}\left(Y_B(t)\right)\|
&\leq 2\|\Delta_B-\mathcal D_{S_q^c}(\Delta_B)\| +\bigl\|(I-\mathcal D_{S_r^c})
\bigl[e^{-it\beta H}(\mathcal D_{S_q^c}(\Delta_B))e^{it\beta H}\bigr]\bigr\| \nonumber\\
&\leq 2Ae^{-aq} +\sup_{U\in\mathcal U(S_r^c)}
 \bigl\|[e^{-it\beta H}(\mathcal D_{S_q^c}(\Delta_B))e^{it\beta H},U]\bigr\| \nonumber\\
&\leq 2Ae^{-aq} +C_{\mathrm{LR}}|S_q| \|\mathcal D_{S_q^c}(\Delta_B)\|
 e^{-b\operatorname{dist}(S_q,S_r^c)+v_0|t|} \nonumber\\
&\leq 2Ae^{-aq} +C_{\mathrm{LR}}c_{\partial}(q+1)(\log M)
e^{-b(r-q)+v_0|t|} \nonumber\\
&\leq C_1e^{-\nu r+v_0|t|}.
\label{eq:QL-generator-tail}
\end{align}
Here we choose any fixed $0<\nu<\frac12\min\{a,b\}$. The last inequality follows from $q\geq r/2-1$ and $r-q\geq r/2$, with the factor $q+1$ absorbed by the smaller exponential rate $\nu$.

Let $U_{B,r}(t)$ solve
\begin{align}
\frac{d}{dt}U_{B,r}(t) =i\mathcal D_{S_r^c}\left(Y_B(t)\right)U_{B,r}(t), \qquad U_{B,r}(0)=I.
\label{eq:QL-truncated-evolution}
\end{align}
Its generator is Hermitian and supported on $S_r$, so $U_{B,r}(t)$ is unitary and supported on $S_r$. Moreover,
\begin{align}
\frac{d}{dt} \bigl(U_{B,r}(t)^\dagger F_B(it)\bigr) = iU_{B,r}(t)^\dagger \bigl(Y_B(t)-\mathcal D_{S_r^c}\left(Y_B(t)\right)\bigr)F_B(it).
\end{align}
Integrating and using unitarity therefore gives
\begin{align}
\|F_B(it)-U_{B,r}(t)\| &\leq \int_{\min\{0,t\}}^{\max\{0,t\}}
\|Y_B(s)-\mathcal D_{S_r^c}Y_B(s)\| ds \nonumber\\
&\leq C_1e^{-\nu r} \int_{0}^{|t|}e^{v_0s} ds \nonumber\\
&\leq C_1|t|e^{-\nu r+v_0|t|} \nonumber\\
&\leq C_2e^{-\nu r+v|t|},
\label{eq:QL-truncated-evolution-error}
\end{align}
where $v>v_0$ is fixed and $C_2$ is chosen accordingly.

Finally, since $\mathcal D_{S_r^c}(U_{B,r}(t))=U_{B,r}(t)$,
\begin{align}
\|F_B(it)-\mathcal D_{S_r^c}\left(F_B(it)\right)\| &= \bigl\|(I-\mathcal D_{S_r^c})
  \bigl[F_B(it)-U_{B,r}(t)\bigr]\bigr\| \nonumber \\
&\leq 2\|F_B(it)-U_{B,r}(t)\| \nonumber\\
&\leq 2C_2e^{-\nu r+v|t|}.
\end{align}
This proves the lemma with $C:=2C_2$.
\end{proof}

\subsection{Proof of Lemma~\ref{lem:right-boundary}}
\label{app:right-boundary}

\begin{proof}
The powers of each fixed positive operator commute, so
\begin{align}
F_B(1+it) = \sigma_\beta^{it}
\Bigl[\sigma_\beta(\tau_B^{-1}\otimes\sigma_{\beta,B^c}^{-1})\Bigr]
(\tau_B^{-it}\otimes\sigma_{\beta,B^c}^{-it}).
\label{eq:QL-right-factorization}
\end{align}
The outer factors are unitary. Hence Lemma~\ref{lem:QL-boundary-estimates} gives
\begin{align}
\|F_B(1+it)\| = \bigl\|\sigma_\beta(\tau_B^{-1}\otimes\sigma_{\beta,B^c}^{-1})\bigr\|
\leq M.
\end{align}
\end{proof}

\subsection{Proof of Theorem~\ref{prop:verify-QL}}
\label{app:QL-interpolation}

\begin{proof}
For $\beta=0$, we have $Q_B=I$ and may take
$\widetilde Q_{B,r}=I$. So assume $\beta>0$.

Strict positivity of the states implies that $F_B(z)$ is entire.
For $0\leq x\leq1$, unitary invariance of the norm gives
\begin{align}
\|F_B(x+it)\|
&=\bigl\|\sigma_\beta^{it}
  \bigl[\sigma_\beta^x
  (\tau_B^{-x}\otimes\sigma_{\beta,B^c}^{-x})\bigr]
  (\tau_B^{-it}\otimes\sigma_{\beta,B^c}^{-it})\bigr\|
\nonumber\\
&=\bigl\|\sigma_\beta^x
  (\tau_B^{-x}\otimes\sigma_{\beta,B^c}^{-x})\bigr\|.
\label{eq:QL-strip-boundedness}
\end{align}
For each finite chain, the last expression is finite and continuous
in $x\in[0,1]$. Thus $F_B$ is bounded on the closed strip,
as required by Proposition~\ref{prop:QL-cited-three-lines}.

Fix $r\geq0$ and define
\begin{align}
G_{B,r}(z):=e^{z^2}(I-\mathcal D_{S_r^c})F_B(z).
\label{eq:QL-damped-family}
\end{align}
This family is also analytic and bounded on the strip, since
$\mathcal D_{S_r^c}$ is a fixed bounded linear map and
$|e^{(x+it)^2}|=e^{x^2-t^2}\leq e$ there.
By Lemma~\ref{lem:left-boundary},
\begin{align}
\sup_{t\in\mathbb R}\|G_{B,r}(it)\| \leq Ce^{-\nu r} \sup_{t\in\mathbb R}e^{-t^2+v|t|} =Ce^{v^2/4}e^{-\nu r}.
\label{eq:QL-damped-left}
\end{align}
Similarly, Lemma~\ref{lem:right-boundary} and $\|I-\mathcal D_{S_r^c}\|\leq2$ give
\begin{align}
\sup_{t\in\mathbb R}\|G_{B,r}(1+it)\| \leq 2M\sup_{t\in\mathbb R}e^{1-t^2} =2eM.
\label{eq:QL-damped-right}
\end{align}

Apply Proposition~\ref{prop:QL-cited-three-lines} to $z\mapsto\langle\xi,G_{B,r}(z)\eta\rangle$ for arbitrary unit vectors $\xi,\eta$. Taking the supremum over these vectors and using $F_B(1/2)=Q_B$, we obtain
\begin{align}
\|Q_B-\mathcal D_{S_r^c}\left(Q_B\right)\| &=e^{-1/4}\|G_{B,r}(1/2)\| \nonumber\\
&\leq e^{-1/4} \left(\sup_{t\in\mathbb R}\|G_{B,r}(it)\|\right)^{1/2}
  \left(\sup_{t\in\mathbb R}\|G_{B,r}(1+it)\|\right)^{1/2} \nonumber\\
&\leq e^{-1/4} \bigl(Ce^{v^2/4}e^{-\nu r}\bigr)^{1/2}(2eM)^{1/2} \nonumber\\
&=\sqrt{2MC} e^{v^2/8+1/4}e^{-\nu r/2}.
\label{eq:QL-interpolated-locality}
\end{align}
Applying Proposition~\ref{prop:QL-cited-three-lines} directly to the matrix elements of $F_B$, and using $\|F_B(it)\|=1$, also gives
\begin{align}
\|Q_B\| =\|F_B(1/2)\| \leq \left(\sup_{t\in\mathbb R}\|F_B(it)\|\right)^{1/2} \left(\sup_{t\in\mathbb R}\|F_B(1+it)\|\right)^{1/2} \leq M^{1/2}.
\label{eq:QL-interpolated-norm}
\end{align}

Set
\begin{align}
\widetilde Q_{B,r}:=\mathcal D_{S_r^c}\left(Q_B\right)
\in\mathcal A_{N_{r+c_0}(\partial B)}.
\label{eq:QL-interpolated-approximant}
\end{align}
By contractivity of $\mathcal D_{S_r^c}$,
\begin{align}
\|Q_B\|+\|\widetilde Q_{B,r}\| = \|Q_B\|+\|\mathcal D_{S_r^c}\left(Q_B\right)\| \leq 2\|Q_B\| \leq 2M^{1/2}.
\end{align}
Together with Eq.~\eqref{eq:QL-interpolated-locality}, this proves Assumption~\ref{asm:QL} with $\mu_Q:={\nu}/{2},C_Q:=\sqrt{2MC} e^{v^2/8+1/4},M_Q:=2M^{1/2}$.  All these constants, as well as $c_0$, depend only on $\beta$ and $d$.
\end{proof}

\section{Finite-window approximation of block updates}
\label{app:finite-window}

In this section, we prove Lemma~\ref{lem:finite-window-approximation}. Throughout this appendix, $\beta,d,\ell$ are fixed, and constants are independent of the system size. For a Hamiltonian $J$, let $\Delta_B(J)$ denote the logarithmic correction computed from $J$. The boundary estimates of Lemma~\ref{lem:QL-boundary-estimates} also apply after deleting interaction terms, since this preserves the interaction bound.

\subsection{Stability under a distant cut}

\begin{lemma}[Stability under a distant cut]
\label{lem:FW-distant-cut}
Let $W=N_r(B)$ and define $H^{(0)}:=H_W+H_{W^c}$. There exist constants $C,a>0$ such that, for every integer $r\geq0$,
\begin{align}
\|\Delta_B(H)-\Delta_B(H^{(0)})\| \leq Ce^{-ar}.
\label{eq:FW-distant-cut}
\end{align}
\end{lemma}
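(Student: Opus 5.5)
Using Eq.~\eqref{eq:QL-logarithmic-decomposition}, write for any Hamiltonian $J$ of the standing form, with $J_B$, $J_{B^c}$ its restrictions,
\begin{align}
\Delta_B(J)=I_B\otimes\bigl(-\log\sigma^{J}_{B^c}+\log\tau^{J}_{B^c}\bigr)-\beta V_{\partial B}+c_J I,
\end{align}
where $\sigma^J_{B^c}$ is the $B^c$-marginal of the Gibbs state of $J$, $\tau^J_{B^c}$ the Gibbs state of $J_{B^c}$, and $c_J=\log(Z^J_BZ^J_{B^c}/Z^J)$. For $r\geq1$ the boundary term $V_{\partial B}$ and $H_B$ are the same for $H$ and $H^{(0)}$, since the cut lies at distance $r$ from $B$. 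Hence $\Delta_B(H)-\Delta_B(H^{(0)})$ is, up to a scalar, the difference of the two Hamiltonians of mean force $\Omega^J:=-\log\sigma^J_{B^c}+\log\tau^J_{B^c}$. For $r=0$ the claim follows from $\|\Delta_B(J)\|\le\log M$ for both $J$ (Lemma~\ref{lem:QL-boundary-estimates}), after enlarging $C$.

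Next apply Proposition~\ref{prop:QL-cited-HMF} to both $H$ and $H^{(0)}$, with the ring folded as in Appendix~\ref{app:QL-boundary-estimates}; the cut $H^{(0)}$ is still an admissible interaction. This gives approximants $W^{H}_{B^c,q}$ and $W^{H^{(0)}}_{B^c,q}$ supported in $N_{q+c}(\partial B)\cap B^c$ with error $A e^{-aq}$. Take $q\approx r/2$. The key point is that these approximants can be chosen to coincide, because the two Hamiltonians agree on $N_{r}(B)\supseteq N_{q+c}(\partial B)$. To get this, I would not rely on the existence statement alone. Instead I would use the constructive form in Ref.~\cite{scalet2025classical} (a telescoping/cluster construction in which the local piece depends only on interactions within distance $\sim q$), or equivalently define $W^J_{B^c,q}:=\Omega^{J_{N_{2q}(B)}}$ computed on a window. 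Then I would show, via the same proposition, that this windowed object approximates $\Omega^J$ locally up to $e^{-aq}$.

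Given matching approximants, the proof finishes as follows. We have $\|\Omega^H-\Omega^{H^{(0)}}\|\le 2Ae^{-aq}+|\text{scalar}|$. The scalars $c_J$ must then be compared: they are normalization constants, and their difference is pinned down because $\mathcal E_B$-type normalization $\tr[\sigma\,\cdot]$ fixes $\Delta_B$ up to a consistent constant. Concretely, both $\Delta_B$ satisfy $\|\Delta_B\|\le\log M$ and are determined by operator parts. I would bound the scalar difference by evaluating, for example, $\tr_B[\tau_B e^{\Delta_B}]$-type identities, or more simply via the ratio $Z^H/Z^{H^{(0)}}$ compared to $Z^H_{B^c}/Z^{H^{(0)}}_{B^c}$. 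That ratio is $\tr[\sigma e^{\pm\beta h_{\rm cut}}\cdots]$-like, and its $B$-dependence decays by clustering (Proposition~\ref{prop:BC-clustering}).

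I expect the main obstacle to be the step where the approximants are made to coincide, i.e.\ converting the existence result of Proposition~\ref{prop:QL-cited-HMF} into a statement about \emph{local dependence on the Hamiltonian}. A fallback is a direct perturbative route. Interpolate $H(s)=H^{(0)}+s\,h_{\rm cut}$ with $\|h_{\rm cut}\|\le 2$, and differentiate $\Omega^{H(s)}$ in $s$. The derivative involves the Fréchet derivative of $\log$ applied to a covariance between $h_{\rm cut}$ and $B$. This decays like $e^{-ar}$ by exponential clustering of the 1D Gibbs state together with the uniform bounds of Proposition~\ref{prop:QL-cited-expansional}. Controlling the derivative of the logarithm uniformly in $n$ is the delicate part, and would use that $\sigma_{B^c}$ is relatively bounded against a local product (Lemma~\ref{lem:QL-boundary-estimates}).
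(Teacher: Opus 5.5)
Your reduction of $\Delta_B(H)-\Delta_B(H^{(0)})$ to a difference of Hamiltonians of mean force plus a scalar is fine, and you correctly identify the crux. The proposal does not actually resolve it, however. Proposition~\ref{prop:QL-cited-HMF} (like Lemma~\ref{lem:QL-boundary-estimates}) is an existence statement for each Hamiltonian separately. It says that $\Omega^H$ is close to \emph{some} operator supported near $\partial B$ and $\Omega^{H^{(0)}}$ is close to \emph{some other} one. It says nothing about how the approximant depends on the interactions far away. None of your three fixes closes this:
(i) Appealing to the internal structure of the construction in Ref.~\cite{scalet2025classical} rests on an unverified property of a cited result.
(ii) Defining $W^J_{B^c,q}:=\Omega^{J_{N_{2q}(B)}}$ and showing it approximates $\Omega^J$ is exactly the statement being proved, for a cut at distance $2q$. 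So ``via the same proposition'' is circular.
(iii) The perturbative route needs uniform-in-$n$ control of the Fr\'echet derivative of $\log$ at an exponentially ill-conditioned $\sigma_{\beta,B^c}$. The relative bounds of Lemma~\ref{lem:QL-boundary-estimates} compare $\sigma_\beta$ with $\tau_B\otimes\sigma_{\beta,B^c}$ and do not supply this.
The comparison of the scalars $c_H-c_{H^{(0)}}$ is also left as a sketch, and the suggested $\tr[\sigma e^{\pm\beta h_{\rm cut}}\cdots]$ form of the partition-function ratio is not correct for noncommuting terms.

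The missing idea is a control-qubit trick that turns per-Hamiltonian locality into local dependence on the Hamiltonian. Delete the crossing bonds $e$ of $\partial W$ one at a time. For each, attach a qubit $F$ at the endpoint $x\in W^c$ and set $\widehat J=J_0\otimes I_F+v_e\otimes|1\rangle\langle1|_F$. This is again nearest-neighbor with the same interaction bound (local dimension $2d$, padding other sites with inert qubits). Its Gibbs state is $\sum_j p_j\rho_j\otimes|j\rangle\langle j|_F$. Since $F\notin B$ and both $J_j$ share the same $\tau_B$, the $\log p_j$ terms cancel, giving $\Delta_B(\widehat J)=\Delta_0\otimes|0\rangle\langle0|_F+\Delta_1\otimes|1\rangle\langle1|_F$. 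Now apply the existence-only locality of Lemma~\ref{lem:QL-boundary-estimates} to this single Hamiltonian at radius $s=\lfloor r/2\rfloor$. Because $x$ lies outside $\widehat S_s$, the depolarized approximant has the form $Q_s\otimes I_F$. Compressing to each control sector yields $\|\Delta_j-Q_s\|\le A_*e^{-a_*s}$ for $j=0,1$, hence $\|\Delta_1-\Delta_0\|\le 2A_*e^{-a_*s}$. Two such deletions finish the proof. Because this works with the full $\Delta_B$ rather than with $\Omega$ and $c_J$ separately, the normalization constants are handled automatically.
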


\begin{proof}
The claim is immediate if $W=\Lambda_n$. Bounded values of $r$ are covered by increasing $C$, using the uniform bound on $\|\Delta_B\|$ from Lemma~\ref{lem:QL-boundary-estimates}. We thus assume that $r$ is sufficiently large and $W\subsetneq\Lambda_n$.

We first consider deleting a single bond $e$ crossing $\partial W$. Let $J_1$ denote the Hamiltonian before deleting a bond $e$ crossing $\partial W$, and set $J_0:=J_1-v_e$, where $v_e$ is the interaction on that bond. Choose the endpoint $x\in W^c$ of $e$ and attach a control qubit $F$ to the spin at $x$. Define $\widehat J=J_0\otimes I_F+v_e\otimes|1\rangle\langle1|_F$. Thus the two values of the control select $J_0$ and $J_1$. This remains a nearest-neighbor Hamiltonian with the same interaction bound. To obtain uniform local dimension $2d$, attach inert qubits to all other sites. Their maximally mixed Gibbs factors cancel from the logarithmic correction; we suppress these inert tensor factors below.

Let $\rho_j=e^{-\beta J_j}/Z_j$, where $Z_j=\tr[e^{-\beta J_j}]$, and write $\Delta_j=\Delta_B(J_j)$. The Gibbs state of $\widehat J$ is $\widehat\rho=\sum_{j=0}^1p_j\rho_j\otimes |j\rangle\langle j|_F$, with $p_j=Z_j/(Z_0+Z_1)$. Since $F$ lies outside $B$ and $(J_0)_B=(J_1)_B$, the terms $\log p_j$ cancel between $\log\widehat\rho$ and the logarithm of its complementary marginal. Consequently,
\begin{align}
\widehat\Delta_B
:=\Delta_B(\widehat J)
=\Delta_0\otimes|0\rangle\langle0|_F
+\Delta_1\otimes|1\rangle\langle1|_F.
\end{align}

Apply Lemma~\ref{lem:QL-boundary-estimates} to the enlarged
chain, with constants $A_*,a_*>0$ and boundary offset $c_*$
corresponding to local dimension $2d$.
Set $s=\lfloor r/2\rfloor$ and
$\widehat S_s=N_{s+c_*}(\partial B)$.
For sufficiently large $r$, the control site $x$ lies outside
$\widehat S_s$. Hence depolarization over
$\widehat S_s^c$ gives
$\mathcal D_{\widehat S_s^c}(\widehat\Delta_B)=Q_s\otimes I_F$
for some Hermitian $Q_s$, and the lemma gives
$\|\widehat\Delta_B-Q_s\otimes I_F\|
\le A_*e^{-a_*s}$.
Compressing this difference to either control sector yields
$\|\Delta_j-Q_s\|\le A_*e^{-a_*s}$ for $j=0,1$.
Therefore,
\begin{align}
\|\Delta_B(J_1)-\Delta_B(J_0)\| \le \|\Delta_1-Q_s\|+\|\Delta_0-Q_s\| \le 2A_*e^{-a_*s}.
\end{align}

Apply this estimate successively to the at most two bonds crossing $\partial W$. The constants are uniform for the intermediate Hamiltonians, since deleting bonds preserves the interaction bound. The triangle inequality and $s=\lfloor r/2\rfloor$ prove Eq.~\eqref{eq:FW-distant-cut} after adjusting the constants.
\end{proof}

\subsection{Proof of Lemma~\ref{lem:finite-window-approximation}}

\begin{proof}
Let $W=N_r(B)$ and define $H^{(0)}=H_W+H_{W^c}$ and $H^{(1)}=H$. We use superscripts $(0)$ and $(1)$ for quantities computed from these Hamiltonians. If $W=\Lambda_n$, the claim is immediate. It suffices to consider $r$ larger than a fixed constant: the remaining radii are covered by increasing $C$, since the diamond distance between two channels is at most $2$.

By Lemma~\ref{lem:FW-distant-cut},
\begin{align}
\|\Delta_B^{(1)}-\Delta_B^{(0)}\| \leq Ce^{-ar}.
\label{eq:FW-logarithmic-stability}
\end{align}

We next transfer this estimate to the Petz amplitudes. Let $\sigma^{(j)}$ be the Gibbs state of $H^{(j)}$, with marginal $\sigma_{B^c}^{(j)}=\operatorname{tr}_B[\sigma^{(j)}]$. Since the two Hamiltonians agree inside $B$, they share the same internal Gibbs state $\tau_B$. For $j\in\{0,1\}$, the analytic family in Eq.~\eqref{eq:QL-analytic-family} therefore takes the form $F_j(z)=(\sigma^{(j)})^z\bigl(\tau_B^{-z}\otimes(\sigma_{B^c}^{(j)})^{-z}\bigr)$. By Eq.~\eqref{eq:conditional-petz-amplitude}, the corresponding conditional Petz amplitude is $Q_B^{(j)}:=(\sigma^{(j)})^{1/2}\bigl(\tau_B^{-1/2}\otimes(\sigma_{B^c}^{(j)})^{-1/2}\bigr)=F_j(1/2)$. Equation~\eqref{eq:QL-relative-evolution} then gives
\begin{align} 
\frac{d}{dt}F_j(it) = i\left(e^{-it\beta H^{(j)}}\Delta_B^{(j)}e^{it\beta H^{(j)}}\right)F_j(it), \qquad F_j(0)=I. 
\label{eq:FW-relative-evolution} 
\end{align} 

We first compare the two evolutions on the imaginary axis. Following the notation in Eq.~\eqref{eq:QL-relative-evolution}, write $Y_B^{(j)}(u):=e^{-iu\beta H^{(j)}}\Delta_B^{(j)}e^{iu\beta H^{(j)}}$ and $V:=\beta(H^{(1)}-H^{(0)})$. For $t\geq0$, we have
\begin{align}
\|F_1(it)-F_0(it)\| &\leq \int_0^t \|Y_B^{(1)}(u)-Y_B^{(0)}(u)\| du \nonumber\\
&\leq t\|\Delta_B^{(1)}-\Delta_B^{(0)}\|+\int_0^t \left\|e^{-iu\beta H^{(1)}}\Delta_B^{(0)}e^{iu\beta H^{(1)}}-Y_B^{(0)}(u)\right\| du \nonumber\\
&\leq t\|\Delta_B^{(1)}-\Delta_B^{(0)}\|+\int_0^t\int_0^u \|[V,Y_B^{(0)}(s)]\| ds du.
\label{eq:FW-evolution-comparison}
\end{align}
The first inequality follows from Eq.~\eqref{eq:FW-relative-evolution} and unitarity. The second separates the change in $\Delta_B$ from the change in the physical evolution, and the third applies Duhamel's formula to compare the evolutions generated by $\beta H^{(1)}$ and $\beta H^{(0)}$.

It remains to bound the commutator in the last line. Choose $q=\lfloor r/2\rfloor$ and set $A_q:=\mathcal D_{S_q^c}(\Delta_B^{(0)})$. The operator $A_q$ is supported on $S_q$, at distance at least $r-q-\mathcal O(1)$ from the removed bonds supporting $V$. Splitting $\Delta_B^{(0)}=A_q+(\Delta_B^{(0)}-A_q)$ and using Eqs.~\eqref{eq:QL-logarithmic-locality} and~\eqref{eq:QL-cited-LR}, we obtain
\begin{align}
\|[V,Y_B^{(0)}(s)]\| &\leq 2\|V\| \|\Delta_B^{(0)}-A_q\|+\left\|[V,e^{-is\beta H^{(0)}}A_qe^{is\beta H^{(0)}}]\right\| \nonumber\\
&\leq Ce^{-aq}+C(q+1)e^{-b(r-q)+v_0|s|} \nonumber\\
&\leq Ce^{-\nu r+v_0|s|}.
\label{eq:FW-cut-commutator}
\end{align}
Here $\|V\|\leq2\beta$, and $\|A_q\|$ is uniformly bounded. The factor $q+1$ comes from the size of $S_q$ and is absorbed by choosing a smaller decay rate $\nu>0$.

Substituting Eq.~\eqref{eq:FW-cut-commutator} into Eq.~\eqref{eq:FW-evolution-comparison} and using Eq.~\eqref{eq:FW-logarithmic-stability} gives the desired bound for $t\geq0$. The same argument with reversed time covers $t<0$. Thus, for every $t\in\mathbb R$,
\begin{align}
\|F_1(it)-F_0(it)\| &\leq |t|\|\Delta_B^{(1)}-\Delta_B^{(0)}\|+C\int_0^{|t|}u e^{-\nu r+v_0u} du \nonumber\\
&\leq C(|t|+t^2)e^{-\nu r+v_0|t|} \nonumber\\
&\leq Ce^{-\nu r+v|t|}.
\label{eq:FW-left-boundary}
\end{align}
In the second inequality, we decrease $\nu$ if necessary, and in the last, we choose $v>v_0$ to absorb the polynomial factor in $|t|$.

On the other boundary of the strip, Eq.~\eqref{eq:right-boundary-bound} gives $\|F_j(1+it)\|\leq M$. Set $G(z)=e^{z^2}(F_1(z)-F_0(z))$. As in Eq.~\eqref{eq:QL-strip-boundedness}, $G$ is analytic and bounded on the closed strip $0\leq\operatorname{Re}z\leq1$ for each finite chain. Its boundary norms satisfy
\begin{align}
    \sup_{t\in\mathbb R}\|G(it)\| = \sup_{t\in\mathbb R} e^{-t^2}\|F_1(it)-F_0(it)\|
    \leq Ce^{-\nu r} \sup_{t\in\mathbb R}e^{-t^2+v|t|} =
    Ce^{v^2/4}e^{-\nu r},
    \\[2pt]
    \sup_{t\in\mathbb R}\|G(1+it)\| \leq  \sup_{t\in\mathbb R}  e^{1-t^2}
    \left(\|F_1(1+it)\|+\|F_0(1+it)\|\right) \leq 2eM.
\end{align}
Applying Proposition~\ref{prop:QL-cited-three-lines} to the matrix elements of $G$ and taking the supremum over unit vectors gives
\begin{align}
    \|Q_B^{(1)}-Q_B^{(0)}\|
    &= e^{-1/4}\|G(1/2)\| \nonumber\\
    &\leq  e^{-1/4} \left(\sup_{t\in\mathbb R}\|G(it)\|\right)^{1/2}
    \left(\sup_{t\in\mathbb R}\|G(1+it)\|\right)^{1/2}  \nonumber\\
    &\leq  Ce^{-\nu r/2}.
    \label{eq:FW-amplitude-stability}
\end{align}
Finally, $H^{(0)}$ and $H^{(1)}$ have the same internal Gibbs state $\tau_B$. The factorization \eqref{eq:petz-amplitude-factorization} therefore gives
\begin{align}
    \Phi_B^{(j)}(\rho) = Q_B^{(j)} \left[\tau_B\otimes\tr_B\rho\right] (Q_B^{(j)})^\dagger.
\end{align}
Moreover, the Gibbs state of $H^{(0)}=H_W+H_{W^c}$ factorizes between $W$ and $W^c$, so Eq.~\eqref{eq:block-petz-map} implies $\Phi_B^{(0)}=\Phi_B^{H_W}\otimes\mathrm{id}_{W^c}$.

The map $\rho\mapsto\tau_B\otimes\tr_B\rho$ is CPTP. Expanding the difference of the two conjugations and applying the trace-norm multiplication inequality, also after tensoring with an arbitrary reference system, we conclude that
\begin{align}
\left\|\Phi_B - \left(\Phi_B^{H_W}\otimes\mathrm{id}_{W^c}\right) \right\|_\diamond
&=  \|\Phi_B^{(1)}-\Phi_B^{(0)}\|_\diamond \nonumber\\
&\leq \left(\|Q_B^{(1)}\|+\|Q_B^{(0)}\|\right)  \|Q_B^{(1)}-Q_B^{(0)}\|  \nonumber\\
&\leq 2\sqrt M \|Q_B^{(1)}-Q_B^{(0)}\|  \nonumber\\
&\leq  Ce^{-\nu r/2},
\end{align}
where the last two lines use Eqs.~\eqref{eq:QL-interpolated-norm} and~\eqref{eq:FW-amplitude-stability}. This proves Eq.~\eqref{eq:finite-window-approximation} with $\mu=\nu/2$.
\end{proof}

\section{Low-depth quantum circuits for Gibbs-state preparation}
\label{app:implementation}

In this section, we provide details of our algorithm to prepare Gibbs state via polylogarithmic-depth quantum circuits and thus prove Theorem~\ref{thm:dynamics-implementation}. Throughout, $\beta,d,\ell$ are fixed. We assume polynomial-time access to the precision bits of $\beta$, rational inputs $t,\varepsilon$, and a fixed universal gate set with polynomial-time-computable entries. The local interaction terms are given by classical descriptions from which their entries can be computed to $p$ bits of accuracy in time $\operatorname{poly}(b,p)$, where $b$ is the input bit length. 

We first state the central result, which states that the map $\Phi_B^{H_W}$ can be implemented by using only $\mathcal{O}(\text{poly}(|W|)$ gates.

\begin{lemma}[Local compilation]
\label{lem:local-compilation}
Let $W=N_r(B)$ and $w=|W|$. For every $\delta\in(0,1)$, the channel $\Phi_B^{H_W}$ presented in Lemma~\ref{lem:finite-window-approximation} can be implemented to diamond-norm error at most $\delta$ using
\begin{align}
\mathcal{O}\left(\operatorname{poly}\bigl(w,1+\log(1/\delta)\bigr)\right)
\label{eq:local-compilation-cost}
\end{align}
one- and two-qubit gates and ancilla qubits. The classical processing time is bounded by the same expression times $\operatorname{poly}(b)$.
\end{lemma}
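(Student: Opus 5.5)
The plan follows the factorization in Eq.~\eqref{eq:petz-amplitude-factorization-algo}. We write $\Phi_B^{H_W}(\rho_W)=\tr_F[V_W(\tr_B\rho_W)V_W^\dagger]$ with the recovery isometry $V_W$ of Eq.~\eqref{eq:recovery-isometry}, and implement $V_W$ approximately by a unitary acting on fresh ancillas. The implemented channel is then: reset $B$ (trace it out), apply the unitary, and discard the auxiliary registers $F,\mathbf a$. Suppose this unitary, restricted to the initialized ancilla subspace, is $\eta$-close in operator norm to $V_W$. Then the channel is $\mathcal O(\eta)$-close in diamond norm, so it suffices to take $\eta=\Theta(\delta)$. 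Before anything else, I would verify the exact identity: $Q_B^{(W)}=E(\tau_B^{1/2}\otimes T)$ with $T^\dagger S T=I_A$. This follows by writing $\sigma_W^{1/2}$ through $E=e^{-\beta H_W/2}e^{\beta H_0/2}$ and checking that $\tr_B$ of $E(\tau_B\otimes CC)E^\dagger$ reproduces $\sigma_A$ up to normalization. The polar form $T=C(CSC)^{-1/2}$ then gives the right marginal.

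Next, build the two block encodings. For $E$, I would first block-encode $V_0(s)=e^{-sH_0}V_{\partial B}e^{sH_0}$ by truncating its nested-commutator series at order $K=\mathcal O(\log(1/\eta))$. In 1D, only histories growing contiguously from the two boundary bonds contribute. Their total coefficient weight is bounded by a $w$-independent constant, namely $\sum_k(\beta/2)^k c^k$ restricted to connected histories, which is at most $e^{\mathcal O(\beta)}$. The histories can be prepared coherently via an LCU over Pauli strings with $\operatorname{poly}(K,w)$ gates. A truncated Dyson series for $E(s)$, solving $\tfrac{d}{ds}E=-EV_0$, built in the style of Kieferov\'a--Berry--Babbush with sorted time registers, then gives an encoding of $E$. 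Its normalization is $e^{\mathcal O(\beta\|V_0\|)}=\mathcal O(1)$, and its cost is $\operatorname{poly}(w,\log(1/\eta))$. The same construction yields $E^{-1}$ and $E^\dagger E$. For $T$, I would use the bounds $c_-I\le S\le c_+I$, which follow from the uniform bounds on $\|E\|$ and $\|E^{-1}\|$ (as in Proposition~\ref{prop:QL-cited-expansional}). I would then use the preconditioned integral representation \eqref{eq:T-preconditioned-integral}. $D_u$ and $F_u$ are obtained by QSVT polynomials in $H_A/\|H_A\|$ of degree $\operatorname{poly}(w,\log(1/\eta))$. $P_u^{-1}$ is obtained by QSVT inversion of the well-conditioned $P_u$. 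Finally, Gauss--Legendre quadrature on $|u|\le \mathcal O(w+\log(1/\eta))$ gives an LCU with normalization $\mathcal O(w+\log(1/\eta))$.

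Then combine: prepare $|\sqrt{\tau_B}\rangle_{BF}$ on $\mathcal O(1)$ qubits with a constant-size circuit, apply the $T$ encoding and then the $E$ encoding. The result is a projected encoding of $V_W/\alpha$ with $\alpha=\operatorname{poly}(w,\log(1/\eta))$. Since $V_W$ is an exact isometry, oblivious isometry amplification (Lemma~\ref{lem:be-isometry-amplification}, fixed-point or robust oblivious amplitude amplification) uses $\mathcal O(\alpha)$ rounds. It yields a unitary that is $\eta$-close to $V_W$ on the initialized subspace. Each block-encoding error must be set to $\eta/\operatorname{poly}(\alpha)$, which affects only logarithmic factors. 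Classical processing consists of computing the QSVT phases, quadrature nodes, and local Pauli coefficients to $\mathcal O(\log(w/\delta))$ bits, and is polynomial times $\operatorname{poly}(b)$.

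The main obstacle is the encoding of $T$. $CSC$ has eigenvalues as small as $e^{-\mathcal O(\beta w)}$, so naive inversion costs $e^{\mathcal O(w)}$. The preconditioned integrand must therefore be shown to have uniformly bounded norm, to decay like $e^{-|u|}$ beyond $|u|\gtrsim\beta\|H_A\|$, and to be smooth enough (analytic in a strip) for Gauss--Legendre quadrature to converge with $\operatorname{poly}(w,\log(1/\eta))$ nodes. The second delicate point is the $w$-independent bound on the connected-history weight for $V_0(s)$ at $s\le\beta/2$. For this I would use the 1D nearest-neighbor structure: each commutator step either extends the support by one site at an end or acts inside it. This gives a combinatorial bound of $\mathcal O(k)\cdot$const per step in place of $w$, and for the factorial to be absorbed, the $\mathcal O(k)$ must be controlled as in Araki's expansional bounds.
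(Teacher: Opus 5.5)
Your plan follows the paper's proof in outline. It uses the same factorization $Q_B^{(W)}=E(\tau_B^{1/2}\otimes T)$ and the same commutator-history LCU for $V_0(s)$ inside a sorted-time truncated Dyson series for $E$. It also uses the same preconditioned integral $T=\frac{2}{\pi}\int D_uP_u^{-1}F_u\,du$ with QSVT and Gauss--Legendre quadrature, and the same purification-plus-oblivious-isometry-amplification assembly. Two steps, however, are not right as written.

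First, the $w$-independent history weight for $V_0(s)$ is what makes the construction valid at every $\beta$, and you leave it open; your stated value $e^{\mathcal O(\beta)}$ is also wrong. An $\mathcal O(k)$ count of choices per commutator step, as you describe, gives only $F_k\le\prod_{j<k}(4+j)=(k+3)!/6$. Then $\sum_kF_k(2gs)^k/k!$ converges only when $g\beta<1$, i.e.\ only at high temperature. The missing idea is to track the number $m$ of growth steps taken so far. There are $2+m$ interior choices but only two growth choices, so $F_{k+1,m}=(2+m)F_{k,m}+2F_{k,m-1}$. The resulting generating function $\sum_kF_kx^k/k!=\exp(2x+2(e^x-1))$ is entire, reflecting Bell-type rather than factorial growth. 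This gives the LCU weight $L=2g\exp(2g\beta+2(e^{g\beta}-1))$, which is finite and $w$-independent for every $\beta$ but doubly exponential in $\beta$. It is this absolute weight, not $\|V_0(s)\|$, that sets both the normalization of the $V_0$ encoding and $C_E=e^{\beta L/2}$ in the Dyson step. Araki's expansional bounds are the right source, but the proposal needs this count explicitly.

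Second, Lemma~\ref{lem:be-isometry-amplification} controls only the projected block $\Pi_{\rm out}\widetilde U\Pi_{\rm in}$. After you discard (rather than postselect) the ancillas, the channel depends on the full isometry $\widetilde U(|0\rangle_{\mathbf aBF}\otimes I_A)$, including its component outside $\operatorname{im}\Pi_{\rm out}$. So ``$\eta$-close on the initialized subspace'' does not follow from the lemma as cited. You need the leakage bound $\|(I-\Pi_{\rm out})V_{\rm amp}\|^2=\|I_A-J_W^\dagger J_W\|\le2\|J_W-V_W\|$. For projected-block error $e$, this gives total isometry error $e+\sqrt{2e}$. Hence the encoding error must be $\Theta(\delta^2)$ (the paper takes $\gamma=\delta^2/(128\pi)$), not $\Theta(\delta)$ as you state. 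This leaves the $\operatorname{poly}(w,1+\log(1/\delta))$ cost unchanged, but the diamond-norm conclusion requires it. Alternatively, invoke a robust oblivious amplitude amplification statement that controls the full unitary.
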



\subsection{Roadmap on implementing $\Phi_B^{H_W}$}

Let $A=W\setminus B$, and denote by $\sigma_W=e^{-\beta H_W}/\tr[e^{-\beta H_W}]$ the Gibbs state of $H_W$, with marginal $\sigma_A=\tr_B[\sigma_W]$. Our goal is to factor the Petz recovery amplitude
\begin{align}
Q_B^{(W)} = \sigma_W^{1/2}(I_B\otimes\sigma_A^{-1/2})
\end{align}
into operators that admit efficient block encodings.

Let $H_B$ and $H_A$ contain the terms of $H_W$ supported entirely in $B$ and $A$, respectively, and write
\begin{align}
H_W=H_0+V_{\partial B},
\qquad H_0=H_B+H_A.
\end{align}
The boundary interaction $V_{\partial B}$ contains at most two local terms. Define
\begin{align}
E=e^{-\beta H_W/2}e^{\beta H_0/2},
\qquad C=e^{-\beta H_A/2},
\end{align}
and recall that $\tau_B=e^{-\beta H_B}/Z_B$, where $Z_B=\tr_B[e^{-\beta H_B}]$. The operator $E$ accounts for the boundary couplings, while $C$ describes the imaginary-time evolution on $A$. Define also
\begin{align}
T=C(CSC)^{-1/2}, \qquad S=\tr_B\left[ (\tau_B^{1/2}\otimes I_A)E^\dagger E (\tau_B^{1/2}\otimes I_A)\right].
\label{eq:relative-marginal}
\end{align}
Then, a direct calculation shows that the Petz recovery amplitude admits the exact factorization
\begin{align}
Q_B^{(W)}  = \sigma_W^{1/2}(I_B\otimes\sigma_A^{-1/2}) = E(\tau_B^{1/2}\otimes T).
\label{eq:exact-petz-factorization}
\end{align}
Let $F$ be a copy of $B$, and define the normalized purification
\begin{align}
|\sqrt{\tau_B}\rangle_{BF} := \sum_a (\tau_B^{1/2}|a\rangle_B)\otimes|a\rangle_F.
\end{align}
Using Eq.~\eqref{eq:exact-petz-factorization}, define
$V_W:\mathcal H_A\to\mathcal H_W\otimes\mathcal H_F$ by
\begin{align}
V_W|\psi\rangle_A
:=(E_{BA}\otimes I_F)
(I_B\otimes T_A\otimes I_F)
\bigl(|\sqrt{\tau_B}\rangle_{BF}\otimes|\psi\rangle_A\bigr),
\label{eq:structured-isometry}
\end{align}
where tensor factors are identified by their subsystem labels.
The definition of $S$ gives $V_W^\dagger V_W=T^\dagger ST=I_A$, so $V_W$ is an isometry. By Eq.~\eqref{eq:exact-petz-factorization}, tracing out $F$ implements the Petz recovery map, and hence
\begin{align}
\Phi_B^{H_W}(\rho_{BA}) = \tr_F\left[V_W\bigl(\tr_B[\rho_{BA}]\bigr)V_W^\dagger\right].
\label{eq:block-update-isometry}
\end{align}
To implement this channel, we cover the following three steps. 

\begin{enumerate}
\item \textbf{Block encode the boundary correction $E$.}
Lemma~\ref{lem:efficient-boundary-expansional} constructs a block encoding of $E$ with constant normalization and polynomial cost.

\item \textbf{Block encode the relative factor $T$.}
Combine the encoding of $E^\dagger E$ with preparation and unpreparation of $|\sqrt{\tau_B}\rangle_{BF}$ to encode $S$. The bounds on $E$ and $E^{-1}$ imply that $S$ is bounded above and below by positive constants. Lemma~\ref{lem:relative-inverse}, applied with $K=\beta H_A/2$, then gives an encoding of $T$ with polynomial normalization and cost.

\item \textbf{Amplify and implement the block update.}
Combine the encodings of $E$ and $T$ with the purification preparation to obtain a projected unitary encoding of $V_W$, and apply oblivious isometry amplification without postselection. Erase the original block $B$, apply the amplified circuit with fresh ancillas, and discard $F$ and the ancilla registers. The amplification and error analysis are given in the proof of Lemma~\ref{lem:local-compilation}.
\end{enumerate}

\subsection*{Block-encoding preliminaries}
\label{subsec:block-encoding-preliminaries}

We first collect the block-encoding conventions and standard constructions used below. 

\begin{definition}[Block encoding]
A unitary $U$ is an $(\alpha,a,\epsilon)$ block encoding of an operator $A$ if
\begin{align}
\left\|A-\alpha(\langle0^a|\otimes I) U(|0^a\rangle\otimes I)
\right\|\leq\epsilon.
\end{align}
\end{definition}

We choose normalization parameters satisfying $\|A\|\leq\alpha$. An encoding with normalization $\alpha$ can be padded to any $\alpha'\geq\alpha$ by appending a qubit rotation with zero-to-zero amplitude $\alpha/\alpha'$. This preserves the unnormalized error. Ancilla reigsters may likewise be padded with unused qubits.

The following statements are specialized to the forms needed in our constructions.

\begin{lemma}[Linear combinations~{\cite[Lemma 52]{gilyen2019quantum}}]
\label{lem:be-linear-combination}
Suppose $U_j$ is an $(\alpha_j,a,\epsilon_j)$ block encoding of $A_j$, after padding the ancilla registers to a common size. Let $w_j\geq0$, set $\alpha:=\sum_jw_j\alpha_j>0$, and suppose that a unitary $P$ prepares $P|0\rangle=\sum_j\sqrt{w_j\alpha_j/\alpha} |j\rangle$. Preparing this label, applying $U_j$ controlled by $j$, and unpreparing the label gives a block encoding of $\sum_jw_jA_j$ with normalization $\alpha$ and error at most $\sum_jw_j\epsilon_j$. For $M$ terms, the construction uses $a+\lceil \log_2 M\rceil$ ancilla qubits and one call each to the selection circuit, $P$, and $P^\dagger$.
\end{lemma}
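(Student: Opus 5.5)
The statement to prove is the linear-combination-of-block-encodings lemma. The plan is a direct computation of the top-left block of the composite circuit, followed by an error estimate via the triangle inequality.

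\textbf{Setup.} Write the circuit as
\begin{align}
W := (P^\dagger\otimes I)\,\mathrm{SEL}\,(P\otimes I), \qquad \mathrm{SEL} := \sum_j |j\rangle\langle j|\otimes U_j .
\end{align}
Here the label register has $\lceil\log_2 M\rceil$ qubits, and unused labels are padded by identity in $\mathrm{SEL}$. The combined ancilla is the label register together with the common $a$-qubit register. One call each to $P$, $\mathrm{SEL}$ and $P^\dagger$ is used, which gives the stated resource count immediately.

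\textbf{Computing the block.} We compute $(\langle 0|\langle 0^a|\otimes I)\,W\,(|0\rangle|0^a\rangle\otimes I)$ in three steps.
\begin{itemize}
\item Since $P|0\rangle=\sum_j\sqrt{w_j\alpha_j/\alpha}\,|j\rangle$, the state after $P$ is $\sum_j\sqrt{w_j\alpha_j/\alpha}\,|j\rangle\otimes(\cdot)$.
\item $\mathrm{SEL}$ then applies $U_j$ on the branch labelled $j$.
\item Projecting onto $\langle 0|P^\dagger=\bigl(P|0\rangle\bigr)^\dagger=\sum_j\sqrt{w_j\alpha_j/\alpha}\,\langle j|$ leaves diagonal label terms only. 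Each carries the real coefficient $w_j\alpha_j/\alpha$, because all amplitudes are nonnegative square roots.
\end{itemize}
The result is
\begin{align}
\alpha\,(\langle 0|\langle 0^a|\otimes I)\,W\,(|0\rangle|0^a\rangle\otimes I) = \sum_j w_j\alpha_j\,\widetilde A_j/\alpha_j = \sum_j w_j\,\widetilde A_j ,
\end{align}
where $\widetilde A_j := \alpha_j(\langle 0^a|\otimes I)\,U_j\,(|0^a\rangle\otimes I)$ is the operator actually encoded by $U_j$. The only subtlety is that terms with $w_j\alpha_j=0$ can be dropped. This is harmless because they contribute neither to $\alpha$ nor to the sum.

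\textbf{Error bound.} By the triangle inequality,
\begin{align}
\Bigl\|\sum_j w_jA_j - \sum_j w_j\widetilde A_j\Bigr\| \le \sum_j w_j\,\|A_j-\widetilde A_j\| \le \sum_j w_j\,\epsilon_j .
\end{align}
This is exactly the claimed error. Padding the $U_j$ to a common ancilla size, as noted earlier, does not change $\widetilde A_j$, so the definition of block encoding applies to each term uniformly.

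\textbf{Normalization condition.} The convention $\|A\|\le\alpha$ is inherited: $\|\sum_j w_jA_j\|\le\sum_j w_j\|A_j\|\le\sum_j w_j\alpha_j=\alpha$.

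No step here is a real obstacle. The one point requiring care is the bookkeeping in the block computation: the amplitudes must be real and nonnegative so that $\langle 0|P^\dagger$ reproduces the same weights. If $P$ only prepared the amplitudes up to phases, the weights would pick up complex factors. In that case one would first absorb the phases into the $U_j$.
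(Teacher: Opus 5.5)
Your proposal is correct. The paper gives no proof of its own and simply cites \cite[Lemma~52]{gilyen2019quantum}; your argument is the standard one behind that result, namely computing the zero block of $(P^\dagger\otimes I)\,\mathrm{SEL}\,(P\otimes I)$ and then applying the triangle inequality, specialized here to a single exact preparation unitary with nonnegative amplitudes.
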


\begin{lemma}[Products~{\cite[Lemma 53]{gilyen2019quantum}}]
\label{lem:be-product}
Suppose $U_A$ and $U_B$ are $(\alpha,a,\epsilon_A)$ and $(\beta,b,\epsilon_B)$ block encodings of $A$ and $B$, respectively. Applying $U_B$ followed by $U_A$, with distinct ancilla registers, gives an $(\alpha\beta,a+b,\alpha\epsilon_B+\beta\epsilon_A)$ block encoding of $AB$.
\end{lemma}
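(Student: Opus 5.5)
The statement to prove is the product rule for block encodings (Lemma~\ref{lem:be-product}). The plan is a direct computation: write each encoded operator as a compression of its unitary plus an error term, multiply, and use that the two ancilla registers are distinct so the projections factor.

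\textbf{Setup.} Set
\begin{align}
\widetilde A := \alpha(\langle0^a|\otimes I)U_A(|0^a\rangle\otimes I), \qquad \widetilde B := \beta(\langle0^b|\otimes I)U_B(|0^b\rangle\otimes I).
\end{align}
By hypothesis $\|A-\widetilde A\|\le\epsilon_A$ and $\|B-\widetilde B\|\le\epsilon_B$. Let the combined circuit be $U:=(U_A\otimes I_b)(I_a\otimes U_B)$, where $U_A$ acts on register $a$ plus the system and $U_B$ acts on register $b$ plus the system.

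\textbf{Step 1: the compression of $U$ is $\widetilde A\widetilde B$.} Insert the resolution of the identity on register $a$ between the two factors of $U$. Since $U_B$ does not touch $a$, the input $|0^a\rangle$ passes through $U_B$ unchanged, so only the $|0^a\rangle\langle0^a|$ term survives. Symmetrically, $U_A$ does not touch $b$, so the projection $\langle 0^b|$ on the output commutes past $U_A$. This gives
\begin{align}
\alpha\beta\,(\langle0^{a+b}|\otimes I)\,U\,(|0^{a+b}\rangle\otimes I)=\widetilde A\widetilde B .
\end{align}

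\textbf{Step 2: bound the error.} Telescope:
\begin{align}
AB-\widetilde A\widetilde B = A(B-\widetilde B)+(A-\widetilde A)\widetilde B .
\end{align}
Use $\|A\|\le\alpha$, which is the normalization convention stated just before the lemma. Also $\|\widetilde B\|\le\beta$, since a compression of a unitary has norm at most one before rescaling by $\beta$. Together these give
\begin{align}
\|AB-\widetilde A\widetilde B\|\le\alpha\epsilon_B+\beta\epsilon_A .
\end{align}
Splitting with $\widetilde B$ in the second term, rather than $B$, is what avoids a cross term $\epsilon_A\epsilon_B$.

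The only real obstacle is the register bookkeeping in Step 1: the argument needs the ancilla registers to be genuinely distinct so that the projections factor. That is exactly the hypothesis in the statement.
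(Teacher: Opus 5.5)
Your proposal is correct and is essentially the standard argument behind this result. The paper gives no proof of its own beyond citing \cite[Lemma 53]{gilyen2019quantum}, whose proof is the same computation: factor the joint zero block into $\widetilde A\widetilde B$, then telescope, with $\|A\|\le\alpha$ supplied by the paper's stated normalization convention and $\|\widetilde B\|\le\beta$ holding automatically.
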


We also use compression by a prepared state. If $P_\chi|0\rangle=|\chi\rangle$, preparing and unpreparing $|\chi\rangle$ around an encoding of $A$ gives an encoding of $(\langle\chi|\otimes I)A(|\chi\rangle\otimes I)$ with the same normalization and no increase in block error, assuming exact state preparation.

\begin{lemma}[Polynomial eigenvalue transformation~{\cite[Theorem 56]{gilyen2019quantum}}]
\label{lem:be-polynomial-transform}
Let $U$ be an $(\alpha,a,\epsilon)$ block encoding of a Hermitian operator $A$. Let $p$ be a real degree-$d$ polynomial satisfying $|p(x)|\leq1/2$ for all $x\in[-1,1]$. For any $\delta\in(0,1)$, one can construct a $(1,a+2,4d\sqrt{\epsilon/\alpha}+\delta)$ block encoding of $p(A/\alpha)$.

The circuit uses $\mathcal O(d)$ calls to $U$ and $U^\dagger$, including one controlled call, and $\mathcal O((a+1)d)$ additional one- and two-qubit gates. Its description can be computed classically in time $\operatorname{poly}(d,\log(1/\delta))$. 
\end{lemma}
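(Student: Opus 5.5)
The statement to prove is the polynomial eigenvalue transformation lemma (Lemma~\ref{lem:be-polynomial-transform}), which is a specialization of Ref.~\cite[Theorem 56]{gilyen2019quantum}. My plan is to derive it from the quantum singular value transformation (QSVT) framework. I will only check that the parameters in the specialized form follow from the general theorem.

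\textbf{Step 1: normalize the encoding.} Divide by $\alpha$. Then $U$ is an exact $(1,a,0)$ block encoding of some operator $\widetilde A/\alpha$. Write $\widetilde A$ for the operator $U$ encodes exactly, so $\|\widetilde A/\alpha - A/\alpha\|\le \epsilon/\alpha$ and $\|\widetilde A\|\le\alpha$. A subtle point is that $\widetilde A$ need not be Hermitian. Following Ref.~\cite{gilyen2019quantum}, I will either pass to the Hermitian part or use the singular-value form. For a real polynomial $p$ of definite parity, QSVT applied to $\widetilde A/\alpha$ gives $p^{(SV)}(\widetilde A/\alpha)$. This coincides with the eigenvalue transform $p(\widetilde A/\alpha)$ when $\widetilde A$ is Hermitian.

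\textbf{Step 2: handle general $p$.} The bound $|p|\le 1/2$ lets me split $p=p_{\rm even}+p_{\rm odd}$, and each part is bounded by $1$ on $[-1,1]$. I will realize each parity part by a QSVT sequence of $\mathcal O(d)$ alternating calls to $U$ and $U^\dagger$, interleaved with projector-controlled phase rotations. Each rotation uses one extra ancilla plus $\mathcal O(a)$ gates, which accounts for the $\mathcal O((a+1)d)$ overhead. QSVT directly yields only the real part of a complex polynomial. So I will take the standard LCU of the circuit for $p$ and the circuit for $p^*$ with a single-qubit Hadamard, and then combine the even and odd parts with another LCU qubit. This gives the $+2$ ancillas, normalization $1$, and one controlled call to $U$. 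The phase angles are computed classically to precision $\delta$, in time $\operatorname{poly}(d,\log(1/\delta))$; this is the source of the $+\delta$ term.

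\textbf{Step 3: robustness.} I need the perturbation bound: if $\|X-Y\|\le\eta$ with $\|X\|,\|Y\|\le1$, then $\|p(X)-p(Y)\|\le 4d\sqrt{\eta}$ for $|p|\le1$ on $[-1,1]$. Ref.~\cite{gilyen2019quantum} proves this via robustness of QSVT, by bounding the distance between the two block-encoding unitaries by $\sqrt{\eta}$-type quantities. Applying it with $\eta=\epsilon/\alpha$ gives the $4d\sqrt{\epsilon/\alpha}$ term. I expect this square-root robustness step, together with the non-Hermitian $\widetilde A$ issue, to be the main obstacle, because the naive Lipschitz bound would scale like $d^2\epsilon/\alpha$ (via Markov's inequality) and does not match the stated form. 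I would simply import this bound from Ref.~\cite[Lemma 22 and Theorem 56]{gilyen2019quantum} rather than reprove it.
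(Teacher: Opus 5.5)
Your approach matches the paper's. The paper gives no proof of its own and simply imports \cite[Theorem~56]{gilyen2019quantum}, and your Steps 1--3 outline the QSVT argument behind that theorem, with the robustness lemma imported from the same source. The stated purpose of your sketch is to check that the specialized parameters come out, but the accounting in Steps 2--3 does not yet produce them as written.

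(i) The hypothesis $|p|\le 1/2$ matters because it gives $|p_{\rm even}|,|p_{\rm odd}|\le 1/2$. Then $2p_{\rm even}$ and $2p_{\rm odd}$ satisfy the real-QSVT condition (definite parity, bounded by $1$), and an equal-weight LCU of their normalization-$1$ encodings yields $p$ with normalization $1$. Combining encodings of $p_{\rm even}$ and $p_{\rm odd}$ themselves, which is all that ``each part is bounded by $1$'' licenses, gives normalization $2$.

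(ii) Counting one ancilla for the projector-controlled phases, one Hadamard qubit for the real part, and one qubit for the parity LCU gives $a+3$, not $a+2$. The missing saving comes from reusing the phase ancilla. With that qubit in $|0\rangle$ or $|1\rangle$, the phase gadget implements $e^{\pm i\phi(2\Pi-I)}$. So preparing it in $|+\rangle$ and projecting onto $\langle +|$ already gives $\tfrac12(U_\Phi+U_{-\Phi})$, whose block is $(\operatorname{Re}P_\Phi)^{(SV)}$. Relatedly, QSVT produces a complex $P_\Phi$ whose real part is the target, and this LCU extracts that real part; it is not an LCU of circuits for $p$ and $p^*$.

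(iii) A generic select-LCU of two independent QSVT circuits would make every call to $U$ controlled. The single controlled call arises because the two parity sequences have lengths $d$ and $d-1$ and share the same alternating $U,U^\dagger$ skeleton. Only the phase rotations need to be controlled on the parity qubit, plus one controlled call for the last step of the longer sequence.

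(iv) In Step 3, the robustness bound should be applied to the singular value transforms of the definite-parity polynomials $2p_{\rm even}$ and $2p_{\rm odd}$. The circuit does not implement an eigenvalue calculus of the non-Hermitian $\widetilde A/\alpha$. With LCU weights $1/2$, the two errors $4d\sqrt{\epsilon/\alpha}$ combine to exactly $4d\sqrt{\epsilon/\alpha}$, and the two phase-approximation errors combine to $\delta$.
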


\begin{lemma}[Reciprocal approximation~{\cite[Corollary 67]{gilyen2019quantum}}]
\label{lem:be-reciprocal-polynomial}
For $h,\epsilon\in(0,1/2]$, there exists a real polynomial
$q$ of degree $\mathcal O(h^{-1}\log(1/\epsilon))$ such that
\begin{align}
\sup_{x\in[h,1]} \left|q(x)-\frac{h}{2x}\right|
\leq\epsilon, \qquad \sup_{x\in[-1,1]}|q(x)|\leq1.
\end{align}
Such an approximation may be chosen with either even or odd parity. In particular, $q/2$ satisfies the boundedness condition of Lemma~\ref{lem:be-polynomial-transform}.
\end{lemma}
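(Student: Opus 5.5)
We would follow the route of the cited source rather than reprove it from scratch: first build an odd polynomial that approximates $1/x$ uniformly on $h/2\le |x|\le1$, then multiply by a polynomial cutoff that kills the region near the origin where any approximation of $1/x$ must be large. As the base approximant we take the Childs--Kothari--Somma function
\[
f_b(x)=\frac{1-(1-x^2)^b}{x},\qquad b=\lceil h^{-2}\log(4/(h\epsilon))\rceil .
\]
It is an odd polynomial of degree $2b-1$. On $|x|\ge h/2$ it satisfies $|f_b(x)-1/x|\le e^{-bx^2}/|x|$, and by our choice of $b$ this is at most $\epsilon/(2h)$ up to adjusting constants. Its degree, however, is $\Theta(h^{-2}\log)$, which is too large.

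The next step reduces the degree. We expand $(1-(1-x^2)^b)/x$ in the Chebyshev basis, as in Childs--Kothari--Somma: writing $(1-x^2)^b$ via a binomial sum of Chebyshev polynomials of $x$ and truncating at index $j_0=\mathcal O(\sqrt{b\log(b/\epsilon)})$ gives an odd polynomial $\tilde f$ with $\sup_{[-1,1]}|\tilde f-f_b|\le \epsilon/h$. Its degree is $2j_0+1=\mathcal O(h^{-1}\log(1/(h\epsilon)))$, and a cruder bound absorbs this into $\mathcal O(h^{-1}\log(1/\epsilon))$ after rescaling $\epsilon$ by a constant. Multiplying by $h/2$ then gives the accuracy $\mathcal O(\epsilon)$ for $q\approx h/(2x)$ on $[h,1]$.

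The main obstacle is the global bound $\sup_{[-1,1]}|q|\le1$. The truncated $\tfrac h2\tilde f$ can reach size $\sim\sqrt{\log(1/\epsilon)}$ near $|x|\approx1/\sqrt b$, so boundedness fails without further work. To repair this, we take the polynomial rectangle/sign approximation of the source (GSLW Lemma 29). This is an even polynomial $r$ of degree $\mathcal O(h^{-1}\log(1/\epsilon))$ with $|r|\le1$ on $[-1,1]$, $|r-1|\le\epsilon$ on $h\le|x|\le1$, and $|r|\le\epsilon'$ on $|x|\le h/2$. We then set $q:=c\,\tfrac h2\tilde f\,r$ for a constant $c$ slightly below $1$, rescaling the error. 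We bound $q$ region by region:
\begin{itemize}
\item On $|x|\ge h/2$, we have $|\tfrac h2\tilde f|\le |\tfrac h{2x}|+\mathcal O(\epsilon)\le 1+\mathcal O(\epsilon)$.
\item On $|x|\le h/2$, we have $|\tilde f|\le \mathcal O(\sqrt b)$, and choosing $\epsilon'=\mathcal O(h/\sqrt b)$ costs only a logarithmic factor in the degree of $r$.
\item The transition zone is covered by the bound on $|x|\ge h/2$.
\end{itemize}
A final constant rescaling of $\epsilon$ yields both conditions, and $q$ is odd.

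For the even-parity version, we replace $r$ by an odd polynomial approximation $s$ of $\operatorname{sign}(x)$ that is within $\epsilon$ of $\pm1$ on $|x|\ge h$ and bounded by $1$ everywhere. The product $\tfrac h2\tilde f\,s$ is then even, agrees with the odd construction on $[h,1]$ up to $\mathcal O(\epsilon)$, and obeys the same region-by-region bound near $0$ and on $|x|\ge h/2$. The boundedness $|q/2|\le1/2$ needed for Lemma~\ref{lem:be-polynomial-transform} then follows immediately.
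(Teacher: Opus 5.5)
The paper gives no proof of this lemma; it is imported from Ref.~\cite{gilyen2019quantum}. Your Childs--Kothari--Somma-plus-cutoff construction is therefore an independent route. The route can be made to work, but as written it does not give the stated degree.

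You obtain $2j_0+1=\mathcal O(h^{-1}\log(1/(h\epsilon)))$ and then assert that this is $\mathcal O(h^{-1}\log(1/\epsilon))$ after rescaling $\epsilon$ by a constant. That is false. For $\epsilon=1/2$ fixed and $h\to0$, the lemma demands degree $\mathcal O(1/h)$, whereas your polynomial has degree $\Theta(h^{-1}\log(1/h))$. No constant rescaling of $\epsilon$ turns $\log(1/h)$ into $\mathcal O(\log(1/\epsilon))$. The extra $\log(1/h)$ comes from your parameter choices, and it enters in three places.

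(i) Because of the prefactor $h/2$, on $[h,1]$ you only need $\frac{h}{2x}(1-x^2)^b\le\frac12 e^{-bh^2}\le\epsilon$. So $b=\lceil 4h^{-2}\log(1/\epsilon)\rceil$ suffices, and your $b$ carries an unnecessary $\log(1/h)$.

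(ii) The CKS truncation bound $4b\,e^{-j_0^2/b}$ is too crude, since $\log b$ contains $2\log(1/h)$. Summing the Gaussian tail of the Chebyshev coefficients instead gives $\sum_{j>j_0}4e^{-(j+1)^2/b}\le\frac{2b}{j_0}e^{-j_0^2/b}$. After multiplying by $h/2$ and setting $j_0=\lceil t\sqrt b\rceil$ with $t\ge1$, the error is at most $h\sqrt b\,e^{-t^2}=\mathcal O(\sqrt{\log(1/\epsilon)})\,e^{-t^2}$. Hence $t=\mathcal O(\sqrt{\log(1/\epsilon)})$ and $j_0=\mathcal O(h^{-1}\log(1/\epsilon))$.

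(iii) Your cutoff accuracy $\epsilon'=\mathcal O(h/\sqrt b)$ is of order $h^2$, which puts $\log(1/h)$ back into the degree of $r$. Boundedness near the origin only needs $\frac h2\sup_{[-1,1]}|\tilde f|\,\epsilon'\le1$, i.e.\ $\epsilon'\asymp 1/(h\sqrt b)\asymp 1/\sqrt{\log(1/\epsilon)}$. That choice keeps the degree of $r$ within $\mathcal O(h^{-1}\log(1/\epsilon))$. (On $|x|\ge h/2$ you can also simply use $|f_b(x)|\le 1/|x|$, which holds on all of $[-1,1]$.)

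The even-parity variant has a separate gap. Your hypotheses on the odd sign approximation $s$ (within $\epsilon$ of $\pm1$ on $|x|\ge h$, bounded by $1$) give no smallness on $|x|\le h/2$, and standard constructions are not small there. Take $|x|\approx b^{-1/2}\approx h/(2\sqrt{\log(1/\epsilon)})$. There $\frac h2|\tilde f|\approx(1-e^{-1})\sqrt{\log(1/\epsilon)}$, while an erf-type $s$ has $|s|=\Theta(1)$. So $\frac h2\tilde f\,s$ exceeds $1$ for small $\epsilon$, and the region-by-region bound near $0$ does not transfer.

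The fix is to keep the even cutoff and use the odd factor $r\,s$, i.e.\ $q_{\mathrm{even}}=c\,\frac h2\tilde f\,r\,s$. This is small on $|x|\le h/2$, bounded by $1+\mathcal O(\epsilon)$ elsewhere, and within $\mathcal O(\epsilon)$ of $h/(2x)$ on $[h,1]$. With these repairs your argument proves the lemma. In the paper's own application $h=c/\alpha_P$ is a fixed constant, so your weaker degree bound would suffice there, but it is not what the lemma states.
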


\begin{lemma}[Oblivious isometry amplification~{\cite[Theorem 28]{gilyen2019quantum}}]
\label{lem:be-isometry-amplification}
Let $U$ be unitary, let $\Pi_{\rm in},\Pi_{\rm out}$ be orthogonal projectors, and let $W:\operatorname{im}(\Pi_{\rm in}) \to\operatorname{im}(\Pi_{\rm out})$ be an isometry, extended by zero outside $\operatorname{im}(\Pi_{\rm in})$. Suppose that, for an odd positive integer $m$,
\begin{align}
\left\|\Pi_{\rm out}U\Pi_{\rm in} - \sin \left(\frac{\pi}{2m}\right)W \right\| \leq\epsilon.
\end{align}
Then there is an amplified unitary $\widetilde U$ satisfying
\begin{align}
\left\| \Pi_{\rm out}\widetilde U\Pi_{\rm in} - W \right\| \leq 2m\epsilon.
\end{align}
The construction uses $m$ calls to $U$ or $U^\dagger$, $\mathcal O(m)$ projector-controlled operations and single-qubit gates, and at most one additional ancilla. 
\end{lemma}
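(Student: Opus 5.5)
The plan is to reduce the statement to the exact case by perturbing $U$, using singular-value (qubitization) structure for the exact case and a telescoping estimate for the error. Write $\theta:=\pi/(2m)$ and $A:=\Pi_{\rm out}U\Pi_{\rm in}$. There are three steps: (i) handle the exact case, where $A$ equals $\sin\theta\,W$ on the nose; (ii) show that any $U$ with $\|A-\sin\theta\,W\|\leq\epsilon$ is $2\epsilon$-close in operator norm to a unitary $U'$ for which the exact case applies; (iii) observe that the amplified circuit consists of $m$ calls to $U$ or $U^\dagger$, interleaved with fixed reflections. Telescoping then gives $\|\widetilde U-\widetilde U'\|\leq m\|U-U'\|\leq 2m\epsilon$, and compressing by $\Pi_{\rm out},\Pi_{\rm in}$ yields the claim.

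For the exact case, we use Jordan's lemma for the pair of projectors $\Pi_{\rm in}$ and $U^\dagger\Pi_{\rm out}U$, as in the quantum singular value transformation of~\cite{gilyen2019quantum}. Since $\Pi_{\rm out}U\Pi_{\rm in}=\sin\theta\,W$ with $W$ an isometry on $\operatorname{im}\Pi_{\rm in}$, every right singular vector in $\operatorname{im}\Pi_{\rm in}$ has the same singular value $\sin\theta$. Consequently $U$ acts on each invariant two-dimensional Jordan block as a rotation by the angle $\theta$ between the input and output subspaces. The standard alternating sequence
\begin{align}
\widetilde U = (2\Pi_{\rm out}-I)U\,(2\Pi_{\rm in}-I)U^\dagger(2\Pi_{\rm out}-I)U\cdots,
\end{align}
with $m$ total calls (the usual oblivious amplitude amplification, equivalently QSVT with the odd Chebyshev polynomial $T_m$), implements $T_m$ on the singular values. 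For odd $m$ we have $T_m(\sin\theta)=(-1)^{(m-1)/2}\sin(m\theta)=\pm1$, so $\Pi_{\rm out}\widetilde U\Pi_{\rm in}=\pm W$. A global phase, implemented with at most one extra ancilla or single-qubit gate, fixes the sign. The oddness of $m$ is precisely what makes the polynomial odd and hence implementable with $\Pi_{\rm in}$ on the right and $\Pi_{\rm out}$ on the left.

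The main obstacle is step (ii), namely constructing $U'$ with $\|U-U'\|\leq2\epsilon$ whose block is exactly $\sin\theta\,W$. First I would take the singular value decomposition of $A$ restricted to $\operatorname{im}\Pi_{\rm in}$. Since $\|A-\sin\theta\,W\|\leq\epsilon$, Weyl-type bounds place every singular value within $\epsilon$ of $\sin\theta$, and the left singular vectors are close to the images under $W$. Rather than controlling singular vectors separately, I would work inside each Jordan block of $U$. There, replacing the rotation angle $\theta_j$ (with $\sin\theta_j$ the singular value) by $\theta$, and the left partial isometry from the polar decomposition of $A$ by $W$, changes $U$ by at most $|\sin\theta_j-\sin\theta|$ plus the polar-factor error. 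Each contribution is $\leq\epsilon$, since the polar factor of $A$ and $\sin\theta\,W$ differ by at most $\|A-\sin\theta\,W\|/\sin\theta$ only in a crude bound, so some care is needed. If that crude route loses a factor $1/\sin\theta\sim m$, I would instead bypass $U'$ and directly bound the perturbation of the alternating product blockwise. Failing that, I would simply cite~\cite[Theorem 28]{gilyen2019quantum}, whose robust oblivious amplification analysis gives exactly the stated $2m\epsilon$ bound and resource counts.
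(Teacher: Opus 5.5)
The paper gives no proof of this lemma; it is quoted from \cite[Theorem~28]{gilyen2019quantum}, so your final fallback is exactly what the paper does. Your self-contained route, however, has a genuine gap at step (ii), and more care will not close it. Steps (i) and (iii) are fine. In the $2\times2$ Jordan coordinates the alternating product is $(ZU)^m$ with $Z=\mathrm{diag}(1,-1)$, and $ZU$ is a rotation by $\arccos\sigma_j$. So its block is $T_m(\sigma_j)$, and $T_m(\sin\frac{\pi}{2m})=(-1)^{(m-1)/2}$.

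Your two-stage modification of $U$ first resets all angles to $\theta$ inside the Jordan blocks, then carries the left singular vectors $\phi_j$ of $A$ onto $W\psi_j$. The second stage costs at least $\|\phi_j-W\psi_j\|$: for a unitary $U_1$ and an output-side rotation $R$ with $R\phi_j=W\psi_j$, one has $\|RU_1-U_1\|=\|R-I\|\geq\|\phi_j-W\psi_j\|$. This is genuinely of order $\epsilon/\sin\theta\approx 2m\epsilon/\pi$. For example, take $\operatorname{im}\Pi_{\rm in}=\operatorname{span}\{\psi\}$, $W\psi=\phi_0$ and $A\psi=\sin\theta\,\phi_0+\epsilon\phi_1$. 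So this route yields only $\mathcal O(m^2\epsilon)$. Your alternative of ``bounding the alternating product blockwise'' does not rescue it, since $U$ and $U'$ have different Jordan decompositions.

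The missing observation is that you never need a unitary close to the given $U$. Your own Jordan computation shows that the block $\Pi_{\rm out}\widetilde U\Pi_{\rm in}=\sum_jT_m(\sigma_j)|\phi_j\rangle\langle\psi_j|=T_m^{(SV)}(A)$ depends only on $A$, not on which dilation realizes it. You may therefore compare the Halmos dilations $U_H(X)=\begin{pmatrix}X&(I-XX^\dagger)^{1/2}\\(I-X^\dagger X)^{1/2}&-X^\dagger\end{pmatrix}$ of $X=A$ and $X=sW$, where $s:=\sin\theta$. Their difference has diagonal blocks $\pm(A-sW)$, of norm at most $\epsilon$, and off-diagonal blocks that are differences of defect square roots.

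Since $Y=\sqrt P-\sqrt Q$ solves $\sqrt P\,Y+Y\sqrt Q=P-Q$, one has $\|\sqrt P-\sqrt Q\|\leq\|P-Q\|/(\sqrt{\lambda_{\min}(P)}+\sqrt{\lambda_{\min}(Q)})$. Combine this with $\|AA^\dagger-s^2WW^\dagger\|,\ \|A^\dagger A-s^2\Pi_{\rm in}\|\leq\epsilon(2s+\epsilon)$ and $\|A\|\leq s+\epsilon$. Then each off-diagonal block is at most $\epsilon(2s+\epsilon)/(\sqrt{1-(s+\epsilon)^2}+\sqrt{1-s^2})<\epsilon$ for odd $m\geq3$ and $\epsilon<1/m$; at $m=3$ the factor is about $0.94$. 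Hence $\|U_H(A)-U_H(sW)\|<2\epsilon$. Telescoping over the $m$ calls then gives $\|T_m^{(SV)}(A)-(-1)^{(m-1)/2}W\|\leq2m\epsilon$.

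The excluded cases are trivial. For $m=1$ take $\widetilde U=U$. For $\epsilon\geq1/m$ the bound $2m\epsilon\geq2$ holds for any unitary. This is essentially the robustness argument behind the cited theorem.
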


\subsection{Block encoding the boundary correction}
\label{subsec:boundary-expansional}

We first construct a block encoding of $E=e^{-\beta H_W/2}e^{\beta H_0/2}$. The following lemma gives a block encoding of $E$ with normalization independent of $w$.

\begin{lemma}[Block encoding of the boundary expansional $E$]
\label{lem:efficient-boundary-expansional}
Fix $\beta, d$. 
There is a constant $C_E\geq1$, independent of $w$, such that $\|E\|,~\|E^{-1}\|\leq C_E$. Moreover, for every $\eta\in(0,1)$, a $(C_E,q_E,\eta)$ block encoding of $E$ can be constructed using $\mathcal{O}(\operatorname{poly}(w,1+\log(1/\eta)))$ one- and two-qubit gates, as well as $q_E$ ancilla qubits. The classical processing time is also $\mathcal{O}(\operatorname{poly}(w,1+\log(1/\eta),b))$.
\end{lemma}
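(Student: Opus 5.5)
The proof has two parts: uniform norm bounds on $E$ and $E^{-1}$, followed by a block-encoding construction along the lines of Section~\ref{subsec:algorithm-overview}.

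\textbf{Norm bounds.} We write $E=e^{-\beta H_W/2}e^{\beta H_0/2}$ as an Araki expansional for the open chain $W$, cut at the two bonds in $V_{\partial B}$. Proposition~\ref{prop:QL-cited-expansional} controls $e^{-\beta H}e^{\beta(H_B+H_{B^c})}$ and its inverse uniformly. It applies at inverse temperature $\beta/2$ with $W$ playing the role of the chain: after folding as in Appendix~\ref{app:QL-boundary-estimates}, $H_0$ is $H_W$ with the two boundary bonds removed. This gives $\|E\|,\|E^{-1}\|\le C_E$ with $C_E$ depending only on $\beta$ and $d$. The same bound also follows from the Dyson series below, once the weight bound on $V_0(s)$ is in hand: $\|E\|\le \exp(\int_0^{\beta/2}\|V_0(s)\|ds)$, and symmetrically for $E^{-1}=e^{-\beta H_0/2}e^{\beta H_W/2}$. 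For this reason I would derive the norm bound from the weight estimate, which makes the argument self-contained.

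\textbf{Block encoding of $V_0(s)$.} I expand $\operatorname{ad}_{H_0}^k(V_{\partial B})$ as a sum over sequences of bonds $(b_1,\dots,b_k)$ of $H_0$, where each $b_j$ overlaps the support accumulated so far. In 1D the accumulated support is an interval around each boundary bond, so at step $j$ only $O(1)$ bonds, namely those at the two edges, give a nonzero commutator. Counting with the standard cluster bound, the number of contributing histories is at most $c^k k!$ with coefficient norm $2^k$. That alone would only give convergence for small $s$. I therefore plan to use the sharper 1D fact that the support after $k$ steps is an interval of length $\le k+2$, which lets the histories be reorganized and resummed (Araki's 1D analyticity). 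This yields a total weight $\sum_k \frac{s^k}{k!}\cdot(\text{weight}_k)\le G(\beta)$, uniform in $w$, for all $s\le\beta/2$. If the naive bound only works for $s<s_0$, I will split $[0,\beta/2]$ into $O(\beta/s_0)$ pieces. Concretely, I write $E=\prod_j E_j$ with $E_j=e^{-s_jH_W}e^{s_j H_0}$-type factors conjugated appropriately, and encode each factor with a convergent expansion. Composing $O(1)$ factors keeps the normalization a constant $C_E$, possibly larger than $\|E\|$, which is enough. To expand each Pauli-decomposed history into an LCU, I prepare an equal superposition over histories indexed by edge choices (left/right growth plus the local Pauli labels). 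Truncating at $K=O(\log(1/\eta))$ and controlling on a discretized register for $s$, Lemma~\ref{lem:be-linear-combination} gives a block encoding of $V_0(s)$ with constant normalization and $\operatorname{poly}(w,K)$ gates.

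\textbf{Dyson series.} I truncate the Dyson series at order $m=O(\log(1/\eta))$; this is valid because the $m$-th term is bounded by $G^m/m!$. I discretize the time-ordered integral on $\operatorname{poly}(1/\eta)$ grid points, sort the labels reversibly as in Ref.~\cite{kieferov2019simulating}, and apply the controlled $V_0(s_j)$ encodings in sequence (Lemma~\ref{lem:be-product}). Combining orders via Lemma~\ref{lem:be-linear-combination} gives normalization $\sum_m G^m/m!\le e^G=:C_E$. The discretization error is controlled by $\|\partial_s V_0\|$, which is bounded by the same weight estimate. The classical processing consists only of computing the local terms to $O(\log(1/\eta))$ bits and the quadrature weights, so it is polynomial.

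The main obstacle is the $w$-uniform weight bound on the commutator histories for $s$ up to $\beta/2$ at arbitrary $\beta$. I will handle it by the interval-splitting/product strategy, which only needs the small-$s$ combinatorial bound.
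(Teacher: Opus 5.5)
There is a genuine gap, and it sits at the step you identify as the main obstacle: the $w$-uniform bound on the total weight of the commutator expansion of $V_0(s)=e^{-sH_0}V_{\partial B}e^{sH_0}$ for all $s\le\beta/2$ at arbitrary $\beta$.

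Your fallback of splitting $[0,\beta/2]$ into short pieces does not reduce this to the small-$s$ bound. The decomposition you describe is $E=\prod_{j}e^{-s_jH_0}E(\epsilon)e^{s_jH_0}$, with $E(s)=e^{-sH_W}e^{sH_0}$. Each conjugated factor solves $X'(u)=-X(u)V_0(s_j+u)$. It is therefore the time-ordered exponential of $V_0(s)$ on $[s_j,s_j+\epsilon]$, with $s_j$ up to $\beta/2$. These are exactly the large-$s$ objects you wanted to avoid, and the same holds for Trotterized variants $\prod_j e^{-\epsilon V_0(s_j)}$. You cannot implement the conjugation by encoding $e^{\mp s_jH_0}$ separately, since their norms are $e^{\Theta(s_jw)}$. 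Unconjugated factors such as $e^{-\epsilon H_W}$ fail for the same reason.

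More structurally, the small-$s$ cluster estimate holds in every dimension. Uniform-in-size boundedness of $e^{-sH_0}V_{\partial B}e^{sH_0}$ for all $s$ is a genuinely one-dimensional fact: in two or more dimensions, imaginary-time dynamics of local operators can blow up in finite imaginary time. So no argument that uses only the small-$s$ bound can produce it.

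Your remark that the support after $k$ steps is an interval of length at most $k+2$ does not help either. Allowing $k+\mathcal O(1)$ admissible bonds per step is precisely what gives the $k!$-type count and a finite radius of convergence. Separately, your claim that only the two edge bonds give nonzero commutators is false. Bonds inside the accumulated interval generically do not commute with the nested commutator; for example, $[X_2X_3,[X_2X_3,Z_1Z_2]]=4Z_1Z_2$. If the claim were true, the count would be $c^k$ and there would be no obstacle at all.

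The missing idea is the refined one-dimensional count used in the paper's proof of Proposition~\ref{prop:boundary-coherent-encoding}. Classify each step of a history in one of two ways:
\begin{itemize}
\item a growth step picks one of at most two bonds extending the current interval by one site;
\item an interior step picks one of at most $2+m$ bonds inside the interval, where $m$ is the number of growth steps so far, not the total number of steps.
\end{itemize}
The padded counts then satisfy $F_{k+1,m}=(2+m)F_{k,m}+2F_{k,m-1}$. Their exponential generating function, $\sum_kF_kx^k/k!=\exp(2x+2(e^x-1))$, is entire, reflecting Bell-type growth of $F_k$. Let $g$ bound the Pauli $\ell_1$ weight of each local term. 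The total weight is then at most $2g\sum_kF_k(2gs)^k/k!\le 2g\exp(2g\beta+2(e^{g\beta}-1))=:L$ for all $s\le\beta/2$ and every $\beta$, uniformly in $w$. Comparing with the generating function at $2x$ shows that the tail beyond order $K$ is suppressed by a factor $2^{-K}$.

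With this bound no splitting is needed. The rest of your plan coincides with the paper's construction: $C_E=e^{\beta L/2}$ from $\|V_0(s)\|\le L$, Dyson truncation at order $\mathcal O(\log(1/\eta))$, time discretization controlled by $\sup_s\|V_0'(s)\|$, reversible sorting, and an LCU over orders. Your alternative route to $\|E\|,\|E^{-1}\|\le C_E$ through the cited Araki-expansional bounds after folding is valid on its own. However, it does not supply the constant-normalization encoding of $V_0(s)$ that the construction actually requires.
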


If $\beta=0$, then $E=I$, so the claim is immediate. So we assume $\beta>0$. For $0\leq s\leq\beta/2$, define
\begin{align}
\label{eq:V0s}
V_0(s):=e^{-sH_0}V_{\partial B}e^{sH_0}.
\end{align}
We first construct a constant-normalized block encoding of $V_0(s)$ and then use it in the Dyson expansion to prove Lemma~\ref{lem:efficient-boundary-expansional}.

\begin{proposition}[Block encoding of $V_0(s)$]
\label{prop:boundary-coherent-encoding}
Under the assumptions of Lemma~\ref{lem:efficient-boundary-expansional}, there is a constant $L\geq1$ such that, for every $\xi\in(0,1)$, an $(L,q_V,\xi)$ block encoding of $V_0(s)$ can be constructed using $\operatorname{poly}(w,\log(1/\xi))$ one- and two-qubit gates and ancilla qubits. The construction acts coherently on a binary time register representing $s\in[0,\beta/2]$ with $\mathcal O(\log(1/\xi))$ bits. The classical processing time is polynomial in $w$, $\log(1/\xi)$, and $b$.
\end{proposition}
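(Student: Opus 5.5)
We expand $V_0(s)$ in nested commutators and turn the expansion into a linear combination of Pauli products. Write $H_0=\sum_{e}h_e$ over the bonds of $B$ and $A$. Each $h_e$ is decomposed in the two-site Pauli (or generalized Pauli) basis, $h_e=\sum_{P}c_{e,P}P$ with $\sum_P|c_{e,P}|\le d^4\|h_e\|\le d^4$, and similarly for the at most two terms of $V_{\partial B}$. Then
\begin{align}
V_0(s)=\sum_{k\ge0}\frac{(-s)^k}{k!}\sum_{e_1,\dots,e_k}\operatorname{ad}_{h_{e_k}}\cdots\operatorname{ad}_{h_{e_1}}(V_{\partial B}),
\end{align}
and only \emph{connected histories} contribute: at step $j$ the bond $e_j$ must overlap the support accumulated so far, which is an interval containing one boundary bond. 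In 1D such an interval of the form $[x,y]$ can only be touched by its two end bonds or by interior bonds. This gives at most $\mathcal O(j)$ admissible choices at step $j$, so naive counting yields $\mathcal O(k!)$ histories at order $k$. This cancels the $1/k!$ only up to a geometric factor $(c\beta)^k$, which need not be summable for large $\beta$.

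This is the main obstacle: obtaining an $\ell^1$ coefficient weight $L$ that is uniform in $w$ for \emph{all} $\beta$. I would not rely on the raw series. Instead I would use the standard 1D locality of imaginary-time evolution of a local operator (Araki's bound), $\|e^{-sH_0}Oe^{sH_0}\|\le C(s)$, together with the fact that $e^{-sH_0}Oe^{sH_0}$ is approximated within $e^{-\Omega(r)}$ by its restriction to an $r$-neighborhood. I would then re-expand $V_0(s)$ as a telescoping sum $\sum_r(\mathcal D_{N_{r}^c}-\mathcal D_{N_{r-1}^c})(V_0(s))$ of pieces supported on intervals of length $\mathcal O(r)$ with exponentially decaying norms, which gives a finite $L$ uniform in $w$. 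To make this constructive, on each shell I would still use the truncated commutator series, but organized by the \emph{spatial extent} $r$ rather than the order $k$. Araki's cluster-expansion estimate bounds the total weight of histories reaching extent $r$ by $C e^{-\mu r}$ uniformly in $k$. Truncating at $r,k=\mathcal O(\log(1/\xi))$ then costs error $\xi$.

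For the quantum implementation, we encode each history by a string of $k$ labels, namely a direction or interior position in the current interval together with a Pauli index, which gives $\operatorname{poly}(k,w)$ bits. We first prepare, with the coherent time register $s$, the amplitude $\sqrt{|{\rm coef}|/L}$ over these strings. The relevant weights factor as $s^k/k!$ times products of $|c_{e,P}|$ and a sign. The factor $s^k/k!$ is prepared by a controlled rotation ladder computed from $s$ by reversible arithmetic to $\mathcal O(\log(1/\xi))$ bits. The history labels are prepared sequentially, since the number of valid choices at step $j$ is a computable function of the current interval. Next, we compute the resulting Pauli product and its phase reversibly, using the nested commutator of Pauli strings, which is either $0$ or $\pm2$ times a Pauli string. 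Histories whose commutator vanishes are flagged into the failure subspace. We then apply the Pauli string controlled on the register and finally unprepare. Lemma~\ref{lem:be-linear-combination} gives normalization equal to the total weight $\le L$, with error from truncation and finite precision at most $\xi$. All registers and gates have size $\operatorname{poly}(w,\log(1/\xi))$, and the classical precomputation of the coefficients $c_{e,P}$ is $\operatorname{poly}(b,\log(1/\xi))$.
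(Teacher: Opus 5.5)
Your overall architecture, an LCU over nested-commutator histories with coherent $s$-dependence, is the paper's. However, the step that carries the proposition is asserted rather than proved, and the machinery you place around it cannot supply it.

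An LCU normalization is the pre-cancellation $\ell^1$ weight of the terms you actually implement. Araki's norm bound $\|e^{-sH_0}Oe^{sH_0}\|\le C(s)$ and the small norms of the shells $(\mathcal D_{N_r^c}-\mathcal D_{N_{r-1}^c})(V_0(s))$ say nothing about this weight, because a norm bound does not yield a block encoding of comparable normalization. So everything rests on your final claim that histories reaching extent $r$ carry total weight $\le Ce^{-\mu r}$ uniformly in $k$, which you do not derive. Summed over $r$, that claim already says the raw series has finite $\ell^1$ weight for every $\beta$. This contradicts your reason for abandoning the raw series and makes the shell decomposition redundant.

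The missing idea is that the naive count of $\mathcal O(j)$ choices per step is far from tight. After $m$ growth steps there are only $2+m$ interior choices and $2$ growth choices, and the interval grows only on growth steps. Hence, with $F_{k,m}$ counting padded histories of length $k$ with $m$ growth steps and $F_k=\sum_m F_{k,m}$,
\[
F_{k+1,m}=(2+m)F_{k,m}+2F_{k,m-1},\qquad \sum_{k\ge 0}F_k\frac{x^k}{k!}=\exp\bigl(2x+2(e^x-1)\bigr).
\]
This generating function is entire (Bell-type growth, so $F_k/k!$ decays superexponentially). Let $g$ bound the Pauli $\ell^1$ weight of each local term (your $d^4$). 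With weight $g(2g)^k$ per history and $2gs\le g\beta$, you get $L=2g\exp\bigl(2g\beta+2(e^{g\beta}-1)\bigr)$ for every $\beta$, uniformly in $w$. Evaluating at $2g\beta$ gives a tail $\le 2^{-K}\,2g\exp\bigl(4g\beta+2(e^{2g\beta}-1)\bigr)$ beyond order $K$, so $K=\mathcal O(\log(1/\xi))$ suffices with no spatial truncation.

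The bivariate version $\exp\bigl(2x+2y(e^x-1)\bigr)$, with $y$ marking growth steps, is what would actually prove your extent estimate, with superexponential decay in $r$. Note also that an estimate uniform in $k$ gives no tail bound in $k$ at fixed extent, which your double truncation also needs.

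Second, the sequential preparation must branch according to the number of \emph{completions} of a partial history, not the number of valid next choices. Branching by next choices gives each history probability $\prod_j n_j^{-1}$, which varies among histories of the same order, so the encoded coefficients come out wrong. Repairing this with an amplitude correction forces a normalization of order $\sum_k \frac{(2gs)^k}{k!}\max_\omega\prod_j n_j(\omega)$, where $\max_\omega\prod_j n_j(\omega)$ grows factorially. That is the naive count again, which fails at low temperature.

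The paper precomputes an $\mathcal O(K^2)$ table $N_{0,m}=1$, $N_{r,m}=(2+m)N_{r-1,m}+2N_{r-1,m+1}$. It then chooses interior or growth with probabilities $(2+m)N_{r-1,m}/N_{r,m}$ and $2N_{r-1,m+1}/N_{r,m}$, so every complete history of order $k$ has probability exactly $1/F_k$.

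With these two repairs, the rest of your plan is an equivalent implementation. Your $s$-dependent PREP replaces the paper's fixed PREP plus a flag rotation with $\langle0|R_{k,s}|0\rangle=(2s/\beta)^k$. Computing the nested Pauli commutator replaces the paper's expansion into left/right branches, at the same weight $2^k$.
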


\begin{proof}
We first bound the coefficient weight of a truncated commutator expansion, then implement it using the circuit in Fig.~\ref{fig:v0-block-encoding}.

\paragraph{Commutator expansion and coefficient bounds.}
Write $H_0=\sum_j h_j$ and $V_{\partial B}=v_1+v_2$, where each $v_\nu$ is supported on a boundary bond (missing boundary terms are set to zero). 
We write each term in the Pauli-basis, $h_j=\sum_P c_{j,P}P$ and $v_\nu=\sum_P b_{\nu,P}P$. Since these operators act on at most two sites and have norm at most one, their Pauli coefficient $\ell_1$ norms are bounded by a constant. Choose a constant $g\geq1$ strictly larger than this bound, and pad each local coefficient list with cancelling $\pm I$ terms to make its total weight exactly $g$. Repeated Pauli operators in these lists retain separate labels. The local coefficients can be computed classically to any prescribed precision in polynomial time.

With $\operatorname{ad}_{H_0}(X):=[H_0,X]$, Eq.~\eqref{eq:V0s} becomes
\begin{align}
V_0(s) = \sum_{\nu=1}^{2}\sum_{k=0}^{\infty} \frac{(-s)^k}{k!}\operatorname{ad}_{H_0}^{k}(v_\nu).
\label{eq:boundary-commutator-expansion}
\end{align}
For each $\nu$, expanding the nested commutators gives ordered sequences of local bonds, which we call histories. For a nonzero contribution, each selected bond must intersect the interval spanned by the initial bond supporting $v_\nu$ and all previously selected bonds. The selected bond either lies inside this interval or extends it by one site, and we call the latter a growth step. After $m$ growth steps, there are at most $2+m$ interior bonds and at most two bonds enlarging the interval. We pad the choices with zero terms so that these counts are always exactly $2+m$ and two, respectively. 
For counting purposes, a padded growth choice still increments $m$.

Let $F_{k,m}$ count these (padded) histories of length $k$ with $m$ growth steps, for a fixed initial boundary, and set $F_k=\sum_m F_{k,m}$. With $F_{0,m}=\mathbf1_{m=0}$ and $F_{k,-1}=0$,
\begin{align}
F_{k+1,m} = (2+m)F_{k,m}+2F_{k,m-1}, \qquad \sum_{k=0}^{\infty}F_k\frac{x^k}{k!} = \exp\bigl(2x+2(e^x-1)\bigr).
\label{eq:boundary-history-generating-function}
\end{align}
Each commutator contributes two ordered products, and the local Pauli coefficients have $\ell_1$ weight at most $g$. Thus a length-$k$ history has total coefficient weight $g(2g)^k$ before cancellations. Consequently, the total weight of the expansion is bounded by
\begin{align}
2g\sum_{k=0}^{\infty}F_k\frac{(2gs)^k}{k!} \leq 2g\exp\bigl(2g\beta+2(e^{g\beta}-1)\bigr) =:L.
\label{eq:boundary-coefficient-bound}
\end{align}
This bounds both the LCU normalization and the Pauli coefficient $\ell_1$ norm after combining repeated terms. The tail beyond order $K$ has weight at most
\begin{align}
2g\sum_{k>K}F_k\frac{(2gs)^k}{k!} \leq 2^{-K}2g\exp\bigl(4g\beta+2(e^{2g\beta}-1)\bigr).
\label{eq:boundary-commutator-tail}
\end{align}
Thus $K=\mathcal O(\log(1/\xi))$ suffices for truncation error at most $\xi/3$.

\begin{figure}
    \centering
    \includegraphics[width=1\linewidth]{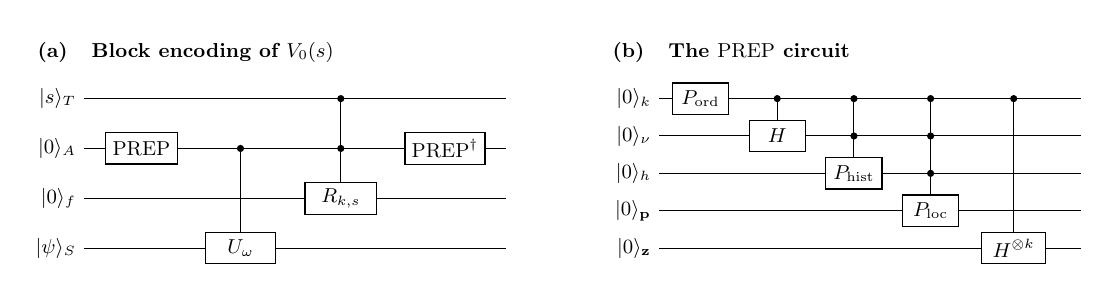}
    \caption{Block encoding of $V_0(s)$. (a) Prepared labels control the Pauli operations, while a flag rotation supplies the time dependence. Unpreparing the labels gives Eq.~\eqref{eq:boundary-truncated-block}. (b) Label preparation: choose the expansion order, a sequence of local terms, their Pauli components, and the left/right commutator branches, as in Eq.~\eqref{eq:boundary-prepared-state}. Dummy branches and ancilla registers are omitted.}
\label{fig:v0-block-encoding}
\end{figure}

\paragraph{Label preparation: Fig.~\ref{fig:v0-block-encoding}(b).}
The label register $A$ stores $\omega=(k,\nu,h,\mathbf p,\mathbf z)$. Here $k$ is the commutator order, $\nu\in\{1,2\}$ selects the initial boundary term $v_\nu$, and $h=(j_1,\ldots,j_k)$ records the selected local terms. The labels $\mathbf p=(p_0,\ldots,p_k)$ select one Pauli from each of $v_\nu,h_{j_1},\ldots,h_{j_k}$, while $\mathbf z=(z_1,\ldots,z_k)$ selects left or right multiplication at each commutator. For example, $k=2$, $\nu=1$, and $h=(j_1,j_2)$ specify $[h_{j_2},[h_{j_1},v_1]]$; after selecting Paulis $P_0,P_1,P_2$, the choices $\mathbf z=(L,R)$ give $-P_1P_0P_2$, where $L$ and $R$ denote the branches $PX$ and $-XP$ of $[P,X]$, respectively. (See Example~\ref{ex:V0} for more concrete details.) Registers have fixed capacity for order $K$ ($K$ bond labels, $K+1$ Pauli labels, and $K$ direction labels).

First, $P_{\rm ord}$ prepares order $k$ with probability $\lambda_k/L$, where $\lambda_k:=2gF_k(g\beta)^k/k!$. Since $\sum_{k=0}^{K}\lambda_k\leq L$, the remaining probability is assigned to a dummy label $\perp$. On each non-dummy branch, a Hadamard gate prepares $\nu\in\{1,2\}$ uniformly.

Next, $P_{\rm hist}$ prepares a uniform padded history conditional on $k$ and $\nu$. Let $N_{r,m}$ denote the number of possible continuations by $r$ further steps after $m$ growth steps, including the padded zero choices. These counts satisfy
\begin{align}
N_{0,m}=1,\qquad
N_{r,m}=(2+m)N_{r-1,m}+2N_{r-1,m+1}.
\label{eq:boundary-history-completions}
\end{align}
The table for $r+m\leq K$ has $\mathcal O(K^2)$ entries, each with $\mathcal O(K\log(K+2))$ bits, and $N_{k,0}=F_k$. At each step, choose the interior branch with probability $(2+m)N_{r-1,m}/N_{r,m}$ and the growth branch with probability $2N_{r-1,m+1}/N_{r,m}$, then choose uniformly among its padded slots. The probabilities along any complete history telescope to $1/F_k$. The selected bond indices are computed by reversibly updating the interval endpoints.

For the resulting history $(j_1,\ldots,j_k)$, $P_{\rm loc}$ prepares the local Pauli labels $p_0,\ldots,p_k$ with probabilities $|b_{\nu,p_0}|/g$ and $|c_{j_t,p_t}|/g$, respectively. Zero terms use cancelling coefficient lists of weight $g$. Finally, Hadamard gates prepare the $k$ active direction bits $z_1,\ldots,z_k$ uniformly.


Denote $a_{\nu,h,\mathbf p} :=|b_{\nu,p_0}|\prod_{t=1}^k|c_{j_t,p_t}|$
and $p_\perp:=1-\sum_{k=0}^K\lambda_k/L$. Suppressing unused registers fixed to zero, the preparation proceeds as follows:
\begin{align}
|0\rangle_A
&\xrightarrow{P_{\rm ord}} \sum_{k=0}^K \sqrt{\frac{\lambda_k}{L}} |k\rangle_A +\sqrt{p_\perp} |\perp\rangle_A \nonumber\\
&\xrightarrow{H_\nu} \sum_{k,\nu} \sqrt{\frac{\lambda_k}{2L}} |k,\nu\rangle_A +\sqrt{p_\perp} |\perp\rangle_A \nonumber\\
&\xrightarrow{P_{\rm hist}} \sum_{k,\nu,h}
\sqrt{\frac{\lambda_k}{2LF_k}}  |k,\nu,h\rangle_A +\sqrt{p_\perp} |\perp\rangle_A \nonumber\\
&\xrightarrow{P_{\rm loc}} \sum_{k,\nu,h,\mathbf p}
\sqrt{\frac{\lambda_k a_{\nu,h,\mathbf p}} {2LF_k g^{k+1}}} |k,\nu,h,\mathbf p\rangle_A +\sqrt{p_\perp} |\perp\rangle_A \nonumber\\
&\xrightarrow{H_{\mathbf z}} \sum_{\omega} \sqrt{p_\omega} |\omega\rangle_A + \sqrt{p_\perp} |\perp\rangle_A = \mathrm{PREP}|0\rangle_A,
\label{eq:boundary-prepared-state}
\end{align}
where $H_\nu$ prepares the boundary label uniformly and $H_{\mathbf z}$ applies Hadamard gates to the $k$ active direction bits, both conditional on a non-dummy branch. Here $a_\omega:=a_{\nu,h,\mathbf p}$ and $p_\omega=\lambda_k a_\omega/ (2^{k+1}LF_k g^{k+1}) =(\beta/2)^k a_\omega/(k!L)$.

\paragraph{Selected operations: Fig.~\ref{fig:v0-block-encoding}(a).}
The label $\omega$ determines a unitary $U_\omega$: the ordered product of the selected Pauli operators, including their coefficient signs, the commutator signs, and $(-1)^k$. Specifically, starting from $Q_0=P_{p_0}$, set $Q_t=P_{p_t}Q_{t-1}$ if $z_t=0$ and $Q_t=Q_{t-1}P_{p_t}$ if $z_t=1$. No explicit multiplication of Pauli matrices is needed. To implement $Q_k$, first apply $P_{p_t}$ for $t=k,\ldots,1$, conditional on $z_t=1$; then apply $P_{p_0}$; finally apply $P_{p_t}$ for $t=1,\ldots,k$, conditional on $z_t=0$. The coefficient signs and the factor $(-1)^{k+\sum_t z_t}$ are implemented by phases controlled by the labels. Using controls that restrict each step to $t\leq k$, this gives a fixed circuit with $2K+1$ controlled local Pauli slots, of which at most $k+1$ are active on each label branch, with polynomial overhead for the label controls and bond locations.

The time register $T$ controls a rotation on an additional flag $f$ satisfying $\langle0|R_{k,s}|0\rangle=(2s/\beta)^k$, with value one for $k=0$. On the dummy branch, this flag is flipped. Applying $\mathrm{PREP}$, the selected operations, and $\mathrm{PREP}^{\dagger}$ therefore gives the desired truncated version of $V_0(s)$:
\begin{align}
L\langle0|_{Af}U_s|0\rangle_{Af}
&= L\langle0|_{Af}
(\mathrm{PREP}^{\dagger}\otimes I_f)
\mathrm{SELECT}_s
(\mathrm{PREP}\otimes I_f)
|0\rangle_{Af}
\nonumber\\
&= L\sum_\omega
\sqrt{p_\omega}\sqrt{p_\omega} 
\langle0|R_{k,s}|0\rangle U_\omega
\nonumber\\
&= L\sum_\omega
\frac{(\beta/2)^k a_\omega}{k!L}
\left(\frac{2s}{\beta}\right)^k U_\omega
\nonumber\\
&= \sum_{k=0}^{K}\frac{s^k}{k!}
\sum_{\nu=1}^{2}
\sum_{h,\mathbf p,\mathbf z}
a_{(k,\nu,h,\mathbf p,\mathbf z)}
U_{(k,\nu,h,\mathbf p,\mathbf z)}
\nonumber\\
&= \sum_{\nu=1}^{2}\sum_{k=0}^{K}
\frac{(-s)^k}{k!}
\operatorname{ad}_{H_0}^{k}(v_\nu).
\label{eq:boundary-truncated-block}
\end{align}
The dummy branch contributes zero because $\langle0|X|0\rangle=0$. The last equality collects the Pauli labels, commutator directions, and padded histories.

\paragraph{Precision and complexity.}
Choose $K=\mathcal O(1+\log(1/\xi))$ so that the truncation error in Eq.~\eqref{eq:boundary-commutator-tail} is at most $\xi/3$. Approximate each unpadded local Pauli coefficient list to $\ell_1$ error at most $\delta$, preserving real coefficients, and then pad it to total weight $g$. The fixed slack in $g$ ensures that this padding is possible for sufficiently small $\delta$. Telescoping the coefficient products bounds the resulting change in the truncated operator by $L(K+1)\delta/g$. We therefore choose $\delta\leq g\xi/[3L(K+1)]$, making this error at most $\xi/3$.

The label preparations use conditional rotations determined by polynomial-size tables. For a distribution on $M$ labels, the uniformly controlled rotation construction of Ref.~\cite{mottonen2005transformation} uses $\mathcal O(M)$ single-qubit rotations and CNOTs after padding to a power of two. If the distribution depends on an additional index with $S$ possible values, the cost is $\mathcal O(SM)$. Thus $P_{\rm ord}$ costs $\mathcal O(K)$ gates. At each history step, the interior/growth choice depends on $\mathcal O(K^2)$ possible pairs $(r,m)$, while the uniform-slot choice has $\mathcal O(K)$ outcomes and $\mathcal O(K)$ possible values of $m$. Consequently, $P_{\rm hist}$ costs $\mathcal O(K^3)$ gates over all steps. Each local Pauli list has constant size and is selected by one of $\mathcal O(w)$ bond indices, so $P_{\rm loc}$ costs $\mathcal O(Kw)$ gates. These preparations require no enumeration of complete histories.

The selected Pauli operations likewise access only the labels of the relevant slot, together with the order label. Scanning the $\mathcal O(w)$ possible bond locations for each of the $\mathcal O(K)$ slots gives $\mathcal O(Kw)$ controlled local operations. Decomposing their controls and reversibly updating the interval endpoints adds polynomial overhead in $\log(K+w+2)$. All preparation ancilla registers are included in the ancilla register, retained during the selected operations, and uncomputed by $\mathrm{PREP}^{\dagger}$.

For the time-controlled flag, reversibly compute the angle $2\arccos[(2s/\beta)^k]$, apply the corresponding rotation, and uncompute the arithmetic work. The case $k=0$ is implemented directly. The angle is computed to sufficient absolute accuracy uniformly over the represented time values, including the endpoints. This requires only polynomial resources in $K$, the time-register size, and the number of precision bits.

Let $U_s^{\rm id}$ denote the ideal circuit for the truncated expansion with the approximated local coefficients. Choose the arithmetic precision and rotation-synthesis accuracy so that the implemented circuit satisfies $\sup_s\|\widetilde U_s-U_s^{\rm id}\|\leq\xi/(3L)$. For a circuit with $G_0$ elementary operations before gate synthesis, an error budget of $\xi/(3LG_0)$ per operation suffices by telescoping. Since $L$ is constant and $G_0$ is polynomial in the relevant parameters, the required precision is logarithmic in $G_0/\xi$. Combining the truncation, coefficient, and circuit errors gives $\|V_0(s)-L\langle0|_{Af}\widetilde U_s|0\rangle_{Af}\|\leq\xi$, uniformly over the represented times.

Including preparation and unpreparation, reversible arithmetic, label controls, and gate synthesis, a conservative gate-count bound is
\begin{align}
G_V
=
\mathcal O \left(
(K^3+Kw) 
\operatorname{poly} \left(
\log(K+w+2), 1+\log(1/\xi)
\right)
\right)
=
\operatorname{poly} \left(w,1+\log(1/\xi)\right).
\label{eq:G_V}
\end{align}
The circuit depth and ancilla count obey the same polynomial bound. The classical tables and rotation angles can also be computed in polynomial time in these parameters and $b$.

\end{proof}

We below give a simple example to explicitly show how this block-encoding process works.

\begin{example}[Example of block-encoding $V_0(s)$]
\label{ex:V0}
Consider a four-qubit chain with $B=\{1\}$ and $H_0=h_1+h_2$, where $h_1=X_2X_3$, $h_2=Z_3Z_4$, and $V_{\partial B}=v=Z_1Z_2$. We illustrate the construction for the second-order term $[H_0,[H_0,v]]$.

Two single-qubit labels $a,b\in\{1,2\}$ select the local terms in $[h_b,[h_a,v]]$, and two direction bits $\mathbf z=(z_1,z_2)$ select the commutator branches. Four Hadamard gates prepare these labels uniformly. The selected unitary is one of the four signed Pauli products
\begin{align}
U_{ab,00}=h_bh_av, \qquad U_{ab,01}=-h_avh_b,
\qquad U_{ab,10}=-h_bvh_a,
\qquad U_{ab,11}=vh_ah_b.
\end{align}
Each product is implemented by applying three local Pauli operators in the specified order, together with its sign. For example, the label $(a,b,\mathbf z)=(1,2,01)$ selects $-h_1vh_2$: apply $h_2$, then $v$, then $h_1$, and include a minus sign on this branch.

Writing $\mathrm{PREP}_2$ for the four Hadamard gates, the circuit acts on an arbitrary system state as
\begin{align}
|0^4\rangle_A|\psi\rangle &\xrightarrow{\mathrm{PREP}_2}
\frac14
\sum_{a,b=1}^{2}\sum_{\mathbf z\in\{0,1\}^2}
|a,b,\mathbf z\rangle_A|\psi\rangle
\nonumber\\
&\xrightarrow{\mathrm{SELECT}_2}
\frac14
\sum_{a,b=1}^{2}\sum_{\mathbf z\in\{0,1\}^2}
|a,b,\mathbf z\rangle_A U_{ab,\mathbf z}|\psi\rangle
\nonumber\\
&\xrightarrow{\mathrm{PREP}_2^\dagger}
|0^4\rangle_A
\frac1{16}[H_0,[H_0,v]]|\psi\rangle
+|\perp_\psi\rangle,
\end{align}
where $(\langle0^4|_A\otimes I)|\perp_\psi\rangle=0$. Indeed, summing over the direction bits gives $\sum_{\mathbf z}U_{ab,\mathbf z}=[h_b,[h_a,v]]$, and summing over $a,b$ gives $[H_0,[H_0,v]]$.

Thus the circuit combines all sixteen signed products without precomputing their matrices or collecting their Pauli coefficients. Only the local terms and the rule for selecting their order and signs are needed. This example isolates $k=2$ (the full construction also prepares the order label and supplies the Taylor coefficients through the order weights and the time-controlled flag rotation). 
\end{example}

We now use Proposition~\ref{prop:boundary-coherent-encoding}
to prove Lemma~\ref{lem:efficient-boundary-expansional}. Our encoding scheme is heavily inspired by the construction of Ref.~\cite{kieferov2019simulating}.

\begin{proof}[Proof of Lemma~\ref{lem:efficient-boundary-expansional}]
If $\beta=0$, then $E=I$, so assume $\beta>0$. Let $L$ be the coefficient bound from Eq.~\eqref{eq:boundary-coefficient-bound}. For $E(s):=e^{-sH_W}e^{sH_0}$, direct differentiation gives $E'(s)=-E(s)V_0(s)$ and $(E(s)^{-1})'=V_0(s)E(s)^{-1}$. Since $\|V_0(s)\|\leq L$, integration yields $\|E(s)\|,\|E(s)^{-1}\|\leq e^{sL}$. Thus $C_E:=e^{\beta L/2}$ bounds both $\|E\|$ and $\|E^{-1}\|$.

\paragraph{Dyson expansion and time discretization.}
Iterating the integral equation for $E(s)$ gives
\begin{align}
E =\sum_{m=0}^{\infty}\frac{(-\beta/2)^m}{m!}
\mathbb E \left[V_0(t_{(1)})\cdots V_0(t_{(m)})\right],
\label{eq:boundary-expansional-dyson}
\end{align}
where $t_1,\ldots,t_m$ are times sampled i.i.d. from $[0,\beta/2]$, sorted as $t_{(1)}\leq\cdots\leq t_{(m)}$. The tail beyond order $J$ is at most $\sum_{m>J}(\beta L/2)^m/m!\leq2^{-J}e^{\beta L}$. Choose $J=\mathcal O(\log(1/\eta))$ to make this error at most $\eta/4$.

The norm of the $k$th term in the commutator expansion is bounded by $2gF_k(2gs)^k/k!$. Differentiating these bounds term by term gives a convergent series, uniformly for $s\in[0,\beta/2]$. Hence there is a constant $L_1$, independent of $w$, such that $\sup_{s\in[0,\beta/2]}\|V_0'(s)\|\leq L_1$. Replace each time by the left endpoint of its cell in a uniform mesh of $M=2^{q_t}$ points. Every sorted time changes by at most $\beta/(2M)$, so telescoping the products and summing their weights bounds the discretization error by $\beta^2L_1C_E/(4M)$. Choose $q_t=\mathcal O(\log(1/\eta))$ to make this at most $\eta/4$. The resulting operator is
\begin{align}
E_{J,M} :=\sum_{m=0}^{J}\frac{(-\beta/2)^m}{m!}
\mathbb E_M \left[V_0(t_{(1)})\cdots V_0(t_{(m)})\right],
\label{eq:boundary-discrete-dyson}
\end{align}
where $\mathbb E_M$ averages over independent uniform indices $j_i\in\{0,\ldots,M-1\}$ with $t_i=\beta j_i/(2M)$, including repeated indices.

\begin{figure}
    \centering
    \includegraphics[width=1\linewidth]{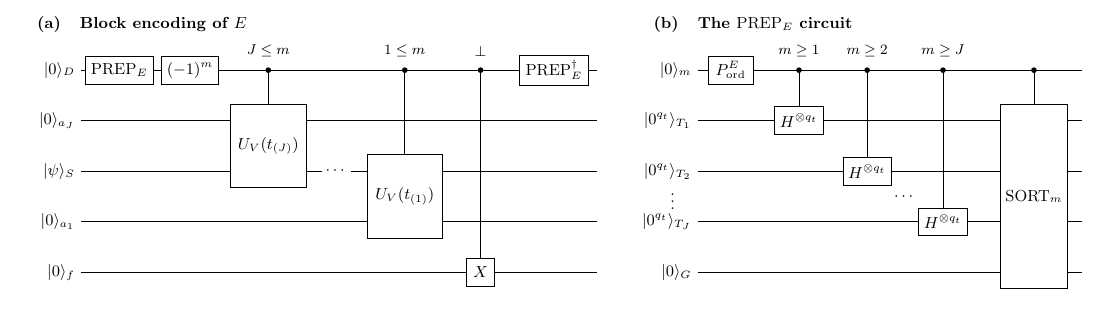}
    \caption{Block encoding of $E$. (a) Prepared labels control the ordered calls to $U_V(t)$, each using a separate ancilla. The phase $(-1)^m$ supplies the Dyson sign, and the dummy branch flips the flag. Unpreparing the labels gives Eq.~\eqref{eq:boundary-dyson-block} for ideal inner blocks. (b) Label preparation: choose the expansion order, prepare uniform time indices, and sort them reversibly, as in Eq.~\eqref{eq:boundary-dyson-prepared-state}. Sorting data are retained in $G$ until unpreparation.}
\label{fig:e-block-encoding}
\end{figure}

\paragraph{Label preparation: Fig.~\ref{fig:e-block-encoding}(b).}
Set $p_m:=(\beta L/2)^m/(m!C_E)$ for $0\leq m\leq J$ and $p_\perp:=1-\sum_{m=0}^{J}p_m$. The circuit $P_{\rm ord}^{E}$ prepares the order with probabilities $p_m$, assigning the remaining probability to a dummy label $\perp$. Conditional on a non-dummy order $m$, $H^{\otimes m q_t}$ applies Hadamard gates to the first $m$ time slots, preparing independent uniform indices in $\{0,\ldots,M-1\}$. (Here $H^{\otimes m q_t}$ denotes the controlled operation that applies $H^{\otimes q_t}$ to slot $i$ if $i\leq m$ on a non-dummy branch, and acts trivially on all other slots.) The circuit $\mathrm{SORT}$ then sorts the active indices reversibly, retaining its comparison bits in a ancilla register $G$.

Write $\mathbf j=(j_1,\ldots,j_m)$, $\mathbf j^\uparrow=(j_{(1)},\ldots,j_{(m)})$ for the sorted tuple, and $g_m(\mathbf j)$ for the retained sorting data. With $D$ containing the order, time, and sorting-ancilla registers, the preparation is
\begin{align}
|0\rangle_D
&\xrightarrow{P_{\rm ord}^{E}}
\sum_{m=0}^{J}\sqrt{p_m} |m,0,0\rangle_D
+\sqrt{p_\perp} |\perp,0,0\rangle_D
\nonumber\\
&\xrightarrow{H^{\otimes m q_t}}
\sum_{m=0}^{J}\sqrt{\frac{p_m}{M^m}}
\sum_{\mathbf j\in\{0,\ldots,M-1\}^{m}}
|m,\mathbf j,0\rangle_D
+\sqrt{p_\perp} |\perp,0,0\rangle_D
\nonumber\\
&\xrightarrow{\mathrm{SORT}}
\sum_{m=0}^{J}\sqrt{\frac{p_m}{M^m}}
\sum_{\mathbf j\in\{0,\ldots,M-1\}^{m}}
|m,\mathbf j^\uparrow,g_m(\mathbf j)\rangle_D
+\sqrt{p_\perp} |\perp,0,0\rangle_D
\nonumber\\
&=\mathrm{PREP}_E|0\rangle_D.
\label{eq:boundary-dyson-prepared-state}
\end{align}
Unused time slots are fixed to zero, and $m=0$ has a single empty tuple. The sorted indices specify the times $t_{(i)}=\beta j_{(i)}/(2M)$. The sorting data $g_m(\mathbf j)$ distinguish input tuples that produce the same sorted tuple, making sorting reversible and preserving their multiplicities in the average $\mathbb E_M$. These data are retained until $\mathrm{PREP}_E^\dagger$.


\paragraph{Selected products: Fig.~\ref{fig:e-block-encoding}(a).}
Let $U_V(t)$ denote the block encoding supplied by Proposition~\ref{prop:boundary-coherent-encoding}. First consider ideal inner blocks $V_0(t)/L$. For order $m$, apply $U_V(t_{(m)}),\ldots,U_V(t_{(1)})$ in that chronological order, using a separate ancilla register $a_j$ for each call. Their joint zero block is then $V_0(t_{(1)})\cdots V_0(t_{(m)})/L^m$. The fixed circuit has $J$ possible calls, run from $j=J$ down to $1$, active only on non-dummy branches with $j\leq m$. Apply the phase $(-1)^m$ on each non-dummy branch and flip the flag $f$, initially in $|0\rangle$, only on the dummy branch. Let $\mathrm{SELECT}_E$ denote these operations together with the controlled encoding calls, and define $U_E:=(\mathrm{PREP}_E^\dagger\otimes I_{\mathbf a f S}) \mathrm{SELECT}_E (\mathrm{PREP}_E\otimes I_{\mathbf a f S})$, where $\mathbf a=(a_1,\ldots,a_J)$. The label register $D$ is preserved by $\mathrm{SELECT}_E$, and the dummy branch contributes zero because $\langle0|X|0\rangle=0$.

Taking the zero block of each inner encoding and substituting $p_m=(\beta L/2)^m/(m!C_E)$, we obtain
\begin{align}
\langle0|_{D\mathbf a f}U_E|0\rangle_{D\mathbf a f}
&= \sum_{m=0}^{J}p_m(-1)^m
\mathbb E_M \left[\prod_{j=1}^{m}\bigl(\langle0|_{a_j}U_V(t_{(j)})|0\rangle_{a_j}\bigr)
\right] \nonumber\\
&= \sum_{m=0}^{J}p_m(-1)^m
\mathbb E_M \left[\frac{V_0(t_{(1)})\cdots V_0(t_{(m)})}{L^m}
\right] \nonumber\\
&= \frac{1}{C_E}
\sum_{m=0}^{J}\frac{(-\beta/2)^m}{m!}
\mathbb E_M \left[V_0(t_{(1)})\cdots V_0(t_{(m)})
\right] = \frac{E_{J,M}}{C_E}.
\label{eq:boundary-dyson-block}
\end{align}

\paragraph{Precision and complexity.}
Let $\widetilde U_V(t)$ be the implemented inner encoding and set $\widetilde V_0(t):= L\langle0|_a\widetilde U_V(t)|0\rangle_a$. Choose the inner-encoding and time-conversion accuracies so that $\|\widetilde V_0(t)-V_0(t)\|\leq\xi$ uniformly over the mesh. The time-conversion error is controlled by $L_1$. Since a block of a unitary is a contraction, $\|\widetilde V_0(t)\|\leq L$, and also $\|V_0(t)\|\leq L$. Telescoping an $m$-fold product therefore bounds its error by $mL^{m-1}\xi$. Summing with the Dyson weights gives
\begin{align}
\sum_{m=1}^{J}
\frac{(\beta/2)^m}{m!} mL^{m-1}\xi
\leq \frac{\beta C_E\xi}{2}.
\end{align}
Choose $\xi\leq\min\{1/2,\eta/(2\beta C_E)\}$, so this contribution is at most $\eta/4$. Implement the remaining outer operations with total unitary error at most $\eta/(4C_E)$. Together with the truncation and discretization errors, each bounded by $\eta/4$, the unnormalized block error is at most $\eta$.

The outer order preparation uses $\mathcal O(J)$ CNOTs and one-qubit rotations by the same state-preparation construction used above. Preparing the time registers requires $\mathcal O(Jq_t)$ controlled Hadamard gates. Sorting uses $\mathcal O(J^2)$ reversible comparisons and conditional swaps of $q_t$-bit registers; each has gate cost polynomial in $q_t$ and $\log(J+2)$. The comparison bits are retained until unpreparation. There are at most $J$ controlled calls to $U_V$, each with cost $G_V(w,\xi)$ in Eq.~\eqref{eq:G_V}. Including unpreparation and finite-precision synthesis, the total gate count is bounded by
\begin{align}
\mathcal O \left( J G_V(w,\xi) + J^2\operatorname{poly} \left( q_t,\log(J+2),1+\log(1/\eta) \right) \right).
\end{align}
Since $J$, $q_t$, and $\log(1/\xi)$ are $\mathcal O(1+\log(1/\eta))$, this is $\operatorname{poly}(w,1+\log(1/\eta))$. The ancilla count is polynomial in the same parameters, and the classical processing time is polynomial also in $b$. Thus the construction gives the claimed $(C_E,q_E,\eta)$ block encoding of $E$.
\end{proof}

Below, we give a simple example of this process. 

\begin{example}[Assembling a Dyson approximation from block encodings]
For illustration, take $\beta=2$, retain Dyson orders $m\leq2$, and use the two time points $0$ and $1/2$. Write $P=V_0(0)$ and $G=V_0(1/2)$, and assume ideal block encodings $U_0,U_{1/2}$ with normalization $L$.

For $m=1$, preparing a time qubit in $(|0\rangle+|1\rangle)/\sqrt2$, applying the corresponding block encoding, and unpreparing the time qubit gives $(P+G)/(2L)$ in the all-zero ancilla block. The Dyson sign contributes an additional minus sign.

For $m=2$, two Hadamard gates prepare $(|00\rangle+|01\rangle+|10\rangle+|11\rangle)/2$. We reversibly sort the time labels and call the larger-time block encoding first, using separate ancilla registers for the two calls. The four time branches therefore contribute $P^2/L^2$, $PG/L^2$, $PG/L^2$, and $G^2/L^2$. For example, the labels $10$ are sorted to $01$; applying $U_{1/2}$ followed by $U_0$ gives $PG/L^2$. Sorting work is retained until the sorting is undone. Unpreparing the time labels then gives $(P^2+2PG+G^2)/(4L^2)$.

To combine the orders, set $C_E=e^L$ and prepare $m=0,1,2$ with probabilities $1/C_E$, $L/C_E$, and $L^2/(2C_E)$, respectively. The remaining probability is assigned to a dummy branch whose encoded block is zero. Applying the corresponding circuits and unpreparing the order register gives
\begin{align}
\langle0|U_E|0\rangle
&= \frac1{C_E}I + \frac{L}{C_E} \left(-\frac{P+G}{2L}\right)
+ \frac{L^2}{2C_E} \left(\frac{P^2+2PG+G^2}{4L^2}\right)
\nonumber\\
&= \frac1{C_E} \left( I-\frac{P+G}{2} +\frac{P^2+2PG+G^2}{8} \right).
\end{align}
Thus the factors $L^m$ in the order weights cancel the normalization losses from the $m$ block-encoding calls, leaving the required Dyson coefficients. The operator in parentheses is the second-order Dyson approximation to $E=e^{-H_W}e^{H_0}$ with a two-point time discretization. The full construction increases the truncation order and time resolution and controls the implementation errors to achieve the prescribed accuracy.
\end{example}

\subsection{Block encoding the relative inverse-square-root factor}
\label{subsec:relative-inverse}

We next construct a block encoding of $T=C(CSC)^{-1/2}$. In our application, $C=e^{-K}$ with $K=\beta H_A/2$. Although $CSC$ may have exponentially small eigenvalues, the complete factor $T$ remains bounded under the assumed spectral bounds on $S$. Encoding $(CSC)^{-1/2}$ separately would introduce a large normalization before its growth is canceled by the prefactor $C$. We preserve this cancellation by rewriting $T$ as an integral of products involving only bounded factors and uniformly well-conditioned inverses. Each product can then be block encoded efficiently, and a quadrature rule combines them into an encoding of $T$, as shown in Fig.~\ref{fig:t-block-encoding}.

\begin{lemma}[Relative inverse square root]
\label{lem:relative-inverse}
Let $K$ be Hermitian with $\|K\|\leq L_0$ and admit an efficient block encoding with normalization $\mathcal O(1+L_0)$. Let $C=e^{-K}$, and suppose that $S$ admits an efficient block encoding with constant normalization and satisfies $c_-I\leq S\leq c_+I$ for fixed constants $0<c_-\leq c_+<\infty$.

For every $\eta\in(0,1)$, an $(\alpha_T,q_T,\eta)$ block encoding of $T=C(CSC)^{-1/2}$ can be constructed with
\begin{align}
\alpha_T = \mathcal O\bigl(1+L_0+\log(1/\eta)\bigr).
\label{eq:T-normalization}
\end{align}
The gate count, ancilla count, and classical processing time are polynomial in $1+L_0$, $\log(1/\eta)$, and the corresponding resources for the supplied encodings. Input encoding errors of $\eta^4/\operatorname{poly}(1+L_0+\log(1/\eta))$ suffice, for a sufficiently large fixed polynomial.
\end{lemma}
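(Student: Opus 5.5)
We will follow the integral representation already sketched in the overview, Eq.~\eqref{eq:T-preconditioned-integral}. The first step is to verify the exact identity.

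Start from the standard formula $X^{-1/2}=\frac{2}{\pi}\int_0^\infty (X+\lambda^2 I)^{-1}\,d\lambda$ for positive $X$. Substitute $\lambda=e^u$ to get $C(CSC)^{-1/2}=\frac{2}{\pi}\int_{\mathbb R}e^u C(CSC+e^{2u}I)^{-1}du$.

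Next rewrite the resolvent. Set $D_u=(I+e^{2K+2u})^{-1/2}$ and $F_u=(I+e^{-2K-2u})^{-1/2}$; these are functions of $K$, commute with $C$, and satisfy $D_u^2+F_u^2=I$. A direct computation gives $e^{u}C=D_uF_u^{-1}\cdot(\text{scalar})$-type relations, namely $D_u=e^{-u}C\,F_u$. Hence
\begin{align}
e^uC(CSC+e^{2u})^{-1}C^{-1}\cdots = D_u\bigl(D_uSD_u+F_u^2\bigr)^{-1}F_u .
\end{align}
To check this, factor $CSC+e^{2u}=e^{u}F_u^{-1}(D_uSD_u+F_u^2)F_u^{-1}e^{u}$. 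The positivity bounds then give $\min\{c_-,1\}\le P_u\le\max\{c_+,1\}$ uniformly in $u$. I will also record the bound $\|D_uP_u^{-1}F_u\|\le \|D_uF_u\|/\min\{c_-,1\}\le \tfrac12\operatorname{sech}$-type decay. Since $\|D_uF_u\|\le \sup_x (e^{x}+e^{-x})^{-1}\cdot$, shifted by $x=2K+2u$, the integrand decays like $e^{-2(|u|-L_0)}$ for $|u|>L_0$.

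The construction then proceeds in four steps:
\begin{enumerate}
\item Truncate the integral to $|u|\le U:=L_0+\tfrac12\log(1/\eta)+O(1)$, which costs error $\le\eta/4$.
\item Block-encode $D_u$ and $F_u$ with constant normalization via Lemma~\ref{lem:be-polynomial-transform}, applied to the encoding of $K/\alpha_K$. The functions $x\mapsto(1+e^{2\alpha_Kx+2u})^{-1/2}$ are smooth logistic-type functions on $[-1,1]$; their Chebyshev approximations need degree $\mathrm{poly}(\alpha_K,\log(1/\eta))$, because their analyticity strip has width $\Theta(1/\alpha_K)$ (the poles are at imaginary distance $\pi/2$ in $2\alpha_K x$). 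We rescale by $1/2$ to satisfy the boundedness hypothesis.
\item Build an encoding of $P_u$ from the encodings of $S$ and $D_u$, $F_u$, using Lemmas~\ref{lem:be-product} and~\ref{lem:be-linear-combination}, with constant normalization. Invert it using Lemma~\ref{lem:be-reciprocal-polynomial} with $h=\Theta(\min\{c_-,1\}/\alpha_{P})$; this is a constant, so the degree is $O(\log(1/\eta))$.
\item Form the product $D_uP_u^{-1}F_u$ (constant normalization), and then apply Gauss--Legendre quadrature on $[-U,U]$.
\end{enumerate}

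The quadrature error is controlled because the integrand is analytic in $u$ in a strip of constant width, since $P_u$ stays invertible for small imaginary shifts by a perturbation argument. Gauss--Legendre with $N=\mathrm{poly}(U,\log(1/\eta))$ nodes therefore converges geometrically. Combining the node encodings by Lemma~\ref{lem:be-linear-combination}, with positive weights $w_j$ summing to $2U$ (times $2/\pi$), gives normalization $\alpha_T=O(U)=O(1+L_0+\log(1/\eta))$, as in Eq.~\eqref{eq:T-normalization}. The node index requires a controlled selection of $u_j$, implemented by coherently computing the polynomial coefficients, or more simply by loading the $N$ classical circuits under control. Either way the cost is polynomial.

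The main obstacle is error propagation through the polynomial transformations. Lemma~\ref{lem:be-polynomial-transform} degrades input error $\epsilon$ to $O(d\sqrt{\epsilon})$, and we apply it twice in sequence: once to $K$ for $D_u,F_u$, and again to $P_u$ for the inverse. This square-root-of-square-root loss is exactly what produces the required input accuracy $\eta^4/\mathrm{poly}(1+L_0+\log(1/\eta))$. The bookkeeping has three parts:
\begin{itemize}
\item choose the inner errors so that after both transforms each node has error $\le\eta/(4\alpha_T)$;
\item check the analytic-strip claim for $P_u^{-1}$, which needs $\|\partial$-shifts of $D_u,F_u\|$ to be uniformly bounded in complex $u$ with $|\Im u|\le\pi/8$;
\item sum the errors.
\end{itemize}
I expect the analyticity and uniformity check to be the delicate step. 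If it fails, I would fall back to a trapezoid rule in $u$, with derivative bounds obtained from the explicit resolvent formula.
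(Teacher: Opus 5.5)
Your plan is essentially the paper's proof. You use the same preconditioned integrand $D_uP_u^{-1}F_u$ with $\min\{c_-,1\}I\le P_u\le\max\{c_+,1\}I$, and the same truncation plus Gauss--Legendre quadrature justified by analyticity in a constant-width strip. You also use QSVT encodings of $D_u,F_u$ from $K$, a constant-condition-number inversion of $P_u$, an LCU over the nodes with positive weights, and the two nested square-root error losses that give the $\eta^4/\mathrm{poly}$ input accuracy.

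Your perturbation argument for the strip does work. The maps $y\mapsto(1+e^{\pm2y})^{-1/2}$ have uniformly bounded derivatives for $|\operatorname{Im}y|\le\pi/8$, so $\|P_{u+i\theta}-P_u\|=O(|\theta|)$ uniformly in $u$ and $L_0$, and no trapezoid fallback is needed. The paper's route here is cleaner because it avoids complex square roots. It uses the exact identity $f(u+i\theta)=e^{i\theta}D_uM_{u,\theta}^{-1}F_u$, where $M_{u,\theta}=D_uSD_u+e^{2i\theta}F_u^2$ satisfies $\operatorname{Re}M_{u,\theta}\ge cI/\sqrt2$ for $|\theta|\le\pi/8$.

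The one step that fails as written is the tail estimate behind your truncation radius.
\begin{itemize}
\item The bound $\|D_uP_u^{-1}F_u\|\le\|D_uF_u\|/\min\{c_-,1\}$ is not valid, because $P_u^{-1}$ does not commute with $D_u,F_u$. The usable bound is $\|D_u\|\|P_u^{-1}\|\|F_u\|\le c^{-1}\min\{1,e^{L_0-u}\}\min\{1,e^{L_0+u}\}$.
\item Even $D_uF_u=(e^{K+u}+e^{-K-u})^{-1}$ has argument $K+u$, not $2K+2u$. The decay is therefore $e^{-(|u|-L_0)}$, not $e^{-2(|u|-L_0)}$.
\item This rate is sharp. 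Take $S=I$ and $K=-L_0I$, so $T=I$. The integrand is $e^{u+L_0}/(e^{2L_0}+e^{2u})$, and the discarded right tail equals $\frac{2}{\pi}\arctan\bigl(e^{-(U-L_0)}\bigr)$.
\item With your choice $U=L_0+\frac12\log(1/\eta)+O(1)$, this tail is of order $\sqrt\eta$, far above the $\eta/4$ you budgeted.
\end{itemize}
The fix is to take $U=L_0+\log(1/\eta)+O(1)$. The paper uses $R=L_0+\log(32/(c\eta))+1$, which gives tail error $4e^{L_0-R}/(\pi c)\le\eta/4$. This leaves $\alpha_T=O(1+L_0+\log(1/\eta))$ and all resource counts unchanged.

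A smaller point: Chebyshev approximants of $[0,1]$-valued functions can overshoot by $\epsilon_{\rm app}$. Dividing by $2$ therefore does not quite meet the $|p|\le1/2$ hypothesis of Lemma~\ref{lem:be-polynomial-transform}; divide by $4$, as the paper does.
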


\begin{proof}
For real $u$, define
\begin{align}
D_u:=(I+e^{2(K+uI)})^{-1/2},
\qquad
F_u:=(I+e^{-2(K+uI)})^{-1/2},
\qquad
P_u:=D_uSD_u+F_u^2.
\label{eq:preconditioned-integrand}
\end{align}
Set $c:=\min(c_-,1)$ and $C_+:=\max(c_+,1)$. Since $D_u^2+F_u^2=I$, we have $cI\leq P_u\leq C_+I$. Thus $P_u$ has uniformly bounded condition number, independently of $u$, $L_0$, and the dimension.

\paragraph{A preconditioned resolvent representation.}
For a positive definite operator $A$, the inverse-square-root integral gives $A^{-1/2}=(2/\pi)\int_0^\infty(A+x^2I)^{-1} dx$. Applying this to $A=CSC$ and substituting $x=e^u$, we obtain
\begin{align}
T
&=C(CSC)^{-1/2} \nonumber\\
&=\frac{2}{\pi}\int_0^\infty
C(CSC+x^2I)^{-1} dx \nonumber\\
&=\frac{2}{\pi}\int_{\mathbb R}
e^uC(CSC+e^{2u}I)^{-1} du.
\label{eq:relative-resolvent-unpreconditioned}
\end{align}
To precondition the resolvent, set $R_u:=(C^2+e^{2u}I)^{1/2}$. Since $C=e^{-K}$, the definitions in Eq.~\eqref{eq:preconditioned-integrand} give $D_u=CR_u^{-1}$ and $F_u=e^uR_u^{-1}$. As $C$ and $R_u$ commute, we have $R_uP_uR_u=R_uD_uSD_uR_u+R_uF_u^2R_u = CSC+e^{2u}I$.  Consequently, 
\begin{align}
e^uC(CSC+e^{2u}I)^{-1} = e^uCR_u^{-1}P_u^{-1}R_u^{-1} =D_uP_u^{-1}F_u.
\end{align}
Substituting into Eq.~\eqref{eq:relative-resolvent-unpreconditioned} yields
\begin{align}
T =\frac{2}{\pi}\int_{\mathbb R} D_uP_u^{-1}F_u du.
\label{eq:relative-resolvent}
\end{align}
Also, $TT^\dagger = C(CSC)^{-1}C =CC^{-1}S^{-1}C^{-1}C =S^{-1}$, and hence $\|T\|\leq c_-^{-1/2}$.

\begin{figure}
    \centering
    \includegraphics[width=1\linewidth]{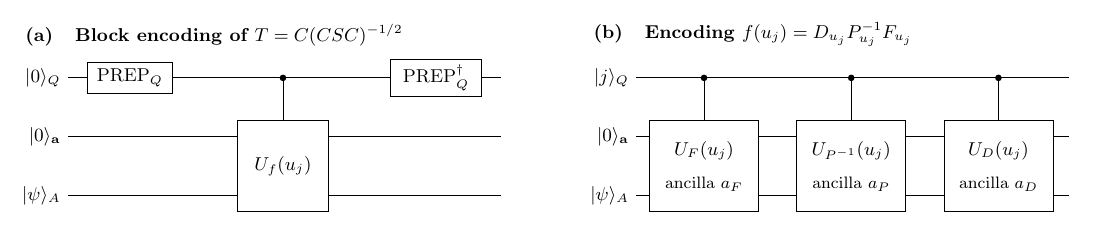}
   \caption{Block encoding of $T=C(CSC)^{-1/2}$. (a) The quadrature label selects $U_f(u_j)$; preparation and unpreparation supply the weights in Eq.~\eqref{eq:relative-lcu-block}. (b) At each node, the circuit applies $U_F(u_j)$, $U_{P^{-1}}(u_j)$, and $U_D(u_j)$ in that order. Each acts on the system and its indicated ancilla register, giving the product block in Eq.~\eqref{eq:relative-integrand-block}. Here $\mathbf a=(a_F,a_P,a_D)$, and dots denote register-valued controls.} 
\label{fig:t-block-encoding}
\end{figure}

\paragraph{Quadrature and label preparation: Fig.~\ref{fig:t-block-encoding}(a).}
Let $f(u):=D_uP_u^{-1}F_u$. The spectral bounds on $C$ give $\|D_u\|\leq\min\{1,e^{L_0-u}\}$ and $\|F_u\|\leq\min\{1,e^{L_0+u}\}$.
Together with $\|P_u^{-1}\|\leq1/c$, 
\begin{align}
\left\|T-\frac{2}{\pi}\int_{-R}^{R}f(u) du \right\| \leq \frac{2}{\pi c}
\left( \int_R^\infty e^{L_0-u} du +\int_{-\infty}^{-R}e^{L_0+u} du
\right) =\frac{4e^{L_0-R}}{\pi c}.
\end{align}
Choose $R:=L_0+\log(32/(c\eta))+1$, so that this truncation error is at most $\eta/4$.

Let $u_j,\omega_j$ be the nodes and weights of $m$-point Gauss-Legendre quadrature on $[-R,R]$. The weights are positive and sum to $2R$. For any operator-valued polynomial $p$ of degree at most $2m-1$, quadrature is exact on $p$. Consequently,
\begin{align}
\left\| \frac{2}{\pi}\int_{-R}^{R}f(u) du - \frac{2}{\pi}\sum_{j=1}^m\omega_jf(u_j) \right\| \leq \frac{8R}{\pi} \sup_{u\in[-R,R]}\|f(u)-p(u)\|.
\end{align}

To choose $p$, extend the integrand to $f(z):=e^zC(CSC+e^{2z}I)^{-1}$. For $z=u+i\theta$, set $M_{u,\theta}:=D_uSD_u+e^{2i\theta}F_u^2$. Then $f(u+i\theta)=e^{i\theta}D_uM_{u,\theta}^{-1}F_u$. For $|\theta|\leq\pi/8$, the bounds on $S$ and $D_u^2+F_u^2=I$ give $\operatorname{Re}M_{u,\theta}\geq cI/\sqrt2$. Hence $f$ is analytic in $|\operatorname{Im}z|<\pi/8$ and bounded by $\sqrt2/c$ on the closed strip.

Under the rescaling $u=Rx$, the Bernstein ellipse with parameter $\rho:=\exp(\pi/(16R))$ lies inside this strip. Chebyshev approximation gives a polynomial $p$ of degree at most $2m-1$ for which
\begin{align}
\sup_{u\in[-R,R]}\|f(u)-p(u)\| \leq \frac{2\sqrt2}{c(\rho-1)}\rho^{-(2m-1)}.
\end{align}
The preceding quadrature bound therefore gives error at most $C_{\rm GL}R^2e^{-\pi m/(8R)}$, where $C_{\rm GL}$ depends only on $c$. Choosing $m=\lceil(8R/\pi)\log(4C_{\rm GL}R^2/\eta)\rceil =\mathcal O(R\log(R/\eta))$ yields
\begin{align}
T_{R,m}:=\frac{2}{\pi}\sum_{j=1}^m\omega_j f(u_j), \qquad \left\| T_{R,m}-\frac{2}{\pi}\int_{-R}^{R}f(u) du \right\| \leq\frac{\eta}{4}.
\label{eq:relative-quadrature}
\end{align}

Once this is constructed, these quadrature weights determine the label preparation:
\begin{align}
|0\rangle_Q \xrightarrow{\mathrm{PREP}_Q} \sum_{j=1}^{m}\sqrt{\frac{\omega_j}{2R}} |j\rangle_Q.
\label{eq:relative-prepared-state}
\end{align}
Each label $j$ selects the integrand encoding at $u_j$. Preparation and unpreparation then supply the weight $\omega_j/(2R)$ in the encoded block. The nodes and weights are computed classically, and the label state is prepared using $\mathcal O(m)$ controlled rotations determined by conditional probability sums. 

\paragraph{Encoding one integrand: Fig.~\ref{fig:t-block-encoding}(b).}
For each quadrature node $u_j$, we construct a block encoding of $f(u_j)=D_{u_j}P_{u_j}^{-1}F_{u_j}$. Starting from the block encodings of $K$ and $S$, we encode $D_{u_j}$ and $F_{u_j}$, combine them to encode $P_{u_j}=D_{u_j}SD_{u_j}+F_{u_j}^2$, and then encode $P_{u_j}^{-1}$. Multiplying these three factor encodings gives $U_f(u_j)$. We explain this more specifically below.

First, fix a node $u=u_j$ and pad the supplied block encoding of $K$ to a known normalization $\Lambda\geq\max\{1,L_0\}$ with $\Lambda=\mathcal O(1+L_0)$. Define $d_u(x):=(1+e^{2(x+u)})^{-1/2}$ and $\varphi_u(x):=(1+e^{-2(x+u)})^{-1/2}$, so that $D_u=d_u(K)$ and $F_u=\varphi_u(K)$. Both functions are analytic and uniformly bounded in $|\operatorname{Im}x|\leq\pi/8$. Chebyshev approximation therefore gives real polynomials $q_{D,u}$ and $q_{F,u}$ approximating $d_u(\Lambda x)$ and $\varphi_u(\Lambda x)$ on $[-1,1]$ to error $\epsilon_{\rm app}\in(0,1/2]$, with degree $d_{\rm out}=\mathcal O(\Lambda\log(2\Lambda/\epsilon_{\rm app}))$. Because these functions take values in $[0,1]$ on the real axis, $|q_{D,u}(x)|,|q_{F,u}(x)|\leq1+\epsilon_{\rm app}$ on $[-1,1]$, so each polynomial divided by $4$ satisfies the boundedness condition of Lemma~\ref{lem:be-polynomial-transform}. Applying that lemma to the encoding of $K$ gives encodings $U_D(u)$ and $U_F(u)$ of $D_u$ and $F_u$ with normalizations $\alpha_D=\alpha_F=4$.

Second, use Lemma~\ref{lem:be-product} with distinct ancilla registers to encode $D_uSD_u$ and $F_u^2$. If $\alpha_S$ is the normalization of the supplied encoding of $S$, these products have normalizations $a:=\alpha_D^2\alpha_S$ and $b:=\alpha_F^2$. Prepare a branch qubit in $\sqrt{a/(a+b)}|0\rangle+\sqrt{b/(a+b)}|1\rangle$, select the corresponding product encoding, and unprepare the branch qubit. Lemma~\ref{lem:be-linear-combination} then gives an encoding of $P_u=D_uSD_u+F_u^2$ with normalization $a+b$. Pad this encoding to the constant normalization $\alpha_P:=\max\{a+b,2C_+\}$.

Third, set $h:=c/\alpha_P$. Since $cI\leq P_u\leq C_+I$, the spectrum of $P_u/\alpha_P$ lies in $[h,1/2]$, where $h>0$ is independent of $u$ and $L_0$. Lemma~\ref{lem:be-reciprocal-polynomial} gives a real polynomial $q_{\rm inv}$ approximating $h/(2x)$ on $[h,1]$ to error $\epsilon_{\rm app}$, bounded by one on $[-1,1]$, with degree $d_{\rm inv}=\mathcal O(h^{-1}\log(2/\epsilon_{\rm app}))=\mathcal O(\log(2/\epsilon_{\rm app}))$. Apply Lemma~\ref{lem:be-polynomial-transform} to $q_{\rm inv}/2$ and the encoding of $P_u$. Its output block approximates $(h/4)(P_u/\alpha_P)^{-1}=(c/4)P_u^{-1}$, giving an encoding $U_{P^{-1}}(u)$ with normalization $\alpha_{\rm inv}:=4/c$.

Finally, apply $U_F(u)$, $U_{P^{-1}}(u)$, and $U_D(u)$ in that circuit order, using distinct ancilla registers $\mathbf a=(a_F,a_P,a_D)$. By Lemma~\ref{lem:be-product}, their ideal zero block is
\begin{align}
\langle0|_{\mathbf a}U_f(u)|0\rangle_{\mathbf a} = \frac{D_u}{\alpha_D}\frac{P_u^{-1}}{\alpha_{\rm inv}}\frac{F_u}{\alpha_F} = \frac{f(u)}{\alpha_f}, \qquad \alpha_f:=\alpha_D\alpha_{\rm inv}\alpha_F.
\label{eq:relative-integrand-block}
\end{align}
Thus $\alpha_f$ is independent of $u$ and $L_0$. Compile $U_f(u_j)$ for each quadrature node and pad the ancilla registers to a common size, allowing selection by $j$ in the next step.

\paragraph{Assembly and accuracy: Fig.~\ref{fig:t-block-encoding}(a).} 
Prepare the quadrature label using $\mathrm{PREP}_Q$ from Eq.~\eqref{eq:relative-prepared-state}, apply $U_f(u_j)$ controlled by $j$, and unprepare the label. The resulting unitary satisfies
\begin{align}
\langle0|_{Q\mathbf a}U_T|0\rangle_{Q\mathbf a} = \sum_{j=1}^{m}\frac{\omega_j}{2R} \frac{f(u_j)}{\alpha_f} =\frac{\pi}{4R\alpha_f} T_{R,m} = \frac{T_{R,m}}{\alpha_T},
\label{eq:relative-lcu-block}
\end{align}
where $\alpha_T:=4R\alpha_f/\pi=\mathcal O(1+L_0+\log(1/\eta))$.

Choose each implemented integrand encoding $\widetilde U_f(u_j)$ to satisfy $\|f(u_j)-\alpha_f\langle0|_{\mathbf a} \widetilde U_f(u_j)|0\rangle_{\mathbf a}\|\leq\xi$, where $\xi:=\pi\eta/(16R)$. Since the quadrature weights are positive and sum to $2R$, these errors contribute at most $(2/\pi)\sum_j\omega_j\xi=(4R/\pi)\xi=\eta/4$ to the final unnormalized block error.

To justify the required input precision, suppose that the supplied encodings of $K$ and $S$ have unnormalized errors at most $\epsilon_{\rm in}$. For a degree-$d$ polynomial, Lemma~\ref{lem:be-polynomial-transform} gives output error $4d\sqrt{\epsilon/\alpha}+\delta$ from an input $(\alpha,a,\epsilon)$ encoding, where $\delta$ is the additional transformation-synthesis error. Choose this synthesis error at most $\epsilon_{\rm app}$ in each transformation. The resulting unnormalized errors in $D_u,F_u$ are then $\mathcal O(d_{\rm out}\sqrt{\epsilon_{\rm in}/\Lambda} +\epsilon_{\rm app})$. The product and sum constructions give an encoding of $P_u$ with error $\epsilon_P=\mathcal O( d_{\rm out}\sqrt{\epsilon_{\rm in}/\Lambda} +\epsilon_{\rm app}+\epsilon_{\rm in})$. Applying the inverse polynomial gives unnormalized error $\mathcal O(d_{\rm inv}\sqrt{\epsilon_P/\alpha_P} +\epsilon_{\rm app})$. The final product with $D_u,F_u$ adds their errors with only constant factors.

These two successive square-root error bounds show that input errors of $\epsilon_{\rm in}\leq \eta^4/\operatorname{poly}(1+L_0+\log(1/\eta))$ suffice to achieve integrand error at most $\xi$, for a sufficiently large fixed polynomial. Choosing $\epsilon_{\rm app}$ at the same scale also suffices. The degrees remain polynomial in $1+L_0+\log(1/\eta)$, and the bound $P_u\geq cI$ keeps the inverse-approximation constants independent of the smallest eigenvalue of $C$.

Round the quadrature nodes and weights to sufficient precision that their effect on the quadrature sum is at most $\eta/8$, preserving positive weights with total $2R$. The uniform strip bound gives a uniform derivative bound for $f$ on the real axis, so polynomially many precision bits suffice for the nodes as well as the weights. Implement the remaining label preparation and selection controls with total unitary error at most $\eta/(8\alpha_T)$, contributing another $\eta/8$ to the unnormalized block error. Together with integral truncation, quadrature, and integrand errors, each at most $\eta/4$, the total error is at most $\eta$.

The polynomial degrees, quadrature node counts, and required precision bits are all bounded by polynomials in $1+L_0+\log(1/\eta)$. The quadrature data and scalar approximation polynomials can be computed to the required precision in polynomial classical time, and Lemma~\ref{lem:be-polynomial-transform} provides polynomial-time classical construction of the corresponding transformation circuits. Including the resources of the supplied encodings, the gate count, ancilla count, and classical processing time satisfy the claimed polynomial bounds.
\end{proof}

\subsection{Isometry amplification and implementation of $\widetilde{\Phi}_B^{H_W}$}
\label{subsec:isometry-amplification}

We now combine the preceding constructions to prove Lemma~\ref{lem:local-compilation}, following Fig.~\ref{fig:isometry-local-compilation}. 

\begin{proof}[Proof of Lemma~\ref{lem:local-compilation}]
Fix $\delta\in(0,1)$ and set $\gamma:=\delta^2/(128\pi)$. We first construct a projected unitary encoding of $V_W$ with unnormalized error at most $\gamma$. We then amplify this encoding and discard the environment and encoding ancillas to implement the desired block update.

\paragraph{Constructing $S$ and encoding $T$.}
Lemma~\ref{lem:efficient-boundary-expansional} provides a block encoding of $E$ with constant normalization $C_E$ and bounds $\|E\|,\|E^{-1}\|\leq C_E$. To construct a block encoding of $S$ defined in Eq.~\eqref{eq:relative-marginal}, let $P_\tau|0\rangle_{BF}=|\sqrt{\tau_B}\rangle_{BF}$ and express $S$ as
\begin{align}
S=(\langle\sqrt{\tau_B}|_{BF}\otimes I_A)(E^\dagger E\otimes I_F)(|\sqrt{\tau_B}\rangle_{BF}\otimes I_A).
\end{align}
By Lemma~\ref{lem:be-product}, encodings of $E$ and $E^\dagger$ using separate ancilla registers give an encoding of $E^\dagger E$ with normalization $C_E^2$. Preparing and unpreparing the purification then gives an encoding of $S$ with the same normalization. Since $|B|$, $d$, and $\beta$ are fixed, the purification can be computed and prepared to state-vector error $\epsilon$ with cost polynomial in $b$ and $1+\log(1/\epsilon)$. Thus $S$ can be encoded to arbitrary precision with polynomial resources.

Moreover, the bounds on $E$ and $E^{-1}$ imply $C_E^{-2}I\leq E^\dagger E\leq C_E^2I$. Compressing these inequalities by the normalized purification gives
\begin{align}
C_E^{-2}I_A\leq S\leq C_E^2I_A.
\label{eq:S-conditioning}
\end{align}

Now apply Lemma~\ref{lem:relative-inverse} with $K=\beta H_A/2$, whose norm and local-term block-encoding normalization are $\mathcal O(w)$. This gives a block encoding of $T$ with normalization $\alpha_T=\mathcal O(1+w+\log(1/\gamma))$ and unnormalized error at most $\gamma/(C_E+1)^2$. The finer input accuracy required by the lemma is achieved using correspondingly accurate local-term encodings of $K$, encodings of $E$, and purification preparations. All required precision bits and circuit costs remain polynomial in $w$ and $1+\log(1/\gamma)$. Finally, $TT^\dagger=S^{-1}$ and Eq.~\eqref{eq:S-conditioning} imply $\|T\|\leq C_E$.

\begin{figure}
    \centering
    \includegraphics[width=\linewidth]{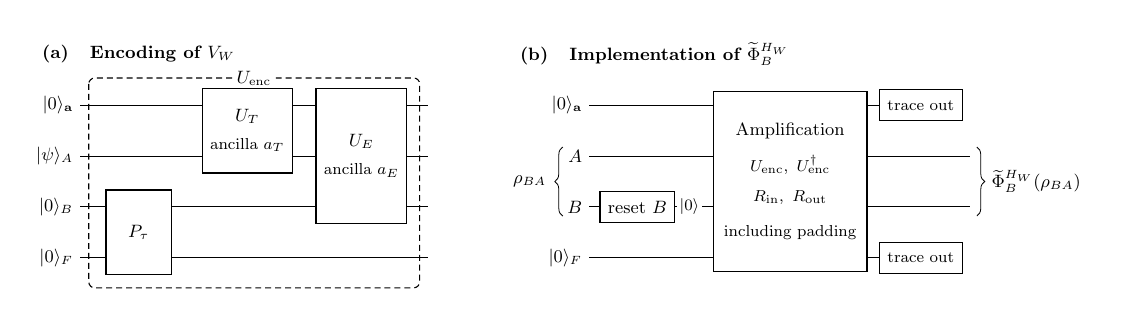}
    \caption{Isometry amplification and local compilation. (a) Prepare $|\sqrt{\tau_B}\rangle_{BF}$, then apply the encodings of $T_A$ and $E_{BA}$ using separate ancilla registers. Projecting only the ancilla registers gives Eq.~\eqref{eq:complete-isometry-encoding}. (b) Reset $B$ and amplify the complete encoding using its inverse and the projectors in Eq.~\eqref{eq:isometry-projectors}. Padding and amplification ancillas are included in $\mathbf a$. Discarding $F,\mathbf a$ implements Eq.~\eqref{eq:compiled-block-update} without postselection.}
    \label{fig:isometry-local-compilation}
\end{figure}

\paragraph{The complete encoding: Fig.~\ref{fig:isometry-local-compilation}(a).}
The circuit $U_{\rm enc}$ applies $P_\tau$, then the block encoding of $T_A$, and finally that of $E_{BA}$. The latter two use disjoint ancilla registers $a_T,a_E\subset\mathbf a$. By Lemma~\ref{lem:be-product}, this gives a projected unitary encoding of $V_W$ from Eq.~\eqref{eq:structured-isometry}, with normalization $\alpha:=C_E\alpha_T$.

Let $\widetilde E$ and $\widetilde T$ denote the zero blocks of the implemented encodings multiplied by $C_E$ and $\alpha_T$, respectively, with $\|\widetilde E-E\|\leq\epsilon_E$ and $\|\widetilde T-T\|\leq\epsilon_T$. Let $|\widetilde\tau\rangle_{BF}$ be the normalized state actually prepared, with $\||\widetilde\tau\rangle-|\sqrt{\tau_B}\rangle\|\leq\epsilon_\tau$. Since the two encodings use disjoint ancilla registers, the scaled projected block of $U_{\rm enc}$ equals $(\widetilde E_{BA}\otimes I_F)(I_B\otimes\widetilde T_A\otimes I_F)(|\widetilde\tau\rangle_{BF}\otimes I_A)$. Expanding its difference from $V_W$ into contributions from the three approximations and using $\|E\|,\|T\|\leq C_E$ gives
\begin{align}
\left\|\alpha(\langle0|_{\mathbf a}\otimes I_{ABF})U_{\rm enc}(|0\rangle_{\mathbf a BF}\otimes I_A)-V_W\right\| \leq \epsilon_E(C_E+\epsilon_T)+C_E\epsilon_T+C_E^2\epsilon_\tau \leq\gamma,
\label{eq:complete-isometry-encoding}
\end{align}
where the last inequality follows by choosing $\epsilon_E,\epsilon_T,\epsilon_\tau\leq\gamma/(C_E+1)^2$. The preceding constructions achieve these accuracies with polynomial resources.

To apply isometry amplification to this encoding, define the input and output projectors
\begin{align}
\Pi_{\rm in}:=|0\rangle\langle0|_{\mathbf a BF}\otimes I_A, \qquad \Pi_{\rm out}:=|0\rangle\langle0|_{\mathbf a}\otimes I_{ABF}.
\label{eq:isometry-projectors}
\end{align}
The input projector fixes all auxiliary registers to zero, while the output projector fixes only the encoding ancillas and leaves the full isometry output $ABF$ unrestricted. The reflections $R_{\rm in/out}:=2\Pi_{\rm in/out}-I$ and the corresponding projector-controlled gates can be implemented using reversible all-zero tests with polynomial overhead. Concretely, a linear-size Toffoli circuit computes the all-zero condition into a flag qubit; after the required flag-controlled operation, the computation is reversed to restore the flag and workspace to zero. All padding and amplification ancillas introduced below are included in $\mathbf a$, with the projectors extended accordingly.

\paragraph{Oblivious isometry amplification: Fig.~\ref{fig:isometry-local-compilation}(b).}
To match the normalization required by Lemma~\ref{lem:be-isometry-amplification}, choose an odd integer $m\geq\pi\alpha/2$ with $m=\mathcal O(\alpha)$ and set $\alpha':=\csc(\pi/(2m))$. Since $\alpha'\geq2m/\pi\geq\alpha$, a single-qubit rotation with zero-to-zero amplitude $\alpha/\alpha'$ pads the normalization to $\alpha'$. The resulting projected block approximates $V_W/\alpha'=\sin(\pi/(2m))V_W$ with error at most $\gamma/\alpha'$.

Let $U_{\rm pad}$ denote the padded circuit, with its additional qubit included in $\mathbf a$ and in the projectors. Embed $V_W$ in the circuit space by defining $\mathcal V_W:=(|0\rangle_{\mathbf a}\otimes V_W)(\langle0|_{\mathbf a BF}\otimes I_A)$. Since $V_W^\dagger V_W=I_A$, we have $\mathcal V_W^\dagger\mathcal V_W=\Pi_{\rm in}$ and $\Pi_{\rm out}\mathcal V_W=\mathcal V_W$. Equation~\eqref{eq:complete-isometry-encoding} and normalization padding give $\|\Pi_{\rm out}U_{\rm pad}\Pi_{\rm in}-\sin(\pi/(2m))\mathcal V_W\|\leq\gamma/\alpha'$, satisfying the hypothesis of Lemma~\ref{lem:be-isometry-amplification}.

Lemma~\ref{lem:be-isometry-amplification} therefore applies to this padded encoding. It uses $m$ calls to $U_{\rm pad}$ and its inverse, together with $\mathcal O(m)$ projector-controlled operations and single-qubit gates. Let $U_{\rm amp}$ denote the amplified unitary and define its induced isometry by $V_{\rm amp}:=U_{\rm amp}(|0\rangle_{\mathbf a BF}\otimes I_A)$. Set $J_W:=(\langle0|_{\mathbf a}\otimes I_{ABF})V_{\rm amp}$. Then
\begin{align}
\|J_W-V_W\|
&\leq\frac{2m\gamma}{\alpha'}
\leq\pi\gamma=:e,
\label{eq:amplification-success-error}
\end{align}
where the second inequality uses $2m\sin(\pi/(2m))\leq\pi$. The amplification acts on the complete padded circuit $U_{\rm pad}$, including the purification preparation.

To bound the full output error, we also control the component outside the output subspace. Since $V_{\rm amp}^\dagger V_{\rm amp}=I_A$, $V_W^\dagger V_W=I_A$, and $\|J_W\|\leq1$,
\begin{align}
\|(I-\Pi_{\rm out})V_{\rm amp}\|^2
&=\|I_A-J_W^\dagger J_W\|
\leq2\|J_W-V_W\|
\leq2e.
\label{eq:amplification-leakage}
\end{align}
Combining the errors inside and outside that subspace gives
\begin{align}
\|V_{\rm amp}-|0\rangle_{\mathbf a}\otimes V_W\|
&\leq e+\sqrt{2e}
\leq\frac{\delta^2}{128}+\frac{\delta}{8}
<\frac{\delta}{4}.
\label{eq:amplified-isometry-error}
\end{align}
This square-root contribution is why the initial encoding error was chosen to be of order $\delta^2$.

Finally, synthesize the remaining rotations and elementary gates to total unitary error at most $\delta/4$. For $G$ such gates, accuracy $\delta/(4G)$ per gate suffices. Let $\widetilde U_{\rm amp}$ denote the resulting implemented unitary circuit, and define its induced isometry by $\widetilde V_W:=\widetilde U_{\rm amp}(|0\rangle_{\mathbf a BF}\otimes I_A)$. Then
\begin{align}
\|\widetilde V_W-|0\rangle_{\mathbf a}\otimes V_W\|
&\leq\delta/2.
\label{eq:full-isometry-error}
\end{align}

\paragraph{The resulting block update.}
Reset the original block $B$ to $|0\rangle_B$, initialize $F$ and the ancilla registers to zero, apply $\widetilde U_{\rm amp}$, and discard $F,\mathbf a$. This implements the CPTP map
\begin{align}
\widetilde\Phi_B^{H_W}(\rho_{BA}) := \operatorname{tr}_{\mathbf a F}\left[ \widetilde V_W \bigl(\operatorname{tr}_B\rho_{BA}\bigr)
\widetilde V_W^\dagger \right].
\label{eq:compiled-block-update}
\end{align}
The diamond-norm distance between channels induced by two isometries is at most twice their operator-norm distance. Using contractivity under the initial partial trace over $B$ and the final partial trace over $F,\mathbf a$, Eq.~\eqref{eq:full-isometry-error} yields
\begin{align}
\|\widetilde\Phi_B^{H_W}-\Phi_B^{H_W}\|_\diamond
&\leq2\|\widetilde V_W-|0\rangle_{\mathbf a}\otimes V_W\| \leq\delta.
\end{align}
No postselection is required.

Finally, $1+\log(1/\gamma)=\mathcal O(1+\log(1/\delta))$ and $m=\mathcal O(1+w+\log(1/\delta))$. Each encoding call, projector-controlled operation, and state preparation has polynomial cost in $w$ and $1+\log(1/\delta)$. Gate synthesis preserves this bound. Thus the complete update uses $\operatorname{poly}(w,1+\log(1/\delta))$ gates and ancilla qubits, with classical processing time polynomial also in $b$.
\end{proof}

\subsection{Proof of Theorem~\ref{thm:dynamics-implementation}}
\label{app:implementation-complexity}

We now combine the compiled local updates with the Poisson representation and parallel scheduling. Here $b$ denotes the total classical description length of $H$ and the rational inputs $t,\varepsilon$. Recall that $s=nt/\ell$, and set $R:=2n(1+t)/\varepsilon$.
\begin{proof}
For $t=0$, use the identity channel. Assume $n>\ell$.

\paragraph{Finite-window updates.}
Let $C_{\rm win}\geq1$ and $\mu>0$ be the constants from Lemma~\ref{lem:finite-window-approximation}. Choose $\delta:=\frac{\varepsilon}{8(1+s)}, r:=\left\lceil\frac1\mu\log\frac{C_{\rm win}}{\delta}\right\rceil$. For each block $B$, set $W=N_r(B)$ and compile $\Phi_B^{H_W}$ to diamond-norm error $\delta$ using Lemma~\ref{lem:local-compilation}, as in Fig.~\ref{fig:isometry-local-compilation}(b). Extend this circuit by the identity on $W^c$ and denote the resulting channel by $\widetilde\Phi_{B,r}$. The triangle inequality gives $\|\widetilde\Phi_{B,r}-\Phi_B\|_\diamond\leq\delta+C_{\rm win}e^{-\mu r}\leq2\delta$. 

The window size satisfies $w:=|W|\leq\ell+2r$. Since $\ell$ is fixed, both $w$ and $1+\log(1/\delta)$ are $\mathcal O(\log R)$. Thus each compiled update has gate count, depth, and workspace bounded by $\operatorname{polylog}R$.

\paragraph{Bounded sampling.}
By Proposition~\ref{prop:poisson-representation}, the exact channel is the average of compositions of $K\sim\operatorname{Poisson}(s)$ block updates, whose labels are drawn independently and uniformly. Cap the number of events at $N:=\lceil2s+2\log(16/\varepsilon)\rceil$. Markov's inequality then gives $\Pr(K>N)\leq2^{-N}\mathbb E[2^K]=e^s2^{-N}\leq\varepsilon/16$.

Let $K_{\rm cap}:=\min\{K,N\}$. Its probabilities are $p_k=e^{-s}s^k/k!$ for $0\leq k<N$, with the remaining probability assigned to $N$. Compute these probabilities using $p_0=e^{-s}$ and $p_{k+1}=sp_k/(k+1)$, and approximate the cumulative probabilities $F_k:=\Pr(K_{\rm cap}\leq k)$ on a common dyadic grid. Choose a nondecreasing approximation $\widehat F_k$ with $\widehat F_N=1$ and $|\widehat F_k-F_k|\leq\varepsilon/[16(N+1)^2]$, including both numerical and rounding errors. Since each probability is a difference of consecutive cumulative probabilities, the total variation error is at most $\sum_{k=0}^{N-1}|\widehat F_k-F_k|\leq\varepsilon/[16(N+1)]$. Inverse-CDF sampling on the dyadic grid uses a bounded number of uniform bits. The computation requires polynomially many bits and classical operations in $s$, $N$, $\log(1/\varepsilon)$, and $b$.

For each block label, draw $q$ uniform bits, interpret them as $Z\in\{0,\ldots,2^q-1\}$, and return $1+\lfloor nZ/2^q\rfloor$. The resulting distribution differs from uniform by at most $n2^{-q}$ in total variation. Choose $q:=\lceil\log_2(16n(N+1)/\varepsilon)\rceil$. There are at most $N$ label draws and one count draw, so their combined sampling error is at most $\varepsilon/16$. No rejection sampling is needed.

\paragraph{Parallel scheduling: Fig.~\ref{fig:main-petz-circuit}(b).}
At most $Q:=2w-1$ block labels have windows intersecting any given window. Construct a directed dependency graph whose vertices are the sampled events, with an edge from event $i$ to event $j$ whenever $i<j$ and their windows overlap. Compiled updates on disjoint windows commute and use separate ancilla registers, so any schedule respecting these dependencies implements the same compiled composition.

Let $L$ be the longest path in the dependency graph of the uncapped ideal sequence, measured in events, with $L=0$ for the empty sequence. A path of length $m$ selects an increasing subsequence of $m$ events whose consecutive windows overlap. There are at most $nQ^{m-1}$ possible label sequences. Since $\mathbb E[\binom Km]=s^m/m!$, counting these subsequences and using $m!\geq(m/e)^m$ gives
\begin{align}
\Pr(L\geq m) \leq \mathbb E\left[\binom Km\right]\frac{nQ^{m-1}}{n^m} \leq \frac nQ\left(\frac{eQt}{\ell m}\right)^m.
\label{eq:implementation-dependency-tail}
\end{align}
Now, setting $m_\star:=\lceil2eQt/\ell+\log_2(16n/\varepsilon)\rceil+1$ gives $\Pr(L\geq m_\star)\leq(n/Q)2^{-m_\star}\leq\varepsilon/16$. Thus, since $Q=\mathcal O(w)=\mathcal O(\log R)$, $\mathcal O(t\log R+\log(n/\varepsilon))$ layers suffice with probability at least $1-\varepsilon/16$. 

First, we classically compute the layers. Initialize all site records to zero and process the sampled events in order. Assign each event one plus the largest layer previously recorded at any site in its window, and update the records at all sites in that window to the assigned layer. The resulting number of layers equals the dependency height, and events in the same layer have disjoint windows. If the number of layers is at least $m_\star$, apply the identity channel. Otherwise, execute the layers successively, performing all updates within each layer in parallel. This procedure takes polynomial in $n$ classical time and bounds the number of executed layers on every run.

\paragraph{1D nearest-neighbor implementation.}
Place each logical site and its $\operatorname{polylog}R$ ancilla qubits in a physical cell on a line or ring matching the logical geometry. Each window then occupies a connected region containing $\operatorname{polylog}R$ qubits. Route each two-qubit gate using $\mathcal O(\operatorname{polylog}R)$ nearest-neighbor SWAPs, restoring all register positions afterward. Since each compiled update has $\operatorname{polylog}R$ gates, its routed gate count and depth remain $\operatorname{polylog}R$. Disjoint windows use disjoint physical regions and can therefore execute in parallel, giving at most $m_\star$ update layers on every accepted run.

\paragraph{Total error.}
Couple the finite-precision samples to the exact capped sequence so that the probability of a sampling mismatch is at most $\varepsilon/16$. Let $G$ denote the event that $K\leq N$, the samples match, and the ideal dependency height satisfies $L<m_\star$. On $G$, the algorithm implements the compiled updates in an equivalent order. By the union bound, $\Pr(G^c)\leq3\varepsilon/16$.

Let $\widetilde\Phi_\omega$ denote the channel implemented on a realization $\omega$ of the coupled samples. On $G$, telescoping the composition bounds its distance from $\Phi_{B_K}\circ\cdots\circ\Phi_{B_1}$ by the sum of the individual update errors, since every constituent channel has diamond norm one. On $G^c$, the distance is at most two because both channels are CPTP. Thus
\begin{align}
\|\widetilde{\mathcal T}_t-e^{t\mathcal L_\ell}\|_\diamond
&=\left\|\mathbb E_\omega\left[
\widetilde\Phi_\omega-\Phi_{B_K}\circ\cdots\circ\Phi_{B_1}
\right]\right\|_\diamond
&&\text{(Proposition~\ref{prop:poisson-representation})} \nonumber\\
&\leq\mathbb E_\omega\left[\left\|\widetilde\Phi_\omega-\Phi_{B_K}\circ\cdots\circ\Phi_{B_1}\right\|_\diamond
\right] \nonumber\\
&\leq\mathbb E_\omega\left[\mathbf1_G\sum_{j=1}^{K}
\|\widetilde\Phi_{B_j,r}-\Phi_{B_j}\|_\diamond
\right]+2\Pr(G^c) \nonumber\\
&\leq2\delta \mathbb E[K\mathbf1_G]+2\Pr(G^c)
&&\text{($\|\widetilde\Phi_{B,r}-\Phi_B\|_\diamond\leq2\delta$)} \nonumber\\
&\leq2\delta \mathbb E[K]+2\cdot\frac{3\varepsilon}{16}
&&\text{(union bound)} \nonumber\\
&=\frac{s\varepsilon}{4(1+s)}+\frac{3\varepsilon}{8}
\leq\frac{5\varepsilon}{8}\leq\varepsilon.
\label{eq:implementation-total-error}
\end{align}

\paragraph{Resources.}
At most $N=\mathcal O(s+\log(2/\varepsilon))$ updates are executed. Each routed update uses $\operatorname{polylog}R$ gates and has depth $\operatorname{polylog}R$. Since $Q=\mathcal O(w)$ and $\ell$ is fixed, $m_\star=\mathcal O(wt+\log(2n/\varepsilon))$. Using $s=nt/\ell$ and $w=\mathcal O(\log R)$, we obtain
\begin{align}
G
&\leq N\operatorname{polylog}R
=\mathcal O\left(\bigl(nt+\log(2/\varepsilon)\bigr)\operatorname{polylog}R\right),
\label{eq:implementation-gate-count}\\
D &\leq m_\star\operatorname{polylog}R = \mathcal O\left(\bigl(t+\log(2n/\varepsilon)\bigr)\operatorname{polylog}R\right).
\label{eq:implementation-depth-count}
\end{align}
The factor $w=\mathcal O(\log R)$ in the depth bound is absorbed into the polylogarithmic factor.

Each of the $n$ physical cells contains $\operatorname{polylog}R$ ancilla qubits. Resetting and reusing this workspace therefore requires $\mathcal O(n\operatorname{polylog}R)$ ancilla qubits. Compiling and routing at most $n$ distinct local circuits requires $n\operatorname{polylog}R\operatorname{poly}(b)$ classical time. Including bounded sampling and dependency scheduling, the total classical processing time is $\operatorname{poly}(n,t,1+\log(1/\varepsilon),b)$. Since $R=2n(1+t)/\varepsilon$, these are the bounds claimed in Theorem~\ref{thm:dynamics-implementation}.
\end{proof}

\section{Remarks}
\addtocontents{toc}{\protect\setcounter{tocdepth}{1}}

In this section, we remark several points that may be worth noting. 

\subsection{Worst-case optimality of rapid mixing}
\label{subsec:worst-case-optimality}

The logarithmic dependence on $n$ and $\varepsilon$ in Theorem~\ref{thm:main} is worst-case optimal, as shown below by the trivial case $H=0$.

\begin{proposition}[Worst-case optimality of $\mathcal{O}(\log(n/\varepsilon))$ mixing]
\label{prop:worst-case-optimality}
Fix $d\geq2$ and an integer block length $\ell\geq1$. For $H=0$ on a periodic chain of length $n\geq2\ell$, the dynamics generated by $\mathcal L_\ell$ in Eq.~\eqref{eq:block-petz-generator} satisfies, at every $0\leq\beta<\infty$,
\begin{align}
t_{\mathrm{mix}}(\varepsilon) \geq \frac12\log\frac n\varepsilon-C_{\ell,d}, \qquad 0<\varepsilon\leq\frac14,
\label{eq:worst-case-optimality}
\end{align}
where $C_{\ell,d}$ is independent of $n$, $\varepsilon$, and $\beta$. In particular, for a block length independent of $n$, the worst-case mixing time is $\Omega(\log(n/\varepsilon))$.
\end{proposition}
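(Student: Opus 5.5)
For $H=0$ the Gibbs state is $\sigma_\beta=(I/d)^{\otimes n}$ for every $\beta$. Each block update is the block-reset channel $\Phi_B(\rho)=(I/d)^{\otimes B}\otimes\tr_B[\rho]$, because $\sigma_{\beta,B^c}$ is maximally mixed and commutes with everything. The plan is to exhibit a product initial state whose single-site marginals relax at a rate we can compute exactly, and then lower bound the trace distance to $\sigma_\beta$ with a collective observable.

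\textbf{Marginal dynamics.} Take $\rho_0=\ket{0}\bra{0}^{\otimes n}$. Fix a site $j$ and consider the observable $A_j=P_j-\frac{1}{d}I$, where $P_j=\ket{0}\bra{0}_j$. We have $\Phi_B^\dagger(A_j)=0$ when $j\in B$ and $\Phi_B^\dagger(A_j)=A_j$ otherwise. Since exactly $\ell$ blocks contain $j$ (Proposition~\ref{prop:block-counting}), this gives $\mathcal L_\ell^\dagger(A_j)=-A_j$, hence $e^{t\mathcal L_\ell^\dagger}(A_j)=e^{-t}A_j$. Therefore $\tr[\rho_tA_j]=e^{-t}(1-1/d)$, while $\tr[\sigma_\beta A_j]=0$.

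\textbf{Collective observable.} To distinguish $\rho_t$ from $\sigma_\beta$, use $M=\frac1n\sum_j A_j$. Under $\sigma_\beta$ the $A_j$ are independent with mean $0$ and variance $\frac1d(1-\frac1d)$, so $\mathrm{Var}_\sigma(M)\le 1/n$. Under $\rho_t$ we need the variance of $M$ as well. Write $e^{t\mathcal L_\ell^\dagger}(A_jA_k)$. For $|j-k|\ge\ell$, no block contains both sites. Then $\mathcal L_\ell^\dagger(A_jA_k)=-2A_jA_k$ plus terms from blocks containing exactly one of the sites; those terms kill the product, so it is again $-2A_jA_k$ and $\tr[\rho_tA_jA_k]=e^{-2t}(1-1/d)^2$. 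Thus the covariance of $A_j,A_k$ vanishes for $|j-k|\ge\ell$. For $|j-k|<\ell$ the covariance is bounded by $1$. This gives $\mathrm{Var}_{\rho_t}(M)\le (2\ell-1)/n$. Since $\rho_t$ stays diagonal with classical dynamics, $M$ can be treated as a classical random variable in both states; otherwise we work with its spectral measurement.

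\textbf{Distinguishing test.} Let $m_t=e^{-t}(1-1/d)$ and consider the projector $\Pi$ onto the eigenvalues of $M$ that are at least $m_t/2$. By Chebyshev,
\[
\tr[\sigma\Pi]\le \frac{4}{n m_t^2},\qquad \tr[\rho_t\Pi]\ge 1-\frac{4(2\ell-1)}{n m_t^2},
\]
so $\frac12\|\rho_t-\sigma\|_1\ge 1-\frac{8\ell}{n m_t^2}$. This stays above $\frac14\ge\varepsilon$ as long as $e^{-2t}\ge C\ell/n$, which gives a lower bound of order $\frac12\log n-C$.

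\textbf{$\varepsilon$-dependence.} This is the main obstacle, and I would handle it separately. A single site gives $\frac12\|\rho_t-\sigma\|_1\ge \tr[\rho_tP_1]-\frac1d=e^{-t}(1-1/d)$, which forces $t_{\mathrm{mix}}(\varepsilon)\ge\log\frac{1-1/d}{\varepsilon}$. The final step combines the two bounds:
\[
\max\Bigl(\tfrac12\log n-C,\ \log\tfrac{1}{2\varepsilon}\Bigr)\ \ge\ \tfrac12\log\tfrac n\varepsilon-C',
\]
since the maximum dominates the average of the two terms. This yields Eq.~\eqref{eq:worst-case-optimality} with $C_{\ell,d}$ independent of $n$, $\varepsilon$ and $\beta$. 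The hypothesis $n\ge2\ell$ ensures that the blocks are proper and that the counting in Proposition~\ref{prop:block-counting} applies.
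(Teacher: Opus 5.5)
There is a genuine gap in your final combination step. Your marginal, covariance and Chebyshev computations are correct. They give $t_{\mathrm{mix}}(\varepsilon)\ge\frac12\log n-C$ for $\varepsilon\le\frac14$, and the single-site projector gives $t_{\mathrm{mix}}(\varepsilon)\ge\log\frac{1-1/d}{\varepsilon}$.

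The error is in how you combine them. The average of $\frac12\log n$ and $\log\frac1\varepsilon$ is $\frac14\log n+\frac12\log\frac1\varepsilon$, not $\frac12\log\frac n\varepsilon=\frac12\log n+\frac12\log\frac1\varepsilon$. The maximum of your two bounds does not dominate the target either. For $\varepsilon=n^{-1/2}$, Eq.~\eqref{eq:worst-case-optimality} requires $t_{\mathrm{mix}}\ge\frac34\log n-C$, whereas your bounds give only $\frac12\log n+\mathcal O(1)$.

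What you actually prove is $t_{\mathrm{mix}}(\varepsilon)\ge\max\{\frac12\log n,\log\frac1\varepsilon\}-C\ge\frac14\log\frac n\varepsilon-C'$. This suffices for the ``in particular'' $\Omega(\log(n/\varepsilon))$ claim, but not for the stated inequality. The problem is structural. Chebyshev certifies distinguishability only when $n e^{-2t}\gtrsim\ell$, i.e.\ at constant trace distance. It therefore cannot capture how many weak single-site signals accumulate when the trace distance is itself of order $\varepsilon$.

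The missing ingredient is a trace-distance lower bound that stays tight in the small-distance regime. The paper obtains it as follows:
\begin{itemize}
\item It restricts to the $m=\lfloor n/\ell\rfloor$ sites $S=\{1,1+\ell,\ldots\}$, which are pairwise at cyclic distance at least $\ell$.
\item Hence no block contains two of them, and these sites are driven by disjoint Poisson clocks. This gives exact independence, $(\rho_t)_S=\omega_t^{\otimes m}$, which strengthens your vanishing-covariance observation.
\item It then uses $\frac12\|p-q\|_1\ge1-\sum_x\sqrt{p(x)q(x)}$, the multiplicativity of this Bhattacharyya coefficient over product states, and the single-site bound $(\tr[\omega_t^{1/2}\omega^{1/2}])^2\le1-(1-\frac1d)^2e^{-2t}$.
\item Together these give $\frac12\|\rho_t-\sigma_\beta\|_1\ge1-\exp[-\frac m2(1-\frac1d)^2e^{-2t}]$ for every $t$.
\end{itemize}
In the small-distance regime this lower bound behaves like $m e^{-2t}$, which encodes the joint dependence on $n$ and $\varepsilon$. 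Solving it with $-\log(1-\varepsilon)\le\frac43\varepsilon$ and $m\ge n/(2\ell)$ gives $\frac12\log\frac n\varepsilon-C_{\ell,d}$ in one step, with no separate $\varepsilon$-argument needed.
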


\begin{proof}
For $H=0$, the Gibbs state is $\sigma_\beta=I/d^n$ at every inverse temperature, and the Petz update simply replaces the state on the block by the maximally mixed state,
\begin{align}
\Phi_B(\rho)=\frac{I_B}{d^{|B|}}\otimes\tr_B[\rho].
\label{eq:optimality-product-update}
\end{align}
Let $\rho_0=(|0\rangle\langle0|)^{\otimes n}$, and choose $m:=\lfloor n/\ell\rfloor$ sites $S:=\{1,1+\ell,\ldots,1+(m-1)\ell\}$. Consecutive sites in $S$, including the last and first sites around the cycle, are separated by at least $\ell$. Thus no block of length $\ell$ contains two sites of $S$. Since each site belongs to exactly $\ell$ blocks, whose Poisson clocks have rate $1/\ell$, the selected sites are resampled independently at total rate one. Consequently, writing $\rho_t=e^{t\mathcal L_\ell}(\rho_0)$ and $\omega:=I/d$, their reduced state is
\begin{align}
(\rho_t)_S=\omega_t^{\otimes m},\qquad
\omega_t:=e^{-t}|0\rangle\langle0|+(1-e^{-t})\omega.
\label{eq:optimality-independent-sites}
\end{align}
These states are diagonal in the product basis, so their trace distance is the total variation distance between the corresponding probability distributions. By Cauchy-Schwarz, 
\begin{align}
\frac12\|\omega_t-\omega\|_1 = \left(1-\frac1d\right)e^{-t}
\leq\sqrt{1-(\tr[\omega_t^{1/2}\omega^{1/2}])^2}.
\end{align}
Also, for probability distributions $p,q$, the inequality $\min\{p(x),q(x)\}\leq\sqrt{p(x)q(x)}$ gives $\frac12\|p-q\|_1\geq1-\sum_x\sqrt{p(x)q(x)}$. Applying this to the product states in Eq.~\eqref{eq:optimality-independent-sites}, together with contractivity of the trace distance under partial trace, yields
\begin{align}
\frac12\|\rho_t-\sigma_\beta\|_1
&\geq\frac12\|\omega_t^{\otimes m}-\omega^{\otimes m}\|_1 \nonumber \\
& \geq 1-(\tr[\omega_t^{1/2}\omega^{1/2}])^m \nonumber\\
&\geq1-\exp\left[-\frac m2\left(1-\frac1d\right)^2e^{-2t}\right].
\label{eq:optimality-trace-lower-bound}
\end{align}
It follows that, for $0<\varepsilon<1$,
\begin{align}
t_{\mathrm{mix}}(\varepsilon)
\geq\frac12\log\left(\frac{m(1-1/d)^2}{2[-\log(1-\varepsilon)]}\right).
\label{eq:optimality-mixing-lower-bound}
\end{align}
For $0<\varepsilon\leq1/4$, we have $-\log(1-\varepsilon)\leq4\varepsilon/3$ and $m\geq n/(2\ell)$. Substituting these bounds into Eq.~\eqref{eq:optimality-mixing-lower-bound} gives Eq.~\eqref{eq:worst-case-optimality}, with $C_{\ell,d}:=\frac12\log\left({16\ell}/(3(1-1/d)^2)\right)$, independent of $n$, $\varepsilon$, and $\beta$.
\end{proof}

Taking $\ell=\ell_\star$ from Theorem~\ref{thm:main}, Proposition~\ref{prop:worst-case-optimality} shows that the $\mathcal O(\log(n/\varepsilon))$ upper bound has optimal order uniformly over the Hamiltonian family.

\subsection{Extension to open boundary conditions}
\label{subsec:open-chains}

Here we give a short description on extending the argument to the open chain $\Lambda_n^{\mathrm{op}}:=\{1,\ldots,n\}$, with Hamiltonian $H^{\mathrm{op}}=\sum_{i=1}^{n-1}h^{(i)}_{i,i+1}$, $\|h^{(i)}_{i,i+1}\|\leq1$, and Gibbs state $\sigma_\beta^{\mathrm{op}}$. Throughout this subsection, distances and neighborhoods are taken in the open-chain geometry.

To retain total update rate one at every site, include truncated endpoint blocks. For $1\leq\ell<n$, define $B_s^{\mathrm{op}}:=\{s,\ldots,s+\ell-1\}\cap\Lambda_n^{\mathrm{op}}$ for $s=2-\ell,\ldots,n$. Using the Petz updates $\Phi_{B_s^{\mathrm{op}}}^{\mathrm{op}}$ from Eq.~\eqref{eq:block-petz-map}, with reference state $\sigma_\beta^{\mathrm{op}}$, set
\begin{align}
\mathcal L_\ell^{\mathrm{op}} := \frac1\ell\sum_{s=2-\ell}^{n} \left(\Phi_{B_s^{\mathrm{op}}}^{\mathrm{op}}-\mathcal I\right).
\label{eq:open-chain-generator}
\end{align}
The proofs of the two assumptions apply with uniform constants depending only on $\beta$ and $d$: each block has at most two interactions crossing its boundary, and its internal Gibbs state $\tau_B$ is an open-chain Gibbs state. Hence Theorem~\ref{thm:static-to-influence} supplies a uniformly exponentially decaying $\eta$.

It remains to check the block counting. A site $i$ belongs to $B_s^{\mathrm{op}}$ precisely when $i-\ell+1\leq s\leq i$, so it belongs to exactly $\ell$ blocks. For each $r\geq1$, a block at distance $r$ from $i$ must end at $i-r$ or begin at $i+r$, with at most one such block on either side. Thus,
\begin{align}
\sum_{s:i\in B_s^{\mathrm{op}}}\frac1\ell=1,
\qquad
\sum_{s:i\notin B_s^{\mathrm{op}}}
\eta\bigl(\mathrm{dist}(i,B_s^{\mathrm{op}})\bigr)
\leq2\sum_{r\geq1}\eta(r)=:a_\eta.
\label{eq:open-chain-counting}
\end{align}
The Euler-step proof of Theorem~\ref{thm:influence-to-curvature} therefore applies with $0\leq h\leq\ell/(n+\ell-1)$. For any fixed $\ell>a_\eta$, it gives curvature at least $\gamma_\ell:=1-a_\eta/\ell>0$. Corollary~\ref{cor:abstract-trace-mixing} then yields $\sup_\rho\frac12\|e^{t\mathcal L_\ell^{\mathrm{op}}}(\rho)-\sigma_\beta^{\mathrm{op}}\|_1\leq ne^{-\gamma_\ell t}$, and hence rapid mixing.

\subsection{The case $n=\ell_\star$}

In Theorem~\ref{thm:main}, we implicitly assumed $n>\ell_\star$, as holds in the thermodynamic limit. For $n\leq\ell_\star$, we adopt the whole-chain reset generator $\mathcal L_{\ell_\star}:=n^{-1}(\Phi_{\Lambda_n}-\mathcal I)$, where $\Phi_{\Lambda_n}(\rho)=\sigma_\beta$. This agrees with the original definition when $n=\ell_\star$ and extends it to $n<\ell_\star$. Hence $e^{t\mathcal L_{\ell_\star}}(\rho)=e^{-t/n}\rho+(1-e^{-t/n})\sigma_\beta$, so $t_{\mathrm{mix}}(\varepsilon)\leq n\log(1/\varepsilon)=\mathcal O(\log(n/\varepsilon))$, since $n\leq\ell_\star$ is bounded by a constant. 

\end{document}